\RequirePackage[hyphens]{url}
\documentclass[mnsc]{informs3h}
\OneAndAHalfSpacedXI

\usepackage[round]{natbib}
\usepackage{multibib}
\def\bibfont{\small}

\newcites{App}{References for Appendix}

\usepackage{secdot}
\usepackage{epsfig}
\usepackage{amssymb}
\usepackage{amsmath}
\usepackage{color}
\usepackage{graphicx}
\usepackage{multirow}
\usepackage{enumerate}
\usepackage{subfig}
\usepackage{natbib}

 \bibpunct[, ]{(}{)}{,}{a}{}{,}%
 \def\bibfont{\small}%
\usepackage[ruled,linesnumbered]{algorithm2e}
\usepackage[normalem]{ulem}
\usepackage{float}
\usepackage{enumitem}
\usepackage{amsmath}
\usepackage{epstopdf}
\usepackage{graphicx}
\usepackage{tikz}
\usepackage{pgfplots}
\usepackage{subfig}
\usepackage{tabularx}

\usepackage{array}
\pgfplotsset{compat=newest}
\usepackage{pgfplotstable}
\usetikzlibrary{arrows.meta,babel,calc,intersections,backgrounds,patterns,plotmarks,decorations.pathreplacing}
\usepackage{multirow}
\usepgfplotslibrary{fillbetween}
\tikzstyle{every picture} += [>=stealth]
\makeatother
\usepackage{lipsum}
\usepackage[justification=centering]{caption}
\usepackage{booktabs}
\usepackage[hyphens]{url}
\usepackage{mathtools}
\newcommand*{\QEDA}{\hfill\ensuremath{\blacksquare}}%

\TheoremsNumberedThrough    
\ECRepeatTheorems
\EquationsNumberedThrough  
\makeatletter
\newcommand\footnoteref[1]{\protected@xdef\@thefnmark{\ref{#1}}\@footnotemark}
\makeatother
\usepackage{apptools}
\usepackage{bbm}
\usepackage{chngcntr}
\usepackage{lipsum}
\usepackage{float}
\usepackage{hyperref}
\usepackage{cleveref}
\Crefname{line}{}{}
\usepackage{bm}
\hypersetup{
    hidelinks,
    colorlinks=true,
    linkcolor=black,
    citecolor=blue
} 

\usepackage{enotez}
\let\footnote=\endnote
\setenotez{backref = true}
\setenotez{list-name = {Endnotes}}
\usepackage[hang,flushmargin]{footmisc}
\makeatletter
\def\footnoterule{\relax
  \kern 1pt
  \hbox to \columnwidth{\vrule width 0.5\columnwidth height 0.5pt\hfill}
  \kern 1pt}
\makeatother

\tikzset{
 hatch distance/.store in=\hatchdistance,
 hatch distance=10pt,
 hatch thickness/.store in=\hatchthickness,
 hatch thickness=2pt
 }
 \makeatletter
 \pgfdeclarepatternformonly[\hatchdistance,\hatchthickness]{flexible hatch}
 {\pgfqpoint{0pt}{0pt}}
 {\pgfqpoint{\hatchdistance}{\hatchdistance}}
 {\pgfpoint{\hatchdistance-1pt}{\hatchdistance-1pt}}%
 {
 \pgfsetcolor{\tikz@pattern@color}
 \pgfsetlinewidth{\hatchthickness}
 \pgfpathmoveto{\pgfqpoint{0pt}{0pt}}
 \pgfpathlineto{\pgfqpoint{\hatchdistance}{\hatchdistance}}
 \pgfusepath{stroke}
 }
\makeatother
\usetikzlibrary{shapes.geometric, arrows, calc, shapes.misc, matrix, shapes, arrows, calc, positioning, shapes.geometric, shapes.symbols, shapes.misc, intersections, chains, scopes}
\tikzstyle{node1} = [circle, circle sides=6,minimum width=1.25cm, text badly centered, text width=2.35em,minimum height=1.15cm, draw=black, font=\small]
\tikzstyle{node2} = [rectangle,rounded corners, minimum width=1.25cm, minimum height=1cm, text-centered, draw=black, font=\small]
\tikzstyle{arrow} = [thick,->,>=stealth]

\newcommand{\E}{\mathbb{E}}
\newcommand{\Var}{\textbf{Var}}

\newcommand{\ALG}{\texttt{ALG}}
\newcommand{\Cov}{\operatorname{Cov}}
\newcommand{\tr}{\operatorname{tr}}
\usepackage[margin=1in]{geometry}
\usepackage{amsmath, amssymb,  mathtools}
\usepackage{bm}
\usepackage{bbm}        
\usepackage{hyperref}

\newenvironment{nassumption}[1][]{%
  \trivlist
  \item[\hspace*{1em}\hskip0.5em{\mdseries\scshape Assumption%
    \if\relax\detokenize{#1}\relax\else\ {\normalfont\bfseries(#1)}\fi.}]%
  \normalfont\it\ignorespaces
}{%
  \endtrivlist
}

\newcommand{\R}{\mathbb{R}}

\newcommand{\Z}{\mathbb{Z}}
\newcommand{\obs}{\mathrm{obs}}
\newcommand{\1}{\mathbbm{1}}
\newcommand{\PP}{\mathbb{P}}
\newcommand{\diag}{\mathrm{diag}}

\newcommand{\cF}{\mathcal{F}}
\newcommand{\cT}{\mathcal{T}}
\newcommand{\cX}{\mathcal{X}}
\newcommand{\cA}{\mathcal{A}}
\newcommand{\cY}{\mathcal{Y}}
\newcommand{\cM}{\mathcal{M}}
\newcommand{\cH}{\mathcal{H}}
\newcommand{\cE}{\mathcal{E}}
\newcommand{\cN}{\mathcal{N}}

\newcommand{\norm}[1]{\left\|#1\right\|}

\newcommand{\K}{K}
\newcommand{\A}{A}
\newcommand{\X}{X}
\newcommand{\Y}{Y}
\newcommand{\Oo}{O}

\newcommand{\e}{e}              
\newcommand{\ebar}{\bar e}      

\newcommand{\IF}{\varphi}       

\makeatletter
\def\theARTICLETOP{\vspace*{-57pt}}
\makeatother

\begin{document}

\RUNAUTHOR{Li et al.}
\RUNTITLE{Semiparametric Efficiency in Sequential Experiments: Characterization and Design via Average Propensity}
\TITLE{Semiparametric Efficiency in Sequential Experiments: Characterization and Design via Average Propensity}

\ARTICLEAUTHORS{
 \AUTHOR{Jiachun Li}
 \AFF{Laboratory for Information and Decision Systems, MIT, \url{jiach334@mit.edu}}
 \AUTHOR{David Simchi-Levi}
 \AFF{Laboratory for Information and Decision Systems, MIT, \url{dslevi@mit.edu}}
}

\ABSTRACT{%
\textbf{Abstract.}
Modern experiments, including evaluations of AI-enabled services and platform interventions, often depart from independent and identically distributed (i.i.d.) sampling because assignments may be adaptive, balanced across covariates, or subject to rollout constraints such as exposure, fairness, and budget limits. This paper studies the efficiency benchmark for estimating causal targets in such sequential experiments. We show that every non-anticipating design induces an average propensity score, and we establish a semiparametric lower bound: for regular locally unbiased estimators, attainable precision is bounded by the i.i.d. efficiency benchmark evaluated at this induced score. The average propensity score thereby serves as a common benchmark and design target, allowing sequential experimental design to be viewed as choosing or learning an efficient allocation rule, with operational constraints entering through the admissible set when present. We then develop implementable batched adaptive designs that approach this benchmark through two complementary mechanisms. The first uses regression adjustment based on efficient influence functions; for general smooth estimands it attains the benchmark under standard nuisance-rate conditions, while for linear functionals of outcome means it achieves a sharp second-order rate. The second uses adaptive covariate balancing to attain the same benchmark through the assignment mechanism, enabling simple moment-based estimation. Both routes require only a small number of policy updates, making them compatible with delayed feedback and easier to monitor in operational deployments. Numerical experiments and an empirical study of AI medical-assistant evaluation demonstrate the practical efficiency gains, including in multi-treatment settings. Overall, the paper provides a unified framework for characterizing and designing efficient sequential experiments.
}
\maketitle
\vspace{-2mm}

\section{Introduction}

In the era of large-scale AI deployment, platforms and service providers continuously introduce new AI-enabled services, agentic workflows, and intervention policies. Many of these systems remain difficult to evaluate through mechanistic reasoning alone: their real-world effects depend not only on complex model behavior but also on hard-to-predict human responses, including phenomena such as algorithm aversion \citep{dietvorst2015algorithm} and performance changes under sustained AI exposure \citep{budzyn2025deskilling}. As a result, randomized experiments have become an essential paradigm for causal evaluation. For example, recent large-scale efforts to assess the reliability of LLM-based medical assistants for the general public rely on randomized preregistered studies to establish safety and efficacy prior to broader deployment \citep{bean2026reliability}. More broadly, experimentation allows treatment effects to be identified with minimal modeling assumptions even when the underlying system is opaque \citep{gilotte2018offline}.

\begin{figure}[htbp]
    \centering
    \includegraphics[width=0.7\linewidth]{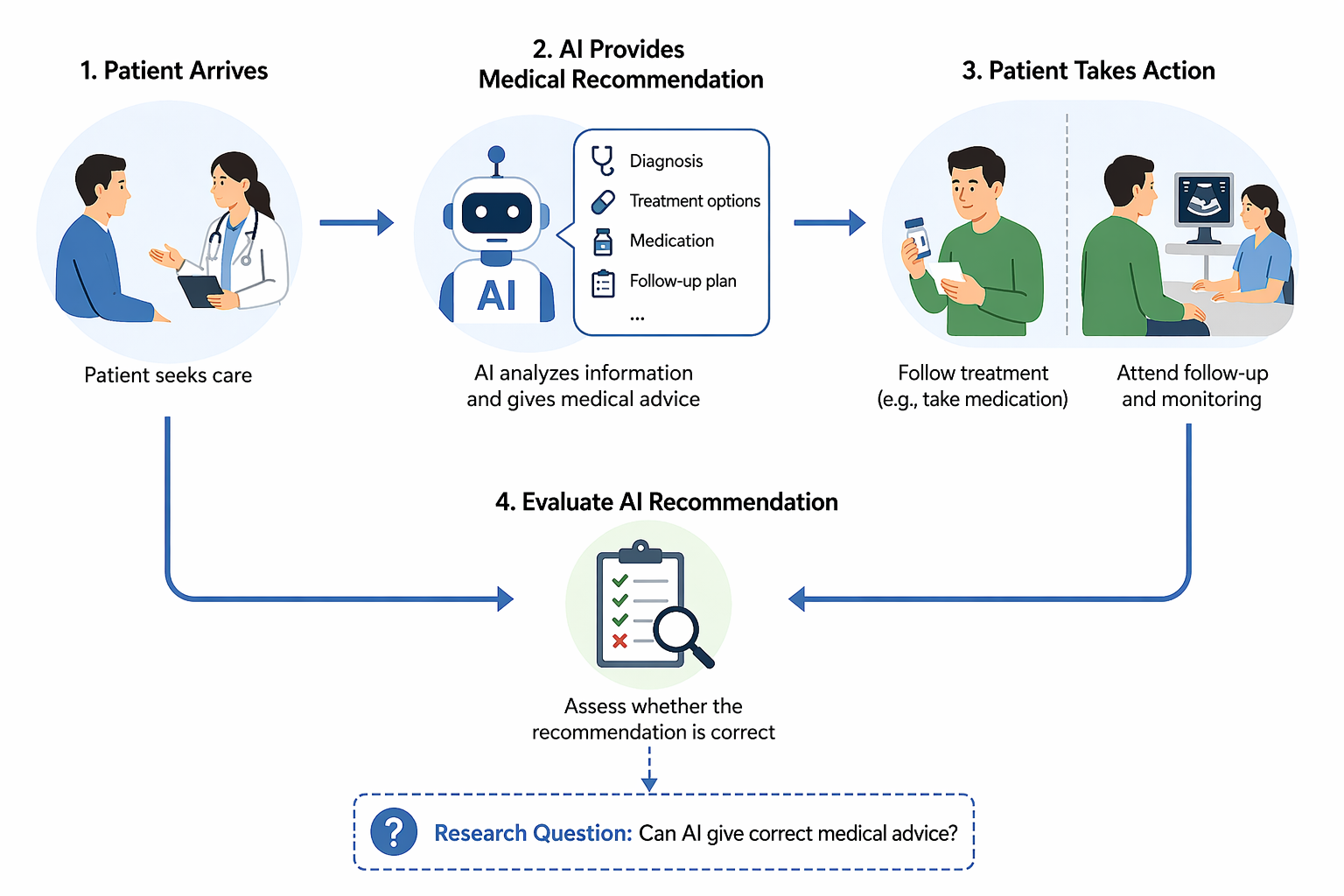}
    \caption{An illustration of the need to evaluate AI medical assistants prior to deployment.}
    \label{fig:illustration}
\end{figure}

In practice, however, such experiments rarely conform to classical independent and identically distributed (i.i.d.) sampling. Modern deployments often use \emph{adaptive assignment} to improve statistical precision, \emph{covariate balancing} to stabilize comparisons across heterogeneous populations, and rollout restrictions such as exposure limits, treatment budgets, or subgroup fairness requirements. These features create dependence across observations and make classical i.i.d.-based efficiency theory insufficient as a guide for experimental design.

Consequently, experimental design emerges as a primary statistical and operational decision. Experimental traffic, eligible users, and evaluation time are limited resources, and different treatment variants may provide different amounts of information about the target estimand. At the same time, highly reactive per-unit adaptation can be operationally costly, difficult to monitor, and statistically sensitive to noisy short-run outcomes. In platform experimentation and high-stakes settings such as clinical evaluation, outcome feedback may also arrive with delay, making continuous policy updates impractical. These considerations motivate \emph{batched adaptation}: the assignment policy is updated only a small number of times, allowing experimenters to maintain a rollout that is easier to implement and audit, more stable under noisy outcomes, and naturally compatible with delayed feedback. Batched designs are also analytically useful because observations remain conditionally i.i.d.\ within each batch.

These considerations lead to the central inquiry of this paper: \textit{what is the efficiency benchmark for estimating a target causal estimand in a sequential experiment, and how can an implementable design approach it in practice?}

Our response is organized around the \emph{induced average propensity score}. We show that every non-anticipating design induces such an object, and we establish a semiparametric lower bound showing that regular locally unbiased estimators are subject to the i.i.d.\ efficiency benchmark evaluated at this induced average propensity. The average propensity score therefore serves as a common benchmark and design target for sequential experiments. Operational restrictions such as exposure, budget, or fairness constraints can be encoded through the admissible set of average propensities when present, while the central theory focuses on identifying and approaching the corresponding benchmark. We then establish that, in standard and practice-relevant settings, this benchmark can be approached constructively by batched adaptive designs through two complementary routes: regression adjustment and covariate balancing. Together, these results provide a unified and practically relevant framework for efficient multi-treatment experimentation beyond classical i.i.d.\ randomization.
We summarize our contributions as follows.

\begin{figure}
\centering
\includegraphics[width=0.7\linewidth]{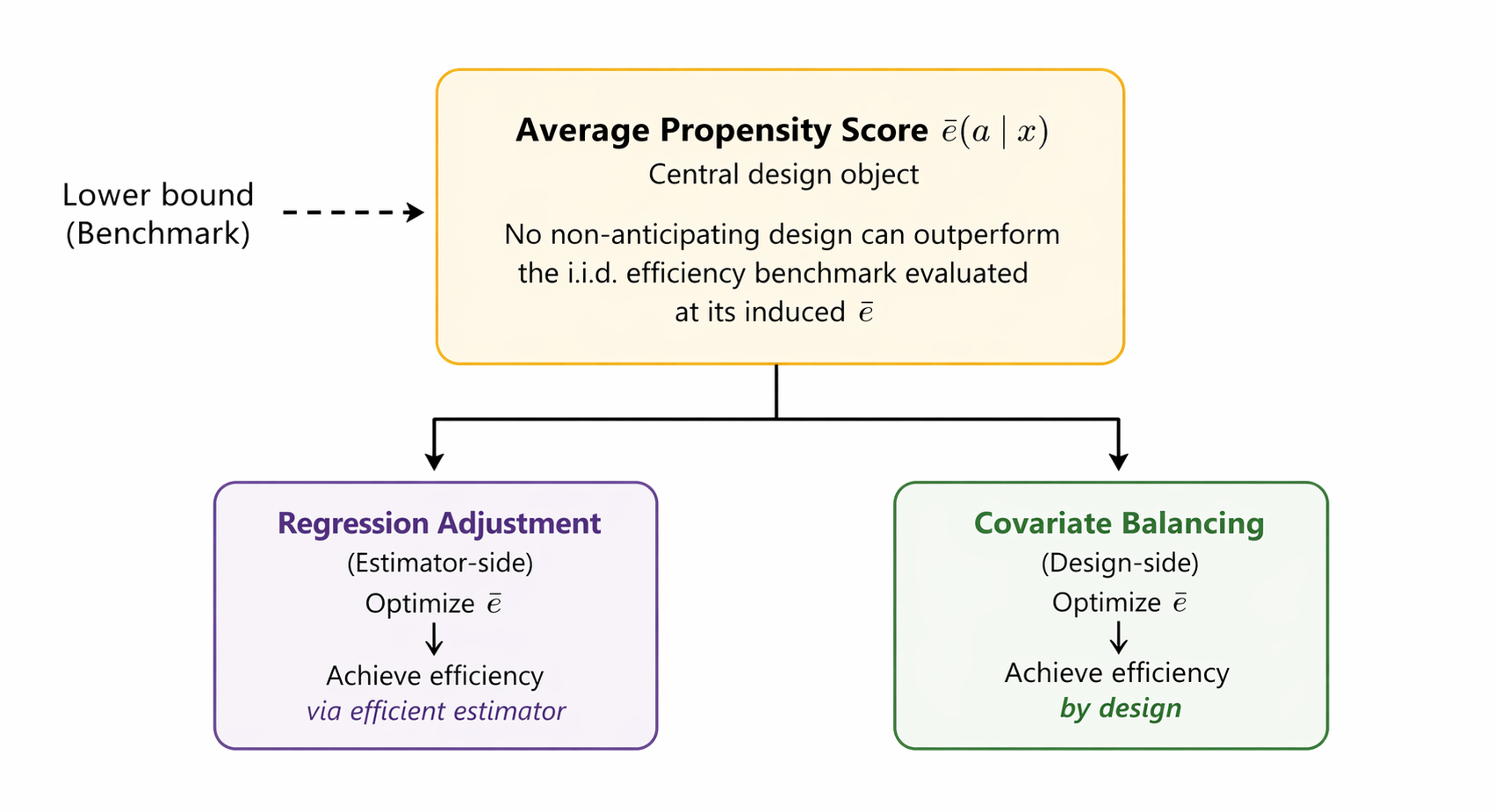}
\caption{A roadmap of the main results.}
\label{fig:contribution}
\end{figure}

\textbf{1. Characterizing the efficiency benchmark via average propensity.}
For a broad class of non-anticipating experimental designs, we establish that the induced average propensity score is the central object governing first-order efficiency. Specifically, we prove a semiparametric lower bound showing that regular locally unbiased estimators are subject to the efficiency benchmark of an i.i.d.\ experiment evaluated at the induced average propensity. This characterization places several previously separate settings including adaptive, finely stratified, and local-randomization designs under the same framework by showing that their statistical limits are governed by the same induced-propensity principle. It also recasts experimental design as the problem of choosing, or in adaptive settings learning, an efficient average propensity, with application-specific constraints incorporated through the admissible design class when present. For smooth estimands whose efficient influence functions are well defined, the corresponding optimality condition takes the form of a generalized Neyman allocation, recovering the classical Neyman allocation for the ATE as a special case. The framework also extends naturally from scalar targets to low-dimensional vector parameters: under $L_2$ risk, the optimal design is characterized by the conditional second-moment matrices of the efficient influence function, yielding a multivariate analogue of Neyman allocation.

\textbf{2. Approaching the benchmark via batched adaptive designs.}
Having characterized the benchmark, we show that it can be approached constructively in standard, practice-relevant settings using batched adaptive designs. Our procedures divide the experiment into a logarithmic number of batches and update the assignment policy only at batch boundaries. This batched structure has operational advantages: it requires few policy updates, is easier to implement and audit, is more stable under noisy outcomes, and accommodates delayed feedback, while simultaneously restoring a conditional i.i.d.\ structure within each batch. Building upon this shared principle, we develop two complementary routes to the same benchmark, each with different practical tradeoffs:

\begin{itemize}[leftmargin=2em]
\item \textbf{2.1 Regression adjustment.}
Our first route uses batched adaptive regression adjustment based on the efficient influence function (EIF). At a high level, the design assigns treatments according to an estimated optimal propensity within each batch and updates this target propensity only at batch boundaries using accumulated historical data. For general smooth estimands, we prove that under the classical nuisance-estimation rate of $n^{-1/4}$, this procedure approaches the oracle semiparametric benchmark. For the important special case of linear functionals of the outcome mean---including average treatment effects and policy evaluation---we obtain a sharper refinement: the $n^{-1/4}$ bottleneck vanishes, we explicitly characterize the second-order rate at which the efficiency benchmark is approached, and we establish a matching lower bound proving that this rate is minimax-sharp up to logarithmic factors.

\item \textbf{2.2 Covariate balancing.}
Our second route approaches the same benchmark \emph{by design}, through batched adaptive covariate balancing. This route is motivated by settings in which flexible nuisance estimation is difficult or undesirable, and it uses a simple, moment-based estimator in the final evaluation step. Compared with the existing covariate-balancing literature, our setting is more general: covariates arrive sequentially, treatment must be assigned immediately, the optimal target propensity is unknown and must be learned online across batches, and the resulting efficiency guarantees cover a broad class of moment-defined estimands, with the average treatment effect as a special case. By connecting balancing mechanisms to online discrepancy control, we construct a batched sequential design that drives the balancing remainder to zero within each batch while simultaneously learning the optimal average propensity across batches. As a result, the optimal semiparametric benchmark is attained through assignment-side control, bypassing the $n^{-1/4}$ nuisance-rate bottleneck that appears in the general regression-adjustment route.
\end{itemize}
The rest of the paper is organized as follows. Section~\ref{subsec:lit_review} reviews the related literature. Section~\ref{sec:lower-bound} introduces the induced average propensity score and establishes the main benchmark result of the paper: for regular locally unbiased estimators under any non-anticipating experiment, possibly with dependent assignments, the achievable precision is bounded by the semiparametric efficiency benchmark of an i.i.d.\ experiment evaluated at the corresponding average propensity score. Sections~\ref{sec:upper-bound} and~\ref{sec:cov-balancing} then develop two complementary constructive routes to approach this benchmark using batched adaptive designs. Section~\ref{sec:upper-bound} studies the estimator-side route based on batched adaptive regression adjustment, and shows that under suitable conditions this procedure learns the target propensity and approaches the oracle benchmark. Within this route, Section~\ref{subsec:linear-case} focuses on the important special case of linear functionals of the outcome mean, including weighted ATEs and policy value, and provides an explicit characterization of the rate at which the efficiency benchmark is approached, together with a matching lower bound. Section~\ref{sec:cov-balancing} studies the design-side route based on batched adaptive covariate balancing, showing that the same semiparametric benchmark can be attained by design using a simple moment-based estimator. Finally, Section~\ref{sec:numerical} reports simulation results and empirical evidence based on AI medical-assistant evaluation.

\subsection{Literature Review} \label{subsec:lit_review}

Our paper is related to three strands of literature: semiparametric efficiency beyond i.i.d.\ experiments, adaptive experimentation for efficient estimation, and covariate balancing as a route to efficiency by design. Across these strands, our central theme is that the induced average propensity score provides a common benchmark and design object for sequential experiments.

\noindent \textbf{Semiparametric efficiency beyond i.i.d.\ experiments.}
Semiparametric efficiency under independent and identically distributed (i.i.d.) sampling is classical in statistics and econometrics \citep{newey1994asymptotic,van2000asymptotic}. Beyond i.i.d.\ experiments, however, existing results are largely organized around specific non-i.i.d.\ design classes. For example, \cite{armstrong2022asymptotic} derives efficiency bounds for a broad class of unconstrained adaptive sequential experiments; \cite{bai2023efficiency} and \cite{rafi2023efficient} study finely stratified and covariate-adaptive designs under fixed propensity rules; and \cite{cytrynbaum2021optimal} analyzes optimal allocation in two-stage sampling-and-assignment designs. While these papers are closely related in spirit, they operate under different assumptions on the assignment mechanism and do not subsume one another. Our paper provides a unified lower-bound perspective encompassing these previously separate settings. By identifying the induced average propensity score as the central benchmark object, we develop a common efficiency characterization for non-anticipating designs, and in particular resolve the conjectured efficiency characterization in the two-stage design of \cite{cytrynbaum2021optimal}.

\noindent \textbf{Adaptive experimentation for efficient estimation.}
The literature on response-adaptive randomization studies how treatment allocation ratios can be learned and updated during the experiment \citep{hu2006theory}. More recent work has focused on adaptive experimentation for efficient causal estimation, often with the goal of approaching Neyman-type allocations that minimize the asymptotic variance of the average treatment effect (ATE) \citep{kato2020efficient,zhao2023adaptive,dai2023clip,li2024optimal,cook2024semiparametric}. Most of this literature, however, is centered on the ATE, binary treatments, or parameter-specific allocation rules. Our paper extends this line of work in three directions. First, we move from the ATE to general smooth estimands and show that the optimal design admits a generalized Neyman-allocation interpretation, with treatment allocation driven by arm-specific efficient-influence-function variance contributions. Second, we allow for multiple treatments rather than only two arms. Third, for linear functionals of the outcome mean, we provide a cross-fitted batched procedure together with an explicit second-order characterization of the rate at which the semiparametric benchmark is approached. To the best of our knowledge, this is the first minimax characterization of the rate of convergence to the semiparametric efficiency bound for a general class of experiments.

\noindent \textbf{Covariate balancing and by-design efficiency.}
Covariate balancing---including rerandomization, matched-pair designs, and finely stratified designs---provides an alternative route to improving efficiency, especially in small samples \citep{li2018asymptotic,bai2022optimality,cytrynbaum2024finely}. The central idea in this literature is that semiparametric efficiency can sometimes be attained ``by design,'' allowing simple estimators to perform well without heavy regression adjustment \citep{bai2023efficiency}. However, this literature is largely offline: it typically assumes that all covariates are observed before assignment, focuses primarily on the ATE, and treats the target propensity score as fixed or externally specified. In contrast, our covariate-balancing route is constructive, adaptive, and substantially more general. The design operates in an online batched setting in which covariates arrive sequentially and treatment must be assigned immediately. Rather than assuming that the optimal propensity is known in advance, our procedure jointly learns the target propensity and implements the balancing design within the experiment. Moreover, the resulting efficiency guarantees apply to a broad class of moment-defined estimands with multiple treatments, rather than only the binary-treatment ATE.

\section{Problem Formulation}\label{sec:formulation}

We study a class of causal experiments in which treatment assignment is constrained by a natural \emph{non-anticipation} requirement. We index units by an abstract \emph{assignment order} or \emph{information order}, rather than exclusively by an online arrival time. This broader viewpoint provides a single operational framework that unifies independent and identically distributed (i.i.d.) randomization, offline covariate-based balancing designs (such as stratified or matched-pair experiments), and sequential adaptive designs. This section establishes the basic setup and defines the ingredients needed for the benchmark analysis.

\subsection{Data-generating process}

Fix an integer $\K\ge 2$ and let the treatment space be
\[
\cA=\{1,\dots,\K\}.
\]
Let the covariate space be
$
\cX=[0,1]^d
$
for some fixed dimension $d\ge 1$. This normalization is imposed without loss of generality up to measurable rescaling, and is convenient for the nonparametric arguments used later in the paper.
For each unit $t=1,\dots,n$, let
\[
W_t:=\bigl(\X_t,\Y_t(1),\dots,\Y_t(\K)\bigr)
\]
denote the latent covariates and potential outcomes, where $\X_t\in\cX$ and $\Y_t(a)\in\cY$ for each $a\in\cA$. We assume throughout that the latent units are independent and identically distributed:
\[
W_1,\dots,W_n \stackrel{\mathrm{i.i.d.}}{\sim} P_0.
\]
Write $P_{\X}$ for the marginal law of $\X_t$, and for each $a\in\cA$ let $p_a(y\mid x)$ denote the conditional law of $\Y_t(a)$ given $\X_t=x$.
Under treatment assignment $\A_t\in\cA$, the observed outcome is
$
\Y_t:=\Y_t(\A_t).
$
Thus, the observed data for unit $i$ are given by
\[
\Oo_t=(\X_t,\A_t,\Y_t),\qquad t=1,\dots,n.
\]
Crucially, while the latent units $W_t$ are i.i.d., the observed sequence $\{\Oo_t\}_{t=1}^n$ need not be i.i.d., because the treatment assignment mechanism may depend on previously revealed information.

\subsection{Information order and non-anticipating designs}

The experiment proceeds according to an assignment order $t=1,\dots,n$. This order specifies the sequence in which treatment decisions are made, but need not coincide with online arrival in calendar time. The information available to the experimenter is described by a filtration
\[
\{\cF_t\}_{t=0}^n.
\]
Here $\cF_{t-1}$ denotes the information available immediately before treatment is assigned to unit $t$. The initial $\sigma$-algebra $\cF_0$ may be nontrivial; for instance, in offline designs, it may already contain the entire baseline covariate profile $(\X_1,\dots,\X_n)$.

\begin{definition}[Non-anticipating design]
\label{def:non-anticipating}
An experimental design $\ALG$ is \emph{non-anticipating} if for every $t=1,\dots,n$ there exists a measurable kernel
\(
\e_t(\cdot\mid x,f)\in\Delta(\cA)
\)
such that
\[
\PP(\A_t=a\mid \X_t,\cF_{t-1})=\e_t(a\mid \X_t,\cF_{t-1}),
\qquad a\in\cA.
\]
Equivalently, $\A_t$ is generated by a rule measurable with respect to $\sigma(\X_t,\cF_{t-1})$.
\end{definition}

The intuition behind this definition is straightforward: an assignment decision can depend on any information already revealed, but it cannot depend on unrevealed outcomes or covariates. Future covariates are admissible only if they are already measurable with respect to the current information set, as in offline designs where all covariates are known upfront.
We impose two standard regularity conditions throughout:
\begin{enumerate}[label=(\roman*),leftmargin=*]
\item \textbf{Outcome visibility.} Conditional on $(\X_t,\A_t)$, the law of $\Y_t$ does not depend on $\cF_{t-1}$.
\item \textbf{Positivity.} There exists $\varepsilon>0$ such that $\e_t(a\mid x,f)\ge \varepsilon$ for all $a\in\cA$, all $t$, and for $P_{\X}$-a.e.\ $x$ and $P^{\ALG}$-a.e.\ $f$.
\end{enumerate}
We write $\mathfrak{G}$ for a feasible \emph{design class}, representing any collection of non-anticipating designs satisfying operational restrictions of interest (e.g., exposure limits, treatment-share constraints, or balancing rules). The lower bound developed in Section~\ref{sec:lower-bound} applies broadly to any such design class. The constructive procedures developed later in the paper, however, will focus on approaching this benchmark within standard, practice-relevant subclasses.

\subsection{Examples of non-anticipating designs}

The non-anticipating formulation is designed to be inclusive. We provide four examples, each serving a distinct conceptual role. In each case we make explicit what the information order $\{\cF_t\}$ is, since this determines which designs are admissible.

\paragraph{Example 1: i.i.d.\ randomization.}
There is no pre-experimental information and the filtration is the natural data filtration:
\[
\cF_0=\{\emptyset,\Omega\},\qquad
\cF_t=\sigma(\X_1,\A_1,\Y_1,\dots,\X_t,\A_t,\Y_t),\quad t=1,\dots,n.
\]
The assignment rule does not actually use any of the accumulated history beyond the current covariate $\X_t$:
\[
\e_t(a\mid x,\cF_{t-1})=e(a\mid x),\qquad t=1,\dots,n,
\]
for a fixed propensity $e(a\mid x)$. This is the classical i.i.d.\ experiment: each $(\X_t,\A_t,\Y_t)$ is marginally distributed as in the product model with propensity $e$.

\paragraph{Example 2: complete randomization with fixed treatment shares.}
The experiment enforces exact treatment counts $n_a$ with $\sum_{a=1}^{\K} n_a=n$, assigning sequentially without replacement and without using covariates. The relevant information for the assignment rule is the vector of past treatment counts, so one may take
\[
\cF_0=\{\emptyset,\Omega\},\qquad
\cF_t=\sigma(\A_1,\dots,\A_t),\quad t=1,\dots,n,
\]
or equivalently the full history filtration (which contains the same treatment-count information). The assignment rule depends on the history only through $N_{t-1,a}:=\sum_{s=1}^{t-1}\1\{\A_s=a\}$:
\[
\e_t(a\mid x,\cF_{t-1})
=
\frac{n_a-N_{t-1,a}}{n-t+1}.
\]
This design yields dependent assignments across units---knowing past assignments changes the remaining allocation budget---but remains strictly non-anticipating because $N_{t-1,a}$ is $\cF_{t-1}$-measurable.

\paragraph{Example 3: offline stratified or matched-pair randomization.}
The distinctive feature of offline designs is that the entire covariate profile is observed before any treatment is assigned. Formally,
\[
\cF_0=\sigma(\X_1,\dots,\X_n),\qquad
\cF_t=\sigma\bigl(\X_1,\dots,\X_n,\,\A_{\pi(1)},\Y_{\pi(1)},\dots,\A_{\pi(t)},\Y_{\pi(t)}\bigr),
\quad t=1,\dots,n,
\]
where $\pi=\pi(\X_1,\dots,\X_n)$ is any measurable ordering of the units determined from the full covariate profile. Since all covariates are already observed at time $0$, fixing such an order is without loss of generality: it simply provides a sequential representation of the offline design.
Let $\{B_\ell\}_{\ell=1}^L$ be a partition of $\cX$, and let
\[
N_\ell:=\sum_{i=1}^n \1\{\X_i\in B_\ell\}
\]
be the realized size of stratum $\ell$. Suppose that within each stratum $\ell$, exactly $m_{\ell a}$ units are assigned to arm $a$, with $\sum_{a=1}^{\K}m_{\ell a}=N_\ell$. For the unit $\pi(t)$ with $\X_{\pi(t)}\in B_\ell$, let
\[
N_{\ell a,t-1}:=\sum_{s=1}^{t-1}\1\{\X_{\pi(s)}\in B_\ell,\A_{\pi(s)}=a\},
\qquad
R_{\ell,t-1}:=\sum_{s=1}^{t-1}\1\{\X_{\pi(s)}\in B_\ell\}.
\]
Then the sequential assignment rule is
\[
\e_t(a\mid \X_{\pi(t)},\cF_{t-1})
=
\frac{m_{\ell a}-N_{\ell a,t-1}}{N_\ell-R_{\ell,t-1}},
\qquad \text{when } \X_{\pi(t)}\in B_\ell.
\]
Thus, conditional on the full covariate profile, the design is simply without-replacement randomization within each stratum. The induced average propensity for any unit with $x\in B_\ell$ is therefore
\[
\bar e_a(x)=\frac{m_{\ell a}}{N_\ell},\qquad x\in B_\ell.
\]
Matched-pair randomization is the special case in which each stratum has size two and one unit is assigned to each arm.

\paragraph{Example 4: batched adaptive design.}
The assignment order is divided into consecutive batches $I_1,\dots,I_B$ of sizes $\gamma_b:=|I_b|$. The information order is the natural full-history filtration
\[
\cF_0=\{\emptyset,\Omega\},\qquad
\cF_t=\sigma(\X_1,\A_1,\Y_1,\dots,\X_t,\A_t,\Y_t),
\]
together with the coarser \emph{pre-batch} filtration $\cH_b:=\cF_{\min I_b-1}$ (with $\cH_1=\cF_0$), which contains all data collected before batch $b$ starts. The defining restriction of a batched design is that the within-batch propensity function $e^{(b)}(a\mid x)$ is $\cH_b$-measurable and does not update within the batch. For any $t\in I_b$,
\[
\e_t(a\mid x,\cF_{t-1})=e^{(b)}(a\mid x).
\]
Thus, within a batch, observations are conditionally i.i.d.\ given $\cH_b$; across batches, the propensity is updated based on accumulated history. The induced average propensity is
\[
\bar e_a(x)
=
\frac{1}{n}\sum_{b=1}^B \gamma_b\,\E^{\ALG}\!\left[e^{(b)}(a\mid x)\right].
\]
This batched structure is the primary operational form studied in the later constructive sections.

\subsection{Average propensity as the sufficient statistics of the design}

The central design object for the lower-bound analysis is the \emph{induced average propensity score}. It measures the average assignment mass allocated to each treatment at each covariate value over the course of the experiment.

\begin{definition}[Average propensity]
\label{def:avg-prop}
For any non-anticipating design $\ALG$, define
\begin{equation}
\label{eq:avg-prop}
\bar e_a(x)
:=
\frac{1}{n}\sum_{t=1}^n
\E^{\ALG}\!\left[\e_t(a\mid x,\cF_{t-1})\right],
\qquad a\in\cA.
\end{equation}
\end{definition}
Different designs may generate complex and highly diverse dependence structures. However, the lower bound in the next section will show that the relevant summary of the design for the statistical benchmark is captured by $\bar e$, which can be understood as the ``sufficient statistics" of a design.

\subsection{Target parameter and the efficiency benchmark under an i.i.d. assignment rule}

Before addressing dependent designs, we first ask a simpler benchmark question: if the assignment rule were fixed, what would the semiparametric efficiency benchmark be? 
Let $\Oo=(\X,\A,\Y)$ denote an observation from an i.i.d.\ model with density
\[
p(o)=p_{\X}(x)\,\e(a\mid x)\,p_a(y\mid x),
\qquad a\in\cA,
\]
where the propensity $\e(\cdot\mid x)$ is treated as a known, fixed rule, while $p_{\X}$ and $\{p_a(\cdot\mid x)\}_{a\in\cA}$ are unrestricted. Let $\theta:\cM(\e)\to\R$ be a pathwise differentiable target parameter, and let $\varphi_{\e}\in L_2^0(P_0)$ denote the efficient influence function (EIF) under propensity $\e$, which is the canonical score object whose variance determines the semiparametric efficiency benchmark in the i.i.d. design model.
The following lemma gives a standard structural representation of the EIF under known assignment.

\begin{lemma}[EIF structure under known assignment]
\label{lem:eif-structure}
Let $\theta:\cM(\e)\to\R$ be pathwise differentiable at $P_0$. Then there exist measurable functions $g:\cX\to\R$ and $v_a:\cX\times\cY\to\R$ for $a\in\cA$ such that
\begin{equation}
\label{eq:eif-structure}
\varphi_{\e}(\Oo)
=
\bigl(g(\X)-\theta(P_0)\bigr)
+
\sum_{a\in\cA}\frac{\1\{\A=a\}}{\e(a\mid \X)}\,v_a(\X,\Y),
\end{equation}
where
\[
\E\!\left[v_a(\X,\Y)\mid \X,\A=a\right]=0,\qquad a\in\cA.
\]
\end{lemma}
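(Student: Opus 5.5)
The plan is to characterize the tangent space of the model $\cM(\e)$ and then project an arbitrary gradient of $\theta$ onto it. Since $\e$ is known and fixed while $p_{\X}$ and $\{p_a(\cdot\mid x)\}$ are unrestricted, regular parametric submodels take the form $p_\eta(o)=p_{\X,\eta}(x)\,\e(a\mid x)\,p_{a,\eta}(y\mid x)$. Their scores are
\[
s(\Oo)=s_{\X}(\X)+\sum_{a\in\cA}\1\{\A=a\}\,s_a(\X,\Y),
\]
where $\E[s_{\X}(\X)]=0$ and $\E[s_a(\X,\Y)\mid \X,\A=a]=0$. No score component arises from the propensity, because $\e$ does not vary along the submodel. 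The closure in $L_2(P_0)$ of the set of all such scores is the tangent space
\[
\cT=\cT_{\X}\oplus\bigoplus_{a\in\cA}\cT_a .
\]
The summands are mutually orthogonal. For $\cT_{\X}$ versus $\cT_a$, condition on $(\X,\A)$. For $\cT_a$ versus $\cT_{a'}$ with $a\neq a'$, the indicators $\1\{\A=a\}$ and $\1\{\A=a'\}$ have disjoint supports.

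Pathwise differentiability gives some gradient $\tilde\varphi\in L_2^0(P_0)$, and the EIF is its projection onto $\cT$. I compute that projection summand by summand.
\begin{itemize}
\item The projection onto $\cT_{\X}$ is $\E[\tilde\varphi\mid \X]$, which is centered. I define $g(\X)-\theta(P_0)$ to be this quantity.
\item For $\cT_a$, I claim the projection is $\1\{\A=a\}\,r_a(\X,\Y)$, where
\[
r_a(x,y):=\E[\tilde\varphi\mid \X=x,\A=a,\Y=y]-\E[\tilde\varphi\mid \X=x,\A=a].
\]
This function lies in $\cT_a$. The residual is orthogonal to every $\1\{\A=a\}h(\X,\Y)$ with $\E[h\mid \X,\A=a]=0$, which follows by conditioning on $(\X,\A=a,\Y)$ and then using the centering of $h$.
\end{itemize}

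To match the stated form, I set $v_a(x,y):=\e(a\mid x)\,r_a(x,y)$, so that $\1\{\A=a\}r_a=\1\{\A=a\}v_a/\e(a\mid\X)$. Positivity ($\e\ge\varepsilon$) guarantees that $v_a$ is well defined and square integrable. It also satisfies $\E[v_a(\X,\Y)\mid \X,\A=a]=\e(a\mid\X)\,\E[r_a\mid \X,\A=a]=0$. Summing the pieces gives \eqref{eq:eif-structure}.

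The main technical point is that the closure of the scores is exactly the full set of centered square-integrable functions in each block. This means every centered $h$ can be realized as a submodel score, which is the standard construction $p_\eta=p(1+\eta h)$ with $h$ bounded, followed by approximation in $L_2$. It also means the closed subspaces are correctly identified, so that the projection argument is valid. One more measurability check is needed: since $\Y$ is generated by $p_a(\cdot\mid x)$ given $\A=a$, the conditional expectations defining $r_a$ are genuine functions of $(x,y)$ alone. Once these points are in place, the rest is routine Hilbert-space projection.
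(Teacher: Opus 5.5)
Your proposal is correct and follows essentially the same route as the paper. Both arguments use the orthogonal decomposition $\cT(\e)=\cT_{\X}\oplus\bigoplus_{a}\cT_a$ of the known-propensity tangent space, identify the EIF as the element of this closed space representing $\dot\theta_{P_0}$, and obtain the $1/\e(a\mid\X)$ factor by the rescaling $v_a=\e(a\mid\X)\,r_a$. The paper applies the Riesz representation theorem on $\cT(\e)$ directly and checks uniqueness of $v_a$, whereas you project an arbitrary gradient and compute the block projections via conditional expectations, which additionally gives explicit formulas for $g$ and $v_a$.
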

For a fixed propensity $\e$, define the corresponding semiparametric efficiency benchmark by
\begin{equation}
\label{eq:fixed-prop-benchmark}
V(\e):=\E\!\left[\varphi_{\e}(\Oo)^2\right].
\end{equation}
If the data were genuinely i.i.d.\ under the fixed rule $\e$, the semiparametric efficiency bound for estimating $\theta(P_0)$ would be $V(\e)/n$. 
For example, in the binary-treatment ATE case with propensity $e(x)=\PP(A=1\mid X=x)$ and outcome regressions $\mu_a(x)=\E[Y\mid X=x,A=a]$, the EIF is
\[
\varphi_{\mathrm{ATE}}(O)
=
\mu_1(X)-\mu_0(X)-\theta
+\frac{A}{e(X)}\bigl(Y-\mu_1(X)\bigr)
-\frac{1-A}{1-e(X)}\bigl(Y-\mu_0(X)\bigr).
\]
Accordingly, the semiparametric efficiency bound is $\E[\varphi_{\mathrm{ATE}}(O)^2]/n$, which is the smallest asymptotic variance achievable by regular estimators under the fixed assignment rule. The key benchmark question is whether, and how, this fixed-assignment bound continues to govern adaptive or balanced designs with \textbf{non-i.i.d.}\ observed data.

\section{The Statistical Limit of General Experiment Designs}\label{sec:lower-bound}

This section addresses the benchmark question raised in the introduction: within a feasible class of non-anticipating experiments, what is the best achievable statistical accuracy for estimating a target causal estimand? Our main result demonstrates that this benchmark is given by the fixed-assignment semiparametric efficiency bound evaluated at the induced average propensity score. In this sense, the statistical limit of a complex experiment is governed by a much simpler object: the average treatment exposure that the design allocates to each arm across covariate values.

\subsection{A scalar benchmark via average propensity}

We begin with a scalar target parameter $\theta(P_0)$. The key result of this subsection establishes that, although a non-anticipating design may induce substantial dependence across observations through adaptivity, balancing, or operational constraints, its semiparametric efficiency benchmark remains characterized by the fixed-assignment benchmark, evaluated at the induced average propensity score $\bar e$.
Intuitively, no matter how complicated the assignment rule is over the course of the experiment, the precision with which outcomes under a given treatment can ultimately be estimated depends on how much sample exposure the design allocates to that treatment across covariate values. The induced average propensity score $\bar e$ summarizes exactly this average exposure. The theorem shows that, for the purpose of the semiparametric efficiency benchmark, this is the object that matters. The formal proof is more subtle and is deferred to the appendix.

\begin{theorem}[Efficiency benchmark via average propensity]
\label{thm:avg-prop-lb}
Let $\ALG$ be any non-anticipating design, and let $\bar e$ be its induced average propensity. Let $T=T(\cF_n)$ be an estimator of $\theta(P_0)$. Suppose that $T$ is locally unbiased along the least-favorable submodel associated with the fixed-assignment model under propensity $\bar e$; that is, along this one-dimensional submodel $\{P_\tau:|\tau|<\delta\}$,
\[
\E_{P_0}^{\ALG}[T]=\theta(P_0),
\qquad
\left.\frac{d}{d\tau}\E_{P_\tau}^{\ALG}[T]\right|_{\tau=0}
=
\left.\frac{d}{d\tau}\theta(P_\tau)\right|_{\tau=0}.
\]
Then
\begin{equation}
\label{eq:avg-prop-lb}
\Var_{P_0}^{\ALG}(T)
\ \ge\
\frac{1}{n}\,V(\bar e)
\;=\;
\frac{1}{n}\E\!\left[\varphi_{\bar e}(\Oo)^2\right].
\end{equation}
\end{theorem}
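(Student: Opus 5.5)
We prove this with a Cramér--Rao argument applied along the least-favorable one-dimensional submodel. The key fact is that the Fisher information of the whole dependent experiment along this path equals $n$ times the information of one i.i.d.\ observation under $\bar e$. This holds because the log-likelihood score is linear in the per-unit scores, and the propensities enter only through expectations.

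\textbf{Step 1: the submodel.} Under the fixed-assignment model with propensity $\bar e$, take the least-favorable path $P_\tau$. It perturbs only $p_{\X}$ and the $p_a(\cdot\mid x)$, not the design kernels. By Lemma~\ref{lem:eif-structure}, its score is
\[
s(\Oo)=\varphi_{\bar e}(\Oo)/V(\bar e),
\]
which decomposes into a covariate part $s_X(\X)=(g(\X)-\theta)/V(\bar e)$ and outcome parts $s_a(\X,\Y)=v_a(\X,\Y)/(\bar e(a\mid \X)V(\bar e))$. These satisfy $\E[s_a\mid \X,\A=a]=0$. Pathwise differentiability then gives
\[
\tfrac{d}{d\tau}\theta(P_\tau)\big|_0=\E_{\bar e}[\varphi_{\bar e}\,s]=1.
\]
I will check that $s_X$ and the $s_a$ are legitimate scores of $p_{\X}$ and $p_a$, say through bounded tilts $p(1+\tau h)$. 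The important point is that the same perturbation of $(p_{\X},p_a)$ is a valid submodel for the law $P^{\ALG}_\tau$ of $\cF_n$ under \emph{any} non-anticipating design.

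\textbf{Step 2: score of the full experiment.} Non-anticipation and outcome visibility let the likelihood of the observed data factor as
\[
\prod_t p_{\X}(\X_t)\,\e_t(\A_t\mid \X_t,\cF_{t-1})\,p_{\A_t}(\Y_t\mid \X_t).
\]
The design kernels do not depend on $\tau$. For offline designs where $\cF_0$ contains all $\X_i$, the covariate factors simply move into the law of $\cF_0$. The full score is therefore
\[
S_n=\sum_t\bigl[s_X(\X_t)+s_{\A_t}(\X_t,\Y_t)\bigr].
\]
Differentiating under the integral, with boundedness supplied by positivity and bounded tilts, gives
\[
\tfrac{d}{d\tau}\E_{P_\tau}[T]\big|_0=\Cov(T,S_n).
\]
Cauchy--Schwarz together with local unbiasedness then yields $\Var(T)\ge 1/\Var(S_n)$.

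\textbf{Step 3: computing $\Var(S_n)$.} This is the main step. The summands form a martingale difference sequence under the order in which $\X_t$ is revealed and then $\Y_t$, since $\E[s_{\A_t}\mid \X_t,\A_t,\cF_{t-1}]=0$. For online designs the $\X_t$ are independent of $\cF_{t-1}$, so $\E[s_X(\X_t)\mid\cF_{t-1}]=0$. Cross terms therefore vanish. The covariate-outcome cross term within unit $t$ also vanishes, because $s_X$ is a function of $\X_t$ only. Hence
\[
\Var(S_n)=\sum_t\Bigl(\E[s_X^2]+\E\bigl[\textstyle\sum_a \e_t(a\mid\X_t,\cF_{t-1})\,\E[s_a^2\mid \X_t,a]\bigr]\Bigr).
\]
Because $\X_t\sim P_{\X}$ is independent of $\cF_{t-1}$, averaging over $t$ replaces $\e_t$ by $\bar e$. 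This gives
\[
\Var(S_n)=n\,\E_{\bar e}[s^2]=n/V(\bar e),
\]
which is exactly \eqref{eq:avg-prop-lb}.

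\textbf{Main obstacle.} The delicate case is offline designs, where $\cF_0$ reveals the whole covariate profile, so $\X_t$ is no longer independent of $\cF_{t-1}$. Here I will write the covariate score as $\sum_i s_X(\X_i)$, which is $\cF_0$-measurable. The outcome scores are then martingale differences given $\cF_0$, which kills all cross terms. What remains is
\[
\E\bigl[\textstyle\sum_t\sum_a \e_t(a\mid \X_t,\cF_{t-1})\,\sigma_a^2(\X_t)\bigr],\qquad \sigma_a^2(x)=\E[s_a^2\mid x,a].
\]
I must identify this quantity with $n\,\E_X\bigl[\sum_a\bar e_a(\X)\sigma_a^2(\X)\bigr]$. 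This is where Definition~\ref{def:avg-prop} has to be read appropriately: for such designs, $\bar e_a(x)$ should be understood as the average assignment probability conditional on a unit having covariate $x$. With that reading, the identity follows by exchangeability of the units, together with a Radon--Nikodym argument over $x$ to justify pulling out the conditional expectation.
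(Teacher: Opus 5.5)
Your proof is correct and follows essentially the same route as the paper: a Cauchy--Schwarz (Cram\'er--Rao) bound along the least-favorable tilt, where the trajectory score equals $\sum_t s(\Oo_t)$ because the design kernels do not depend on $\tau$, martingale orthogonality removes the cross terms, and the average-propensity identity \eqref{eq:avg-identity} turns $\E[Z^2]$ into $n/V(\bar e)$. Your separate treatment of offline designs is more careful than the paper's proof: you take $\sum_i s_X(\X_i)$ as $\cF_0$-measurable and read $\bar e_a(x)$ as the assignment probability conditional on a unit's covariate being $x$, whereas the paper's claim $\E[s(\Oo_t)\mid\cF_{t-1}]=0$ and its derivation of \eqref{eq:avg-identity} by conditioning on $(\cF_{t-1},\X_t)$ both tacitly require $\X_t$ to be independent of $\cF_{t-1}$.
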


\begin{remark}[Asymptotically unbiased estimators]
\label{rem:avg-prop-asymp}
The finite-sample statement above is the exact form used in the proof. The same argument yields the usual first-order version for regular estimators. For a sequence of non-anticipating designs $\ALG_n$ with induced average propensities $\bar e_n$, suppose that $T_n=T_n(\cF_n)$ has local bias $o(n^{-1/2})$ along the least-favorable submodel associated with the fixed-assignment model under $\bar e_n$. Then
\[
\Var_{P_0}^{\ALG_n}(T_n)
\ \ge\
\frac{1}{n}V(\bar e_n)+o(n^{-1}).
\]
Consequently, if $V(\bar e_n)\to V(\bar e)$, then
\[
\liminf_{n\to\infty} n\,\Var_{P_0}^{\ALG_n}(T_n)\ge V(\bar e).
\]
\end{remark}

Theorem~\ref{thm:avg-prop-lb} shows that, for locally unbiased estimation, the finite-sample efficiency benchmark of a non-anticipating experiment is governed by a single design object: its induced average propensity score. Remark~\ref{rem:avg-prop-asymp} gives the corresponding first-order formulation for regular estimators whose local bias is negligible at the $n^{-1/2}$ scale. Thus, the same benchmark applies both to the exact unbiased formulation used in the proof and to the asymptotic regime commonly used in semiparametric efficiency theory.
The implication is that adaptivity, balancing, batching, and other operational features affect first-order efficiency through the average assignment mass they deliver to each treatment arm across covariate values. Once the induced average propensity $\bar e$ is fixed, the corresponding i.i.d.\ semiparametric benchmark $V(\bar e)/n$ gives the relevant lower bound. In this sense, the average propensity score is a sufficient design summary for the efficiency benchmark and the natural object to optimize or learn in the constructive procedures developed below.

As a practical illustration, consider a platform evaluating a new AI-assisted feature under rollout constraints. The platform may utilize a highly adaptive rule to decide which users receive the feature, perhaps to respect traffic limits or fairness requirements. Theorem~\ref{thm:avg-prop-lb} shows that, for the purpose of the benchmark, the full complexity of that adaptive rule is encapsulated by the average treatment exposure it induces across user types. Thus, the relevant benchmark question is not how to optimize an arbitrary path-dependent assignment mechanism directly, but which average propensity scores are achievable and how favorable they are for the target estimand.

\subsection{Vector targets and generalized Neyman allocation}

Many modern experiments are inherently multi-treatment and therefore involve vector-valued targets. This is particularly relevant in applications where the goal is not a single scalar contrast, but the simultaneous evaluation of several treatment effects. For instance, a platform may compare multiple AI-assisted workflows against a common baseline, seeking to estimate the vector of treatment effects for all variants at once; similarly, one may wish to estimate several policy-value components or multiple weighted ATEs simultaneously. This motivates extending the benchmark from a scalar parameter to a vector target.

Let $\theta(P_0)\in\R^p$ be a pathwise differentiable vector parameter, and let $\varphi_{\bar e}(\Oo)\in\R^p$ denote the corresponding efficient influence function. For estimators that are locally unbiased along all one-dimensional least-favorable directions, applying Theorem~\ref{thm:avg-prop-lb} to each linear functional yields the matrix lower bound:
\begin{equation}
\label{eq:matrix-lb}
\Cov(\hat\theta)\ \succeq\ \frac{1}{n}\E\!\left[\varphi_{\bar e}(\Oo)\varphi_{\bar e}(\Oo)^\top\right].
\end{equation}
Taking the trace provides the corresponding lower bound on total squared error:
\begin{equation}
\label{eq:l2-lb}
\E\norm{\hat\theta-\theta(P_0)}^2
\ \ge\
\frac{1}{n}\E\!\left[\norm{\varphi_{\bar e}(\Oo)}^2\right].
\end{equation}

The matrix bound in \eqref{eq:matrix-lb} is crucial when one requires the full covariance structure of a vector estimator. However, in many experimental settings, a natural scalar summary of performance is the total squared error across coordinates, leading to the $L_2$-risk bound in \eqref{eq:l2-lb}. This criterion is especially intuitive for vector ATE problems, where the experimenter seeks to estimate several arm-specific treatment effects simultaneously and desires a design that minimizes aggregate estimation error across all coordinates.
Using the structural representation from Lemma~\ref{lem:eif-structure}, define:
\[
M_a(x):=
\E\!\left[v_a(\X,\Y)v_a(\X,\Y)^\top \mid \X=x,\A=a\right],
\qquad a\in\cA.
\]
These matrices capture the arm-specific conditional second-moment contribution of the efficient influence function. Substituting the EIF decomposition into the risk bound yields:
\[
\E\norm{\varphi_{\bar e}(\Oo)}^2
=
\E\norm{g(\X)-\theta(P_0)}^2
+
\E\!\left[\sum_{a\in\cA}\frac{\tr(M_a(\X))}{\bar e_a(\X)}\right].
\]
Absent additional design constraints, minimizing the integrand pointwise over all probability vectors $(e_a(x))_{a\in\cA}$ produces:
\begin{equation}
\label{eq:opt-prop}
e_a^\star(x)
=
\frac{\sqrt{\tr(M_a(x))}}{\sum_{j\in\cA}\sqrt{\tr(M_j(x))}},
\qquad a\in\cA.
\end{equation}

Equation~\eqref{eq:opt-prop} offers a direct design implication of the lower bound: treatment arms that are more difficult to evaluate precisely for the target metric should receive greater experimental exposure. In the special case of a binary average treatment effect, this reduces to the classical Neyman allocation rule. More generally, it yields an estimand-specific generalized Neyman allocation principle for multi-treatment experiments and, notably, for vector-valued ATE targets. When the target is the vector of treatment effects for several arms relative to a baseline, the optimal design is governed not by any single contrast in isolation, but by the joint second-moment structure of the efficient influence function across coordinates.

To interpret this practically, suppose a platform is comparing several AI service variants, and one variant exhibits substantially higher conditional variability in task quality or resolution rate than the others. That specific arm should receive more traffic if the goal is to minimize the overall estimation error of the target metric. In this manner, the lower bound does not merely characterize a limit; it points directly to a concrete allocation principle for efficient experimentation.

\subsection{Discussion and connection to the rest of the paper}

This section has addressed the benchmark question for a broad class of non-anticipating experiment designs. By isolating the induced average propensity score as the central benchmark object, Theorem~\ref{thm:avg-prop-lb} provides a common language for evaluating complex experimental structures.

From a theoretical perspective, this benchmark unifies several efficiency bounds that have recently been investigated in distinct areas of the experimental-design literature. Existing results for sequential adaptive experiments \citep{armstrong2022asymptotic} and highly stratified designs \citep{bai2023efficiency,cai2024performance} operate under fundamentally different assumptions and do not subsume one another; yet, both fall naturally within the present framework of non-anticipating designs. The average-propensity benchmark developed here therefore provides a unified foundation across these previously disparate settings. To further illustrate its generality, we apply this framework to resolve an existing conjecture regarding the semiparametric efficiency bound in two-stage experimental designs \citep{cytrynbaum2021optimal}. In such settings, an experimenter facing budget constraints must first select a subset of eligible units before determining their treatment assignments; our benchmark systematically accommodates this sequential selection-and-assignment mechanism.

This framework also extends naturally beyond scalar targets to vector-valued estimands. This is particularly relevant for the multi-treatment settings that motivate our applications, where the object of interest is often a vector of treatment effects rather than a single scalar contrast. In such scenarios, the benchmark characterizes the achievable covariance structure of the estimator and provides a joint allocation principle through the generalized Neyman rule in \eqref{eq:opt-prop}.

This average-propensity perspective also provides a simple way to incorporate application-specific operational constraints. In many experiments, the assignment rule must satisfy exposure limits, treatment-budget constraints, fairness requirements across subgroups, or other rollout restrictions. Such restrictions can be represented through an admissible class of average propensities, say $\mathcal E_{\mathrm{feas}}$. The relevant oracle benchmark is then obtained by minimizing the same semiparametric objective over $\mathcal E_{\mathrm{feas}}$ rather than over the full simplex. Thus, constraints affect the benchmark through the set of achievable average propensities, while the lower-bound principle itself remains unchanged: once a design induces an average propensity $\bar e$, its first-order efficiency limit is governed by the i.i.d. benchmark evaluated at $\bar e$. The constructive sections below focus on standard implementable classes of batched designs, while this viewpoint leaves room for application-specific constraints to be imposed through the propensity-update step.

Most importantly for the remainder of the paper, the induced average propensity score serves two roles. It is the sufficient summary of the design for the information-theoretic lower bound, and it also acts as the explicit target object for the constructive design problem. Guided by this benchmark, the subsequent sections develop two complementary batched adaptive routes to approach the same semiparametric efficiency limit: Section~\ref{sec:upper-bound} achieves this through estimator-side correction via regression adjustment, while Section~\ref{sec:cov-balancing} does so through design-side control via covariate balancing.


\section{Achieving Semiparametric Efficiency via Regression Adjustment}\label{sec:regression-adjustment}
\label{sec:upper-bound}

Having characterized the semiparametric efficiency benchmark through the induced average propensity score in Section~\ref{sec:lower-bound}, we now turn to the first constructive route for approaching that benchmark. This section studies estimator-side correction through batched adaptive regression adjustment. We show that the oracle benchmark $V^\star/n$ can be approached by a practical batched design, but that the required conditions depend fundamentally on the structure of the estimand. In particular, we first establish a general achievability result, and then show that for linear functionals of the outcome mean the regression-adjustment route becomes substantially sharper because the plug-in bias cancels exactly.

\subsection{Setup, Oracle Objective, and Batched Cross-Fitted Design}
\label{subsec:setup-upper}

We observe $n$ rounds of data $\Oo_t=(\X_t,\A_t,\Y_t)$ with treatments $\A_t\in\cA=\{1,\dots,\K\}$. The non-anticipating experiment is conducted in $B$ batches with index sets $\{\mathcal I_b\}_{b=1}^B$ and batch sizes $\gamma_b:=|\mathcal I_b|$. Within each batch $b$, the assignment rule $\e^{(b)}(\cdot)$ is fixed and measurable with respect to the pre-batch history $\cH_b$. Consequently, conditional on $\cH_b$, the observations within batch $b$ are i.i.d.

Recall from Lemma~\ref{lem:eif-structure} that the efficient influence function (EIF) for a target parameter $\theta(P_0)$ under propensity $\e$ takes the form
\[
\IF_\e(\Oo)
=
\bigl(g(\X)-\theta\bigr)
+
\sum_{a\in\cA}\frac{\1\{\A=a\}}{\e_a(\X)}\,v_a(\X,\Y).
\]
To isolate the design problem, we define the \emph{conditional second-moment function}
\begin{equation}
\label{eq:second-moment-def}
m_a(x) := \E\!\left[v_a(\X,\Y)^2\mid \X=x,\A=a\right],\qquad a\in\cA.
\end{equation}
The oracle design objective over a feasible class $\cE$ is then
\begin{equation}
\label{eq:design-objective}
L(\e) := \E\!\left[\sum_{a\in\cA}\frac{m_a(\X)}{\e_a(\X)}\right],
\qquad
\e^\star \in \arg\min_{\e\in\cE} L(\e),
\end{equation}
and the corresponding oracle semiparametric benchmark variance is
\[
V^\star := \Var\!\bigl(g(\X)-\theta\bigr) + L(\e^\star).
\]

Our regression-adjustment route uses a batched cross-fitted design with three ingredients. First, the sample is split into two folds to separate design learning from final EIF evaluation. Second, at the end of each batch, past data from one fold are used to estimate the second-moment proxy and update the target propensity for the opposite fold by plug-in minimization of the oracle design objective. Third, after all batches are completed, the final estimator is formed by cross-fitted EIF evaluation using the realized logging propensities. This design requires only a logarithmic number of policy updates, is naturally compatible with delayed feedback, and preserves conditional i.i.d.\ structure within each batch. Full implementation details are deferred to Appendix~\ref{app:alg-details}.
\begin{figure}
    \centering
    \includegraphics[width=0.8\linewidth]{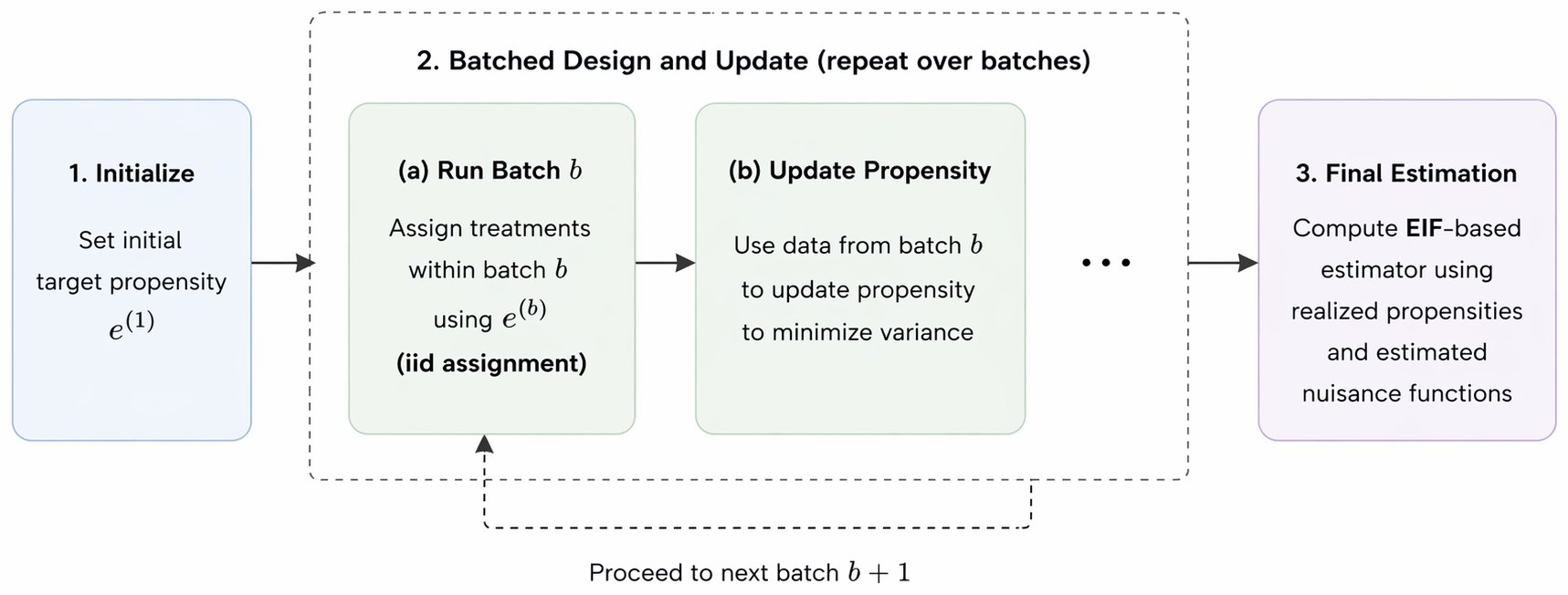}
    \caption{Illustration of the workflow of regression adjustment.}
    \label{fig:ra_illu}
\end{figure}

The remainder of this section shows that approaching the oracle rate $V^\star/n$ reduces to controlling two sources of error: nuisance estimation error (model estimation) and the regret from learning the assignment rule (design optimization). The next subsection introduces an excess-risk decomposition that organizes the analysis of the general and linear-functional regimes.
\subsection{The Excess-Risk Decomposition}
\label{subsec:decomposition}

This subsection provides the organizing device for the rest of the analysis. We introduce two infeasible oracle estimators: $\tilde\theta$, which uses the true nuisance parameters but the realized sequence of designs, and $\theta_n^\star$, which uses the true nuisance parameters together with the oracle design $\e^\star$. To isolate the sources of suboptimality, we expand the mean squared error of the feasible estimator $\hat\theta$.

\begin{proposition}[Excess MSE decomposition]
\label{prop:excess-mse-decomp}
The excess risk of the feasible estimator $\hat\theta$ decomposes exactly into three components:
\begin{equation}
\label{eq:excess-mse-decomp}
\E\!\left[(\hat\theta-\theta)^2\right]-\E\!\left[(\theta^\star_n-\theta)^2\right]
=
\underbrace{\E\!\left[(\hat\theta-\tilde\theta)^2\right]}_{\textnormal{model estimation}}
+
\underbrace{\Big(\E\!\left[(\tilde\theta-\theta)^2\right]-\E\!\left[(\theta^\star_n-\theta)^2\right]\Big)}_{\textnormal{design optimization}}
+
\underbrace{2\,\E\!\left[(\hat\theta-\tilde\theta)(\tilde\theta-\theta)\right]}_{\textnormal{interaction term}}.
\end{equation}
\end{proposition}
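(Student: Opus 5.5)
This decomposition is purely algebraic. I would write the feasible error as a telescoping sum through the infeasible oracle $\tilde\theta$,
\[
\hat\theta-\theta=(\hat\theta-\tilde\theta)+(\tilde\theta-\theta),
\]
square both sides, and take expectations. This gives
\[
\E\!\left[(\hat\theta-\theta)^2\right]
=
\E\!\left[(\hat\theta-\tilde\theta)^2\right]
+2\,\E\!\left[(\hat\theta-\tilde\theta)(\tilde\theta-\theta)\right]
+\E\!\left[(\tilde\theta-\theta)^2\right].
\]
Subtracting $\E[(\theta^\star_n-\theta)^2]$ from both sides and grouping $\E[(\tilde\theta-\theta)^2]-\E[(\theta^\star_n-\theta)^2]$ as the design-optimization term gives exactly \eqref{eq:excess-mse-decomp}.

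The only point needing care is that every expectation is finite, so the subtraction and the linearity of expectation are legitimate rather than a formal $\infty-\infty$ manipulation. I would check this first. Positivity gives $\e_t\ge\varepsilon$, and the construction in Appendix~\ref{app:alg-details} keeps the estimated propensities bounded below. Under a standing second-moment assumption on the outcomes and on the functions $g$ and $v_a$ from Lemma~\ref{lem:eif-structure}, each of $\hat\theta$, $\tilde\theta$ and $\theta^\star_n$ is then an average of EIF-type terms with finite second moments.

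Given square-integrability, the cross term is finite by Cauchy--Schwarz, so all three components are well defined. I would also state explicitly that all expectations are taken under the joint law of the experiment $\ALG$. Here $\theta^\star_n$ is defined on its own (oracle-design) probability space, so its MSE is simply a number and enters only through subtraction, with no coupling to the other estimators required. I expect no real obstacle here; the substantive work is in bounding the three terms later.
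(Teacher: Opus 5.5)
Your proposal is correct and is the paper's proof: write $\hat\theta-\theta=(\hat\theta-\tilde\theta)+(\tilde\theta-\theta)$, square, take expectations, and subtract $\E[(\theta^\star_n-\theta)^2]$. Your integrability check is a reasonable addition that the paper leaves implicit, while the paper adds a Cauchy--Schwarz bound on the interaction term that it uses later in the general-case analysis.
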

The \textbf{model estimation} term measures the cost of plugging estimated nuisance parameters into the EIF. The \textbf{design optimization} term measures the suboptimality of the learned propensities relative to the oracle design. The \textbf{interaction term} captures the coupling between nuisance error and design regret. The behavior of this interaction term---specifically, whether plug-in bias persists or cancels exactly---is what distinguishes the general EIF case from the linear-functional case.

\subsection{General EIF Scores: A Baseline and the $n^{-1/4}$ Bottleneck}
\label{subsec:general-case}

We first derive a baseline achievability result for general smooth estimands whose efficient
influence functions admit the representation in Lemma~\ref{lem:eif-structure}. The purpose of
this subsection is deliberately modest. It shows that the average-propensity benchmark can be
approached by batched regression adjustment under standard orthogonality and second-order
remainder conditions, but it also identifies the familiar nuisance-estimation bottleneck that
prevents sharper conclusions in full generality.

Let $\eta$ denote the collection of nuisance components entering the EIF score, and let
$d(\cdot,\cdot)$ be a pseudo-metric measuring nuisance error. The key structural condition is
\emph{Neyman orthogonality}: the pathwise derivative of the expected EIF score vanishes at the
true nuisance $\eta_0$. Together with common regularity conditions  collected in Assumption~\ref{assump:general-reg} in
Appendix~\ref{app:regularity}, this condition removes the first-order effect of nuisance
estimation. What remains, however, is generally a second-order plug-in bias. As a result, the
interaction term in \eqref{eq:excess-mse-decomp} need not vanish exactly and must be controlled
indirectly.
This general framework covers nonlinear smooth targets. For example, consider the quantile
treatment effect
\[
    \theta = q_1(\tau)-q_0(\tau),
    \qquad q_a(\tau):=F_{Y(a)}^{-1}(\tau).
\]
Let
\[
    G_a(y\mid x):=\Pr(\Y\le y\mid \X=x,\A=a)
\]
and let $f_a(q_a)$ denote the marginal density of $Y(a)$ at $q_a(\tau)$. The arm-specific EIF for
$q_a(\tau)$ is
\[
\phi_{q_a}(\Oo)
=
-f_a(q_a)^{-1}
\left[
G_a(q_a\mid \X)-\tau
+
\frac{\1\{\A=a\}}{\e_a(\X)}
\left\{\1\{\Y\le q_a\}-G_a(q_a\mid \X)\right\}
\right].
\]
Thus $\phi_{\mathrm{QTE}}=\phi_{q_1}-\phi_{q_0}$, and the arm-specific residual entering the EIF
representation is
\[
    v_a(\X,\Y)
    =
    \frac{\1\{\Y\le q_a\}-G_a(q_a\mid \X)}{f_a(q_a)}.
\]
Consequently, the design-relevant second moment is
\[
    m_a(x)
    =
    \frac{G_a(q_a\mid x)\{1-G_a(q_a\mid x)\}}{f_a(q_a)^2}.
\]
The nuisance vector
\(
    \eta=(q_a,f_a(q_a),G_a(q_a\mid\cdot))_{a\in\{0,1\}}
\)
enters through the natural arm-wise pseudo-metric, and Assumption~\ref{assump:general-reg}
holds under standard smoothness at the target quantiles. Linear targets such as ATEs and policy
values are treated separately in Section~\ref{subsec:linear-case}, because they enjoy a stronger
algebraic cancellation that is not available for general EIF scores.

\begin{theorem}[Model estimation bound]
\label{thm:model-estimation-general}
Under Assumption~\ref{assump:general-reg}, with $B$ batches, the model-estimation error satisfies
\begin{equation}
\label{eq:model-estimation-bound-general}
\E[(\hat\theta-\tilde\theta)^2]
\lesssim
\underbrace{
\frac{B}{n^2}
\sum_{b=1}^B
\gamma_b\,
\E\!\left[d(\hat\eta_b,\eta_0)^2\right]
}_{\text{variance}}
+
\underbrace{
\frac{B}{n^2}
\sum_{b=1}^B
\gamma_b^2\,
\E\!\left[d(\hat\eta_b,\eta_0)^4\right]
}_{\text{squared bias}} .
\end{equation}
\end{theorem}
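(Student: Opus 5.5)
The plan is to write $\hat\theta-\tilde\theta$ as a sum of per-batch contributions, split each into a centered term and a conditional-bias term, and bound the two pieces separately. I take $\hat\theta=\frac1n\sum_{t}\bigl[\IF_{\e_t}(\Oo_t;\hat\eta_{b(t)})+\theta\bigr]$ and $\tilde\theta$ to be the same expression with $\hat\eta$ replaced by $\eta_0$. Here $\hat\eta_b$ is the cross-fitted nuisance estimate used for batch $b$; it is measurable with respect to data from the opposite fold, together with $\cH_b$. Writing $\Delta_t:=\IF_{\e_t}(\Oo_t;\hat\eta_{b})-\IF_{\e_t}(\Oo_t;\eta_0)$ for $t\in\mathcal I_b$, we have
\[
\hat\theta-\tilde\theta=\frac1n\sum_{b=1}^B S_b,\qquad S_b:=\sum_{t\in\mathcal I_b}\Delta_t .
\]
By Cauchy--Schwarz, $(\sum_b S_b)^2\le B\sum_b S_b^2$. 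This is the source of the leading factor $B$, and it lets me avoid controlling cross-batch correlations, which are awkward because $\hat\eta_{b'}$ for $b'>b$ may depend on batch-$b$ data.

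For a single batch, I condition on $\mathcal G_b$, the $\sigma$-field generated by $\cH_b$ and the opposite-fold data used to construct $\hat\eta_b$. Given $\mathcal G_b$, the propensity $\e^{(b)}$ and the nuisance $\hat\eta_b$ are fixed. The observations in the current fold of batch $b$ are then i.i.d.\ from the fixed-assignment model with propensity $\e^{(b)}$, by non-anticipation, outcome visibility, and the i.i.d.\ latent units. I decompose $\Delta_t=(\Delta_t-\E[\Delta_t\mid\mathcal G_b])+\E[\Delta_t\mid\mathcal G_b]$, which gives
\[
\E[S_b^2]\le 2\,\E\Bigl[\gamma_b\Var(\Delta_t\mid\mathcal G_b)\Bigr]+2\,\E\Bigl[\gamma_b^2\bigl(\E[\Delta_t\mid\mathcal G_b]\bigr)^2\Bigr].
\]
The first term uses conditional independence within the batch, so the centered sum has variance equal to $\gamma_b$ times the per-unit variance.

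Assumption~\ref{assump:general-reg} is then used in two ways. First, a Lipschitz/mean-square continuity condition on the score, uniform over propensities bounded below by $\varepsilon$ (positivity), gives $\E[\Delta_t^2\mid\mathcal G_b]\lesssim d(\hat\eta_b,\eta_0)^2$. This bounds the variance term by $\gamma_b\,\E[d(\hat\eta_b,\eta_0)^2]$. Second, Neyman orthogonality kills the first-order term in $\E_{P_0,\e^{(b)}}[\IF(\cdot;\eta)]$ around $\eta_0$, and the second-order remainder condition gives $|\E[\Delta_t\mid\mathcal G_b]|\lesssim d(\hat\eta_b,\eta_0)^2$. Here I also use that the score at $\eta_0$ has mean zero under the known propensity $\e^{(b)}$, which holds by Lemma~\ref{lem:eif-structure}. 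Squaring yields the bias term $\gamma_b^2\,\E[d(\hat\eta_b,\eta_0)^4]$. Summing over $b$ and multiplying by $B/n^2$ gives \eqref{eq:model-estimation-bound-general}.

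The main obstacle I expect is the conditioning step. The propensity $\e^{(b)}$ is learned from the same fold's past data, and only the nuisance is cross-fitted. I must therefore check that, given $\mathcal G_b$, the batch-$b$ observations in the evaluation fold remain i.i.d.\ with a fixed propensity and are independent of $\hat\eta_b$. I also need the orthogonality and remainder bounds to hold uniformly over all admissible propensities $\e^{(b)}$, not just at $\bar e$ or $\e^\star$. I would first verify from the algorithm in Appendix~\ref{app:alg-details} that $\hat\eta_b$ uses only opposite-fold data up to $\cH_b$, so that it is $\mathcal G_b$-measurable while the current-fold batch-$b$ units are fresh draws. I would then state the remainder constants in Assumption~\ref{assump:general-reg} uniformly over propensities that are $\varepsilon$-bounded below.
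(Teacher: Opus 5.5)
Your proposal is correct and follows the paper's proof essentially step for step. The shared steps are: Cauchy--Schwarz over batches to produce the factor $B$; the within-batch conditional mean--variance expansion $\E[S_b^2\mid\cH_b]=\gamma_b\Var(\Delta_{b,1}\mid\cH_b)+\gamma_b^2(\E[\Delta_{b,1}\mid\cH_b])^2$; and then \eqref{eq:l2-lip} for the variance piece and \eqref{eq:remainder-mean} for the bias piece. Your conditioning on the finer $\sigma$-field $\mathcal G_b$ and your explicit uniformity over $\varepsilon$-bounded propensities only make explicit what the paper takes for granted when it treats $\hat\eta_b$ and $\e^{(b)}$ as fixed given $\cH_b$. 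Your factor $2$ is harmless, since the cross term vanishes exactly under conditional independence.
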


Theorem~\ref{thm:model-estimation-general} makes the bottleneck explicit. If
$d(\hat\eta_b,\eta_0)=O_P(n^{-\alpha})$, then, up to logarithmic factors from the number of
batches, the variance term is of order $n^{-1-2\alpha}$ and is therefore $o(1/n)$ for any
$\alpha>0$. By contrast, the squared-bias term is of order $n^{-4\alpha}$ and is $o(1/n)$ only
when $\alpha>1/4$. Thus, for general EIF scores, batched regression adjustment reaches the
oracle benchmark only under the classical nuisance-rate condition $\alpha>1/4$.

This bottleneck is not caused by adaptivity itself. It arises because Neyman orthogonality removes
only the local first-order effect of nuisance estimation, leaving a second-order plug-in bias. The
batched structure is still useful: conditional on the pre-batch history, the assignment rule is fixed
within a batch, so nuisance estimation can rely on standard i.i.d. methods batch by batch. The
limitation is that, for general EIF scores, this i.i.d. nuisance estimation must be sufficiently fast.
We next control the design-optimization term by relating regret in \eqref{eq:design-objective} to
the accuracy of the second-moment proxies.

\begin{theorem}[Design learning bound]
\label{thm:design-opt-general}
If the assignment rules $\e^{(b)}$ are updated by approximately minimizing
\eqref{eq:design-objective} using second-moment estimates $\hat m^{(b)}$, then the cumulative
design regret satisfies
\begin{equation}
\label{eq:design-learning-general}
\E\!\left[(\tilde\theta-\theta)^2\right]
-
\E\!\left[(\theta^\star_n-\theta)^2\right]
\lesssim
\frac{1}{n^2}
\sum_{b=2}^B
\gamma_b\,
\E\!\left[
\sum_{a\in\cA}
\bigl(\hat m_a^{(b-1)}(\X)-m_a(\X)\bigr)^2
\right].
\end{equation}
\end{theorem}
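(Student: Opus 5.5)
The plan is to write both oracle estimators as normalized sums of true EIF scores and reduce the design-optimization term to a sum over batches of per-batch excess risks $L(\e^{(b)})-L(\e^\star)$. Concretely, $\tilde\theta-\theta=\frac1n\sum_t \IF_{\e^{(b(t))}}(\Oo_t)$ with the realized logging propensities, and $\theta_n^\star-\theta=\frac1n\sum_t\IF_{\e^\star}(\Oo_t)$. Conditional on $\cH_b$ and on $(\X_t,\A_t)$, each score has mean zero, because $\E[v_a\mid\X,\A=a]=0$ and $g(\X)-\theta$ is centered. Hence the summands form a martingale difference sequence with respect to $\{\cF_t\}$, and cross terms vanish. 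This gives
\[
\E[(\tilde\theta-\theta)^2]=\frac1{n^2}\sum_{b=1}^B\gamma_b\Bigl(\Var(g(\X))+\E\bigl[L(\e^{(b)})\bigr]\Bigr),
\]
using that within batch $b$ the conditional second moment of the IPW part is $L(\e^{(b)})$, by the same computation as for a fixed propensity. Subtracting the analogous identity for $\e^\star$ yields
\[
\text{design term}=\frac1{n^2}\sum_{b}\gamma_b\,\E\bigl[L(\e^{(b)})-L(\e^\star)\bigr].
\]
The first batch uses a fixed initial rule. Its contribution is $\gamma_1\cdot O(1)/n^2$, which is absorbed under the batch schedule ($\gamma_1$ small). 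Otherwise the $b=1$ term is treated as a constant and the bound is stated for $b\ge2$.

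Next, bound each regret $L(\e^{(b)})-L(\e^\star)$ by the squared error of $\hat m^{(b-1)}$. Let $\hat L_b(\e)=\E_X\sum_a \hat m^{(b-1)}_a(\X)/\e_a(\X)$, and let $\e^{(b)}$ be its (approximate) minimizer over $\cE$. The standard argument writes
\[
L(\e^{(b)})-L(\e^\star)\le \bigl[L-\hat L_b\bigr](\e^{(b)})-\bigl[L-\hat L_b\bigr](\e^\star)+\text{opt.\ error}.
\]
That alone only gives a first-order (linear) bound in $\hat m-m$. To get the squared rate, I would exploit the curvature of $e\mapsto m/e$. Positivity ($\e\ge\varepsilon$) and boundedness of $m_a$ from above and below (from Assumption~\ref{assump:general-reg}) make the objective strongly convex in $\e$ pointwise on the simplex. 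The minimizer is a Lipschitz function of $m$: in the unconstrained case it is explicitly $\sqrt{m_a}/\sum_j\sqrt{m_j}$, and more generally it is a projection-type map. Since $\e^\star$ is optimal for $L$, the first-order condition kills the linear term, so $L(\e^{(b)})-L(\e^\star)\lesssim \E\|\e^{(b)}(\X)-\e^\star(\X)\|^2$. Lipschitz continuity of the argmin map then gives $\lesssim \E\sum_a(\hat m_a^{(b-1)}(\X)-m_a(\X))^2$, with constants depending on $\varepsilon$ and the bounds on $m$. The approximate-minimization slack is assumed of the same order.

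Finally, take expectations over $\cH_b$ and sum with weights $\gamma_b/n^2$ to obtain \eqref{eq:design-learning-general}. Cross-fitting ensures that $\hat m^{(b-1)}$ is $\cH_b$-measurable and independent of the batch-$b$ draws of the fold it is applied to, so the conditioning in the first step is legitimate.

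The main obstacle is the quadratic regret step when $\cE$ is a general feasible class rather than the full simplex, since the first-order condition then holds only as a variational inequality. My first attempt would be to use strong convexity of $L$ over convex $\cE$: then $L(\e)-L(\e^\star)\ge c\|\e-\e^\star\|^2$, and the same inequality holds for $\hat L_b$. Combining the two optimality inequalities gives $\|\e^{(b)}-\e^\star\|\lesssim\|\hat m-m\|$. Smoothness of $L$ then converts this back into a regret of order $\|\hat m-m\|^2$, using that the variational inequality makes the linear term nonnegative only in the favorable direction.
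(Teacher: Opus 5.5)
Your proposal is correct and follows the paper's proof essentially step for step. The paper likewise uses the conditional-i.i.d./martingale computation to reduce the design term to $\frac{1}{n^2}\sum_b\gamma_b\,\E[L(\e^{(b)})-L(\e^\star)]$, then obtains a quadratic per-batch regret bound from the vanishing linear term at the interior Neyman allocation, a Hessian bound under the floor, and Lipschitzness of $m\mapsto\bigl(\sqrt{m_a}/\sum_j\sqrt{m_j}\bigr)_a$; it also absorbs the $b=1$ term into $\lesssim$ and tacitly needs $m_a,\hat m_a\ge m_{\min}>0$, which is not part of the stated regularity assumption (so that lower bound does not come from it, as you claim).

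The paper only handles this full-simplex, closed-form case. In your general-$\cE$ extension the last step is mis-argued: under a variational inequality the linear term $\langle\nabla L(\e^\star),\e^{(b)}-\e^\star\rangle\ge 0$ works against an upper bound. The conclusion still survives because the objective is linear in $m$: $L(\e^{(b)})-L(\e^\star)\le \E\sum_a(m_a-\hat m_a^{(b-1)})(1/\e^{(b)}_a-1/\e^\star_a)\le\varepsilon^{-2}\|\hat m^{(b-1)}-m\|_{L_2}\|\e^{(b)}-\e^\star\|_{L_2}$. Combining this with your stability bound $\|\e^{(b)}-\e^\star\|_{L_2}\lesssim\|\hat m^{(b-1)}-m\|_{L_2}$ gives the quadratic rate, up to the optimization slack.
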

Thus, consistency of the second-moment proxies is enough to make the design-learning term
$o(1/n)$. Combining Theorems~\ref{thm:model-estimation-general} and
\ref{thm:design-opt-general} gives the following baseline achievability result.

\begin{corollary}[Baseline achievability for general EIF scores]
\label{cor:main-general}
Under Assumption~\ref{assump:general-reg}, if the second-moment proxies are consistent and the
full nuisance parameter is estimated at a rate strictly faster than $n^{-1/4}$, then
Algorithm~\ref{alg:batched-cf} achieves the semiparametric efficiency benchmark:
\[
    \E[(\hat\theta-\theta)^2]
    =
    \frac{V^\star}{n}
    +
    o(1/n).
\]
\end{corollary}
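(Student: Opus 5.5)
The plan is to start from the exact decomposition of Proposition~\ref{prop:excess-mse-decomp} and show that each of its three terms is $o(1/n)$. It then remains to check that the infeasible oracle $\theta^\star_n$ satisfies $\E[(\theta^\star_n-\theta)^2]=V^\star/n$.

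For the oracle, $\theta^\star_n$ is the average of $\IF_{\e^\star}(\Oo_t)$ over $t$, plus $\theta$, with the design $\e^\star$ fixed throughout. The observations are then i.i.d., and each summand has mean zero by the structure in Lemma~\ref{lem:eif-structure}, since $\E[v_a\mid \X,\A=a]=0$. The cross terms therefore vanish and
\[
\E[(\theta^\star_n-\theta)^2]=\frac1n\E[\IF_{\e^\star}(\Oo)^2]=\frac1n\Bigl(\Var(g(\X))+L(\e^\star)\Bigr)=\frac{V^\star}{n}.
\]
Here the cross term between $g(\X)-\theta$ and the weighted residuals also drops by the conditional mean-zero property.

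For the \textbf{model estimation} term, I would apply Theorem~\ref{thm:model-estimation-general} with $d(\hat\eta_b,\eta_0)=O_P(n^{-\alpha})$ for some $\alpha>1/4$. I would upgrade this to moment bounds $\E[d^2]\lesssim n^{-2\alpha}$ and $\E[d^4]\lesssim n^{-4\alpha}$, either through the boundedness and regularity in Assumption~\ref{assump:general-reg} or by stating the rate directly in $L_2$ and $L_4$. With $\sum_b\gamma_b=n$, $\sum_b\gamma_b^2\le n^2$ and $B=O(\log n)$, the variance part is $O(n^{-1-2\alpha}\log n)$. The bias part is $O(n^{-4\alpha}\log n)$. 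Both are $o(1/n)$ precisely because $\alpha>1/4$.

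For the \textbf{design optimization} term, I would invoke Theorem~\ref{thm:design-opt-general}. Consistency of $\hat m^{(b-1)}$, together with uniform boundedness of $m_a$ and $\hat m_a$ (which the positivity and regularity conditions should supply), gives $\E[\sum_a(\hat m_a^{(b-1)}-m_a)^2]\to0$ by dominated convergence. The sum $\frac1{n^2}\sum_b\gamma_b\,\E[\cdot]$ is then $o(1/n)$. The only subtlety is the early batches, where the estimates come from little data. I would handle them by assuming the geometric batch schedule of Algorithm~\ref{alg:batched-cf}, so that batch $b-1$ carries a constant fraction of batch $b$'s size. Then $\gamma_b$-weighted errors in early batches are negligible: a Cesàro-type argument, since those batches have small $\gamma_b$, and later batches have both large $\gamma_b$ and vanishing error.

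The \textbf{interaction term} is the main obstacle, because no exact cancellation is available for general EIFs. I would bound it by Cauchy--Schwarz:
\[
2\bigl|\E[(\hat\theta-\tilde\theta)(\tilde\theta-\theta)]\bigr|\le 2\sqrt{\E[(\hat\theta-\tilde\theta)^2]}\,\sqrt{\E[(\tilde\theta-\theta)^2]}.
\]
The second factor is $O(n^{-1/2})$, since it equals $V^\star/n+o(1/n)$ by the two previous steps. The first factor is $o(n^{-1/2})$ by the model-estimation step. The product is therefore $o(1/n)$. Summing the three bounds with the oracle identity gives $\E[(\hat\theta-\theta)^2]=V^\star/n+o(1/n)$. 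The delicate part throughout is translating in-probability nuisance rates into the moment bounds that the theorems require, uniformly over adaptively chosen batches. I would handle this via the conditional i.i.d. structure given $\cH_b$ and positivity-bounded weights.
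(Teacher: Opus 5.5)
Your proposal is correct and follows the paper's route: the decomposition of Proposition~\ref{prop:excess-mse-decomp}, the exact oracle identity $\E[(\theta^\star_n-\theta)^2]=V^\star/n$, Theorem~\ref{thm:model-estimation-general} with $\alpha>1/4$, Theorem~\ref{thm:design-opt-general} with consistent proxies, and Cauchy--Schwarz on the interaction term. The differences are minor. You bound $\E[(\tilde\theta-\theta)^2]$ by $V^\star/n+o(1/n)$ where the paper uses $V_{\max}/n$. Your Ces\`aro treatment of the early batches is more careful than the paper's $\max_b\epsilon_b$ step, because under $\gamma_b\asymp n^{b/B}$ with $B\asymp\log n$ the first batches have bounded size, so $\epsilon_1\not\to0$. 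Finally, of your two options for the moment bounds $\E[d^2]\lesssim n^{-2\alpha}$ and $\E[d^4]\lesssim n^{-4\alpha}$, only assuming the rate directly in $L_2$ and $L_4$ works; boundedness plus an $O_P$ rate does not deliver them.
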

The next subsection shows that this nuisance-rate bottleneck disappears for linear functionals of
the outcome mean. There, the regression-adjusted IPW score is globally unbiased for every
candidate regression function, so the second-order plug-in bias in
\eqref{eq:model-estimation-bound-general} vanishes identically rather than merely becoming small.

\subsection{Linear Functionals: Exact Cancellation and Sharp Rates}
\label{subsec:linear-case}

We now turn to the a more interesting regime of the regression-adjustment route. Many experimental targets of
primary practical interest are linear functionals of the arm-specific conditional means:
\[
\theta
=
\E\!\left[\sum_{a\in\cA}\omega_a(\X)\mu_a(\X)\right],
\]
where the weights $\{\omega_a\}_{a\in\cA}$ are known. This class includes average treatment
effects, weighted and subgroup average treatment effects, and policy values under a fixed target
policy. For this class, the general $n^{-1/4}$ nuisance-rate bottleneck disappears entirely.

The reason is structural. For general EIF scores, Neyman orthogonality removes only the
\emph{local first-order} effect of nuisance estimation, leaving a second-order plug-in bias of order
$d(\hat\eta,\eta_0)^2$. Linear functionals enjoy a stronger property. Because the target is linear
in $\mu_a$ and the EIF correction uses inverse-probability weighting, the plug-in error in any
candidate regression function $\tilde\mu_a$ cancels algebraically, arm by arm. Thus the score is
globally unbiased for every candidate regression function, not merely locally orthogonal at the
truth.
For a linear functional, the regression-adjusted IPW score is
\begin{equation}
\label{eq:linear-mu-eif}
\psi(\Oo;\mu,\e)
=
\sum_{a\in\cA}\omega_a(\X)\mu_a(\X)
+
\sum_{a\in\cA}
\frac{\1\{\A=a\}}{\e_a(\X)}
\,\omega_a(\X)\bigl(\Y-\mu_a(\X)\bigr)
-\theta .
\end{equation}
The next proposition records the exact cancellation property.

\begin{proposition}[Exact global unbiasedness and cancellation]
\label{prop:linear-unbiased}
For any candidate regression function $\tilde\mu=\{\tilde\mu_a\}_{a\in\cA}$,
\begin{equation}
\label{eq:linear-global-unbiased}
\E\!\left[\psi(\Oo;\tilde\mu,\e)\mid \X\right]
=
\sum_{a\in\cA}\omega_a(\X)\mu_a(\X)-\theta,
\qquad
\E\!\left[\psi(\Oo;\tilde\mu,\e)\right]=0 .
\end{equation}
Consequently, under cross-fitting, the interaction term in the excess-MSE decomposition vanishes
exactly:
\[
\E\!\left[(\hat\theta-\tilde\theta)(\tilde\theta-\theta)\right]=0 .
\]
\end{proposition}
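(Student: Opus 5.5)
The plan has two steps: a direct conditional-expectation computation for the global unbiasedness identity \eqref{eq:linear-global-unbiased}, and then a martingale/conditioning argument, using the cross-fitted batched structure, for the vanishing of the interaction term.

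\textbf{Step 1: the unbiasedness identity.} Fix any candidate $\tilde\mu$ and a propensity $\e$, both treated as fixed (non-random given the conditioning information). Take the conditional expectation of \eqref{eq:linear-mu-eif} given $\X$, arm by arm. Under positivity, outcome visibility and $\PP(\A=a\mid\X)=\e_a(\X)$, we have
\[
\E\!\left[\frac{\1\{\A=a\}}{\e_a(\X)}\,\omega_a(\X)\bigl(\Y-\tilde\mu_a(\X)\bigr)\,\Big|\,\X\right]
=\omega_a(\X)\bigl(\mu_a(\X)-\tilde\mu_a(\X)\bigr).
\]
Adding the plug-in term $\omega_a(\X)\tilde\mu_a(\X)$, the $\tilde\mu_a$ contributions cancel exactly, which gives the first identity. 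Taking expectations over $\X$ and using the definition of $\theta$ gives the second.

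\textbf{Step 2: the interaction term.} In the sequential setting, both the propensity $\e^{(b)}$ and the cross-fitted regression estimate $\hat\mu^{(b)}$ applied to a unit $t$ in batch $b$ and fold $k$ are measurable with respect to $\cH_b$ together with data from the opposite fold. The plan is to write
\[
\hat\theta-\tilde\theta=\frac1n\sum_t \Delta_t,\qquad
\Delta_t:=\psi(\Oo_t;\hat\mu_t,\e_t)-\psi(\Oo_t;\mu,\e_t)
=\sum_a \omega_a(\X_t)\bigl(\hat\mu_{a,t}-\mu_a\bigr)(\X_t)\Bigl(1-\tfrac{\1\{\A_t=a\}}{\e_{a,t}(\X_t)}\Bigr).
\]
By Step 1 applied conditionally, $\E[\Delta_t\mid \X_t,\mathcal G_{t}]=0$, where $\mathcal G_t$ is the $\sigma$-field generated by everything that determines $\hat\mu_t,\e_t$. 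Similarly $\tilde\theta-\theta=\frac1n\sum_s \xi_s$ with $\xi_s=\psi(\Oo_s;\mu,\e_s)$, and $\xi_s$ has conditional mean $\sum_a\omega_a\mu_a(\X_s)-\theta$ given $\X_s$ and its design. I would then expand $\E[\Delta_t\xi_s]$ over pairs $(t,s)$ and show each term is zero by conditioning on $\X_t$ and a $\sigma$-field that contains $\xi_s$ but not $\A_t,\Y_t$:

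\begin{itemize}
\item If $s<t$, then $\xi_s$ is $\cF_{t-1}$-measurable, and $\Delta_t$ has conditional mean zero given $(\X_t,\cF_{t-1})$ together with the fold data used to form $\hat\mu_t$.
\item If $s=t$, we need $\E[\Delta_t\xi_t\mid\X_t,\cdot]=0$. This is a genuine covariance of two IPW terms at the same unit. Here I will use that the cross term reduces to $\sum_a\omega_a^2(\hat\mu_a-\mu_a)\,\E[(1-\1\{\A=a\}/\e_a)\cdot \1\{\A=a\}(\Y-\mu_a)/\e_a\mid\X]$, which vanishes because $\E[\Y-\mu_a\mid\X,\A=a]=0$. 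The $(1-\cdot)$ factor multiplied by the deterministic-given-$\X$ part of $\xi_t$ also vanishes by Step 1.
\item If $s>t$, $\xi_s$ may depend on $\Oo_t$ through the design $\e_s$. However, its conditional mean given $\cF_{s-1}$ and $\X_s$ does not depend on the design, so I first replace $\xi_s$ by its conditional mean plus a martingale difference orthogonal to $\Delta_t$, reducing to the $s<t$ type argument.
\end{itemize}

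\textbf{Main obstacle.} The hardest part is the future-dependence: later propensities and the cross-fitted $\hat\mu$ for the opposite fold may use $\Oo_t$. Cross-fitting should ensure that $\hat\mu_t$ never uses data from unit $t$'s own fold. The design feedback is then handled by the martingale-difference decomposition in the third case of Step 2.
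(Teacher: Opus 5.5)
Your proposal is correct and follows essentially the paper's argument: the same arm-by-arm conditional expectation gives the global unbiasedness identity, and the interaction term is killed by expanding $\E[(\hat\theta-\tilde\theta)(\tilde\theta-\theta)]$ into pairwise terms and conditioning each one on the history (plus the opposite-fold data behind $\hat\mu$) so that the $\Delta$-factor is centered. Your separate treatment of the diagonal terms $s=t$ is in fact needed, because the paper's one-line step $\E[\Delta_{b,i}\mid\cH_b,\Oo_j]=0$ does not hold when $j=i$. The argument that works there is yours: given the nuisances, $\Delta_t$ is a function of $(\X_t,\A_t)$, while the residual part of $\xi_t$ has mean zero given $(\X_t,\A_t)$, so every arm pair $(a,a')$ vanishes, not only the pairs with $a=a'$ that your displayed formula shows.
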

The proof is immediate from conditional expectation. For each arm $a$,
\[
\E\!\left[
\frac{\1\{\A=a\}}{\e_a(\X)}
\,\omega_a(\X)\bigl(\Y-\tilde\mu_a(\X)\bigr)
\,\middle|\,\X
\right]
=
\omega_a(\X)\bigl(\mu_a(\X)-\tilde\mu_a(\X)\bigr),
\]
which exactly cancels the plug-in term $\omega_a(\X)\tilde\mu_a(\X)$ in
\eqref{eq:linear-mu-eif}. Hence the plug-in bias is not merely second-order small; it is
identically zero.

This exact cancellation changes the rate analysis. In the general EIF case, the model-estimation
bound contains a squared-bias term, which forces the nuisance-rate condition $\alpha>1/4$. In the
linear-functional case, that term vanishes. The remaining model-estimation cost is purely a
variance contribution from learning the regression functions. The design-learning cost is governed
separately by the accuracy with which the design-relevant second moments are estimated.
To state the rate result, define
\[
m_a(x)
:=
\omega_a(x)^2\sigma_a^2(x),
\qquad
\sigma_a^2(x):=\Var(\Y\mid \X=x,\A=a).
\]
Then the oracle design objective for the linear-functional problem is
\[
L(e)
=
\E\!\left[
\sum_{a\in\cA}\frac{m_a(\X)}{\e_a(\X)}
\right].
\]
Thus, two nuisance objects matter for the rate analysis: the outcome regressions
$\{\mu_a\}_{a\in\cA}$, which enter the final regression-adjusted estimator, and the second-moment
functions $\{m_a\}_{a\in\cA}$, which determine the optimal design.

Because the experiment is batched, the assignment rule is fixed within each batch conditional on
the pre-batch history. Hence the nuisance learners can be analyzed using standard conditional
i.i.d.\ arguments within each batch. We impose the following batchwise rate condition.

\begin{assumption}[Nuisance-rate conditions for linear functionals]
\label{assump:linear-rates}
There exist constants $\alpha,\beta>0$ and $C<\infty$ and estimators $\hat\mu_a$ and $\hat m_a$ such that given $n$ i.i.d. samples,
\begin{equation}
\label{eq:linear-mu-rate-assump}
\E\!\left[
\sum_{a\in\cA}
\|\hat\mu_a-\mu_a\|_{L_2(P_\X)}^2
\right]
\le
C n^{-2\alpha}, \, \text{and } \,
\E\!\left[
\sum_{a\in\cA}
\|\hat m_a-m_a\|_{L_2(P_\X)}^2
\right]
\le
C n^{-2\beta}.
\end{equation}
\end{assumption}
Assumption~\ref{assump:linear-rates} separates the two statistical tasks. The exponent $\alpha$
measures the difficulty of learning the outcome regressions used in the final regression-adjusted
estimator. The exponent $\beta$ measures the difficulty of learning the second-moment functions
that drive the design update. Importantly, no threshold condition such as $\alpha>1/4$ is imposed.
Any positive rates $\alpha,\beta>0$ are sufficient for first-order efficiency; their values determine
only the second-order speed at which the feasible procedure approaches the oracle benchmark.

\begin{theorem}[Convergence-rate upper bound for linear functionals]
\label{thm:linear-bounds}
Under Assumption~\ref{assump:linear-rates}, with a logarithmic number of batches and the
batched cross-fitted design in Algorithm~\ref{alg:batched-cf}, the model-estimation error satisfies
\begin{equation}
\label{eq:linear-model-estimation}
\E[(\hat\theta-\tilde\theta)^2]
\;\lesssim\;
\widetilde O\!\left(n^{-1-2\alpha}\right).
\end{equation}
Furthermore, the design-optimization suboptimality satisfies
\begin{equation}
\label{eq:linear-upper-bound}
\E[(\tilde\theta-\theta)^2]
-
\E[(\theta^\star_n-\theta)^2]
=
\widetilde O\!\left(n^{-1-2\beta}\right).
\end{equation}
Consequently,
\begin{equation}
\label{eq:linear-upper-combined}
\E[(\hat\theta-\theta)^2]
-
\frac{V^\star}{n}
=
\widetilde O\!\left(
n^{-1-2\alpha}
\vee
n^{-1-2\beta}
\right).
\end{equation}
\end{theorem}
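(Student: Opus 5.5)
The plan is to bound the three terms of Proposition~\ref{prop:excess-mse-decomp} separately. The interaction term vanishes exactly by Proposition~\ref{prop:linear-unbiased}, so two pieces remain. First I would show $\E[(\hat\theta-\tilde\theta)^2]=\widetilde O(n^{-1-2\alpha})$. Then I would show the design term is $\widetilde O(n^{-1-2\beta})$. Finally I would note that $\E[(\theta_n^\star-\theta)^2]=V^\star/n$ exactly, since $\theta_n^\star$ is an average of i.i.d.\ mean-zero scores under $\e^\star$. Adding the three pieces gives \eqref{eq:linear-upper-combined}.

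\textbf{Model estimation.} Write
\[
\hat\theta-\tilde\theta=\frac1n\sum_{b}\sum_{t\in\mathcal I_b}\Delta_t,
\qquad
\Delta_t=\sum_a\omega_a(\X_t)\bigl(\hat\mu_a-\mu_a\bigr)(\X_t)\Bigl(1-\frac{\1\{\A_t=a\}}{\e^{(b)}_a(\X_t)}\Bigr),
\]
where $\hat\mu_a$ is fitted on the opposite fold using data collected before batch $b$. The difference $\hat\mu-\mu$ is then measurable with respect to $\cH_b$ together with the other fold. By the argument behind Proposition~\ref{prop:linear-unbiased}, $\E[\Delta_t\mid\X_t,\cH_b,\text{other fold}]=0$. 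Hence the $\Delta_t$ form a martingale difference sequence in the assignment order, with the batch-$b$ terms conditionally i.i.d. The cross terms therefore vanish, and
\[
\E(\hat\theta-\tilde\theta)^2=\frac1{n^2}\sum_b\gamma_b\,\E\Delta_t^2\lesssim\frac1{n^2}\sum_b\gamma_b\,\E\|\hat\mu^{(b)}-\mu\|_{L_2(P_\X)}^2.
\]
This last step uses bounded weights and positivity $\e\ge\varepsilon$. Batch $b\ge2$ uses roughly $\Gamma_{b-1}=\sum_{s<b}\gamma_s$ samples, so Assumption~\ref{assump:linear-rates} gives each batch's contribution as $\gamma_b\Gamma_{b-1}^{-2\alpha}$. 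With geometric batches ($\gamma_b\asymp2\gamma_{b-1}$, $B\asymp\log n$) this ratio is $O(1)$. The sum is then $\lesssim\sum_b\gamma_b^{1-2\alpha}\le B\,n^{1-2\alpha}$, which gives $\widetilde O(n^{-1-2\alpha})$. The first batch, with constant size, contributes only $O(n^{-2})$.

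One caveat needs care here. Within batch $b$ the samples are conditionally i.i.d.\ under $\e^{(b)}$, but the data used to fit $\hat\mu$ are pooled across earlier batches that had different propensities. I would handle this either by fitting $\hat\mu$ only on the previous batch, or by noting that positivity makes all logging laws mutually bounded, so the $L_2$ rate transfers up to constants.

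\textbf{Design optimization.} Because $\tilde\theta-\theta$ is a martingale sum of EIF scores under $\e^{(b)}$, its second moment is
\[
\frac1{n^2}\sum_b\gamma_b\,\E\bigl[\Var(g(\X))+L(\e^{(b)})\bigr].
\]
This expression needs no cross terms, again by the martingale property. The excess over $V^\star/n$ is $\frac1{n^2}\sum_b\gamma_b\,\E[L(\e^{(b)})-L(\e^\star)]$.

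The main obstacle is showing that the regret is quadratic, not linear, in the error of $\hat m$. I would first try a pointwise argument. The map $e\mapsto\sum_a m_a/e_a$ on the simplex, restricted to $e_a\ge\varepsilon$, is smooth and strongly convex, and its minimizer is $\sqrt{m_a}/\sum_j\sqrt{m_j}$. Assuming $m_a$ is bounded below, this minimizer is a Lipschitz function of $m$, so $|\hat e_a-e^\star_a|\lesssim|\hat m-m|$ pointwise. The first-order optimality condition kills the linear term, leaving $L(\hat e)-L(\e^\star)\lesssim\E\|\hat e-e^\star\|^2\lesssim\E\sum_a(\hat m_a-m_a)^2$. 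If $\e^\star$ sits on the boundary $e_a=\varepsilon$, I would instead use the variational inequality with a projection argument, which still yields a quadratic bound; this is what Theorem~\ref{thm:design-opt-general} states, and I would invoke it directly. Substituting the $\beta$-rate and repeating the geometric-batch summation gives $\widetilde O(n^{-1-2\beta})$.
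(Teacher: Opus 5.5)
Your proof is correct (in the regime $\alpha,\beta\le 1/2$; see the caveat at the end). It uses the paper's skeleton: the excess-MSE decomposition, the vanishing interaction term, and the identity $\E[(\tilde\theta-\theta)^2]=n^{-2}\sum_b\gamma_b\E[\Var(g(\X))+L(\e^{(b)})]$. Both bounding steps, however, are done differently from the appendix proof.

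\textbf{Model-estimation term.} The paper applies Cauchy--Schwarz across batches, $(\sum_b S_b)^2\le B\sum_b S_b^2$, and then expands each $S_b$ using within-batch conditional i.i.d.-ness. This template is inherited from the general case, where $\E[\Delta_{b,1}\mid\cH_b]\neq 0$ rules out orthogonality. The paper also inserts the rate $n^{-2\alpha}$ for every batch, even though $\hat\mu_b$ is only $\cH_b$-measurable. You instead use $\E[\Delta_t\mid\X_t,\cH_b]=0$, which exact unbiasedness makes available in the linear case. This makes $\{\Delta_t\}$ a martingale difference sequence over the whole assignment order. Every cross term then vanishes, the factor $B$ disappears, and you can charge batch $b$ at its true training size $\Gamma_{b-1}$.

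\textbf{Design term.} Here the difference matters more. The paper's linear-case proof bounds $L(\e^{(b+1)})-L(\e^\star)$ only through Lipschitz continuity of $G(u,\cdot)$. That gives a bound of order $\sqrt{\E\|\hat r^{(b)}-r\|_2^2}$, which is linear in the estimation error. So the quadratic rate recorded in \eqref{eq:design-regret-quadratic} does not follow from those displays; as written they would only give $n^{-1-\beta}$. Your extra observation fixes this: the gradient of $e\mapsto\sum_a m_a/e_a$ is orthogonal to the simplex at the interior optimum, so the regret is second order. Combined with the Lipschitz Neyman map, as in Proposition~\ref{prop:design-regret-to-estimation}, this is what actually delivers $n^{-1-2\beta}$.

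Your boundary fallback also works, but carry it out yourself rather than citing Theorem~\ref{thm:design-opt-general}, whose proof covers only the interior case. Compare plug-in objective values at $\hat e$ and $\e^\star$, and use strong convexity (with $m_a$ bounded below) to get $\|\hat e-\e^\star\|\lesssim\|\hat m-m\|$.

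\textbf{Interaction term.} Your martingale set-up also proves the vanishing interaction term cleanly, so write it out rather than only citing Proposition~\ref{prop:linear-unbiased}. Set $\Gamma_t:=\psi(\Oo_t;\mu,\e^{(b)})$. Only the diagonal products $\E[\Delta_t\Gamma_t]$ survive, and they vanish for two reasons. First, $\Delta_t$ depends only on $(\X_t,\A_t)$ and satisfies $\E[\Delta_t\mid\X_t,\cH_b]=0$. Second, $\E[\Gamma_t\mid\X_t,\A_t,\cH_b]=g(\X_t)-\theta$. The appendix argument conditions on $\Oo_j$ and does not cover $i=j$.

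\textbf{Caveats (shared with the paper).} The bound $\sum_b\gamma_b^{1-2\alpha}\le Bn^{1-2\alpha}$ requires $\alpha\le 1/2$. Under your convention, the constant-size first batch contributes order $n^{-2}$ to both terms, since its nuisance is fitted on no data and its design is not optimized. That is absorbed into $\widetilde O(n^{-1-2\alpha}\vee n^{-1-2\beta})$ only when $\alpha,\beta\le 1/2$, so state the restriction explicitly.

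Of your two remedies for pooling data across batches, fitting on the previous batch only is the one that works. Bounded density ratios between logging laws do not make an adaptively collected pooled sample i.i.d. The previous batch, by contrast, is conditionally i.i.d.\ given its pre-batch history. With geometric batches $\gamma_{b-1}\asymp\gamma_b$, so this costs only constants in the rate.
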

Theorem~\ref{thm:linear-bounds} gives more than asymptotic semiparametric efficiency. It
characterizes the second-order gap to the oracle benchmark $V^\star/n$. The two terms have
distinct interpretations: $n^{-1-2\alpha}$ is the cost of learning the outcome regressions used in
the final regression-adjusted estimator, while $n^{-1-2\beta}$ is the cost of learning the
design-relevant second moments used to update the assignment rule.

It remains to ask whether these rates are intrinsic or merely artifacts of the analysis. The next
result shows that the two rates are minimax-sharp, up to logarithmic factors.

\begin{theorem}[Matching convergence lower bound for linear functionals]
\label{thm:linear-rate-lb}
Fix $\alpha,\beta>0$ and consider a class of linear-functional environments for which the
i.i.d.\ rate conditions in Assumption~\ref{assump:linear-rates} are tight: the conditional means and
the design-relevant second moments cannot be estimated faster than $n^{-\alpha}$ and
$n^{-\beta}$, respectively. Then there exist subfamilies of environments such that, for any
non-anticipating experiment and unbiased estimator pair $(\ALG,T)$,
\begin{equation}
\label{eq:linear-lb-main}
\sup_{P}
\left\{
\Var_P(T)
-
\frac{V^\star(P)}{n}
\right\}
\;\gtrsim\;
n^{-1-2\alpha}
\vee
n^{-1-2\beta}
\qquad
(\textnormal{up to logarithmic factors}).
\end{equation}
Equivalently, neither the model-estimation scale $n^{-1-2\alpha}$ nor the design-learning scale
$n^{-1-2\beta}$ above the oracle benchmark $V^\star/n$ can be improved in general.
\end{theorem}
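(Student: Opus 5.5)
Since the bound is a maximum of two rates, we prove each separately on its own subfamily of environments; the supremum over the union then dominates both. Both constructions start from Theorem~\ref{thm:avg-prop-lb}. For any non-anticipating $\ALG$ with induced average propensity $\bar e$ and any unbiased $T$, we have $\Var_P(T)\ge V_P(\bar e)/n$. Hence
\[
\Var_P(T)-\frac{V^\star(P)}{n}\;\ge\;\frac{1}{n}\bigl(L_P(\bar e)-L_P(e^\star_P)\bigr),
\]
so any excess must come from a suboptimal average propensity, and the design-learning lower bound reduces to lower-bounding the regret $L_P(\bar e)-L_P(e^\star_P)$. The model-estimation bound is not captured by Theorem~\ref{thm:avg-prop-lb}, since that bound is tight at the i.i.d.\ oracle. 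For it we need a sharper, finite-sample information argument that accounts for the unknown regression functions.

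\textbf{Design-learning term ($n^{-1-2\beta}$).} Fix $\mu_a$ and let the conditional variances be perturbed by localized bumps, $\sigma_a^2(x)=\sigma_0^2+\sum_{j}\xi_j h^{s}\psi((x-c_j)/h)$ with signs $\xi_j\in\{\pm1\}$ on a grid of $h^{-d}$ cells. The smoothness is chosen so that estimation at rate $n^{-\beta}$ is minimax-tight, i.e.\ $h^{2s}\asymp n^{-2\beta}$ and $nh^{2s+d}\asymp 1$. The objective $L$ is strongly convex in $e$ near the interior optimum, so on each cell the regret is at least of order $(\bar e_a(x)-e^\star_a(x))^2$. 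Since $e^\star$ is proportional to $\sqrt{m_a}$, flipping $\xi_j$ moves $e^\star$ on cell $j$ by order $h^{s}$. We then apply an Assouad argument. The experimenter's $\bar e$ on cell $j$ is an average over $t$ of $\cF_{t-1}$-measurable quantities. Any rule whose average propensity lands on the correct side of $e^\star$ on cell $j$ yields a test for $\xi_j$, and a Le Cam bound using the adaptive-data KL chain rule (at most $n$ observations, each with KL of order $h^{2s+d}$ per flip) shows this test errs with constant probability. Summing over cells, $L(\bar e)-L(e^\star)\gtrsim h^{-d}\cdot h^{d}\cdot h^{2s}\asymp n^{-2\beta}$, which gives $n^{-1-2\beta}$. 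The subtle point is that $\bar e$ is a deterministic average under $P$ rather than a data-dependent estimate. We handle this by taking the supremum over $\xi$ and noting that $\bar e$ is a functional of the design applied to $P_\xi$. The Assouad reduction then goes through, since two hypotheses differing in one cell induce average propensities on that cell that are close, by data-processing on the design's kernels.

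\textbf{Model-estimation term ($n^{-1-2\alpha}$).} We now fix $\sigma^2$ so that $e^\star$ is known, and perturb $\mu_a$ by bumps of height $h^{r}$ with $h^{2r}\asymp n^{-2\alpha}$. We aim for a Bayesian van Trees / Cramér--Rao bound for unbiased $T$ in which the information for $\theta$ is reduced because the $h^{-d}$ nuisance directions must be learned from the same $n$ samples. The cleanest way to do this is through the multi-parameter Cramér--Rao bound: treat $(\theta,\xi_1,\dots,\xi_J)$ with $J=h^{-d}$ and compute the Fisher information of the adaptive experiment, which by the chain rule is a sum of conditional informations and therefore depends only on $\bar e$. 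The bound on $\theta$ is the Schur complement. Under the exact i.i.d.\ efficient parametrization the Schur complement recovers $V^\star/n$. The finite-$J$ correction comes from the coupling between $\theta$ and each bump score, giving an excess of order $J\cdot(\text{per-bump coupling})^2/\text{per-bump information}\asymp n^{-1}\cdot nh^{2r}/n\asymp n^{-1-2\alpha}$, with the smoothness tied to $nh^{2r+d}\asymp 1$.

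The \textbf{main obstacle} is this second step. Because the score is orthogonal, the naive Schur complement produces no excess, so the construction must make the bump directions non-orthogonal to $\theta$. One way is to let the bumps also shift the functional through the weights $\omega_a$, so that finitely many samples per cell cannot resolve them. I would try a Bayesian prior over $\xi$ and compute the Bayes risk of unbiased-in-$\theta$ estimators, showing an additive $\sum_j\Var(\text{cell-}j\text{ contribution})\cdot$ (posterior uncertainty) term. Finally, the two bounds are combined, up to logarithmic factors from the grid/packing sizes.
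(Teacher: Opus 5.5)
Both halves of your argument have a genuine gap.

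\emph{Design-learning term.} The inequality $\Var_P(T)-V^\star(P)/n\ge \frac1n\{L_P(\bar e_P)-L_P(e^\star_P)\}$ is correct. Your claim that ``any excess must come from a suboptimal average propensity'' is not. The right-hand side cannot be bounded below at rate $n^{-2\beta}$, because $\bar e_P$ is an \emph{expectation} of the realized propensities, and a design can make it equal $e^\star_P$ on your whole hypercube.

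Here is such a design. Run the first $n/2$ units at a fixed propensity $e^{(0)}$, the Neyman allocation at the centre of the family. The first-half data in cell $j$ then has a law depending only on $\xi_j$. Pick a test $D_j\in\{0,1\}$ with $p_\pm:=P_{\xi_j=\pm1}(D_j=1)$ and $p_+-p_-$ bounded away from zero; this is possible since $nh^{2s+d}\asymp 1$. Let $e^\star_\pm$ denote the oracle allocation on cell $j$ under $\xi_j=\pm1$, and set $e^{\mathrm{tgt}}_j:=e^\star_-+(e^\star_+-e^\star_-)(D_j-p_-)/(p_+-p_-)$. For the last $n/2$ units, use $2e^{\mathrm{tgt}}_j-e^{(0)}$ on cell $j$. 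This is a valid propensity for small $h$, since every vector involved is within $O(h^s)$ of $e^{(0)}$.

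With this design, $\bar e_{P_\xi}=e^\star_{P_\xi}$ for every $\xi$, so your lower bound is identically zero. Even a plain plug-in $e^\star(\hat\sigma^2_j)$ has bias only $O(1/(nh^d))=O(h^{2s})$, hence regret $O(h^{4s})$.

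The data-processing step you invoke only gives $|\bar e_\xi-\bar e_{\xi^{(j)}}|\le \mathrm{TV}\times(\text{range of the realized propensity})$. In the regime $nh^{2s+d}\asymp 1$ the total variation is of constant order, so nothing prevents the average propensities from separating by exactly $|e^\star_\xi-e^\star_{\xi^{(j)}}|$. The design-learning cost comes from the \emph{fluctuation} of the realized propensities around $e^\star$, which Theorem~\ref{thm:avg-prop-lb} averages away.

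\emph{Model-estimation term, and the missing idea.} You leave the key step open, and the proposed route cannot close it. Along any finite-dimensional submodel perturbing only $P_X$ and the outcome laws, the trajectory Fisher information is $n$ times the i.i.d.\ information at $\bar e$, as you note. A finite-dimensional submodel can only \emph{lower} the Cram\'er--Rao bound relative to the semiparametric one, which is the supremum over submodels. So your ``finite-$J$ correction'' has the wrong sign, and every such bound is at most $V(\bar e)/n$. With $\sigma^2$ fixed, $e^\star$ is known and the design can use it, so no choice of non-orthogonal bumps gives anything above $V^\star/n$. The excess is a non-local effect of $T$ being a single statistic across the family.

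The paper captures this, for both terms at once, through the projection identity of Lemma~\ref{lem:eff-gap-identity}. Let $Z=\sum_t\IF_{e^\star_P}(\Oo_t;\eta_P)/V^\star(P)$. Local unbiasedness gives $\E_P[(T-\theta(P))Z]=1$. Also $\E_P[Z^2]=n/V^\star(P)$ under any design, because $m_a/(e^\star_a)^2$ does not depend on $a$ under Neyman allocation. Hence
\[
\Var_P(T)-\frac{V^\star(P)}{n}=\E_P\Bigl[\bigl(T-\hat\theta_{\mathrm{oracle},P}\bigr)^2\Bigr],\qquad
\hat\theta_{\mathrm{oracle},P}=\theta(P)+\frac1n\sum_{t=1}^n\IF_{e^\star_P}(\Oo_t;\eta_P).
\]
The excess is thus the cost of a data-measurable $T$ tracking an environment-dependent target. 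That target moves with $\mu_\omega$ in the mean hypercube and with the weights $1/e^\star_\omega$ in the variance hypercube. The Assouad argument with the adaptive KL chain rule is applied to this tracking problem, never to $\bar e$.
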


The proof uses two complementary Assouad-type constructions. The first construction fixes the
arm variances and varies one arm's conditional mean over a Gaussian-location hypercube, isolating
the model-estimation cost. The second construction fixes the conditional means and varies one
arm's conditional variance, so that the oracle assignment rule itself changes across environments,
isolating the design-learning cost. Full proofs are deferred to
Appendix~\ref{app:proof-linear-rate-lb}.
Combining Theorems~\ref{thm:linear-bounds} and~\ref{thm:linear-rate-lb}, Algorithm~\ref{alg:batched-cf}
approaches the semiparametric benchmark at the minimax-optimal second-order rate:
\[
\E[(\hat\theta-\theta)^2]
=
\frac{V^\star}{n}
+
\widetilde O\!\left(
n^{-1-2\alpha}
\vee
n^{-1-2\beta}
\right),
\]
up to logarithmic factors.
\subsection{Comparison and Transition}
\label{subsec:comparison}

The preceding analysis reveals a sharp dichotomy in the achievability of semiparametric efficiency under adaptive experimentation. The fundamental distinction is whether the plug-in bias remains only second-order small, or instead vanishes entirely. Table~\ref{tab:comparison} summarizes this structural difference.
\begin{table}[h]
\centering
\caption{Achieving semiparametric efficiency via regression adjustment.}
\label{tab:comparison}
\renewcommand{\arraystretch}{1.1}
\begin{tabularx}{\linewidth}{>{\raggedright\arraybackslash}p{0.30\linewidth} >{\raggedright\arraybackslash}X >{\raggedright\arraybackslash}X}
\toprule
& \textbf{General estimands} & \textbf{Linear functionals} \\
\midrule
Plug-in bias $\E[\Delta_{b,1}\mid\cH_b]$
& $O\!\bigl(d(\hat\eta,\eta_0)^2\bigr)\neq 0$
& Exact zero \\
\midrule
Required nuisance estimation rate
& $\alpha>1/4$
& Any $\alpha>0$ \\
\midrule
Design-learning requirement
& Consistency of $\hat m_a$
& Any $\beta>0$ for variance estimation \\
\midrule
Interaction term
& Controlled by Cauchy--Schwarz
& Exact zero \\
\midrule
Second-order convergence rate
& Asymptotic efficiency under $\alpha>1/4$, but no sharp second-order rate result
& Sharp optimal rate with matching lower bound \\
\bottomrule
\end{tabularx}
\end{table}
The regression-adjustment route approaches the oracle benchmark via an estimator-side correction for dependent sequential data. However, as the general case demonstrates, this approach can place substantial statistical complexity on the final estimation step. When the plug-in bias is only second-order small, attaining the benchmark relies critically on the stringent $n^{-1/4}$ convergence threshold for nuisance estimation.

This operational reality motivates an alternative route to the same benchmark. Rather than concentrating statistical complexity in the estimator, the next section shifts this complexity directly into the assignment mechanism itself. There we show that the oracle rate $V^\star/n$ can be attained \emph{by design}, through adaptive covariate balancing, while utilizing a simple, transparent moment-based estimator and completely bypassing the nuisance-rate bottleneck in the final estimation step.


\section{Achieving Semiparametric Efficiency via Covariate Balancing}\label{sec:covariate-balancing}
\label{sec:cov-balancing}

We now turn to the second constructive route for approaching the same benchmark $V^\star/n$. Whereas Section~\ref{sec:upper-bound} approaches the benchmark through estimator-side correction, this section approaches it \emph{by design}. Crucially, this route does not require nonparametric nuisance estimation in the estimation step. The only nuisance quantity enters on the design side, through the second-moment object used to update target propensities across batches.

\subsection{Moment Condition and What Must Be Balanced}
\label{subsec:cb-moment-condition}

We consider parameters $\theta_0\in\Theta\subseteq\R^q$ defined implicitly through a general moment condition
\begin{equation}
\label{eq:moment-condition}
\E\!\left[m(\X,\A,\Y,\theta_0)\right]=0,
\end{equation}
where $m:\cX\times\cA\times\cY\times\Theta\to\R^q$ is a known function that does not depend on unknown nuisance parameters. Let
\[
M := \E\!\left[\frac{\partial m}{\partial\theta^\top}(\X,\A,\Y,\theta_0)\right]
\]
denote the nonsingular Jacobian. The naive method-of-moments estimator $\hat\theta_n$ solves the sample analog
\begin{equation}
\label{eq:naive-estimator}
\frac{1}{n}\sum_{i=1}^n m(\X_i,\A_i,\Y_i,\hat\theta_n)=0.
\end{equation}
The key advantage of this route is that the estimator itself is simple: it uses only the known moment function $m$ and does not require nonparametric nuisance estimation in the estimation step.
To analyze \eqref{eq:naive-estimator} under sequential experimental designs, we identify the covariate-dependent quantities that must be balanced. Define the arm-specific \emph{balance functions}
\begin{equation}
\label{eq:omega-def}
\omega_a(\X) \;:=\; -M^{-1}\,\E\!\left[m(\X,a,\Y,\theta_0)\mid \X\right].
\end{equation}
These functions capture the directions that must be balanced for the target estimand. In this sense,
covariate balancing is not merely balancing raw covariates; it is balancing the covariate-indexed
components of the moment condition that affect first-order estimation error.
For example, in the binary ATE case
\(
    \theta_0=\E[\Y(1)-\Y(0)],
\)
one may use the IPW moment
\[
m_{\mathrm{ATE}}(\X,\A,\Y,\theta)
=
\frac{\1\{\A=1\}}{\e_1(\X)}\Y
-
\frac{\1\{\A=0\}}{\e_0(\X)}\Y
-
\theta .
\]
The corresponding estimand-relevant balance directions are, up to the propensity normalization
embedded in the moment,
\[
    \omega_a(\X)=(-1)^{1-a}\mu_a(\X),
    \qquad
    \mu_a(\X)=\E[\Y\mid \X,\A=a].
\]
Thus, for the ATE, the design should balance the arm-specific outcome-regression functions rather
than arbitrary covariate features.
As a nonlinear example, consider the QTE
\[
    \theta_0^{\mathrm{QTE}}=q_1(\tau)-q_0(\tau),
    \qquad
    q_a(\tau):=F_{\Y(a)}^{-1}(\tau).
\]
Using the arm-specific quantile moment
\[
m_{\mathrm{Q},a}(\X,\A,\Y,\vartheta)
=
\frac{\1\{\A=a\}}{\e_a(\X)}
\bigl(\1\{\Y\le \vartheta_a\}-\tau\bigr),
\]
the corresponding balance direction is
\[
    \omega_a(\X)
    =
    \frac{F_{\Y(a)\mid \X}(q_a(\tau))-\tau}{f_a(q_a(\tau))},
\]
under standard smoothness at the target quantile. Additional examples, including policy value,
are collected in Appendix~\ref{app:cb-examples}.
 The following decomposition is the structural core of this section; its full derivation is deferred to Appendix~\ref{app:proof-three-term}.

\begin{proposition}[Structural decomposition]
\label{prop:three-term-decomp}
Under standard regularity conditions for moment-based asymptotic linearity, if unit $i$ is assigned treatment in a batch with target propensity $\e^{(b(i))}$, then
\begin{equation}
\label{eq:three-term-decomp}
\begin{aligned}
\sqrt{n}\,(\hat\theta_n-\theta_0)
&=
\underbrace{\frac{1}{\sqrt{n}}\sum_{i=1}^n \IF_{\e^\star}(\Oo_i)}_{\textnormal{(I) Oracle EIF}}
+
\underbrace{\frac{1}{\sqrt{n}}\sum_{i=1}^n \bigl[\IF_{\e^{(b(i))}}(\Oo_i)-\IF_{\e^\star}(\Oo_i)\bigr]}_{\textnormal{(II) Design Gap}} \\
&\qquad+
\underbrace{\frac{1}{\sqrt{n}}\sum_{i=1}^n \sum_{a\in\cA}
\bigl(\1\{\A_i=a\}-\e_a^{(b(i))}(\X_i)\bigr)\,\omega_a(\X_i)}_{\textnormal{(III) Balancing Remainder}}
+o_P(1).
\end{aligned}
\end{equation}
\end{proposition}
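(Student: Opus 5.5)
We start from the standard Z-estimator linearization. Under the regularity conditions (consistency of $\hat\theta_n$, differentiability of $\theta\mapsto\E[m(\X,\A,\Y,\theta)]$, and a stochastic-equicontinuity condition), a Taylor expansion of the sample moment equation \eqref{eq:naive-estimator} around $\theta_0$ gives
\[
\sqrt{n}(\hat\theta_n-\theta_0)=-M^{-1}\frac{1}{\sqrt n}\sum_{i=1}^n m(\X_i,\A_i,\Y_i,\theta_0)+o_P(1).
\]
Equicontinuity has to be verified for the dependent, batched sequence. We will do this batch by batch: conditional on $\cH_b$ the observations within batch $b$ are i.i.d. Because there are only logarithmically many batches, summing the batchwise empirical-process bounds costs at most logarithmic factors, and these are absorbed into $o_P(1)$. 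Convergence of the sample Jacobian to $M$ follows in the same way, from a martingale law of large numbers.

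Next we decompose each summand of $-M^{-1}m(\X_i,\A_i,\Y_i,\theta_0)$ arm by arm. Write $m_i:=m(\X_i,\A_i,\Y_i,\theta_0)=\sum_a \1\{\A_i=a\}m(\X_i,a,\Y_i,\theta_0)$ and split $m(\X_i,a,\Y_i,\theta_0)=\E[m(\X_i,a,\Y,\theta_0)\mid\X_i]+r_a(\X_i,\Y_i)$, where the residual has conditional mean zero given $(\X_i,\A_i=a)$. By \eqref{eq:omega-def}, the conditional-mean part equals $-M\omega_a(\X_i)$. So
\[
-M^{-1}m_i=\sum_a\1\{\A_i=a\}\omega_a(\X_i)-M^{-1}\sum_a\1\{\A_i=a\}r_a(\X_i,\Y_i).
\]
We then write $\1\{\A_i=a\}=\e^{(b(i))}_a(\X_i)+(\1\{\A_i=a\}-\e_a^{(b(i))}(\X_i))$. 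The fluctuation piece produces exactly term (III). The remaining piece is
\[
\sum_a \e_a^{(b(i))}(\X_i)\omega_a(\X_i)-M^{-1}\sum_a\1\{\A_i=a\}r_a(\X_i,\Y_i).
\]
We must identify it with $\IF_{\e^{(b(i))}}(\Oo_i)$ in the form of Lemma~\ref{lem:eif-structure}: $g(\X)-\theta_0$ plus $\sum_a \1\{\A=a\}v_a/\e_a$. The identification requires showing two things:
\begin{itemize}
\item $\sum_a\e_a(x)\omega_a(x)=g(x)-\theta_0$ does not depend on $\e$;
\item $-M^{-1}r_a=v_a/\e_a$.
\end{itemize}

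This identification is the main obstacle. It relies on the moment being an IPW-type moment whose $\e$-weighting is tied to the logging propensity actually used, so that the implied target is invariant to $\e$; this is the same normalization as in the ATE and QTE examples. We would first verify it in those two examples, where $\e_a\omega_a$ collapses to $\mu_a$ or $F_{\Y(a)\mid\X}(q_a)$ terms. We would then argue in general that, for a fixed-propensity model, $-M^{-1}m$ evaluated with the design's own propensity is an influence function in the nonparametric model with known $\e$. That model's tangent space is saturated apart from the known propensity, so its influence function is unique and equals $\IF_{\e}$. Lemma~\ref{lem:eif-structure} then fixes the decomposition, provided $m$ is built so its conditional-mean component is already projected; otherwise the naive estimator is not efficient, and we would add that as part of the ``standard regularity conditions.''

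Finally, once each summand is written as $\IF_{\e^{(b(i))}}(\Oo_i)$ plus the balancing term, adding and subtracting $\IF_{\e^\star}(\Oo_i)$ yields (I) and (II). The decomposition is then an exact algebraic identity up to the linearization remainder, which gives \eqref{eq:three-term-decomp}.
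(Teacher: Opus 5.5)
Your route is the paper's: linearize, split each arm's moment into its conditional mean $-M\omega_a(\X)$ plus a residual $r_a$, add and subtract $\sum_a \e^{(b(i))}_a(\X_i)\phi_a(\X_i)$ to peel off term (III), identify what remains with $\IF_{\e^{(b(i))}}(\Oo_i)$, then add and subtract $\IF_{\e^\star}(\Oo_i)$. You correctly isolate the identification as the crux; the paper simply asserts it. However, your general argument for it is wrong, and the condition you propose as a fallback would gut the result.

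The known-propensity model $\cM(\e)$ is not saturated. Its tangent space $\cT(\e)=\cT_X\oplus\bigoplus_a\cT_a$ omits every direction $s(\A,\X)$ with $\E[s\mid\X]=0$. So gradients are unique only modulo $\cT(\e)^\perp=\{s(\A,\X):\E[s\mid \X]=0\}$.

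In general $-M^{-1}m(\cdot,\theta_0)\neq\IF_{\e}$. If the two were equal, your own decomposition would force $\sum_a(\1\{\A=a\}-\e_a)\omega_a\equiv 0$, i.e.\ $\omega_a$ constant in $a$, and term (III) would vanish identically. Requiring $m$ to be ``already projected'' imposes exactly that. It would rule out the IPW ATE and QTE moments that motivate the section. Their inefficiency under i.i.d.\ assignment is precisely the variance of (III), which is what balancing removes.

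The correct argument needs no extra condition. Because the IPW weights use the known $\e$, the moment identifies $\theta$ at every $P\in\cM(\e)$. Differentiating $\E_{P_t}[m(\Oo,\theta(P_t))]=0$ along a submodel with score $S\in\cT(\e)$ gives $\dot\theta=\E[-M^{-1}m(\Oo,\theta_0)\,S]$. So $-M^{-1}m$ is a gradient, and $\IF_\e$ is its projection onto $\overline{\cT(\e)}$.

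Your split is exactly this orthogonal decomposition:
\begin{itemize}
\item the $\cT_X$-projection is $\E[-M^{-1}m\mid\X]=\sum_a\e_a\omega_a$, which is centered because $\E[m]=0$;
\item the $\cT_a$-projection is $\1\{\A=a\}(-M^{-1}r_a)$;
\item the remainder $\sum_a(\1\{\A=a\}-\e_a)\omega_a$ lies in $\cT(\e)^\perp$.
\end{itemize}
Matching components with Lemma~\ref{lem:eif-structure}, applied with $\e=\e^{(b(i))}$, yields both of your bullets at once. Note that $\omega_a$ is then batch-indexed, since the IPW weights are. The resulting identity is pointwise in $\Oo_i$, so it holds whatever the actual assignment mechanism is.

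Separately, your plan to verify stochastic equicontinuity via conditional i.i.d.-ness within batches fails for this design. The tuple-wise permutations make within-batch assignments dependent, which is the whole point of balancing. The proposition already assumes the regularity conditions for asymptotic linearity, and the paper just absorbs the Taylor remainder into $o_P(1)$. You should take the linearization as given rather than justify it on that premise.
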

Term~(I) is the oracle EIF contribution and yields the target variance $V^\star$. Term~(II) is the cost of learning the optimal design. Term~(III) is the balancing remainder. The design problem is therefore clear: make Term~(III) small within batches and Term~(II) small across batches.

\subsection{Two-Layer Batched Balancing Design}
\label{subsec:cb-procedure}

This design mirrors Proposition~\ref{prop:three-term-decomp}: the within-batch layer targets the balancing remainder, while the across-batch layer targets the design gap. We divide the experiment into $B$ batches and proceed in two layers:
\begin{itemize}[leftmargin=2em]
\item \textbf{Within-batch balancing.} The target propensity $\e^{(b)}$ is held fixed throughout batch~$b$. As units arrive, the design enforces local balance relative to $\e^{(b)}$ by combining spatial binning of the covariate space with grouped permutations within each bin.

\item \textbf{Across-batch learning.} At the end of batch~$b$, the data from that batch are used to estimate the second-moment proxy $\hat m^{(b)}$ by standard i.i.d.\ regression. The next-batch target propensity $\e^{(b+1)}$ is then obtained by plug-in minimization of the oracle design objective.
\end{itemize}
\begin{figure}
    \centering
    \includegraphics[width=0.8\linewidth]{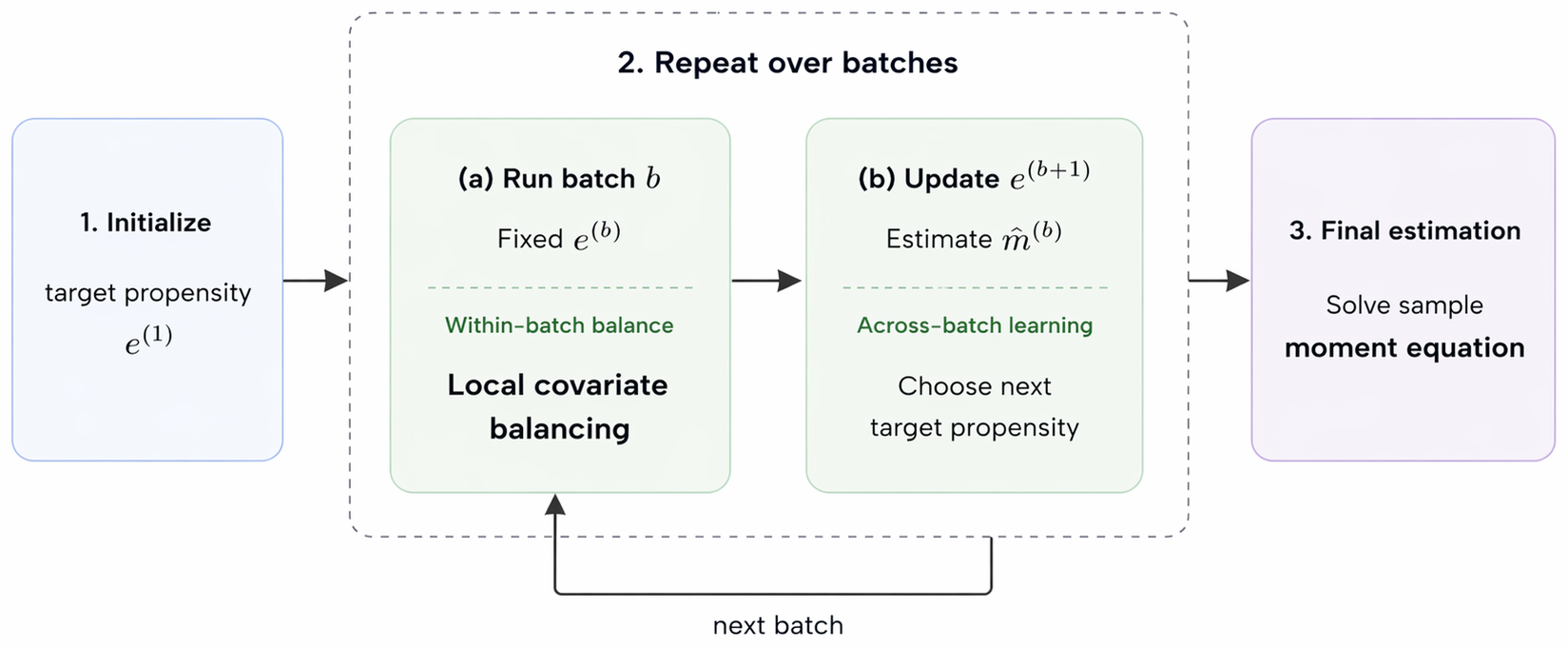}
    \caption{Illustration of the workflow of the covariate-balancing route.}
    \label{fig:cb_illu}
\end{figure}
Full implementation details---including explicit rules for randomized rounding, open-tuple boundary handling, and bin-refresh conventions---are deferred to Appendix~\ref{app:cb-alg-details}.

\subsection{Within-Batch Balancing Guarantee}
\label{subsec:cb-variance}

We first isolate the within-batch balancing layer under a fixed target propensity $\e^{(b)}$. The batch-level contribution to the balancing remainder is governed by the normalized imbalance
\begin{equation}
\label{eq:moment-imbalance}
S_b
:=
\frac{1}{\sqrt{\gamma_b}}
\sum_{t\in\mathcal I_b}
\sum_{a\in\cA}
\bigl(\1\{\A_t=a\}-\e_a^{(b)}(\X_t)\bigr)\,\omega_a(\X_t)
\;\in\; \R^q.
\end{equation}
By enforcing local balance through spatial binning and within-bin grouped permutation, the design simultaneously controls this imbalance for all Lipschitz balance functions without requiring knowledge of the unknown $\omega_a$'s themselves. Detailed variance derivations and martingale arguments are deferred to Appendix~\ref{app:proof-global-balance}.

\begin{theorem}[Within-batch balancing remainder]
\label{thm:global-balance}
Assume the balance functions $\omega_a$ are Lipschitz continuous. Under suitable geometric tuning of the bin diameter and tuple capacity for a batch of size $m$, the normalized imbalance satisfies
\begin{equation}
\label{eq:within-batch-balance}
\|S_b\| = O_P\!\left(m^{-\frac{1}{d+4}}\right)=o_P(1).
\end{equation}
\end{theorem}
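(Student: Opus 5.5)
We will decompose $S_b$ according to the two sources of imbalance created by the design: the spatial discretization (bins) and the incompleteness of permutation tuples. Partition $\cX=[0,1]^d$ into cubes $\{C_j\}$ of side $h$, so there are $J\asymp h^{-d}$ bins. Within each bin, units arriving sequentially are grouped into tuples of capacity $k$. Each complete tuple receives a randomized-rounding permutation, so the empirical arm counts in the tuple match $k\,\e^{(b)}_a(\bar x_j)$ up to $O(1)$ rounding, where $\bar x_j$ is the bin center.

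For every unit $t$ in bin $j$, write
\[
\omega_a(\X_t)=\omega_a(\bar x_j)+r_{a,t},\qquad |r_{a,t}|\le L\sqrt d\,h .
\]
We treat $\e^{(b)}(\cdot\mid x)$ as Lipschitz, or as piecewise constant on bins, which is the natural implementation choice; then a similar expansion holds for the propensity. This splits $\sqrt{\gamma_b}\,S_b$ into three terms:
\begin{enumerate}[label=(\roman*),leftmargin=*]
\item a \emph{bin-center term} $\sum_j\sum_a \omega_a(\bar x_j)\,D_{j,a}$, where $D_{j,a}=\sum_{t\in C_j}\bigl(\1\{\A_t=a\}-\e_a^{(b)}\bigr)$ is the count imbalance in bin $j$;
\item a \emph{residual term} $\sum_t\sum_a(\1\{\A_t=a\}-\e^{(b)}_a)\,r_{a,t}$;
\item a \emph{propensity-discretization term}, bounded by $O(mh)$ deterministically. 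It becomes negligible after the second-moment argument below, because it too has mean-zero martingale structure within tuples.
\end{enumerate}

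\textbf{Bin-center term.} Completed tuples contribute only rounding error. Randomized rounding is unbiased, so each complete tuple contributes a mean-zero bounded error, independent across tuples given the covariates. The open (incomplete) tuple in each bin contributes at most $k$ in absolute value. Therefore
\[
\E|D_{j,a}|^2\lesssim \frac{n_j}{k}+k^2,
\]
and summing over bins with $\omega_a$ bounded gives a second moment $\lesssim m/k + Jk^2$ for term (i). The cross-bin terms vanish because the rounding errors are martingale differences with respect to the arrival filtration. This is where non-anticipation matters: assignment of tuple members depends only on the past and on the current bin.

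\textbf{Residual term.} Within a tuple the centered indicators $\1\{\A_t=a\}-\e_a$ are exchangeable, sum to approximately zero, and are mean zero given $\cF_{t-1}$ and $\X_t$. Hence
\[
\E\Bigl[\Bigl(\sum_t(\1\{\A_t=a\}-\e_a)\,r_{a,t}\Bigr)^2\Bigr]\lesssim m h^2 .
\]
This follows from a martingale-difference variance computation: within-tuple covariances are nonpositive, and across tuples the covariances vanish.

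Collecting the three terms, we obtain
\[
\E\|S_b\|^2\lesssim h^2+\frac{1}{k}+\frac{h^{-d}k^2}{m}.
\]
To balance, choose $h^2=1/k$, i.e.\ $k=h^{-2}$. The last term then becomes $h^{-d-4}/m$. Setting $h^{2}= h^{-d-4}/m$ gives $h=m^{-1/(d+6)}$, which yields rate $m^{-1/(d+6)}$. The stated rate $m^{-1/(d+4)}$ would instead come from a sharper open-tuple bound. Each bin has only one open tuple, whose imbalance is bounded by $k$ but whose expected squared contribution is $O(k)$ rather than $O(k^2)$: by randomized rounding of partially filled tuples, the imbalance is a sum of $O(k)$ mean-zero terms. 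Also, with the argument in (ii), the rounding in complete tuples contributes $O(1)$ per tuple. With these refinements,
\[
\E\|S_b\|^2\lesssim h^2+\frac{h^{-d}k}{m}+\frac1k .
\]
Optimizing over $k$ first gives $k\asymp\sqrt{m h^d}$, and then over $h$ we obtain $\E\|S_b\|^2\lesssim m^{-2/(d+4)}$, i.e.\ $\|S_b\|=O_P(m^{-1/(d+4)})$ by Chebyshev's inequality.

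I expect the main obstacle to be the open-tuple and boundary accounting, together with making the martingale structure rigorous under the online, non-anticipating bin-refresh rules. The plan is to define the filtration at the tuple level and verify that each tuple's centered assignment vector is a martingale difference with conditional second moment $O(1)$ after rounding. The tuning constants then follow by optimizing the displayed bound.
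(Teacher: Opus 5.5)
Your route is the paper's. Your final bound $h^2+1/k+h^{-d}k/m$ and the tuning $h\asymp m^{-1/(d+4)}$, $k\asymp m^{2/(d+4)}$ coincide with the paper's $h^2+1/G+M_nG/m$ with $M_n\asymp h^{-d}$. Your split into a bin-center count imbalance plus a Lipschitz residual is the random-variable version of the paper's tuple-level variance computation:
\begin{itemize}
\item conditioning on the count vector, a common weight matrix $W_0$ contributes zero permutation variance, which is your ``complete tuples contribute only rounding error'';
\item the within-bin perturbations contribute $O(Gh^2)$ per tuple;
\item rounding contributes $O(1)$ per tuple;
\item each of the at most $M_n$ tail tuples contributes $O(G)$, which is your refined open-tuple bound.
\end{itemize}

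\textbf{The step that fails as written is (iii).} The term
\[
-m^{-1/2}\sum_t\sum_a\bigl(\e^{(b)}_a(\X_t)-\e^{(b)}_a(\bar x_j)\bigr)\,\omega_a(\X_t)
\]
contains no assignment randomness. It is measurable with respect to $\cH_b$ and the batch covariates, so there is no martingale cancellation to invoke. For a merely Lipschitz $\e^{(b)}$ it is only bounded by $O(\sqrt{m}\,h)$. Even for smooth $\e^{(b)}$ and $\omega_a$, its mean is generically of order $\sqrt{m}\,h^2=m^{d/(2(d+4))}$ at your tuning, so it does not vanish, and your collected bound silently drops it. You must commit to your second option: measure imbalance against the bin-level target $\e^{(b)}(x_\ell)$, i.e.\ take $\e^{(b)}$ piecewise constant on bins. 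This is what the design actually randomizes against, and it makes (iii) identically zero. The paper's proof adopts this convention implicitly, since its tuple increment is $T_r=\sum_t W_t(e_t-\eta)$ with $\eta=\e^{(b)}(x_\ell)$.

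\textbf{A smaller correction in (ii).} Let $r_s=(r_{a,s})_{a\in\cA}$ and $\Sigma(p)=\diag(p)-pp^\top$. The pairwise within-tuple covariances $-r_s^\top\Sigma(p)r_t/(G-1)$ need not be nonpositive when $r_s\neq r_t$. What actually gives your $O(mh^2)$ is the exact without-replacement identity, with $\Delta_t$ the matrix whose columns are the $r_{a,t}$ and $e_t$ the one-hot vector of $\A_t$:
\[
\Var\Bigl(\sum_t \Delta_t e_t \,\Big|\, L\Bigr)=\tfrac{G}{G-1}\sum_t\Delta_t\Sigma(p)\Delta_t^\top-\tfrac{1}{G-1}\Bigl(\sum_t\Delta_t\Bigr)\Sigma(p)\Bigl(\sum_t\Delta_t\Bigr)^\top\preceq\tfrac{G}{G-1}\sum_t\Delta_t\Sigma(p)\Delta_t^\top .
\]
This gives $O(Gh^2)$ per tuple. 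It must be combined with martingale differences taken at the tuple level, not the unit level as your ``mean zero given $\cF_{t-1}$ and $\X_t$'' suggests, because within a tuple the assignments are drawn without replacement.
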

Thus the within-batch balancing remainder is negligible. The rate in \eqref{eq:within-batch-balance} is dimension dependent, which is the main statistical cost of the design-side route.

\subsection{Main Efficiency Theorem}
\label{subsec:cb-batched}

Operationally, each batch functions simultaneously as an online experiment for balancing and an offline i.i.d.\ regression sample for design learning. Combining the within-batch balancing guarantee with the across-batch design updates yields the main result of this section.

\begin{theorem}[Semiparametric efficiency via covariate balancing]
\label{thm:cb-efficiency}
Assume the balance functions $\omega_a$ are Lipschitz, the covariate space is compact, and the second-moment proxies satisfy
\begin{equation}
\label{eq:cb-second-moment-rate}
\E\!\left[\sum_{a\in\cA}\bigl(\hat m_a^{(b)}(\X)-m_a(\X)\bigr)^2\right]
\lesssim \gamma_b^{-2\beta}
\end{equation}
for some $\beta>0$. Then, under the two-layer batched adaptive balancing design,
\begin{equation}
\label{eq:cb-efficiency-bound}
\E\!\left[(\hat\theta_n-\theta_0)^2\right]
=
\frac{V^\star}{n}
+
o\!\left(\frac{1}{n}\right).
\end{equation}
\end{theorem}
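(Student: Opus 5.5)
The proof starts from the structural decomposition in Proposition~\ref{prop:three-term-decomp}. Write $\sqrt{n}(\hat\theta_n-\theta_0)=\mathrm{(I)}+\mathrm{(II)}+\mathrm{(III)}+R_n$. Term (I) is a normalized sum of i.i.d.\ mean-zero terms under the oracle design, so $\E[\mathrm{(I)}^2]=V^\star$. The goal is to show that $\E[\mathrm{(II)}^2]$, $\E[\mathrm{(III)}^2]$ and $\E[R_n^2]$ all tend to zero. Cauchy--Schwarz then gives cross terms of order $o(1)$, so $n\,\E[(\hat\theta_n-\theta_0)^2]=V^\star+o(1)$.

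Since the statement is in $L_2$ rather than in probability, the $o_P(1)$ remainder has to be upgraded. For this I would use a uniform-integrability argument. Positivity and compactness make the moment function and the inverse propensities bounded. A standard M-estimation argument with a nonsingular $M$ then gives $\E[n\|\hat\theta_n-\theta_0\|^4]=O(1)$, so convergence in probability of the remainder yields convergence of its second moment.

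\textbf{Term (III).} Split it by batch: $\mathrm{(III)}=\sum_b \sqrt{\gamma_b/n}\,S_b$. By Theorem~\ref{thm:global-balance}, $\|S_b\|=O_P(\gamma_b^{-1/(d+4)})$. I would need the $L_2$ version $\E\|S_b\|^2\lesssim \gamma_b^{-2/(d+4)}$, which the appendix martingale/variance argument should give directly. By Minkowski's inequality,
\[
\E[\mathrm{(III)}^2]^{1/2}\le \sum_b \sqrt{\gamma_b/n}\,\gamma_b^{-1/(d+4)} .
\]
With geometric batch sizes the last batch dominates this sum, which is therefore $O(n^{-1/(d+4)})\to0$. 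The first batch is small, so it contributes only $\sqrt{\gamma_1/n}=o(1)$. Different batches could alternatively be handled as martingale differences with respect to $\cH_b$, since the within-batch terms have conditional mean zero. That gives the sharper bound $\sum_b(\gamma_b/n)\E\|S_b\|^2$.

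\textbf{Term (II).} The summands $\IF_{\e^{(b)}}(\Oo_i)-\IF_{\e^\star}(\Oo_i)$ are centered conditionally on $\cH_b$, so they form a martingale difference sequence. This gives
\[
\E[\mathrm{(II)}^2]=\frac1n\sum_b\gamma_b\,\E\bigl[\|\IF_{\e^{(b)}}-\IF_{\e^\star}\|_{L_2}^2\bigr].
\]
By Lemma~\ref{lem:eif-structure} and positivity, the inner norm is bounded by $C\,\E\sum_a(\e_a^{(b)}(\X)-\e_a^\star(\X))^2$. The plug-in minimizer of \eqref{eq:design-objective} is Lipschitz in the $m_a$ as long as $m_a$ is bounded away from zero, which holds under the clipped square-root formula \eqref{eq:opt-prop} or a strongly convex $\cE$. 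Hence
\[
\|\e^{(b)}-\e^\star\|^2\lesssim \E\sum_a(\hat m^{(b-1)}_a-m_a)^2\lesssim \gamma_{b-1}^{-2\beta},
\]
the same estimate that underlies Theorem~\ref{thm:design-opt-general}. Summing gives $n^{-1}\sum_b\gamma_b\gamma_{b-1}^{-2\beta}\to0$, because geometric batches have $\gamma_{b-1}\asymp\gamma_b$ and the total weight is dominated by late batches. Only the bounded first batch term $\gamma_1/n$ remains, and it vanishes.

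\textbf{Main obstacle.} I expect the hardest part to be making the remainder of Proposition~\ref{prop:three-term-decomp} negligible in mean square under the dependent balancing assignments. Term (III) is not a martingale within batches, because the permutation grouping induces dependence, and the linearization of $\hat\theta_n$ also involves $\E$-level terms like $\frac1n\sum(\partial m/\partial\theta)-M$. I would handle this in three steps. First, I would show that the sample Jacobian converges uniformly, using the within-batch conditional i.i.d.\ covariates, since the balancing only alters $\A$ with bounded influence. Second, I would control the Taylor remainder by $\|\hat\theta_n-\theta_0\|^2$, using the fourth-moment bound above. Third, I would verify that the $L_2$ bound on $S_b$ genuinely holds, not only the $O_P$ rate, by checking that the appendix variance computation bounds second moments conditionally on $\cH_b$.
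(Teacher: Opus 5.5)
Your skeleton matches the paper's: start from Proposition~\ref{prop:three-term-decomp}, control (III) batch by batch with Theorem~\ref{thm:global-balance}, control (II) through the plug-in rate for $\hat m^{(b)}$, and let (I) carry $V^\star$. The difference is in how the pieces are assembled. The paper invokes a martingale CLT for (I) and then reads off the MSE. That alone cannot give \eqref{eq:cb-efficiency-bound}, because convergence in distribution plus an $o_P(1)$ remainder says nothing about second moments. Your explicit $L_2$ bookkeeping plus a uniform-integrability upgrade is what an MSE statement actually requires. That bookkeeping consists of Minkowski across batches for (III), a direct bound on $\|e^{(b)}-e^\star\|^2$ via Lipschitz continuity of the Neyman map for (II), and Cauchy--Schwarz for the cross terms. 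For moments linear in $\theta$ (ATE, policy value) the decomposition is exact, so $R_n\equiv 0$ and no upgrade is needed.

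Two steps are not correct as written, and a third needs a caveat.

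\textbf{Term (I).} It is not a sum of i.i.d.\ terms under the oracle design. The data come from the balancing design with target $e^{(b)}$ (in fact $e^{(b)}(x_\ell)$ on each bin), and assignments within a tuple are dependent. The claim is still true, for a different reason. Distinct units have independent covariates, and $v_a(\X_i,\Y_i)$ is centered given all covariates and assignments in the batch (outcome visibility plus independence of the latent units), so cross-unit products vanish. This fact, not mere centering given $\cH_b$, is also what removes the within-batch cross terms in (II). The diagonal terms then give $\Var(g(\X))+\E\bigl[\sum_a p_a(\X)m_a(\X)/e^\star_a(\X)^2\bigr]$, where $p_a$ is the actual assignment probability. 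At the interior Neyman allocation, $m_a/(e^\star_a)^2=(\sum_j\sqrt{m_j})^2$ is the same for every arm, so this equals $V^\star$ exactly for any $p$. With an active floor or constraint set you get $V^\star+O(\|p-e^\star\|)$, and you must show this extra term vanishes.

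\textbf{Uniform integrability.} UI of $n\|\hat\theta_n-\theta_0\|^2$ needs $\sup_n\E[n^2\|\hat\theta_n-\theta_0\|^4]<\infty$, or some $2+\delta$ moment of $\sqrt n\|\hat\theta_n-\theta_0\|$. Your $\E[n\|\hat\theta_n-\theta_0\|^4]=O(1)$ is not enough. With the correct bound, $R_n^2$ is UI, since $(\mathrm{II})^2,(\mathrm{III})^2\to0$ in $L_1$ and $(\mathrm{I})^2$ has a bounded higher moment. This moment bound is the genuinely new ingredient of your route, and it is not ``standard'' under the dependent balancing design. For nonsmooth moments such as the quantile indicator, your Taylor step does not apply; you need a maximal inequality for the moment process instead.

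\textbf{The $L_2$ bound on $S_b$.} When you check $\E\|S_b\|^2\lesssim\gamma_b^{-2/(d+4)}$, note that the appendix only bounds the variance of tuple increments centered at the bin-level target $e^{(b)}(x_\ell)$. A second-moment bound therefore requires $S_b$, and the inverse weights inside $m$, to use that binned propensity. Centering at $e^{(b)}(\X_t)$, as in \eqref{eq:moment-imbalance}, leaves a drift of order up to $\sqrt{\gamma_b}\,h$, which generically diverges under the stated tuning $h\asymp\gamma_b^{-1/(d+4)}$.
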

The result follows by controlling each term in Proposition~\ref{prop:three-term-decomp}. First, Theorem~\ref{thm:global-balance} makes the within-batch balancing remainder negligible. Second, the plug-in updates based on \eqref{eq:cb-second-moment-rate} make the cumulative design gap negligible. Third, the oracle EIF term contributes the target variance $V^\star$. Detailed rate calculations are deferred to Appendix~\ref{app:proof-cb-efficiency}.

\subsection{Comparison and Implications}
\label{subsec:cb-comparison}

This second constructive route provides a sharp contrast to the regression-adjustment strategy of Section~\ref{sec:upper-bound}. Both attain the same benchmark, but Section~\ref{sec:upper-bound} operates through estimator-side correction, whereas the present section achieves the benchmark by design. By placing statistical complexity in the assignment mechanism, the balancing route avoids the familiar $n^{-1/4}$ estimator-side nuisance bottleneck. The final estimator remains simple and robust. The tradeoff is geometric: the balancing remainder decays at the dimension-dependent rate in \eqref{eq:within-batch-balance}. Consequently, the design-side route is especially attractive in structured, lower-dimensional settings, whereas estimator-side correction is often preferable when reliable high-dimensional nonparametric learners are available.

A central operational takeaway is that both constructive routes require design updates only at batch boundaries. This aligns directly with the lower-bound theory of Section~\ref{sec:lower-bound}, which shows that the relevant sufficient summary for statistical efficiency is the induced average propensity score. By separating the timescales into within-batch balancing and across-batch learning, the design yields low switching costs and natural compatibility with delayed feedback. Both routes target the same average-propensity benchmark, but place complexity in different locations.


\section{Numerical Study}\label{sec:numerical}

We complement the theoretical analysis with a numerical study covering synthetic simulations and a real-data application. The synthetic simulations examine how the two constructive routes---batched cross-fitted regression adjustment (RA, Algorithm~\ref{alg:batched-cf}) and batched adaptive covariate balancing (CB, Algorithm~\ref{alg:batched-cb})---approach the semiparametric benchmark $V^\star$ in controlled environments where heteroskedasticity, cross-arm variance contrast, and covariate dimension can be varied. The real-data application then applies the same design principles to an AI medical-assistant evaluation, where the target is naturally multi-armed: we estimate the vector treatment effect of several assistants relative to a common baseline, using covariates that are operationally interpretable.

Across all experiments, we compare the adaptive designs with three reference points. The first is uniform i.i.d.\ assignment with the cross-fitted EIF estimator, which serves as the no-design baseline. The second is the oracle Neyman i.i.d.\ design with the same estimator, denoted oracle\_iid, which represents the i.i.d.\ design benchmark when the optimal propensity is known. The third is the oracle balanced design, denoted \texttt{oracle\_cb}, which uses the true generalized Neyman propensity inside Algorithm~\ref{alg:batched-cb} and represents the design-side benchmark for the balancing route. The two feasible adaptive algorithms are denoted \texttt{alg1\_ra} for RA and \texttt{alg2\_cb} for CB. We report normalized mean squared error $n\cdot\mathrm{MSE}$, paired differences against \texttt{uniform} with $\pm 2\,\mathrm{SE}$ intervals, and a verdict that marks a method as \textbf{BEATS} when the paired improvement exceeds two standard errors, \emph{ties} when the interval covers zero, and \emph{LOSES} when the paired loss exceeds two standard errors.

\subsection{Synthetic experiments}\label{subsec:numerical-synth}

We first construct a continuous-Gaussian DGP indexed by an ambient dimension $d\in\{2,3,8\}$. Covariates are drawn as $X\sim\mathcal{N}(0,I_d)$ and there are $K=4$ arms. The conditional mean and variance of the outcome are
\[
\mu_a(X)=\alpha_a+\beta_a^\top X,\qquad
\log\sigma_a^2(X)=c_a+\gamma_a^\top X_{1:2},\qquad
Y\mid X,A=a\sim\mathcal{N}\!\bigl(\mu_a(X),\sigma_a^2(X)\bigr).
\]
The first two covariates drive the signal: the coefficients $\beta_a$ load primarily on dimensions~$1$ and $2$, and the log-variance slopes $\gamma_a$ are nonzero only along these two dimensions. Cross-arm heteroskedasticity is induced through the intercept schedule $c=(-2.0,-0.7,0.7,2.0)$, which creates substantial differences in arm-specific noise levels. The remaining $d-2$ covariates are weakly informative or near-noise, yet they are still passed to every nuisance learner. Thus, the high-dimensional regime tests whether the adaptive designs can exploit a low effective dimension in the presence of irrelevant covariates.

We instantiate $n_{\mathrm{pop}}=5\times 10^4$ population units and draw $n=10^4$ sample units per replication, with $3000$ replications per cell. Replications share random seeds across methods, so the paired differences below are computed under common random numbers. We evaluate three estimands. The scalar ATE compares arms $2$ and $4$,
\[
\theta_{\mathrm{ATE}}=\mathbb E[\mu_4(X)-\mu_2(X)].
\]
The marginal median quantile treatment effect compares the same arm pair at $\tau=0.5$,
\[
\theta_{\mathrm{QTE}}=F_{Y(4)}^{-1}(0.5)-F_{Y(2)}^{-1}(0.5).
\]
The vector ATE compares all three active arms against the common reference arm $a_r=3$,
\[
\theta_{\mathrm{vATE}}=(\mathbb E[\mu_a(X)-\mu_3(X)])_{a\ne 3},
\]
and is evaluated under trace MSE. For each estimand, the RA estimator is the cross-fitted AIPW form developed in Section~\ref{sec:upper-bound}; for the QTE, it uses an EIF-derived AIPW estimator of the marginal CDF followed by grid inversion. The CB estimator is the moment-based estimator of Section~\ref{sec:cov-balancing}: an inverse-propensity-weighted average for linear estimands and a weighted-quantile estimator for the QTE. Nuisance regressions use gradient-boosted trees (XGBoost) with $50$ trees and depth $3$.

Table~\ref{tab:synth-main} reports the nine main cells. The oracle benchmarks confirm sizable design value: relative to uniform assignment, the oracle i.i.d.\ design \texttt{oracle\_iid} improves normalized MSE in every cell, and the oracle balanced design \texttt{oracle\_cb} also improves performance throughout. The feasible RA algorithm \texttt{alg1\_ra} is consistently beneficial across all estimands and dimensions, tracking the oracle i.i.d.\ benchmark especially closely for scalar estimands. The feasible CB algorithm \texttt{alg2\_cb} is beneficial in most cells, with performance depending more strongly on the geometry of the balancing step: it ties uniform for the ATE at $d=2$ and loses for the QTE at $d=8$ under the default all-feature binning rule. This pattern is consistent with the theory in Section~\ref{sec:cov-balancing}, where the balancing remainder depends on the effective dimension of the covariate partition.

\begin{table}[htbp]
\centering
\caption{Synthetic experiments at $n=10^4$, $3000$ replications. Each block reports normalized MSE ($n\cdot\mathrm{MSE}$) and the paired improvement over uniform $\pm 2\,\mathrm{SE}$. Bold entries beat uniform with statistical significance; italics lose; un-emphasized entries are ties.}
\label{tab:synth-main}
\scriptsize
\setlength{\tabcolsep}{3.2pt}
\renewcommand{\arraystretch}{0.92}
\resizebox{\textwidth}{!}{
\begin{tabular}{l|rrr|rrr|rrr}
\toprule
& \multicolumn{3}{c|}{ATE ($\mu_4-\mu_2$)}
& \multicolumn{3}{c|}{QTE ($\tau=0.5$)}
& \multicolumn{3}{c}{Vector ATE (trace MSE)} \\
Method
& $d=2$ & $d=3$ & $d=8$
& $d=2$ & $d=3$ & $d=8$
& $d=2$ & $d=3$ & $d=8$ \\
\midrule
$V^\star$
& 14.7 & 15.2 & 29.9
& 21.9 & 20.8 & 20.9
& 52.0 & 63.0 & 72.8 \\
uniform
& 20.3 & 21.7 & 38.8
& 28.7 & 26.4 & 25.0
& 76.4 & 99.6 & 118.4 \\
\midrule
\texttt{oracle\_iid}
& \textbf{14.4} & \textbf{15.5} & \textbf{30.3}
& \textbf{21.6} & \textbf{21.4} & \textbf{19.6}
& \textbf{52.1} & \textbf{67.5} & \textbf{78.0} \\
\quad paired $\pm 2$SE
& +5.9$\pm$1.3 & +6.2$\pm$1.4 & +8.5$\pm$2.5
& +7.1$\pm$1.8 & +5.1$\pm$1.7 & +5.4$\pm$1.6
& +24.4$\pm$3.6 & +32.0$\pm$4.9 & +40.4$\pm$5.3 \\
\texttt{oracle\_cb}
& \textbf{17.2} & \textbf{19.2} & \textbf{31.3}
& \textbf{25.3} & \textbf{23.2} & \textbf{22.8}
& \textbf{63.0} & \textbf{81.0} & \textbf{83.1} \\
\quad paired $\pm 2$SE
& +3.1$\pm$1.3 & +2.4$\pm$1.5 & +7.5$\pm$2.5
& +3.4$\pm$2.0 & +3.3$\pm$1.7 & +2.2$\pm$1.7
& +13.4$\pm$3.7 & +18.6$\pm$5.0 & +35.2$\pm$5.4 \\
\midrule
\texttt{alg1\_cf}
& \textbf{15.7} & \textbf{16.5} & \textbf{32.7}
& \textbf{23.6} & \textbf{23.2} & \textbf{22.3}
& \textbf{62.6} & \textbf{73.7} & \textbf{91.2} \\
\quad paired $\pm 2$SE
& +4.6$\pm$1.3 & +5.2$\pm$1.4 & +6.1$\pm$2.6
& +5.1$\pm$1.9 & +3.3$\pm$1.8 & +2.7$\pm$1.7
& +13.8$\pm$3.8 & +25.8$\pm$5.0 & +27.2$\pm$5.6 \\
\texttt{alg2\_cb}
& 19.1 & \textbf{19.2} & \textbf{35.1}
& \textbf{25.9} & \textbf{24.6} & \emph{28.9}
& \textbf{67.0} & \textbf{89.5} & \textbf{105.3} \\
\quad paired $\pm 2$SE
& +1.3$\pm$1.5 & +2.5$\pm$1.5 & +3.8$\pm$2.6
& +2.8$\pm$2.0 & +1.9$\pm$1.8 & $-3.9\pm$1.9
& +9.5$\pm$3.9 & +10.1$\pm$5.3 & +13.1$\pm$5.8 \\
\bottomrule
\end{tabular}
}
\end{table}

The table highlights three empirical lessons. First, \texttt{oracle\_iid} delivers the largest and most stable improvements, showing that the average-propensity benchmark is attainable when the optimal propensity is known. Second, \texttt{alg1\_ra} captures a large share of this gain using only batched updates, suggesting that the plug-in design based on learned nuisance functions is effective in this DGP. Third, \texttt{alg2\_cb} illustrates the main tradeoff of the design-side route. When the balancing partition is aligned with the low-dimensional structure of the DGP, CB can approach its oracle balancing benchmark; when the partition is formed over all ambient covariates, irrelevant dimensions can reduce the within-bin sample size and inflate the balancing remainder.

To isolate this high-dimensional effect, we re-run the $d=8$ cells with two binning schemes. The \emph{full} scheme bins on all eight covariates, matching the default CB implementation in Table~\ref{tab:synth-main}. The \emph{top-2} scheme bins only on dimensions~$1$--$2$, while still allowing the nuisance regressions to use all eight covariates. This top-2 scheme corresponds to the effective-dimension principle suggested by the rate in Theorem~\ref{thm:cb-efficiency}: the geometric part of the balancing design should be built on the covariates that drive treatment-effect or variance heterogeneity.

\begin{table}[H]
\centering
\caption{$d=8$, $n=10^4$: full versus top-2 design across estimands. Diff $=$ nMSE(full) $-$ nMSE(top-2); a positive diff means top-2 binning is more accurate.}
\label{tab:synth-highd}
\small
\begin{tabular}{lrrr}
\toprule
Estimand & RA \texttt{alg1\_ra} (full vs.\ top-2) & CB \texttt{alg2\_cb} (full vs.\ top-2) & Interpretation \\
\midrule
ATE        & $32.7$ vs $32.4$~~~(+0.3$\pm$2.6)   & $35.1$ vs $33.5$~~~(+1.6$\pm$2.6)             & CB modestly improves \\
QTE        & $22.3$ vs $21.8$~~~(+0.5$\pm$1.6)   & $28.9$ vs $23.6$~~~(\textbf{+4.6$\pm$1.9})    & CB requires top-2 \\
vector ATE & $91.2$ vs $90.1$~~~(+1.1$\pm$4.7)   & $105.3$ vs $93.5$~~~(\textbf{+8.9$\pm$5.1})   & CB requires top-2 \\
\bottomrule
\end{tabular}
\end{table}

Table~\ref{tab:synth-highd} confirms this interpretation. RA is nearly insensitive to the binning choice because the assignment rule is learned through flexible regressions that can screen out the six near-noise dimensions. CB is more sensitive. For the QTE and vector ATE, top-2 binning substantially improves normalized MSE relative to full binning, closing much of the gap created by the all-feature partition. This finding gives an implementation lesson for balancing-based sequential designs: in moderate or high ambient dimension, the balancing partition should be constructed using a low-dimensional summary of the covariates, obtained from prior knowledge, pilot data, or a preliminary heterogeneity screen. In contrast, the RA route can rely more directly on the implicit regularization of the nuisance learner.

\subsection{Application: HELPMed AI medical-assistant evaluation}\label{subsec:numerical-helpmed}

We next apply the same methodology to an evaluation experiment for AI medical assistants from the HELPMed study \citep{bean2026reliability}. The dataset contains $n_{\mathrm{pop}}=2400$ interactions between $1298$ participants and one of four AI assistants ($K=4$) across $10$ clinically grounded scenarios. The four arms correspond to a baseline assistant and three competitor variants. The outcome $Y$ is binary and indicates whether the participant's chosen disposition matches the gold-standard answer for the scenario. The original experiment used uniform $1/K$ randomization, yielding exactly $600$ interactions per arm; this provides a natural empirical reference point for assessing adaptive designs. Since some participants contribute multiple interactions, the estimand in this exercise is interpreted at the interaction level, matching the unit at which treatment assignment and outcomes are recorded.

We use a $d=6$ covariate vector: \texttt{scenario\_id} (10 clinical vignettes, such as chest pain, head injury triage, and pediatric fever), \texttt{age\_band} (six ordinal bands), \texttt{education} (five ordinal levels), \texttt{online\_health} (frequency of online health-information seeking), \texttt{llms\_health\_info} (frequency of LLM use for health questions), and \texttt{sex\_male}. To construct a low-dimensional design variant, we run a causal-forest variable-importance screen at the end of each batch using the accumulated data, and use the two highest-ranked covariates to update the next-batch design. In most replications and batches, this procedure selects \texttt{scenario\_id} and \texttt{age\_band}, suggesting that these two variables carry most of the treatment-effect heterogeneity signal. This creates a real-data analogue of the effective-dimension issue in Section~\ref{subsec:numerical-synth}: the balancing design may either operate on all six covariates or use an adaptively selected low-dimensional partition.

The target is the vector ATE of the three competitor assistants relative to the baseline assistant,
\(
\theta=(\mathbb E[\mu_a(X)-\mu_3(X)])_{a\in\{1,2,4\}}\in\mathbb{R}^3,
\)
evaluated under trace MSE. This estimand captures the joint comparison needed when the goal is to rank several assistant variants against a common deployment baseline. We use the HELPMed sample as an empirical finite population, draw resamples of size $n\in\{1200,2400,4800\}$ in each replication, and evaluate the two adaptive algorithms together with the two oracle ceilings over $5000$ replications per cell. The nuisance regressions use the same gradient-boosted-tree configuration as in the synthetic simulations. For both adaptive routes, we compare a full-feature design with a batchwise top-2 design. In the full-feature design, the propensity update uses all six covariates and the CB route forms bins over the full covariate vector. In the top-2 design, the next-batch design is based on the two covariates selected by the causal-forest screen at the previous batch boundary; for CB, these two covariates define the spatial partition used for balancing.

\begin{table}[H]
\centering
\caption{HELPMed vector ATE under normalized trace MSE. The target benchmark is $V^\star=4.515$. The top-2 design uses batchwise causal-forest screening to select two covariates for the next-batch design; in most batches, the selected variables are \texttt{scenario\_id} and \texttt{age\_band}. Bold entries beat uniform with statistical significance under paired $\pm 2\,\mathrm{SE}$ intervals; un-emphasized entries are ties.}
\label{tab:helpmed}
\small
\setlength{\tabcolsep}{4.5pt}
\renewcommand{\arraystretch}{0.96}
\begin{tabular}{l|ccc|ccc}
\toprule
& \multicolumn{3}{c|}{Full-feature design ($d=6$)}
& \multicolumn{3}{c}{Batchwise top-2 design} \\
$n$ & $1200$ & $2400$ & $4800$ & $1200$ & $2400$ & $4800$ \\
\midrule
uniform
& 5.43 & 5.27 & 5.10
& 5.43 & 5.27 & 5.10 \\
\midrule
\texttt{oracle\_iid}
& \textbf{5.08} & \textbf{5.00} & \textbf{4.81}
& --- & --- & --- \\
\quad paired $\pm 2$SE
& +0.35$\pm$0.20 & +0.27$\pm$0.19 & +0.29$\pm$0.19
& & & \\
\texttt{oracle\_cb}
& \textbf{4.73} & \textbf{4.77} & \textbf{4.84}
& --- & --- & --- \\
\quad paired $\pm 2$SE
& +0.70$\pm$0.20 & +0.49$\pm$0.19 & +0.26$\pm$0.20
& & & \\
\texttt{alg1\_ra}
& 5.28 & \textbf{5.02} & 5.00
& \textbf{5.20} & \textbf{4.99} & \textbf{4.82} \\
\quad paired $\pm 2$SE
& +0.15$\pm$0.20 & +0.24$\pm$0.20 & +0.10$\pm$0.20
& +0.23$\pm$0.21 & +0.28$\pm$0.20 & +0.28$\pm$0.20 \\
\texttt{alg2\_cb}
& \textbf{5.03} & \textbf{4.84} & 4.97
& \textbf{5.03} & \textbf{4.87} & \textbf{4.81} \\
\quad paired $\pm 2$SE
& +0.39$\pm$0.21 & +0.43$\pm$0.19 & +0.13$\pm$0.20
& +0.40$\pm$0.21 & +0.40$\pm$0.19 & +0.29$\pm$0.20 \\
\bottomrule
\end{tabular}
\end{table}

Table~\ref{tab:helpmed} shows that the average-propensity design principle continues to produce meaningful gains in a realistic multi-arm evaluation. The two oracle designs improve over uniform randomization in all three sample-size regimes, with \texttt{oracle\_cb} giving the largest improvement at $n=1200$ and \texttt{oracle\_iid} remaining stable across sample sizes. Among feasible adaptive designs, \texttt{alg2\_cb} delivers the strongest gains at $n=1200$ and $n=2400$, reducing normalized trace MSE from $5.43$ to about $5.03$ at $n=1200$ and from $5.27$ to about $4.84$--$4.87$ at $n=2400$. At $n=4800$, the batchwise top-2 versions of both adaptive routes remain statistically beneficial, while the full-feature versions are closer to uniform within Monte Carlo error. This pattern is consistent with the synthetic high-dimensional experiment: when the balancing partition spreads a few thousand observations over a large Cartesian grid, the finite-sample balancing remainder can offset part of the design gain.

The feature-screening comparison is also informative. For \texttt{alg1\_ra}, using the batchwise top-2 screen in the design update slightly improves performance in all three sample-size regimes, although the differences are modest. For \texttt{alg2\_cb}, batchwise top-2 binning performs similarly to full-feature binning at $n=1200$ and $n=2400$, and improves clearly at $n=4800$, where the full $10\times 6\times 5\times 6\times 6\times 2=10{,}800$-cell grid is especially sparse relative to the realized sample size. The application therefore reinforces the implementation lesson from Table~\ref{tab:synth-highd}: a balancing-based design should use a low-dimensional, data-adaptive partition, while flexible regression learners can continue to use the richer covariate set.

The managerial implications are straightforward. In this application, the oracle benchmarks suggest that efficient allocation could reduce normalized trace MSE by roughly $5$--$13\%$ relative to the original uniform design, and the feasible adaptive algorithms deliver statistically significant reductions of about $5$--$8\%$ in their strongest cells. Since normalized MSE plays the role of the variance constant in the $1/n$ rate, these gains translate approximately into the same percentage reduction in the number of interactions needed to achieve a fixed precision target. The gains are especially relevant for multi-assistant evaluation, where decision-makers care about the joint ranking of several variants rather than a single binary contrast. The empirical pattern also suggests a deployable workflow: after each batch, run a lightweight heterogeneity screen to select a small set of design covariates, use those covariates in the next-batch propensity update and CB partition, and update assignment probabilities only at batch boundaries.

\section{Conclusion}\label{sec:conclusion}

This paper develops a unified framework for characterizing and designing efficient  experiments. The central object is the induced average propensity score: every non-anticipating design induces such a score, and the corresponding semiparametric efficiency benchmark is given by the i.i.d. efficiency bound evaluated at this induced propensity. This result shows that the statistical effect of complex adaptive, balanced, or otherwise dependent assignment mechanisms can be summarized through a single design object. It also turns experimental design into the problem of choosing or learning an efficient average propensity, yielding generalized Neyman allocation principles for scalar and vector-valued causal estimands.

We further show that this benchmark can be approached by practical batched adaptive designs through two complementary routes. The regression-adjustment route learns the target propensity and uses efficient-influence-function-based estimation, while the covariate-balancing route achieves the same benchmark through assignment-side balance and simple moment-based estimation. Both approaches require only a small number of policy updates, making them compatible with delayed feedback and operational deployment. Taken together, the results suggest that average propensity provides a useful language for understanding the statistical limits of sequential experimentation and for designing efficient multi-treatment experiments in modern applications such as AI-enabled service evaluation.


\printendnotes

\bibliographystyle{chicago}
\bibliography{citation}

\setcounter{section}{0}
\renewcommand{\thesection}{\Alph{section}}
\renewcommand{\thesubsection}{\Alph{section}.\arabic{subsection}}
\renewcommand{\thesubsubsection}{\Alph{section}.\arabic{subsection}.\arabic{subsubsection}}

\section{Algorithms}
\label{app:algorithms}

This appendix gives the formal algorithmic specification of the two routes studied in the main text: the batched cross-fitted regression-adjustment estimator analyzed in Section~\ref{sec:regression-adjustment} and the batched adaptive covariate-balancing procedure analyzed in Section~\ref{sec:covariate-balancing}. Both algorithms share a common batch schedule and cross-fitting convention, which we state once for reference.

\paragraph{Common batch schedule.}
We index batches by $b=1,\dots,B$ with $B\asymp\log n$ and geometric batch sizes $\gamma_b\asymp n^{b/B}$. This makes the final batch contain a constant fraction of all units while keeping the number of design updates logarithmic in the sample size. All population expectations inside design updates are replaced by empirical averages over the pooled data from batches $\le b$.

\paragraph{Cross-fitting convention.}
Within each batch, the data are split into $S$ folds (typically $S=2$). For each fold $s$, nuisance estimates are fit on the remaining $S-1$ folds and then applied to fold $s$. The final estimator averages the $S$ fold-specific estimates to cancel first-order plug-in bias.

\subsection{Batched Cross-Fitted Regression Adjustment}
\label{app:alg-reg-adj}
\label{app:alg-details}

Algorithm~\ref{alg:batched-cf} implements the regression-adjustment route. The procedure alternates between (i)~updating the logging propensity $\e^{(b)}$ at the start of each batch, (ii)~running the batch under this propensity, and (iii)~refitting nuisances on the pooled data at the end. The final estimator is a cross-fitted EIF plug-in.

\begin{algorithm}[H]
\caption{Batched Cross-Fitted Regression Adjustment}
\label{alg:batched-cf}
\KwIn{Sample size $n$; batch count $B\asymp\log n$ and sizes $\{\gamma_b\}_{b=1}^B$ with $\sum_b\gamma_b=n$; cross-fit folds $S$; propensity floor $\varepsilon>0$; feasible class $\cE$; moment function $m(\X,\A,\Y,\theta)$.}
\KwOut{Estimator $\hat\theta_n$ of $\theta(P_0)$.}
Initialize $\e^{(1)}\in\cE$ with $\e_a^{(1)}(x)\ge\varepsilon$ for all $(a,x)$\;
\For{$b=1,\dots,B$}{
  \tcp{Run batch $b$ under logging propensity $\e^{(b)}$}
  \For{each incoming unit $i$ in batch $b$}{
    Observe $\X_i$, draw $\A_i\sim\e^{(b)}(\cdot\mid\X_i)$, record $\Y_i$\;
  }
  \tcp{End of batch: refit nuisances with cross-fitting}
  Split batch-$b$ data into $S$ folds $\{I_1,\dots,I_S\}$\;
  \ForEach{fold $s\in\{1,\dots,S\}$}{
    Fit $\hat g^{(b,-s)}$ and $\{\hat v_a^{(b,-s)}\}_{a\in\cA}$ on pooled data $\bigcup_{b'\le b}\bigcup_{s'\neq s}I_{s'}^{(b')}$ using a nonparametric regressor satisfying Assumption~\ref{assump:general-reg} or Assumption~\ref{assump:linear-rates}\;
    Fit $\hat m_a^{(b,-s)}(x)$ (second-moment proxy) on the same pooled data\;
  }
  \tcp{Compute next-batch design by plug-in optimization}
  $\displaystyle \e^{(b+1)} \leftarrow \argmin_{\e\in\cE,\ \e_a(x)\ge\varepsilon}\ \E_n\!\left[\sum_{a\in\cA}\frac{\hat m_a^{(b,-s)}(\X)}{\e_a(\X)}\right]$\;
  \tcp*{Scalar-target closed form: $\e_a^{(b+1)}(x)\propto\max\{\varepsilon,\sqrt{\hat m_a^{(b,-s)}(x)}\}$, renormalized}
}
\For{$s=1,\dots,S$}{
  Compute $\hat\theta^{(s)}=\E_n\!\left[\hat g^{(-s)}(\X)+\sum_a\frac{\1\{\A=a\}}{\e^{(b(i),-s)}(\A\mid\X)}\,\hat v_a^{(-s)}(\X,\Y)\right]$ over fold $s$, using the realized logging propensity\;
}
\Return $\hat\theta_n=\frac{1}{S}\sum_{s=1}^S\hat\theta^{(s)}$\;
\end{algorithm}

\paragraph{Design update.}
In the design-update step, we minimize the plug-in design objective over the feasible class $\cE$ subject to the floor $\e_a(x)\ge\varepsilon$. For the scalar-target case this reduces to the closed-form generalized Neyman allocation
\[
\e_a^{(b,-s)}(x) \;\propto\; \max\!\left\{\varepsilon,\ \sqrt{\hat m_a^{(b,-s)}(x)}\right\},
\]
renormalized so that $\sum_a \e_a^{(b,-s)}(x)=1$. For vector targets the optimization is convex and solvable by standard interior-point or projected-gradient methods.

\paragraph{Nuisance estimation.}
The fits $\hat g^{(b,-s)}$ and $\hat v_a^{(b,-s)}$ use the analyst's preferred nonparametric regression method (e.g., cross-validated series, random forest, or neural network). The only requirement is the rate condition in Assumption~\ref{assump:general-reg} or Assumption~\ref{assump:linear-rates}, as appropriate.

\paragraph{Final estimator.}
The realized logging propensity $\e^{(b(i),-s)}(\X_i)$ is known by construction, so no propensity estimation is required. The final estimator is the simple average of the $S$ fold-specific cross-fitted EIF estimators.

\subsection{Batched Adaptive Covariate Balancing}
\label{app:alg-cb}
\label{app:cb-alg-details}

Algorithm~\ref{alg:batched-cb} implements the covariate-balancing route. Compared to Algorithm~\ref{alg:batched-cf}, assignment within each batch is no longer i.i.d.\ draws from $\e^{(b)}$; instead, a two-layer design (spatial binning plus within-bin grouped permutation) enforces near-exact local balance at the target propensity. The between-batch design updates are identical in form to Algorithm~\ref{alg:batched-cf}, but the inner estimator is a method-of-moments estimator based on the balance condition rather than a plug-in EIF estimator.

\begin{algorithm}[H]
\caption{Batched Adaptive Covariate Balancing}
\label{alg:batched-cb}
\KwIn{Sample size $n$; batch count $B\asymp\log n$ and sizes $\{\gamma_b\}_{b=1}^B$; spatial bin diameter $h=h_n$; tuple capacity $G=G_n$; propensity floor $\varepsilon>0$; feasible class $\cE$; moment function $m(\X,\A,\Y,\theta)$.}
\KwOut{Estimator $\hat\theta_n$ solving $\frac{1}{n}\sum_i m(\X_i,\A_i,\Y_i,\hat\theta_n)=0$.}
Partition the covariate space $\cX$ into bins $\{\mathcal B_\ell\}_{\ell=1}^{M_n}$ of maximum diameter $h$, with $M_n\asymp h^{-d}$\;
For each bin $\mathcal B_\ell$, fix a reference point $x_\ell\in\mathcal B_\ell$\;
Initialize $\e^{(1)}\in\cE$ with $\e_a^{(1)}(x)\ge\varepsilon$\;
\For{$b=1,\dots,B$}{
  \ForEach{bin $\mathcal B_\ell$}{
    Open an empty tuple buffer $\cT_\ell$ of capacity $G$\;
  }
  \For{each incoming unit $i$ in batch $b$}{
    Locate its bin $\ell(i)$ such that $\X_i\in\mathcal B_{\ell(i)}$\;
    Append $i$ to $\cT_{\ell(i)}$\;
    \If{$|\cT_{\ell(i)}|=G$}{
      \tcp{Randomized rounding: choose a count vector}
      Let $\eta^{(b)}_\ell:=\e^{(b)}(x_{\ell(i)})\in\Delta(\cA)$\;
      Draw $L\in\Z_+^{\K}$ with $\sum_a L_a=G$ and $\E[L]=G\eta^{(b)}_\ell$ via $L_a=\lfloor G\eta^{(b)}_{\ell,a}\rfloor+\1\{a\in S\}$, where $|S|=G-\sum_a\lfloor G\eta^{(b)}_{\ell,a}\rfloor$ and $\PP(a\in S)$ equals the fractional part of $G\eta^{(b)}_{\ell,a}$\;
      \tcp{Uniform permutation within the tuple}
      Form the multiset containing $L_a$ copies of arm $a$; assign the $G$ positions of $\cT_{\ell(i)}$ by a uniform random permutation of this multiset\;
      Record outcomes $\{\Y_j\}_{j\in\cT_{\ell(i)}}$; close tuple\;
    }
  }
  \tcp{End of batch: handle tail tuples with the same rounding+permutation scheme}
  \ForEach{non-empty tail tuple}{Apply randomized rounding to $\eta^{(b)}_\ell$ scaled to the realized tuple length $\ell<G$ and assign by uniform permutation\;}
  \tcp{Update design}
  Fit second-moment proxy $\hat m_a^{(b)}$ on pooled data from batches $\le b$\;
  $\displaystyle \e^{(b+1)} \leftarrow \argmin_{\e\in\cE,\ \e_a(x)\ge\varepsilon}\ \E_n\!\left[\sum_{a\in\cA}\frac{\hat m_a^{(b)}(\X)}{\e_a(\X)}\right]$\;
}
\Return $\hat\theta_n$ solving $\frac{1}{n}\sum_{i=1}^n m(\X_i,\A_i,\Y_i,\hat\theta_n)=0$\;
\end{algorithm}

\paragraph{Batch schedule.}
As in Algorithm~\ref{alg:batched-cf}, we use geometric batches with $B\asymp\log n$ and $\gamma_b\asymp n^{b/B}$.

\paragraph{Spatial bins.}
The bin partition $\{\mathcal B_\ell\}$ is fixed once and reused across all batches; only the within-bin target propensity $\eta^{(b)}_\ell=\e^{(b)}(x_\ell)$ changes with $b$. Treating every unit falling in $\mathcal B_\ell$ as having the same target propensity $\e^{(b)}(x_\ell)$ introduces an $O(h)$ Lipschitz approximation error that is controlled in Appendix~\ref{app:proof-global-balance}.

\paragraph{Grouped permutation and rounding.}
Because $G\cdot\e^{(b)}_a(x_\ell)$ is generally non-integer, the rounding step draws a count vector $L\in\mathbb{Z}_+^{\K}$ with $\sum_a L_a=G$ and $\E[L\mid\cF_{r-1}]=G\cdot\e^{(b)}(x_\ell)$. The rounding variance is absorbed into the term~(II) of the tuple-variance decomposition (Proposition~\ref{prop:cb-tuple-variance}).

\paragraph{Handling tail tuples.}
At the end of each batch, each bin may have an incomplete (``tail'') tuple of length $\ell<G$. Tail-tuple assignments also use uniform permutation with count vector drawn from the rounding scheme; their contribution to the balancing remainder is bounded by the $M_n G/n$ term in Theorem~\ref{thm:global-balance}.

\paragraph{Design update.}
The update of $\e^{(b)}$ is identical in form to that of Algorithm~\ref{alg:batched-cf}: minimize the plug-in $\sum_a \hat m^{(b)}_a(x)/\e_a(x)$ objective subject to the propensity floor. In the scalar-target case this reduces to the closed-form generalized Neyman allocation $\e_a^{(b+1)}(x)\propto\max\{\varepsilon,\sqrt{\hat m^{(b)}_a(x)}\}$ after renormalization.

\paragraph{Tuning.}
Theorem~\ref{thm:global-balance} suggests the tuning $h\asymp n^{-1/(d+4)}$ and $G\asymp n^{2/(d+4)}$, which balances the three error sources in \eqref{eq:global-balance-bound} and is used throughout the analysis. Within a batch of size $\gamma_b$, the same tuning is applied with $\gamma_b$ in place of $n$.

\paragraph{Final estimator.}
Since the assignment mechanism is known by construction, no propensity estimation is required for the final estimator. The method-of-moments estimator $\hat\theta_n$ solving $\tfrac{1}{n}\sum_i m(\X_i,\A_i,\Y_i,\hat\theta_n)=0$ does not use any nuisance fit; nuisance estimation enters only through the design-update step.

\section{Efficient Influence Functions for the Canonical Examples}
\label{app:eif-examples}

This appendix derives, from first principles, the efficient influence function (EIF) in the fixed i.i.d.\ benchmark model for each of the three canonical estimands considered in the main text: the average treatment effect (ATE), the quantile treatment effect (QTE), and the policy value. Each derivation follows the same template: (i) state the identifying moment condition, (ii) compute the pathwise derivative of $\theta$ along a regular one-dimensional submodel, (iii) identify the unique Riesz representer in the observed-data tangent space $\cT(\e)$ characterized in Appendix~\ref{app:proof-eif-structure}, and (iv) specialize the resulting semiparametric efficiency bound $V(\bar e)/n$ to the example. Together these three derivations show that the general EIF structure theorem reduces, in each concrete case, to a closed-form influence function whose second moment is the oracle variance used throughout Section~\ref{sec:regression-adjustment} and Section~\ref{sec:covariate-balancing}.

\subsection{Setup and general EIF template}
\label{app:eif-setup}

Recall from Section~\ref{sec:formulation} that the observed data are $\Oo=(\X,\A,\Y)$ with $\A\mid\X\sim\e(\cdot\mid\X)$ and $\Y\mid(\X,\A=a)\sim P_{Y\mid X,a}$. The tangent space decomposition established in Appendix~\ref{app:proof-eif-structure} gives
\[
\cT(\e)=\cT_\X \ \oplus\ \bigoplus_{a\in\cA}\cT_a,
\]
and any pathwise-differentiable scalar parameter $\theta(P)$ admits a unique EIF of the form
\begin{equation}
\label{eq:eif-template-app-b}
\varphi_\e(\Oo)
= \bigl(g(\X)-\theta(P_0)\bigr)
+ \sum_{a\in\cA}\frac{\1\{\A=a\}}{\e(a\mid\X)}\,v_a(\X,\Y),
\qquad \E[v_a\mid\X,\A=a]=0,
\end{equation}
with efficiency bound $V(\e)=\E[\varphi_\e(\Oo)^2]=\Var(g(\X))+\sum_a\E\!\left[\sigma_a^2(\X)/\e(a\mid\X)\right]$ when $v_a(\X,\Y)=\Y-\E[\Y\mid\X,\A=a]$.

The four subsections that follow specialize \eqref{eq:eif-template-app-b} by identifying the scalar $g(\cdot)$ and the arm-specific residuals $v_a(\cdot,\cdot)$ for each example.

\subsection{Average Treatment Effect (ATE)}
\label{app:eif-ate}

\noindent\emph{Estimand.}
With $\mu_a(\X):=\E[\Y\mid\X,\A=a]$, the ATE is the linear functional
\[
\theta(P)=\E_P[\mu_1(\X)-\mu_0(\X)].
\]

\noindent\emph{Pathwise derivative.}
Fix a regular one-dimensional submodel $P_t$ with score $S(\Oo)=S_X(\X)+\sum_{a}\1\{\A=a\}S_a(\X,\Y)$. Differentiating $\theta(P_t)=\int[\mu_{1,t}(x)-\mu_{0,t}(x)]p_{X,t}(x)\,dx$ at $t=0$ and applying the product rule,
\begin{equation}
\label{eq:eif-ate-step1}
\dot\theta_{P_0}(S)
=\int[\mu_1(x)-\mu_0(x)]\dot p_X(x)\,dx
+\sum_{a}\int \dot\mu_a(x)\cdot(-1)^{1-a}\,p_X(x)\,dx.
\end{equation}
For the first term, $\dot p_X/p_X=S_X$ and $\E[S_X(\X)]=0$ allow us to subtract $\theta$:
\[
\int[\mu_1-\mu_0]\dot p_X\,dx
=\E\!\left[\bigl(\mu_1(\X)-\mu_0(\X)-\theta\bigr)S_X(\X)\right].
\]
For each arm $a$, $\dot\mu_a(x)=\E[(\Y-\mu_a(x))S_a(x,\Y)\mid\X=x,\A=a]$ since $\E[S_a\mid\X,\A=a]=0$. Converting back to an observed-data expectation by inserting the propensity-reweighting kernel $\1\{\A=a\}/\e(a\mid\X)$,
\begin{equation}
\label{eq:eif-ate-step2}
\int\dot\mu_a(x)p_X(x)\,dx
=\E\!\left[\frac{\1\{\A=a\}}{\e(a\mid\X)}\bigl(\Y-\mu_a(\X)\bigr)S_a(\X,\Y)\right].
\end{equation}

\noindent\emph{EIF.}
Substituting \eqref{eq:eif-ate-step2} into \eqref{eq:eif-ate-step1} and matching $\dot\theta_{P_0}(S)=\E[\varphi_\e(\Oo)S(\Oo)]$ termwise yields
\begin{equation}
\label{eq:eif-ate}
\boxed{\;
\varphi_\e(\Oo)
=\bigl(\mu_1(\X)-\mu_0(\X)-\theta\bigr)
+\frac{\1\{\A=1\}}{\e(1\mid\X)}\bigl(\Y-\mu_1(\X)\bigr)
-\frac{\1\{\A=0\}}{\e(0\mid\X)}\bigl(\Y-\mu_0(\X)\bigr).\;}
\end{equation}
This matches the template \eqref{eq:eif-template-app-b} with $g(\X)=\mu_1(\X)-\mu_0(\X)$, $v_1(\X,\Y)=\Y-\mu_1(\X)$, and $v_0(\X,\Y)=-(\Y-\mu_0(\X))$, both satisfying the orthogonality $\E[v_a\mid\X,\A=a]=0$.

\noindent\emph{Efficiency bound.}
Since the cross terms between the $\X$-piece and the IPW pieces vanish (by $\E[v_a\mid\X,\A=a]=0$) and the two arm-IPW pieces are mutually orthogonal ($\1\{\A=1\}\1\{\A=0\}\equiv 0$),
\begin{equation}
\label{eq:V-ate}
V_{\mathrm{ATE}}(\e)
=\Var\!\bigl(\mu_1(\X)-\mu_0(\X)\bigr)
+\E\!\left[\frac{\sigma_1^2(\X)}{\e(1\mid\X)}\right]
+\E\!\left[\frac{\sigma_0^2(\X)}{\e(0\mid\X)}\right],
\end{equation}
which is the standard semiparametric efficiency bound for the ATE under known propensities $\e(\cdot\mid\X)$.

\subsection{Quantile Treatment Effect (QTE)}
\label{app:eif-qte}

\noindent\emph{Estimand.}
Fix a quantile level $\tau\in(0,1)$. For $a\in\{0,1\}$, let $F_{\Y(a)}$ denote the marginal CDF of $\Y(a)$ and $q_a:=q_a(\tau)$ the $\tau$-quantile, defined implicitly by $F_{\Y(a)}(q_a)=\tau$. The QTE is
\[
\theta(P)=q_1(\tau)-q_0(\tau).
\]
Throughout this subsection we assume that $\Y(a)$ admits a continuous marginal density $f_{\Y(a)}$ with $f_{\Y(a)}(q_a)>0$, the standard regularity for pathwise differentiability of quantile functionals.

\noindent\emph{Pathwise derivative via implicit differentiation.}
Differentiating the identity $F_{\Y(a),t}(q_{a,t})=\tau$ at $t=0$ gives
\begin{equation}
\label{eq:eif-qte-implicit}
\dot q_a(S)=-\frac{\dot F_{\Y(a)}(q_a)}{f_{\Y(a)}(q_a)},
\end{equation}
so it suffices to compute $\dot F_{\Y(a)}(q_a)$. Under no unmeasured confounding, $F_{\Y(a)\mid\X}(y\mid x)=F_{\Y\mid\X,\A=a}(y\mid x)$, and
\[
F_{\Y(a)}(q_a)=\E\!\left[F_{\Y\mid\X,\A=a}(q_a\mid\X)\right].
\]
Differentiating along the submodel and applying the same product-rule decomposition as in B.2,
\begin{align}
\dot F_{\Y(a)}(q_a)
&=\E\!\left[\bigl(F_{\Y\mid\X,\A=a}(q_a\mid\X)-\tau\bigr)\,S_X(\X)\right] \nonumber\\
&\quad+\E\!\left[\frac{\1\{\A=a\}}{\e(a\mid\X)}\bigl(\1\{\Y\le q_a\}-F_{\Y\mid\X,\A=a}(q_a\mid\X)\bigr)S_a(\X,\Y)\right],
\label{eq:eif-qte-Fdot}
\end{align}
using $\E[F_{\Y\mid\X,\A=a}(q_a\mid\X)]=\tau$ to recenter the $\X$-piece.

\noindent\emph{EIF for the quantile $q_a$.}
Combining \eqref{eq:eif-qte-implicit} and \eqref{eq:eif-qte-Fdot} and reading off $\dot q_a(S)=\E[\varphi_a^{q}(\Oo)S(\Oo)]$,
\begin{equation}
\label{eq:eif-qa}
\varphi_a^{q}(\Oo)
=-\frac{F_{\Y\mid\X,\A=a}(q_a\mid\X)-\tau}{f_{\Y(a)}(q_a)}
-\frac{\1\{\A=a\}}{\e(a\mid\X)}\cdot\frac{\1\{\Y\le q_a\}-F_{\Y\mid\X,\A=a}(q_a\mid\X)}{f_{\Y(a)}(q_a)}.
\end{equation}
This matches the template with $g_a(\X)=q_a-\tfrac{F_{\Y\mid\X,\A=a}(q_a\mid\X)-\tau}{f_{\Y(a)}(q_a)}$ (so $\E[g_a(\X)]=q_a$) and arm-specific residual
\[
v_a^{q}(\X,\Y)
=-\frac{\1\{\Y\le q_a\}-F_{\Y\mid\X,\A=a}(q_a\mid\X)}{f_{\Y(a)}(q_a)},
\qquad
\E\!\left[v_a^{q}\mid\X,\A=a\right]=0.
\]

\noindent\emph{EIF for the QTE and efficiency bound.}
By linearity of pathwise differentiation, $\theta=q_1-q_0$ has EIF
\begin{equation}
\label{eq:eif-qte}
\boxed{\;
\varphi_\e^{\mathrm{QTE}}(\Oo)
=\varphi_1^{q}(\Oo)-\varphi_0^{q}(\Oo).\;}
\end{equation}
The IPW pieces in arms $0$ and $1$ are mutually orthogonal ($\1\{\A=1\}\1\{\A=0\}\equiv 0$), and the cross terms between any $\X$-piece and any IPW piece vanish by $\E[v_a^{q}\mid\X,\A=a]=0$. Writing $G_a(\X):=F_{\Y\mid\X,\A=a}(q_a\mid\X)$ for brevity,
\begin{equation}
\label{eq:V-qte}
V_{\mathrm{QTE}}(\e)
=\Var\!\left(\frac{G_1(\X)}{f_{\Y(1)}(q_1)}-\frac{G_0(\X)}{f_{\Y(0)}(q_0)}\right)
+\sum_{a\in\{0,1\}}\E\!\left[\frac{G_a(\X)\bigl(1-G_a(\X)\bigr)}{\e(a\mid\X)\,f_{\Y(a)}(q_a)^2}\right],
\end{equation}
where we used $\Var(\1\{\Y\le q_a\}\mid\X,\A=a)=G_a(\X)(1-G_a(\X))$.

\subsection{Policy Value}
\label{app:eif-policy}

\noindent\emph{Estimand.}
Fix a policy $\pi:\cX\to\Delta(\cA)$ that is known and does not depend on the unknown distribution $P$. The policy value is the linear functional
\[
\theta(P)=\E_P\!\left[\sum_{a\in\cA}\pi_a(\X)\mu_a(\X)\right].
\]
Because $\pi$ is fixed, $\theta$ is linear in the conditional-mean nuisance $\mu=(\mu_a)_{a\in\cA}$.

\noindent\emph{Pathwise derivative.}
Writing $\theta(P_t)=\int\sum_a\pi_a(x)\mu_{a,t}(x)\,p_{X,t}(x)\,dx$ and differentiating at $t=0$,
\[
\dot\theta_{P_0}(S)
=\E\!\left[\Bigl(\sum_a\pi_a(\X)\mu_a(\X)-\theta\Bigr)S_X(\X)\right]
+\sum_a\int\pi_a(x)\dot\mu_a(x)\,p_X(x)\,dx,
\]
where the recentering by $\theta$ uses $\E[S_X]=0$. Applying the same identity $\dot\mu_a(x)=\E[(\Y-\mu_a)S_a\mid\X=x,\A=a]$ as in B.2 and converting to an observed-data expectation,
\[
\sum_a\int\pi_a(x)\dot\mu_a(x)\,p_X(x)\,dx
=\sum_a\E\!\left[\frac{\pi_a(\X)\1\{\A=a\}}{\e(a\mid\X)}\bigl(\Y-\mu_a(\X)\bigr)S_a(\X,\Y)\right].
\]

\noindent\emph{EIF.}
Matching $\dot\theta_{P_0}(S)=\E[\varphi_\e(\Oo)S(\Oo)]$ termwise,
\begin{equation}
\label{eq:eif-policy}
\boxed{\;
\varphi_\e(\Oo)
=\Bigl(\sum_a\pi_a(\X)\mu_a(\X)-\theta\Bigr)
+\sum_a\frac{\pi_a(\X)\1\{\A=a\}}{\e(a\mid\X)}\bigl(\Y-\mu_a(\X)\bigr).\;}
\end{equation}
This is the template \eqref{eq:eif-template-app-b} with $g(\X)=\sum_a\pi_a(\X)\mu_a(\X)$ and $v_a(\X,\Y)=\pi_a(\X)(\Y-\mu_a(\X))$, which inherits the orthogonality $\E[v_a\mid\X,\A=a]=0$ from $\E[\Y-\mu_a\mid\X,\A=a]=0$.

\noindent\emph{Efficiency bound.}
The same orthogonality argument as in B.2 yields
\begin{equation}
\label{eq:V-policy}
V_{\mathrm{policy}}(\e)
=\Var\!\left(\sum_a\pi_a(\X)\mu_a(\X)\right)
+\sum_a\E\!\left[\frac{\pi_a(\X)^2\sigma_a^2(\X)}{\e(a\mid\X)}\right].
\end{equation}

\noindent\emph{Remark (linear-functional case).}
Because $\theta$ is linear in $\mu$, Proposition~\ref{prop:linear-unbiased} applies, so the cross-term in the excess-MSE decomposition vanishes \emph{exactly} under cross-fitting; this is the linear-functional sharpness regime analyzed in Section~\ref{sec:regression-adjustment}.

\section{Covariate-Balancing Moment Functions for the Canonical Examples}
\label{app:cb-examples}
\label{app:examples}

This appendix makes explicit, for each of the three canonical estimands considered in the main text, the three objects that drive the covariate-balancing analysis of Section~\ref{sec:covariate-balancing}: (i)~the moment function $m(\X,\A,\Y,\theta)$, (ii)~the induced balance function $\omega_a(\X)=-M^{-1}\,\E[m(\X,a,\Y,\theta_0)\mid\X]$, and (iii)~the second-moment proxy $m_a(\X)=\E[\,\|m(\X,a,\Y,\theta_0)\|^2\mid\X]$ that appears in the generalized Neyman allocation. Together these determine both the feasibility check of Theorem~\ref{thm:cb-efficiency} (Lipschitz balance functions) and the closed-form design update inside Algorithm~\ref{alg:batched-cb}.

\subsection{The moment-and-balance-function framework}
\label{app:cb-framework}

We work throughout with scalar or low-dimensional estimands $\theta\in\R^q$ identified by a moment condition
\[
\E[m(\X,\A,\Y,\theta_0)]=0, \qquad M:=\frac{\partial}{\partial\theta}\,\E[m(\X,\A,\Y,\theta)]\Big|_{\theta=\theta_0}\text{ invertible}.
\]
The balance function associated with arm $a$ is defined by
\begin{equation}
\label{eq:omega-def-appC}
\omega_a(\X) := -M^{-1}\,\E[m(\X,a,\Y,\theta_0)\mid\X],
\end{equation}
and the second-moment proxy is $m_a(\X):=\E[\,\|m(\X,a,\Y,\theta_0)\|_{M^{-\top} M^{-1}}^2\mid\X]$, which reduces to the conditional variance $\sigma_a^2(\X)$ in the scalar case. The generalized Neyman allocation solves
\[
\e_a^\star(x) \;\propto\; \sqrt{m_a(x)},
\qquad
\sum_a \e_a^\star(x)=1,
\]
up to the propensity floor. For each example below, we give the moment function, compute $\omega_a$ explicitly from \eqref{eq:omega-def-appC}, and record $m_a$.

\subsection{Average Treatment Effect (ATE)}
\label{app:cb-ate}

Take $m(\X,\A,\Y,\theta)=\Y\cdot\dfrac{\1\{\A=1\}-\1\{\A=0\}}{\e_{\A}(\X)}-\theta$. Then $M=-1$ and
\[
\omega_a(\X)=(-1)^{1-a}\,\mu_a(\X), \qquad a\in\{0,1\},
\]
which is Lipschitz whenever $\mu_a$ is. The second-moment proxy is $m_a(\X)=\sigma_a^2(\X)$, giving the familiar Neyman allocation $\e_a^\star(x)\propto\sigma_a(x)$.

\subsection{Quantile Treatment Effect (QTE)}
\label{app:cb-qte}

For $\theta=q_1(\tau)-q_0(\tau)$, the moment equation uses $m_a(\Y,\theta_a)=\1\{\Y\le\theta_a\}-\tau$ on each arm, and the balance function is
\[
\omega_a(\X)=\frac{F_{\Y(a)\mid\X}(q_a(\tau))-\tau}{f_a(q_a(\tau))},
\]
which is Lipschitz under smoothness of the conditional CDF. Writing $G_a(\X):=F_{\Y(a)\mid\X}(q_a(\tau)\mid\X)$ for the conditional CDF evaluated at the marginal $\tau$-quantile, the second-moment proxy is
\[
m_a(\X)=\frac{G_a(\X)\bigl(1-G_a(\X)\bigr)}{f_a(q_a(\tau))^2},
\]
which matches the conditional variance of the indicator residual $v_a^q(\X,\Y)$ in the EIF derivation \eqref{eq:eif-qa} (Appendix~\ref{app:eif-qte}).

\subsection{Policy Value}
\label{app:cb-policy}

For a fixed policy $\pi:\cX\to\Delta(\cA)$, $\theta=\E[\sum_a\pi_a(\X)\mu_a(\X)]$. The nuisance is $\eta=(\mu_a)_{a\in\cA}$ with pseudo-metric $d(\eta,\eta_0)^2=\sum_a \E[(\hat\mu_a(\X)-\mu_a(\X))^2]$, and
\[
\omega_a(\X)=\pi_a(\X)\,\mu_a(\X),
\qquad
m_a(\X)=\pi_a(\X)^2\,\sigma_a^2(\X),
\]
which is Lipschitz when both $\pi_a$ and $\mu_a$ are. Exact global unbiasedness (Proposition~\ref{prop:linear-unbiased}) applies.

\medskip
\noindent In every case, Theorem~\ref{thm:cb-efficiency} applies with the corresponding $(m,\omega_a,m_a)$ triple, so the batched adaptive balancing design achieves the $V^\star/n$ efficiency benchmark.

\section{Regularity Conditions Used in Sections~4--5}
\label{app:regularity}

This appendix collects the formal regularity conditions invoked in the regression-adjustment analysis of Section~\ref{sec:regression-adjustment} and the covariate-balancing analysis of Section~\ref{sec:covariate-balancing}. The body text records only the structural orthogonality and rate hypotheses; the auxiliary boundedness, positivity, and second-order remainder conditions are gathered here.

\subsection{General EIF regularity (Section~4.3)}
\label{app:reg-general}

\begin{nassumption}[General EIF regularity]
\label{assump:general-reg}
We assume the following regularity conditions:
\begin{enumerate}[label=(\roman*), leftmargin=*]
\item \textbf{Positivity and boundedness:}
The assignment propensities satisfy $\e^{(b)}_a(x)\ge \varepsilon>0$, and the conditional second moments $m_a(x)$ are uniformly bounded almost surely.

\item \textbf{Neyman orthogonality:}
The pathwise derivative of the expected EIF score vanishes at the true nuisance parameter $\eta_0$.

\item \textbf{Second-order remainder:}
The bias and variance of the EIF score vary at most quadratically with nuisance error:
\begin{align}
\bigl|\E[\psi(\eta)-\psi(\eta_0)]\bigr| &\lesssim d(\eta,\eta_0)^2, \label{eq:remainder-mean}\\
\E\!\left[(\psi(\eta)-\psi(\eta_0))^2\right] &\lesssim d(\eta,\eta_0)^2. \label{eq:l2-lip}
\end{align}
\end{enumerate}
\end{nassumption}

\section{Proofs for Sections 2--3 (EIF Structure and Lower Bounds)}
\label{app:proofs-sec2}

\subsection{Proof of the EIF structure under known assignment}
\label{app:proof-eif-structure}

\emph{Strategy.} The argument has three steps. First, we exploit the known assignment mechanism to decompose the tangent space at $P_0$ into an orthogonal sum of a covariate component and arm-specific components. Second, a Riesz representation argument on each orthogonal component isolates a unique representer. Third, we show that the arm-specific representer must carry the characteristic $1/\e(a\mid\X)$ inverse-weighting factor, which combined with the uniqueness of the covariate piece yields the asserted EIF form \eqref{eq:eif-structure}.

\noindent\textbf{Proof of Lemma~\ref{lem:eif-structure}.}
Let $L_2^0(P_0)$ be the Hilbert space of square-integrable, mean-zero functions under $P_0$ with inner
product $\langle f,g\rangle := \E_0[f(O)g(O)]$.
Because $\e$ is fixed, any regular one-dimensional submodel $t\mapsto P_t$ through $P_0$ can perturb only
$p_X$ and each $p_a(\cdot\mid x)$ (but not $\e$). Therefore the score
$S(O):=\left.\frac{d}{dt}\log p_t(O)\right|_{t=0}$ must admit the decomposition
\begin{equation}
\label{eq:score-decomp}
S(O)=S_X(X)+\sum_{a\in\cA}\1\{A=a\}\,S_a(X,Y),
\qquad
\E_0[S_X(X)]=0,\quad
\E_0[S_a(X,Y)\mid X,A=a]=0.
\end{equation}
Define the linear subspaces
\[
\cT_X := \{h(X): \E_0[h(X)]=0\},\qquad
\cT_a := \{\1\{A=a\}s_a(X,Y): \E_0[s_a(X,Y)\mid X,A=a]=0\}.
\]
Then the (closed) tangent space at $P_0$ is the orthogonal direct sum
\[
\cT(\e)=\cT_X\ \oplus\ \bigoplus_{a\in\cA}\cT_a.
\]
Orthogonality holds because for any $h\in\cT_X$ and $\1\{A=a\}s_a\in\cT_a$,
\[
\E_0\!\big[h(X)\1\{A=a\}s_a(X,Y)\big]
=\E_0\!\Big[h(X)\E_0[\1\{A=a\}s_a(X,Y)\mid X]\Big]
=\E_0\!\Big[h(X)\e(a\mid X)\underbrace{\E_0[s_a(X,Y)\mid X,A=a]}_{0}\Big]=0,
\]
and for $a\neq b$, $\1\{A=a\}\1\{A=b\}\equiv 0$ implies $\cT_a\perp \cT_b$.

\noindent\textbf{Step 2: Riesz representation on each orthogonal component.}
Pathwise differentiability of $\theta$ implies that the derivative map
$S\mapsto \dot\theta_{P_0}(S)\in\R$ is a continuous linear functional on the Hilbert space $\cT(\e)$.
Hence, by the Riesz representation theorem, there exists a unique
$\varphi_{\e}\in\overline{\cT(\e)}\subset L_2^0(P_0)$ such that
\begin{equation}
\label{eq:riesz}
\dot\theta_{P_0}(S)=\E_0\!\big[\varphi_{\e}(O)\,S(O)\big],\qquad \forall S\in\cT(\e).
\end{equation}
Because $\cT(\e)$ is an orthogonal direct sum, any $\varphi_{\e}\in\overline{\cT(\e)}$ decomposes uniquely as
\begin{equation}
\label{eq:phi-orth-decomp}
\varphi_{\e}(O)=\varphi_X(X)+\sum_{a\in\cA}\varphi_a(O),
\qquad
\varphi_X\in \cT_X,\ \varphi_a\in \cT_a.
\end{equation}
The $\cT_X$-component has the form $\varphi_X(X)=g(X)-\theta(P_0)$ for some measurable $g$ with
$\E_0[g(X)]=\theta(P_0)$.

\noindent\textbf{Step 3: Identification of the arm-specific representer and the $1/\e$ factor.}
Fix an arm $a\in\cA$ and consider scores of the form $S(O)=\1\{A=a\}s_a(X,Y)\in\cT_a$.
By \eqref{eq:riesz} and \eqref{eq:phi-orth-decomp}, the restriction of $\dot\theta_{P_0}$ to $\cT_a$
is represented by $\varphi_a$ via
\begin{equation}
\label{eq:arm-repr}
\dot\theta_{P_0}(\1\{A=a\}s_a)=\E_0\!\big[\varphi_a(O)\,\1\{A=a\}s_a(X,Y)\big]
\qquad \forall s_a:\ \E_0[s_a\mid X,A=a]=0.
\end{equation}
We now show that any such representer must be of the form
$\varphi_a(O)=\frac{\1\{A=a\}}{\e(a\mid X)}v_a(X,Y)$ with $\E_0[v_a\mid X,A=a]=0$.

Indeed, for any admissible $s_a$,
\begin{align}
\E_0\!\left[
\left(\frac{\1\{A=a\}}{\e(a\mid X)}v_a(X,Y)\right)
\left(\1\{A=a\}s_a(X,Y)\right)
\right]
&=\E_0\!\left[\frac{\1\{A=a\}}{\e(a\mid X)}\,v_a(X,Y) s_a(X,Y)\right] \nonumber\\
&=\E_0\!\left[
\E_0\!\left[\frac{\1\{A=a\}}{\e(a\mid X)}\,v_a(X,Y) s_a(X,Y)\,\Big|\,X\right]
\right] \nonumber\\
&=\E_0\!\left[
\frac{1}{\e(a\mid X)}\,\PP(A=a\mid X)\,
\E_0\!\left[v_a(X,Y) s_a(X,Y)\mid X,A=a\right]
\right] \nonumber\\
&=\E_0\!\Big[\E_0\big[v_a(X,Y) s_a(X,Y)\mid X,A=a\big]\Big].
\label{eq:key-identity}
\end{align}

Finally, uniqueness of $v_a$ follows from \eqref{eq:key-identity}: if
$\frac{\1\{A=a\}}{\e(a\mid X)}v_a$ and $\frac{\1\{A=a\}}{\e(a\mid X)}\tilde v_a$ induce the same functional on
$\cT_a$, then for all admissible $s_a$,
\[
0=\E_0\!\Big[\E_0\big[(v_a-\tilde v_a) s_a\mid X,A=a\big]\Big].
\]
Choosing $s_a=v_a-\tilde v_a$ yields
\[
0=\E_0\!\Big[\E_0\big[(v_a-\tilde v_a)^2\mid X,A=a\big]\Big],
\]
hence $v_a=\tilde v_a$ a.s.\ on $\{A=a\}$. Combining the unique $\cT_X$ and arm-specific components yields
\eqref{eq:eif-structure}. \QEDA

\subsection{Proofs for the average-propensity lower bound}
\label{app:proof-lb-avg-prop}

\emph{Strategy.} We construct a least-favorable one-dimensional submodel whose score coincides (up to a normalization constant) with the scalar EIF under the average-propensity model. Because the design rule is not perturbed along this submodel, the trajectory log-likelihood ratio decomposes into a martingale sum of observation-level scores. An average-propensity identity then reduces the variance of this sum to the observation-level EIF variance, so Cauchy--Schwarz produces the desired lower bound.

\begin{lemma}[Tilting-score identity]
\label{lem:tilting-score}
Let $\phi=\IF_{\ebar}$ be the scalar EIF under the average-propensity model with variance $V=\E[\phi(\Oo)^2]$. There exists a regular one-dimensional submodel $\{P_\varepsilon\}_{\varepsilon}$ through $P_0$, consisting of exponential tiltings of the covariate marginal and each outcome conditional, whose score at $\varepsilon=0$ equals $s(\Oo)=\phi(\Oo)/V$.
\end{lemma}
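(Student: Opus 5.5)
We construct the submodel explicitly from the decomposition in Lemma~\ref{lem:eif-structure} applied at propensity $\ebar$:
\[
\phi(\Oo)=\bigl(g(\X)-\theta(P_0)\bigr)+\sum_{a\in\cA}\frac{\1\{\A=a\}}{\ebar_a(\X)}v_a(\X,\Y),
\qquad \E[v_a\mid \X,\A=a]=0 .
\]
Set $h_X(x):=(g(x)-\theta(P_0))/V$ and $h_a(x,y):=v_a(x,y)/(\ebar_a(x)V)$. By construction $\E[h_X(\X)]=0$ and $\E[h_a(\X,\Y)\mid \X,\A=a]=0$, where the latter conditional law is $p_a(\cdot\mid x)$ and so does not depend on the design (outcome visibility). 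Since $\ebar_a\ge\varepsilon$ by positivity, $\phi\in L_2$ implies $h_a\in L_2(p_X p_a)$.

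The perturbation tilts only the unknown components and leaves every assignment kernel $\e_t$ fixed:
\[
p_{X,\varepsilon}(x)=\frac{p_X(x)\exp\{\varepsilon h_X(x)\}}{C_X(\varepsilon)},\qquad
p_{a,\varepsilon}(y\mid x)=\frac{p_a(y\mid x)\exp\{\varepsilon h_a(x,y)\}}{C_a(\varepsilon,x)},
\]
with normalizers $C_X(\varepsilon)=\E[e^{\varepsilon h_X}]$ and $C_a(\varepsilon,x)=\E[e^{\varepsilon h_a}\mid \X=x,\A=a]$. Differentiating the log density at $\varepsilon=0$ gives a covariate score $h_X-\partial_\varepsilon\log C_X(0)=h_X$, because $\partial_\varepsilon\log C_X(0)=\E[h_X]=0$. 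The arm score is $h_a-\E[h_a\mid x,a]=h_a$ for the same reason. Summing over the factorization $p(o)=p_X\,\e\,p_a$ in which $\e$ contributes nothing, the observation-level score is $h_X(\X)+\sum_a\1\{\A=a\}h_a(\X,\Y)=\phi(\Oo)/V$.

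The main obstacle is regularity. Exponential tilts need $\E[e^{\varepsilon h}]<\infty$, and unbounded $h$ may violate this. We would first assume $g$ and $v_a$ bounded, in which case $C_X$ and $C_a$ are smooth and bounded away from zero for small $|\varepsilon|$ and differentiability in quadratic mean is standard. For general $L_2$ scores we would use the bounded substitute $2(1+e^{-2\varepsilon h})^{-1}$ in place of $e^{\varepsilon h}$, which has the same derivative at $0$ and keeps densities positive and normalizable. Alternatively, we would truncate $h$ at level $M$, recenter conditionally, and let $M\to\infty$, using $L_2$ continuity of the score. Either route yields a quadratic-mean-differentiable path with score exactly $\phi/V$.

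We then verify two consequences needed later. First, the path stays inside the fixed-assignment model $\cM(\ebar)$ and also inside the design's latent model, since only $P_0$ on $W$ is tilted. Second, by pathwise differentiability, $\frac{d}{d\varepsilon}\theta(P_\varepsilon)|_0=\E[\phi\cdot\phi/V]=1$, which confirms that this is the least-favorable direction.
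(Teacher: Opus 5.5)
Your proof is correct and follows the paper's construction. The paper tilts $P_{\X}$ by $\exp\{\varepsilon(g-\theta(P_0))/V\}$ and each $P_{\Y\mid\X,\A=a}$ by $\exp\{\varepsilon v_a/(V\ebar_a)\}$, holds the assignment kernel fixed, and simply asserts that the score is $\phi/V$ ``under dominated/QMD regularity''; your normalizer-centering computation and the bounded substitute $2(1+e^{-2\varepsilon h})^{-1}$ (van der Vaart's standard device) fill in that regularity step, with two caveats: the bounded substitute is not literally an exponential tilt, and the truncate-and-recenter alternative only gives paths whose scores approximate $\phi/V$ in $L_2$, not a single path with score exactly $\phi/V$.
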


\noindent\textbf{Proof of Lemma~\ref{lem:tilting-score}.}
Write the scalar EIF in the decomposition induced by \eqref{eq:eif-structure}:
\[
\phi(\Oo)=\bigl(g(\X)-\theta(P_0)\bigr)+\sum_{a=1}^K\frac{\1\{\A=a\}}{\ebar_a(\X)}\,v_{a}(\X,\Y),\qquad
\E[v_{a}(\X,\Y)\mid \X,\A=a]=0,
\]
for some $g$ with $\E[g(\X)]=\theta(P_0)$.
Define a one-dimensional submodel $P_{\varepsilon}$ by exponential tilting of the marginal and conditionals:
\begin{align}
\frac{dP_{\X,\varepsilon}}{dP_{\X,0}}(x)
&\propto \exp\!\left\{\varepsilon \frac{g(x)-\theta(P_0)}{V}\right\}, \\
\frac{dP_{\Y\mid \X,\A=a,\varepsilon}}{dP_{\Y\mid \X,\A=a,0}}(y\mid x)
&\propto \exp\!\left\{\varepsilon \frac{v_{a}(x,y)}{V\,\ebar_a(x)}\right\},
\qquad a\in\cA.
\end{align}
Under dominated/QMD regularity, the score at $\varepsilon=0$ is exactly
$s(\Oo)=\phi(\Oo)/V$.

\noindent\textbf{Proof of Theorem~\ref{thm:avg-prop-lb}.}
Let $P_{\varepsilon}^{\ALG}$ denote the trajectory law of $(\Oo_1,\dots,\Oo_n)$ under
$P_{\varepsilon}$ and design $\ALG$. Define
\[
Z(\cF_n):=\left.\frac{d}{d\varepsilon}\log\frac{dP_{\varepsilon}^{\ALG}}{dP_{0}^{\ALG}}(\cF_n)\right|_{\varepsilon=0}.
\]
Because the design rule is not perturbed by $\varepsilon$,
\[
Z(\cF_n)=\sum_{t=1}^n s(\Oo_t).
\]
Also $\{s(\Oo_t)\}_{t=1}^n$ is a martingale difference sequence under $\{\cF_t\}$:
\[
\E[s(\Oo_t)\mid \cF_{t-1}]=0.
\]
For any integrable test function $v_a(\X,\Y)$, we use the average-propensity identity
\begin{equation}
\frac{1}{n}\sum_{t=1}^n \E\!\left[\frac{\1\{\A_t=a\}}{\ebar_a(\X_t)^2}\,v_a(\X_t,\Y_t)\right]
=
\E\!\left[\frac{1}{\ebar_a(\X)}\,\E\!\left[v_a(\X,\Y)\mid \X,\A=a\right]\right],
\label{eq:avg-identity}
\end{equation}
which follows from conditioning on $(\cF_{t-1},\X_t)$, then averaging in $t$ and applying \eqref{eq:avg-prop}.

Let $T=T(\cF_n)$ be locally unbiased along this submodel:
\[
\left.\frac{d}{d\varepsilon}\E_{\varepsilon}^{\ALG}[T]\right|_{\varepsilon=0}
=
\left.\frac{d}{d\varepsilon}\theta(P_{\varepsilon})\right|_{\varepsilon=0}.
\]
By score identity and EIF normalization,
\[
\E\!\left[ Z(\cF_n)\,T\right]=1.
\]
Hence by Cauchy--Schwarz,
\[
\Var(T)\ge \frac{1}{\E[Z(\cF_n)^2]}.
\]
Using the martingale-difference decomposition,
\[
\E[Z(\cF_n)^2]
=\sum_{t=1}^n \E[s(\Oo_t)^2]
=n\,\E[s(\Oo)^2]
=\frac{n}{\E[\IF_{\ebar}(\Oo)^2]},
\]
where the last step uses $s=\phi/V$ and identity \eqref{eq:avg-identity}. Therefore
\[
\Var(T)\ \ge\ \frac{1}{n}\E\!\left[\IF_{\ebar}(\Oo)^2\right].
\]
In particular, \eqref{eq:avg-identity} is used with test functions built from the EIF components
(e.g., $v_a(\X,\Y)^2$) when evaluating $\E[\IF_{\ebar}(\Oo)^2]$ under the structural form
\eqref{eq:eif-structure}.

\subsection{Resolving the two-stage local randomization conjecture of Cytrynbaum (2021)}
\label{app:two-stage-aps}

\emph{Strategy.} \citet{cytrynbaum2021optimal} conjectured that in two-stage locally randomized experimental designs, the semiparametric efficiency benchmark remains characterized by a fixed i.i.d.\ benchmark evaluated at the \emph{induced average effective propensity scores}, with an explicit additive decomposition in the stagewise-factorization case. The main-text result of Section~\ref{sec:lower-bound} resolves a scalar version of this conjecture via the average-propensity argument in Appendix~\ref{app:proof-lb-avg-prop}. Here we extend that argument to the two-stage setting in full generality, proving (i)~the base-model analogue of Theorem~\ref{thm:avg-prop-lb} in which both stages of the experiment use a common design summary $\psi$, and (ii)~the factorization extension in which sampling and assignment use different stage-wise summaries $(\psi_1,\psi_2)$ and the benchmark factorizes into independent stage-1 sampling and stage-2 assignment propensities. The proof strategy mirrors Appendix~\ref{app:proof-lb-avg-prop}: compute the EIF in the corresponding fixed benchmark model, then bound any locally unbiased estimator's variance from below by the benchmark's inverse Fisher information evaluated at the induced average effective propensity scores.

Throughout, the target parameter is the population average treatment effect
\[
\theta(P_0):=\E_{P_0}[\Y(1)-\Y(0)].
\]

\subsubsection{Common setup and notation}

Let
\[
W_t=(\X_t,\Y_t(0),\Y_t(1)),\qquad t=1,\dots,n,
\]
be latent full data with $W_1,\dots,W_n\stackrel{i.i.d.}{\sim} P_0$. Write
\[
\mu_a(x):=\E[\Y(a)\mid\X=x],\qquad
\sigma_a^2(x):=\Var(\Y(a)\mid\X=x),\qquad a\in\{0,1\},
\]
and $c(x):=\mu_1(x)-\mu_0(x)$.

For each unit, the final observed state is $A_t\in\{\varnothing,0,1\}$, where $A_t=\varnothing$ if unit $t$ is not sampled, and $A_t\in\{0,1\}$ indicates sampling with the corresponding assignment. The observed outcome is
\[
Y_t^{\obs}:=\mathbf 1\{A_t\neq\varnothing\}\bigl(\mathbf 1\{A_t=1\}\Y_t(1)+\mathbf 1\{A_t=0\}\Y_t(0)\bigr).
\]

\paragraph{Outcome visibility.}
For the lower-bound argument, the only substantive design regularity needed is that, conditional on $(\X_t,A_t)$, the law of the observed outcome coincides with the corresponding potential-outcome law and does not depend on the design history:
\[
\mathcal L(Y_t^{\obs}\mid \X_t,A_t=a,\cF_{t-1})
=
\mathcal L(\Y_t(a)\mid \X_t),\qquad a\in\{0,1\}.
\]

\subsubsection{A serializable filtration for local randomization}

The local-randomization procedures in the survey-experiment literature are defined offline: one first forms groups using visible design variables and exogenous randomness, and then draws exactly $a$ out of $k$ units (or $a'$ out of $k'$ units) inside each group. For the efficiency argument, it is convenient to work with a \emph{serialized representation} of this offline procedure.

\begin{assumption}[Serializable non-anticipating representation]
\label{ass:serializable}
There exists an ordering of units $t=1,\dots,n$ and a filtration $(\cF_t)_{t=0}^n$ such that:
\begin{enumerate}
\item for each $t$, $\cF_{t-1}$ contains all design-visible information revealed strictly before the final state of unit $t$ is resolved;
\item conditional on $(\X_t,\cF_{t-1})$, the final observed state of unit $t$ has a measurable kernel
\[
\rho_t(a\mid \X_t,\cF_{t-1}):=\PP(A_t=a\mid \X_t,\cF_{t-1}),\qquad a\in\{\varnothing,0,1\};
\]
\item the kernel $\rho_t$ depends only on the stagewise visible covariates and the past design history, never on unrevealed outcomes or on the potential outcomes of unit $t$;
\item conditional on $(\X_t,A_t)$, the observed outcome law satisfies the outcome-visibility condition above.
\end{enumerate}
\end{assumption}

\paragraph{Why this is natural for local randomization.}
After the offline groups are formed from the visible design variables and exogenous grouping randomness, one may fix an arbitrary order of groups and an arbitrary order of units within each group. Conditional on the realized grouping, exact $a$-out-of-$k$ randomization inside a group has the standard without-replacement sequential representation
\[
\PP(U_{g,r}=1\mid \mathcal H_{g,r-1})
=
\frac{a-N_{g,r-1}}{|g|-r+1},
\]
where $N_{g,r-1}$ is the number of previously selected units in group $g$. The same applies to the assignment stage. Any such serialization yields Assumption~\ref{ass:serializable}. This is the information order used below.

\subsubsection{Average effective propensity scores}

For $a\in\{0,1\}$ define the \emph{average effective observed-arm masses}
\begin{equation}
\bar\pi_a(x)
:=
\frac1n \sum_{t=1}^n \E\bigl[\rho_t(a\mid x,\cF_{t-1})\bigr].
\label{eq:avg-effective}
\end{equation}
These are the two-stage analogues of the induced average propensity scores of Appendix~\ref{app:proof-lb-avg-prop} and are the primitive design objects for the lower bound. When convenient, define also
\begin{equation}
\bar q(x):=\bar\pi_1(x)+\bar\pi_0(x),\qquad
\bar p(x):=\frac{\bar\pi_1(x)}{\bar\pi_1(x)+\bar\pi_0(x)}
\quad\text{on }\{\bar q(x)>0\}.
\label{eq:q-p-from-pi}
\end{equation}
In the extension, the factorization case is precisely the case in which $\bar q$ depends only on the stage-1 information and $\bar p$ depends only on the stage-2 information.

\begin{lemma}[Two-stage APS identities]
\label{lem:aps-identity}
Suppose Assumption~\ref{ass:serializable} holds and let
\[
m_a(x):=\E[h_a(x,\Y(a))\mid\X=x],\qquad a\in\{0,1\},
\]
for any integrable measurable functions $h_1,h_0$. Then
\begin{equation}
\frac1n\sum_{t=1}^n
\E\!\left[
\frac{\mathbf 1\{A_t=1\}}{\bar\pi_1(\X_t)^2}
h_1(\X_t,Y_t^{\obs})
\right]
=
\E\!\left[
\frac{m_1(\X)}{\bar\pi_1(\X)}
\right],
\label{eq:aps-id-1}
\end{equation}
and the analogous identity holds for $a=0$. Moreover, with $\bar q=\bar\pi_1+\bar\pi_0$,
\begin{equation}
\frac1n\sum_{t=1}^n
\E\!\left[
\frac{\mathbf 1\{A_t\neq\varnothing\}}{\bar q(\X_t)^2}
g(\X_t)
\right]
=
\E\!\left[
\frac{g(\X)}{\bar q(\X)}
\right]
\label{eq:aps-id-q}
\end{equation}
for every integrable measurable $g$.
\end{lemma}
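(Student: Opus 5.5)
The plan is to condition each summand on $(\X_t,\cF_{t-1})$, use the serialized kernel $\rho_t$ together with outcome visibility to integrate out the outcome, and then average over $t$ using the definition \eqref{eq:avg-effective} of $\bar\pi_a$. This is the same route that establishes the one-stage identity \eqref{eq:avg-identity}. The main subtlety is the reversal of the order of expectations, which needs independence of $\X_t$ from the past.

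\textbf{Step 1 (conditioning).} Fix $t$ and $a=1$. By the tower property,
\[
\E\!\left[\frac{\mathbf 1\{A_t=1\}}{\bar\pi_1(\X_t)^2}h_1(\X_t,Y_t^{\obs})\right]
=\E\!\left[\frac{1}{\bar\pi_1(\X_t)^2}\,\E\bigl[\mathbf 1\{A_t=1\}h_1(\X_t,Y_t^{\obs})\mid \X_t,\cF_{t-1}\bigr]\right].
\]
Conditioning further on $A_t$ and invoking outcome visibility (Assumption~\ref{ass:serializable}(iv)), the law of $Y_t^{\obs}$ given $(\X_t,A_t=1,\cF_{t-1})$ is the law of $\Y_t(1)$ given $\X_t$. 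The inner expectation therefore equals $\rho_t(1\mid\X_t,\cF_{t-1})\,m_1(\X_t)$.

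\textbf{Step 2 (integrating the past).} We now need
\[
\E\!\left[\frac{m_1(\X_t)}{\bar\pi_1(\X_t)^2}\rho_t(1\mid\X_t,\cF_{t-1})\right]
=\int \frac{m_1(x)}{\bar\pi_1(x)^2}\,\E[\rho_t(1\mid x,\cF_{t-1})]\,dP_\X(x).
\]
This is the delicate point. Under the serialized online ordering, $\X_t$ is independent of $\cF_{t-1}$ and has law $P_\X$, because the latent $W_t$ are i.i.d. and $\cF_{t-1}$ is built from other units and exogenous randomness, so Fubini gives the display.

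In the offline or local-randomization serialization, $\cF_{t-1}$ may already contain $\X_t$. There we must read the definition \eqref{eq:avg-effective} in the same sense used for $\bar e$ in Example~3. That is, $\E[\rho_t(a\mid x,\cF_{t-1})]$ is interpreted as the average kernel mass at covariate value $x$, namely the $P_\X$-density-normalized quantity
\[
\E[\rho_t(a\mid\X_t,\cF_{t-1})\mid\X_t=x].
\]
With this reading the display holds by conditioning on $\X_t=x$, using exchangeability of the i.i.d. units (or a random ordering) to ensure the marginal of $\X_t$ is $P_\X$.

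\textbf{Step 3 (averaging in $t$).} Summing Step 2 over $t$ and dividing by $n$ replaces $\frac1n\sum_t\E[\rho_t(1\mid x,\cF_{t-1})]$ by $\bar\pi_1(x)$. One factor of $\bar\pi_1$ then cancels, which yields \eqref{eq:aps-id-1}. The case $a=0$ is identical.

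For \eqref{eq:aps-id-q}, the integrand depends only on $\X_t$, so we condition on $(\X_t,\cF_{t-1})$ to get $\rho_t(1\mid\cdot)+\rho_t(0\mid\cdot)=\PP(A_t\neq\varnothing\mid\X_t,\cF_{t-1})$. Averaging exactly as above produces $\bar q(x)$ in the numerator and cancels one power of $\bar q$. Integrability and the restriction to $\{\bar q>0\}$, where the ratios are defined, follow from the integrability hypotheses and positivity on the relevant set.
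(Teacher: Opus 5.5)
Your proposal is correct and follows the paper's proof. You condition on $(\X_t,\cF_{t-1})$, use outcome visibility to reduce the summand to $\rho_t(1\mid\X_t,\cF_{t-1})\,m_1(\X_t)/\bar\pi_1(\X_t)^2$, and average over $t$ via \eqref{eq:avg-effective}; your Step~2 just makes explicit the Fubini step (independence of $\X_t$ from $\cF_{t-1}$) that the paper uses silently.

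One caution on your offline remark: exchangeability of the units does not give $\X_t\sim P_\X$ when the serialization order itself depends on the covariates, as in Example~3 or in group-by-group orderings. There you should either fix a covariate-independent ordering, which the arbitrary-serialization convention permits, or define $\bar\pi_a$ as the $P_\X$-density of the averaged exposure measure $B\mapsto\frac1n\sum_t\E[\rho_t(a\mid\X_t,\cF_{t-1})\mathbf 1\{\X_t\in B\}]$, for which the identity holds under any ordering.
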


\noindent\textbf{Proof of Lemma~\ref{lem:aps-identity}.}
We prove \eqref{eq:aps-id-1}; the $a=0$ and $\bar q$ identities are identical. By iterated expectations and the definition of $\rho_t$,
\[
\E\!\left[
\mathbf 1\{A_t=1\}h_1(\X_t,Y_t^{\obs})
\,\middle|\, \X_t,\cF_{t-1}
\right]
=
\rho_t(1\mid \X_t,\cF_{t-1})\,m_1(\X_t),
\]
because the conditional law of the observed outcome on $\{A_t=1\}$ is exactly the law of $\Y_t(1)$ given $\X_t$. Therefore
\[
\E\!\left[
\frac{\mathbf 1\{A_t=1\}}{\bar\pi_1(\X_t)^2}
h_1(\X_t,Y_t^{\obs})
\right]
=
\E\!\left[
\frac{\rho_t(1\mid \X_t,\cF_{t-1})}{\bar\pi_1(\X_t)^2}
m_1(\X_t)
\right].
\]
Averaging over $t$ and using the definition \eqref{eq:avg-effective} of $\bar\pi_1$ gives \eqref{eq:aps-id-1}. \QEDA

\subsubsection{Setting I: the base model}

In the base model there is a single design summary $L_t:=\psi(\X_t)$ visible for all eligible units from the outset. Sampling uses $L_t$, and assignment also uses $L_t$.

\paragraph{Concrete experiment and filtration.}
A locally randomized design with fixed target propensities $q(\psi)$ and $p(\psi)$ proceeds in two stages: (i) partition eligible units into $q$-strata and groups using $\{\psi(\X_i)\}$ and exogenous randomness, and draw exactly the required number of sampled units; (ii) on the sampled units, form $p$-strata and assignment groups using $\{\psi(\X_i)\}$ and a second exogenous seed, and draw exactly the required number of treated units. Fix any serialization of this offline grouped randomization as in Assumption~\ref{ass:serializable}; let $(\cF_t)$ and $\rho_t$ be the resulting filtration and final-state kernel.

\paragraph{APS in the base model.}
The primitive APS objects are $(\bar\pi_1,\bar\pi_0)$ from \eqref{eq:avg-effective}. If the average benchmark factorizes through $q$ and $p$, then $\bar\pi_1(x)=\bar q(\psi(x))\bar p(\psi(x))$ and $\bar\pi_0(x)=\bar q(\psi(x))\{1-\bar p(\psi(x))\}$.

\paragraph{Fixed benchmark model.}
The benchmark observation is $O=(\X,A,Y^A)$ with $A\in\{\varnothing,0,1\}$ generated by
\[
\PP(A=1\mid\X=x)=\bar\pi_1(x),\quad
\PP(A=0\mid\X=x)=\bar\pi_0(x),\quad
\PP(A=\varnothing\mid\X=x)=1-\bar\pi_1(x)-\bar\pi_0(x),
\]
and, on $\{A=a\}$, $Y^A$ has the law of $\Y(a)$ given $\X$.

\begin{proposition}[EIF in the base benchmark]
\label{prop:base-eif}
In the base benchmark model, the efficient influence function for the ATE is
\begin{equation}
\phi_{\mathrm{base},\bar\pi}(O)
=
c(\X)-\theta
+
\frac{\mathbf 1\{A=1\}}{\bar\pi_1(\X)}\{Y^A-\mu_1(\X)\}
-
\frac{\mathbf 1\{A=0\}}{\bar\pi_0(\X)}\{Y^A-\mu_0(\X)\}.
\label{eq:base-eif}
\end{equation}
Its variance is
\begin{equation}
V_{\mathrm{base}}(\bar\pi)
=
\Var(c(\X))
+
\E\!\left[
\frac{\sigma_1^2(\X)}{\bar\pi_1(\X)}
+
\frac{\sigma_0^2(\X)}{\bar\pi_0(\X)}
\right].
\label{eq:base-V}
\end{equation}
\end{proposition}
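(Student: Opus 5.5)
The plan is to repeat the argument of Lemma~\ref{lem:eif-structure} and Appendix~\ref{app:eif-ate}, now for the three-state benchmark model with $A\in\{\varnothing,0,1\}$. The arm $\varnothing$ carries no outcome information. The density of $O$ factorizes as
\[
p_X(x)\,\bar\pi_A(x)\,\prod_{a\in\{0,1\}}p_a(y\mid x)^{\mathbf 1\{A=a\}},
\qquad \bar\pi_\varnothing:=1-\bar\pi_1-\bar\pi_0,
\]
with $\bar\pi$ known. Regular submodels can therefore perturb only $p_X$, $p_1$ and $p_0$. Hence scores take the form $S_X(\X)+\mathbf 1\{A=1\}S_1(\X,Y^A)+\mathbf 1\{A=0\}S_0(\X,Y^A)$, with $\E[S_X]=0$ and $\E[S_a\mid\X,A=a]=0$.

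Orthogonality of $\cT_X,\cT_1,\cT_0$ follows exactly as in Appendix~\ref{app:proof-eif-structure}. The indicators of distinct final states are disjoint, and cross terms with $\cT_X$ vanish by conditional centering. The unsampled state contributes no outcome tangent direction.

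Next I compute the pathwise derivative of $\theta(P_t)=\int\{\mu_{1,t}-\mu_{0,t}\}p_{X,t}$. The product rule gives two pieces. The first is the covariate piece $\E[(c(\X)-\theta)S_X(\X)]$. The second is the outcome piece, using
\[
\dot\mu_a(x)=\E[(Y^A-\mu_a(x))S_a\mid\X=x,A=a],
\]
which is valid because on $\{A=a\}$ the law of $Y^A$ given $\X$ is that of $\Y(a)$. I then convert this to an observed-data expectation by inserting $\mathbf 1\{A=a\}/\bar\pi_a(\X)$, using $\PP(A=a\mid\X)=\bar\pi_a(\X)$, as in \eqref{eq:key-identity}. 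This requires $\bar\pi_a>0$ a.e. I would assume this, as it is inherited from positivity of the sampled arms.

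Matching termwise, the candidate \eqref{eq:base-eif} represents $\dot\theta$ on the tangent space. It also lies in $\cT_X\oplus\cT_1\oplus\cT_0$, since each residual $Y^A-\mu_a(\X)$ is conditionally centered on $\{A=a\}$. Since the model is nonparametric in $(p_X,p_1,p_0)$, the tangent space is this full closed sum. The representer in it is therefore unique and equals the EIF.

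For the variance, the three components are mutually orthogonal. Squaring thus gives $\Var(c(\X))$ plus, for each arm $a$,
\[
\E\!\left[\frac{\mathbf 1\{A=a\}}{\bar\pi_a(\X)^2}\{Y^A-\mu_a\}^2\right]=\E\!\left[\frac{\sigma_a^2(\X)}{\bar\pi_a(\X)}\right],
\]
which is \eqref{eq:base-V}.

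The main obstacle is the tangent-space step. I need to justify that the nuisance state $\varnothing$ adds no directions and that $\bar\pi$ is genuinely treated as known. Otherwise a propensity score component would appear in $\cT(\e)$; it would be orthogonal and would not change the EIF, but I would remark on this to be safe. I also need the integrability condition $\E[\sigma_a^2/\bar\pi_a]<\infty$ so that the representer lies in $L_2^0$.
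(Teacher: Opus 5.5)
Your proposal is correct and follows essentially the same route as the paper: factor the benchmark likelihood with the known kernel $\bar\pi$, decompose the tangent space into the covariate component and the two arm-specific outcome components, read off the representer componentwise by differentiating $\E[\mu_1(\X)-\mu_0(\X)]$, and obtain the variance from orthogonality and disjointness of the arm events. Your extra remarks make explicit conditions that the paper's terse proof leaves implicit: the unsampled state $\varnothing$ contributes no outcome direction, positivity $\bar\pi_a>0$ is needed, $\E[\sigma_a^2/\bar\pi_a]<\infty$ is needed, and treating $\bar\pi$ as known or unknown does not change the EIF.
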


\noindent\textbf{Proof of Proposition~\ref{prop:base-eif}.}
Under the fixed benchmark, the observed-data likelihood factors as
$P_{\X}\otimes P_{A\mid\X}\otimes P_{Y^{\obs}\mid \X,A}$
with the assignment kernel $P_{A\mid\X}$ known. The observed-data tangent space
therefore decomposes orthogonally into three components: the covariate score
$\{s_{\X}(\X):\E[s_{\X}(\X)]=0\}$ and the two arm-conditional outcome scores
$\{\mathbf 1\{A=a\}\,v_a(\X,Y^{\obs}):\E[v_a(\X,\Y(a))\mid\X]=0\}$ for $a\in\{0,1\}$.
Pathwise differentiation of $\theta(P)=\E[\mu_1(\X)-\mu_0(\X)]$ along a regular
observed-data submodel yields a derivative representation whose components in
these three directions are, respectively, $c(\X)-\theta$,
$\{Y^{\obs}-\mu_1(\X)\}/\bar\pi_1(\X)$, and $-\{Y^{\obs}-\mu_0(\X)\}/\bar\pi_0(\X)$.
Assembling these three orthogonal components with the arm indicators as
embedding weights gives \eqref{eq:base-eif}, the EIF. The residual components
have conditional mean zero given $\X$ and are supported on disjoint arm events;
squaring \eqref{eq:base-eif} and taking expectations gives \eqref{eq:base-V}. \QEDA

\begin{theorem}[APS lower bound in the base model]
\label{thm:base-lb}
Assume Assumption~\ref{ass:serializable} and positivity $\bar\pi_a(x)\ge\varepsilon>0$ for $a\in\{0,1\}$. Let $\hat\theta_n$ be any estimator that is locally unbiased along the least-favorable one-dimensional submodel associated with the fixed benchmark under $\bar\pi$. Then
\begin{equation}
\Var_{P_0,\ALG}(\hat\theta_n)
\ \ge\
\frac1n V_{\mathrm{base}}(\bar\pi).
\label{eq:base-lb}
\end{equation}
\end{theorem}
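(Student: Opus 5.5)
The plan mirrors the proof of Theorem~\ref{thm:avg-prop-lb}, with the three-state kernel $\rho_t$ replacing $\e_t$ and Lemma~\ref{lem:aps-identity} replacing identity \eqref{eq:avg-identity}. Set $\phi:=\phi_{\mathrm{base},\bar\pi}$ from Proposition~\ref{prop:base-eif} and $V:=V_{\mathrm{base}}(\bar\pi)$. Define a one-dimensional submodel $P_\varepsilon$ of the latent law by exponential tilting, with two pieces:
\begin{itemize}
\item the covariate marginal is tilted by $\exp\{\varepsilon(c(x)-\theta)/V\}$;
\item for $a\in\{0,1\}$, the conditional law of $\Y(a)$ given $\X=x$ is tilted by $\exp\{\pm\varepsilon(y-\mu_a(x))/(V\bar\pi_a(x))\}$, with sign $+$ for $a=1$ and $-$ for $a=0$.
\end{itemize}
Positivity $\bar\pi_a\ge\varepsilon$ keeps these tilts bounded after truncation, or under moment conditions, so quadratic mean differentiability holds. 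The score of one benchmark observation at $\varepsilon=0$ is then $s(O)=\phi(O)/V$, and on $\{A=\varnothing\}$ only the covariate piece contributes. By the same pathwise-derivative computation as in Proposition~\ref{prop:base-eif}, $\frac{d}{d\varepsilon}\theta(P_\varepsilon)|_{0}=\E[\phi\, s]=1$.

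Next comes the trajectory likelihood. By Assumption~\ref{ass:serializable}, the kernels $\rho_t$ depend only on visible covariates and past history, and these are not perturbed by $\varepsilon$. The trajectory likelihood therefore factors as a product over $t$ of three terms:
\begin{itemize}
\item the covariate density $p_{\X,\varepsilon}(\X_t)$;
\item the kernel $\rho_t(A_t\mid\X_t,\cF_{t-1})$, which is $\varepsilon$-free;
\item the outcome density on $\{A_t\neq\varnothing\}$, which is $\varepsilon$-dependent and is well defined by outcome visibility.
\end{itemize}
The trajectory score is thus $Z=\sum_t s_t$. Here $s_t$ is $\phi(O_t)/V$ with $\bar\pi$ held fixed inside the weights, evaluated at the realized state $A_t$.

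One caveat concerns the offline case, where $\cF_0$ contains $\sigma(\X_1,\dots,\X_n)$. There I would order the factorization so that all covariate factors come first and are then followed by the states and outcomes. The score is unchanged; what changes is the martingale structure. The covariate pieces are i.i.d. and mean zero. Each outcome piece has conditional mean zero given $(\X_t,A_t,\cF_{t-1})$, and it is orthogonal to the covariate pieces and to the other outcome pieces. So the cross terms still vanish, now by conditioning on the enlarged filtration rather than by a plain martingale-difference argument. Establishing this orthogonality carefully for serialized offline designs is the step I expect to be the main obstacle.

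Local unbiasedness along $P_\varepsilon$ and differentiation under the integral (justified by QMD) give $\E[(T-\theta)Z]=1$. Cauchy--Schwarz then yields $\Var(T)\ge 1/\E[Z^2]$. It remains to compute
\[
\E[Z^2]=V^{-2}\sum_t\Bigl(\E[(c(\X_t)-\theta)^2]+\sum_{a}\E\bigl[\1\{A_t=a\}\bar\pi_a(\X_t)^{-2}(Y^{\obs}_t-\mu_a(\X_t))^2\bigr]\Bigr).
\]
Applying \eqref{eq:aps-id-1} with $h_a(x,y)=(y-\mu_a(x))^2$, so that $m_a=\sigma_a^2$, collapses the averaged arm terms to $\E[\sigma_a^2(\X)/\bar\pi_a(\X)]$. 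Comparing with \eqref{eq:base-V} gives $\E[Z^2]=nV/V^2=n/V$, which is the bound \eqref{eq:base-lb}.
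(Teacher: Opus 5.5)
Your proposal is correct and follows essentially the same route as the paper's proof. Both use the same least-favorable tangent components $(c(x)-\theta)/V$ and $\pm(y-\mu_a(x))/(V\bar\pi_a(x))$ with the kernels $\rho_t$ held fixed, write the trajectory score as a sum of per-unit observed-data scores, apply Cauchy--Schwarz against local unbiasedness, and use Lemma~\ref{lem:aps-identity} with $h_a(x,y)=(y-\mu_a(x))^2$ to get $\E[Z_n^2]=n/V$.

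Your treatment of cross terms in offline serializations, by conditioning on $\cF_{t-1}\supseteq\sigma(\X_1,\dots,\X_n)$ and invoking outcome visibility, is a legitimate expansion of the paper's one-line appeal to latent independence and non-anticipation. If anything, the more delicate offline point is inside Lemma~\ref{lem:aps-identity}: its averaging step needs $\X_t$ to be independent of $\cF_{t-1}$, or else $\bar\pi_a(x)$ must be read as $\frac1n\sum_t\PP(A_t=a\mid\X_t=x)$.
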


\noindent\textbf{Proof of Theorem~\ref{thm:base-lb}.}
Let $V:=V_{\mathrm{base}}(\bar\pi)$. Consider a one-dimensional observed-data
submodel indexed by $\varepsilon$ that perturbs only the covariate law $P_{\X}$
and the two arm-conditional outcome laws $P_{\Y(a)\mid\X}$ (the assignment
kernel $P_{A\mid\X}$ is held fixed), with tangent components
\begin{equation}
s_{\X}(x)=\frac{c(x)-\theta}{V},\quad
s_1(x,y)=\frac{y-\mu_1(x)}{V\bar\pi_1(x)},\quad
s_0(x,y)=-\frac{y-\mu_0(x)}{V\bar\pi_0(x)}.
\label{eq:base-scores}
\end{equation}
Because $s_a(\X,\Y(a))$ has conditional mean zero given $\X$ under $P_0$, this
defines a valid observed-data submodel that references only marginal conditionals
$P_{\Y(a)\mid\X}$; no assumption is made on the joint law of $(\Y(0),\Y(1))\mid\X$.
The derivative of $\theta$ along this submodel equals
$V^{-1}\{\Var(c(\X))+\E[\sigma_1^2(\X)/\bar\pi_1(\X)+\sigma_0^2(\X)/\bar\pi_0(\X)]\}=1$
by \eqref{eq:base-V}.

Because the assignment does not depend on the latent perturbation and the
trajectory likelihood factorizes across units, the trajectory score is
\begin{equation}
Z_n=\sum_{t=1}^n\!\left[
    s_{\X}(\X_t)+\mathbf 1\{A_t=1\}s_1(\X_t,Y_t^{\obs})+\mathbf 1\{A_t=0\}s_0(\X_t,Y_t^{\obs})
\right],
\label{eq:base-Zn}
\end{equation}
each summand depending only on observed-data quantities. Cross terms between
distinct units vanish by latent independence and non-anticipation of the design;
within a unit, cross terms vanish because the covariate score is centered and
each residual score has conditional mean zero given $\X$. Applying
Lemma~\ref{lem:aps-identity} to the residual scores with $h_a(x,y)=(y-\mu_a(x))^2$
gives $\E[Z_n^2]=n/V$. By local unbiasedness along this submodel,
$1=\frac{d}{d\varepsilon}\E_{\varepsilon,\ALG}[\hat\theta_n]|_{\varepsilon=0}
=\E_{P_0,\ALG}[(\hat\theta_n-\theta)Z_n]$, so the Cauchy--Schwarz inequality
yields $\Var_{P_0,\ALG}(\hat\theta_n)\ge 1/\E[Z_n^2]=V/n$, which is \eqref{eq:base-lb}. \QEDA

\begin{remark}[Normalization by the realized experiment size]
If one normalizes by $n_T=\sum_t\1\{A_t\neq\varnothing\}$ rather than $n$, then $n/n_T\to 1/\E[\bar q(\psi(\X))]$ in the base model, so the same bound can be rewritten in the $\sqrt{n_T}$ normalization used in survey-experiment settings by multiplying the $n$-normalized variance by $\E[\bar q(\psi(\X))]$.
\end{remark}

\subsubsection{Setting II: the extension under stagewise factorization}

We now allow different design summaries at the two stages: $L_{1t}:=\psi_1(\X_t)$ and $L_{2t}:=\psi_2(\X_t)$. The key feature is that $L_{1t}$ is visible at the sampling stage for every eligible unit, whereas $L_{2t}$ is only revealed if unit $t$ is sampled.

\paragraph{Concrete experiment and filtration.}
Stage~1 uses $\{\psi_1(\X_t)\}$ and exogenous randomness to form sampling groups and execute exact local randomization. When unit $t$ is sampled, its $L_{2t}$ is revealed. Stage~2 uses the revealed $\{L_{2t}\}$ among sampled units and a second exogenous seed to form assignment groups and execute exact local randomization. Fix any serialization of this procedure and let $(\cF_t)$ be the resulting filtration.

\paragraph{Stagewise benchmark factorization.}
For the extension we restrict attention to the factorization case, in which the induced benchmark factors through a stage-1 benchmark sampling propensity $\bar q$ and a stage-2 benchmark assignment propensity $\bar p$:
\begin{equation}
\bar\pi_1(x)=\bar q(\psi_1(x))\,\bar p(\psi_2(x)),
\qquad
\bar\pi_0(x)=\bar q(\psi_1(x))\{1-\bar p(\psi_2(x))\}.
\label{eq:factorization}
\end{equation}
This is exactly the case in which the averaged benchmark can still be interpreted as ``first sample according to $\bar q(\psi_1)$, then assign according to $\bar p(\psi_2)$.''

\paragraph{Fixed benchmark model.}
The fixed benchmark observation is $O=(L_1, T, TL_2, TD, TY)$ where
\[
T\mid L_1\sim\mathrm{Bernoulli}(\bar q(L_1)),\quad
D\mid(T=1,L_2)\sim\mathrm{Bernoulli}(\bar p(L_2)),\quad
Y=T\{D\Y(1)+(1-D)\Y(0)\}.
\]
Define
\[
\nu_a(\ell_2):=\E[\Y(a)\mid L_2=\ell_2],\qquad
c_2(\ell_2):=\nu_1(\ell_2)-\nu_0(\ell_2),\qquad
c_1(\ell_1):=\E[c(\X)\mid L_1=\ell_1],
\]
and
\[
b_{\bar p}(\X):=\mu_1(\X)\sqrt{\frac{1-\bar p(L_2)}{\bar p(L_2)}}+\mu_0(\X)\sqrt{\frac{\bar p(L_2)}{1-\bar p(L_2)}}.
\]

\begin{proposition}[EIF in the extension benchmark]
\label{prop:ext-eif}
In the factorized extension benchmark, the efficient influence function for the ATE is
\begin{equation}
\phi_{\mathrm{ext},\bar q,\bar p}(O)
=
c_1(L_1)-\theta
+
\frac{T}{\bar q(L_1)}\!\left[
c_2(L_2)-c_1(L_1)
+
\frac{D}{\bar p(L_2)}\{Y-\nu_1(L_2)\}
-
\frac{1-D}{1-\bar p(L_2)}\{Y-\nu_0(L_2)\}
\right].
\label{eq:ext-eif}
\end{equation}
\end{proposition}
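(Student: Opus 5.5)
We compute the efficient influence function exactly as in the proofs of Lemma~\ref{lem:eif-structure} and Proposition~\ref{prop:base-eif}. The steps are: factor the likelihood of the coarsened benchmark observation $O=(L_1,T,TL_2,TD,TY)$, decompose the tangent space into orthogonal pieces, compute the pathwise derivative of $\theta$ along each piece, and identify the Riesz representer.

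\textbf{Factorization and tangent space.} The density factors as
\[
p(L_1)\,\bar q(L_1)^T\{1-\bar q(L_1)\}^{1-T}\Bigl[p(L_2\mid L_1)\,\bar p(L_2)^D\{1-\bar p(L_2)\}^{1-D}\,p_D(Y\mid L_2)\Bigr]^T .
\]
Both $\bar q$ and $\bar p$ are known, so the tangent space is the orthogonal sum of three kinds of components:
\begin{itemize}
\item $\cT_1=\{h(L_1):\E h=0\}$;
\item $\cT_2=\{T\,s_2(L_1,L_2):\E[s_2\mid L_1]=0\}$;
\item $\cT_{Y,a}=\{T\1\{D=a\}s_a(L_2,Y):\E[s_a\mid L_2,D=a,T=1]=0\}$ for $a\in\{0,1\}$.
\end{itemize}
Orthogonality follows by the conditioning argument of Appendix~\ref{app:proof-eif-structure}. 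The key inputs are that $T\perp L_2\mid L_1$, because sampling depends on $L_1$ only, and that $D$ is randomized given $L_2$ among sampled units.

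We also use that, in the factorized benchmark, the outcome law on $\{T=1,D=a\}$ given the revealed information is $\mathcal L(\Y(a)\mid L_2)$, so the relevant regression is $\nu_a(L_2)$. We take this as part of the benchmark specification, mirroring outcome visibility. This is where any subtlety about $L_1$ carrying extra outcome information would enter; if needed, we would condition on $(L_1,L_2)$ and note that the formula is unchanged when $L_1$ is measurable with respect to the revealed stage-2 information.

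\textbf{Identification and differentiation.} By iterated expectations,
\[
\theta=\int\!\!\int c_2(\ell_2)\,dP(\ell_2\mid\ell_1)\,dP(\ell_1),
\qquad\text{so that}\qquad c_1(L_1)=\E[c_2(L_2)\mid L_1].
\]
Differentiating along a submodel with scores $(h,s_2,s_1,s_0)$ gives three contributions:
\begin{itemize}
\item $\E[(c_1(L_1)-\theta)h(L_1)]$ from the $L_1$ marginal;
\item $\E[(c_2(L_2)-c_1(L_1))s_2]$ from the $L_2\mid L_1$ conditional;
\item $\sum_a(-1)^{1-a}\E[\dot\nu_a(L_2)]$ from the outcome conditionals, where $\dot\nu_a(\ell_2)=\E[(Y-\nu_a)s_a\mid L_2=\ell_2,D=a,T=1]$.
\end{itemize}
Each contribution is then rewritten as an inner product with an element of the corresponding tangent component, using the reweighting identity of \eqref{eq:key-identity}. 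Because $\E[T\mid L_1,L_2]=\bar q(L_1)$, we have $\E[(T/\bar q)(c_2-c_1)\,T s_2]=\E[(c_2-c_1)s_2]$. Similarly, $\E[T\1\{D=a\}\mid L_1,L_2]=\bar q(L_1)\bar p_a(L_2)$, with $\bar p_1=\bar p$ and $\bar p_0=1-\bar p$, which produces the weight $T\1\{D=a\}/\{\bar q(L_1)\bar p_a(L_2)\}$ on $Y-\nu_a(L_2)$.

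\textbf{Assembly and membership check.} Summing the three representers gives exactly \eqref{eq:ext-eif}. It remains to check that each piece lies in its tangent component:
\begin{itemize}
\item $c_1(L_1)-\theta$ has mean zero;
\item $(c_2-c_1)/\bar q$ has conditional mean zero given $L_1$;
\item $Y-\nu_a(L_2)$ has conditional mean zero given $(L_2,D=a,T=1)$.
\end{itemize}
Hence the candidate lies in the closed tangent space, and by uniqueness of the Riesz representer it is the EIF.

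\textbf{Main obstacle.} We expect the main obstacle to be the middle component. Because $L_2$ is missing when $T=0$, we must verify that $T/\bar q(L_1)$ is the correct inverse weight, which requires the missing-at-random structure $T\perp L_2\mid L_1$ implied by the stagewise factorization \eqref{eq:factorization}. We must also verify that this component is orthogonal both to $\cT_1$ and to the outcome components. We would first establish these two conditional-independence facts from the benchmark construction, and then carry out the rest of the computation mechanically.
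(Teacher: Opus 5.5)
Your proposal is correct and follows the paper's proof essentially step for step. Like the paper, you factor the benchmark likelihood with the kernels $\bar q$ and $\bar p$ known, split the tangent space into the $L_1$-marginal, the $L_2\mid L_1$ conditional and the two arm-specific outcome pieces, and read off the inverse-weighted representers $T\{c_2(L_2)-c_1(L_1)\}/\bar q(L_1)$ and $\pm T\1\{D=a\}\{Y-\nu_a(L_2)\}/\{\bar q(L_1)\bar p_a(L_2)\}$.

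The nesting caveat you raise is real, and the paper's proof also relies on it without saying so. Nesting is what gives $\E[c_2(L_2)\mid L_1]=c_1(L_1)$. It is also what puts the outcome representer, which carries the factor $1/\bar q(L_1)$, inside your $(L_2,Y)$-indexed piece $\cT_{Y,a}$; your membership check skips this point. You should therefore state $\sigma(L_1)\subseteq\sigma(L_2)$ as an explicit hypothesis rather than an aside.
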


\noindent\textbf{Proof of Proposition~\ref{prop:ext-eif}.}
Under the factorized extension benchmark, the observed-data likelihood factors as
\[
P_{L_1}\otimes P_{T\mid L_1}\otimes P_{L_2\mid L_1,T=1}\otimes P_{D\mid L_2,T=1}\otimes P_{Y\mid L_2,T=1,D}
\]
with the sampling kernel $P_{T\mid L_1}$ and the assignment kernel
$P_{D\mid L_2,T=1}$ both known. The observed-data tangent space thus decomposes
orthogonally into four components: the marginal law of $L_1$, the conditional
law of $L_2$ given $L_1$ (revealed only on $\{T=1\}$), and the two arm-specific
outcome laws $\Y(a)\mid L_2$ for $a\in\{0,1\}$. Pathwise differentiation of
$\theta(P)=\E[c_1(L_1)]$ along a regular observed-data submodel gives derivative
components in these four directions equal, respectively, to
$c_1(L_1)-\theta$, $T\{c_2(L_2)-c_1(L_1)\}/\bar q(L_1)$,
$TD\{Y-\nu_1(L_2)\}/\{\bar q(L_1)\bar p(L_2)\}$, and
$-T(1-D)\{Y-\nu_0(L_2)\}/[\bar q(L_1)\{1-\bar p(L_2)\}]$. Each component has
conditional mean zero given the previous layer's information; summing gives
\eqref{eq:ext-eif}, the EIF. \QEDA

\begin{proposition}[Variance of the extension benchmark EIF]
\label{prop:ext-var}
The variance of the EIF in \eqref{eq:ext-eif} equals $V_{\mathrm{ext}}(\bar q,\bar p)=V_1(\bar q,\bar p)+V_2(\bar q,\bar p)$, where
\begin{align}
V_1(\bar q,\bar p)
&=
\Var(c(\X))
+
\E\!\left[
\frac{1}{\bar q(L_1)}
\!\left(
\frac{\sigma_1^2(\X)}{\bar p(L_2)}
+
\frac{\sigma_0^2(\X)}{1-\bar p(L_2)}
\right)
\right], \label{eq:V1}\\[0.3em]
V_2(\bar q,\bar p)
&=
\E\!\left[
\frac{1-\bar q(L_1)}{\bar q(L_1)}
\Var(c(\X)\mid L_1)
\right]
+
\E\!\left[
\frac{1}{\bar q(L_1)}
\Var(b_{\bar p}(\X)\mid L_2)
\right].
\label{eq:V2}
\end{align}
\end{proposition}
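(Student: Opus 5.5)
The plan is to square \eqref{eq:ext-eif}, split it into three mutually orthogonal pieces, and then regroup the result into $V_1+V_2$ using the law of total variance and an elementary algebraic identity for $b_{\bar p}$. Write $\phi_{\mathrm{ext},\bar q,\bar p}=\Phi_1+\Phi_2+\Phi_3$ with
\[
\Phi_1=c_1(L_1)-\theta,\qquad
\Phi_2=\frac{T}{\bar q(L_1)}\{c_2(L_2)-c_1(L_1)\},
\]
\[
\Phi_3=\frac{T}{\bar q(L_1)}\Bigl[\frac{D}{\bar p(L_2)}\{Y-\nu_1(L_2)\}-\frac{1-D}{1-\bar p(L_2)}\{Y-\nu_0(L_2)\}\Bigr].
\]

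\textbf{Step 1 (orthogonality and second moments).} I will show the cross terms vanish by iterated conditioning along the layered likelihood used in the proof of Proposition~\ref{prop:ext-eif}.
\begin{itemize}
\item $\E[\Phi_3\mid L_1,T,L_2]=0$, because $\E[Y\mid T=1,D=a,L_2]=\nu_a(L_2)$ by outcome visibility. This kills both $\E[\Phi_3\Phi_1]$ and $\E[\Phi_3\Phi_2]$.
\item $\E[\Phi_2\mid L_1]=\E[T\mid L_1]\,\E[c_2(L_2)-c_1(L_1)\mid L_1]/\bar q(L_1)=0$. This uses $\E[c_2(L_2)\mid L_1]=c_1(L_1)$ and the fact that $L_2$ given $L_1$ does not depend on $T$.
\end{itemize}
Using $T^2=T$, $\E[T\mid L_1]=\bar q(L_1)$ and $D(1-D)=0$, the diagonal terms are
\[
\E[\Phi_1^2]=\Var(c_1(L_1)),\qquad
\E[\Phi_2^2]=\E\!\Bigl[\tfrac{1}{\bar q(L_1)}\Var(c_2(L_2)\mid L_1)\Bigr],
\]
\[
\E[\Phi_3^2]=\E\!\Bigl[\tfrac{1}{\bar q(L_1)}\Bigl(\tfrac{\tilde\sigma_1^2(L_2)}{\bar p(L_2)}+\tfrac{\tilde\sigma_0^2(L_2)}{1-\bar p(L_2)}\Bigr)\Bigr],
\]
where $\tilde\sigma_a^2(L_2)=\Var(\Y(a)\mid L_2)$.

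\textbf{Step 2 (regrouping).} Apply total variance in two places.
\begin{itemize}
\item First, $\tilde\sigma_a^2(L_2)=\E[\sigma_a^2(\X)\mid L_2]+\Var(\mu_a(\X)\mid L_2)$. Since $\bar q$ and $\bar p$ are functions of $(L_1,L_2)$, the $\sigma_a^2$ part produces exactly the second term of $V_1$ in \eqref{eq:V1}.
\item Second, $\Var(c(\X))=\Var(c_1(L_1))+\E[\Var(c\mid L_1)]$. Writing $1/\bar q=1+(1-\bar q)/\bar q$ gives $\E[\Var(c\mid L_1)/\bar q]=\E[\Var(c\mid L_1)]+\E[\tfrac{1-\bar q}{\bar q}\Var(c\mid L_1)]$, which yields $\Var(c(\X))$ and the first term of $V_2$.
\end{itemize}
To make the second regrouping apply, add and subtract $\E[\Var(c\mid L_2)/\bar q]$. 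This uses $\Var(c\mid L_1)=\Var(c_2(L_2)\mid L_1)+\E[\Var(c\mid L_2)\mid L_1]$, which converts $\E[\Phi_2^2]$ into $\E[\Var(c\mid L_1)/\bar q]-\E[\Var(c\mid L_2)/\bar q]$.

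\textbf{Step 3 (the $b_{\bar p}$ identity).} What remains is $\E\bigl[\tfrac{1}{\bar q}R\bigr]$ with
\[
R=\frac{\Var(\mu_1\mid L_2)}{\bar p}+\frac{\Var(\mu_0\mid L_2)}{1-\bar p}-\Var(\mu_1-\mu_0\mid L_2).
\]
Expanding $\Var(\mu_1-\mu_0\mid L_2)=\Var(\mu_1\mid L_2)+\Var(\mu_0\mid L_2)-2\Cov(\mu_1,\mu_0\mid L_2)$ gives
\[
R=\frac{1-\bar p}{\bar p}\Var(\mu_1\mid L_2)+\frac{\bar p}{1-\bar p}\Var(\mu_0\mid L_2)+2\Cov(\mu_1,\mu_0\mid L_2).
\]
This is exactly $\Var(b_{\bar p}(\X)\mid L_2)$, because the product of the two square-root weights in $b_{\bar p}$ equals $1$. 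That gives the second term of $V_2$ in \eqref{eq:V2}, completing the identity $V_{\mathrm{ext}}=V_1+V_2$.

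\textbf{Main obstacle.} The delicate step is the tower identities $\E[c_2(L_2)\mid L_1]=c_1(L_1)$ and $\Var(c\mid L_1)=\Var(c_2\mid L_1)+\E[\Var(c\mid L_2)\mid L_1]$. Both require $\sigma(L_1)\subseteq\sigma(L_2)$, i.e.\ the stage-2 summary refines the stage-1 summary. That is natural, since stage-2 information includes what was visible at stage 1. I would first try to justify it from the setup, replacing $L_2$ by $(L_1,L_2)$ if necessary, which leaves $\bar p(\psi_2)$ unchanged. If that refinement is not available, the $L_2$-conditional objects must instead be read as conditioning on $(L_1,L_2)$.
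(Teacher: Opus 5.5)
Your proposal is correct and follows essentially the same route as the paper. Both kill the cross terms by iterated conditioning, then regroup using total variance on $\Var(\Y(a)\mid L_2)$, the $b_{\bar p}$ identity, and the tower identities $\Var(c(\X)\mid L_1)=\Var(c_2(L_2)\mid L_1)+\E[\Var(c(\X)\mid L_2)\mid L_1]$ and $\Var(c(\X))=\Var(c_1(L_1))+\E[\Var(c(\X)\mid L_1)]$. Your three-piece split versus the paper's $A+\frac{T}{\bar q(L_1)}(B+R)$ is cosmetic, and you are right that these tower steps, like $\E[R\mid L_1,L_2]=0$, require $\sigma(L_1)\subseteq\sigma(L_2)$, which the paper uses without stating it.
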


\noindent\textbf{Proof of Proposition~\ref{prop:ext-var}.}
Write $A:=c_1(L_1)-\theta$, $B:=c_2(L_2)-c_1(L_1)$, $R:=\frac{D}{\bar p(L_2)}(Y-\nu_1(L_2))-\frac{1-D}{1-\bar p(L_2)}(Y-\nu_0(L_2))$, so $\phi_{\mathrm{ext},\bar q,\bar p}=A+\frac{T}{\bar q(L_1)}(B+R)$. Since $\E[B+R\mid L_1]=0$, the cross term vanishes and
\[
\E[\phi_{\mathrm{ext},\bar q,\bar p}^2]
=\Var(c_1(L_1))+\E\!\left[\frac{1}{\bar q(L_1)}(B+R)^2\right].
\]
Given $(L_1,L_2)$, $\E[R\mid L_2]=0$ so $\E[(B+R)^2\mid L_1,L_2]=B^2+\E[R^2\mid L_2]$; disjointness of the treated and control residuals gives $\E[R^2\mid L_2]=\Var(\Y(1)\mid L_2)/\bar p(L_2)+\Var(\Y(0)\mid L_2)/(1-\bar p(L_2))$. Applying the law of total variance $\Var(\Y(a)\mid L_2)=\E[\sigma_a^2(\X)\mid L_2]+\Var(\mu_a(\X)\mid L_2)$ and the algebraic identity
\[
\frac{\Var(\mu_1(\X)\mid L_2)}{\bar p(L_2)}+\frac{\Var(\mu_0(\X)\mid L_2)}{1-\bar p(L_2)}
=\Var(c(\X)\mid L_2)+\Var(b_{\bar p}(\X)\mid L_2),
\]
followed by the tower-property identities $\Var(c(\X)\mid L_1)=\E[\Var(c(\X)\mid L_2)\mid L_1]+\Var(c_2(L_2)\mid L_1)$ and $\Var(c(\X))=\Var(c_1(L_1))+\E[\Var(c(\X)\mid L_1)]$, yields exactly \eqref{eq:V1}--\eqref{eq:V2}. \QEDA

\begin{theorem}[APS lower bound for the extension under factorization]
\label{thm:ext-lb}
Assume Assumption~\ref{ass:serializable}, positivity of $(\bar q,\bar p)$, and the factorization \eqref{eq:factorization}. Let $\hat\theta_n$ be any estimator that is locally unbiased along the least-favorable one-dimensional submodel associated with the fixed extension benchmark $(\bar q,\bar p)$. Then
\begin{equation}
\Var_{P_0,\ALG}(\hat\theta_n)
\ \ge\
\frac1n\{V_1(\bar q,\bar p)+V_2(\bar q,\bar p)\}.
\label{eq:ext-lb}
\end{equation}
\end{theorem}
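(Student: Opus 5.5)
I would follow the template of the proof of Theorem~\ref{thm:base-lb}, now with the four-layer tangent structure from Proposition~\ref{prop:ext-eif}. Set $V:=V_1(\bar q,\bar p)+V_2(\bar q,\bar p)$. The first step is to build a one-dimensional submodel that perturbs only the four unknown laws: the law of $L_1$, the law of $L_2$ given $L_1$ (observed only when $T=1$), and the two arm-conditional outcome laws $\Y(a)\mid L_2$. The sampling and assignment kernels are held fixed. The tangent components are chosen proportional to the four pieces of $\phi_{\mathrm{ext},\bar q,\bar p}$ divided by $V$:
\[
s_{L_1}=\frac{c_1(L_1)-\theta}{V},\qquad s_{L_2}=\frac{c_2(L_2)-c_1(L_1)}{V\,\bar q(L_1)},\qquad s_a=\pm\frac{Y-\nu_a(L_2)}{V\,\bar q(L_1)\,\bar p_a(L_2)},
\]
where $\bar p_1=\bar p$ and $\bar p_0=1-\bar p$. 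As in Lemma~\ref{lem:tilting-score}, I would realize these by exponential tilting. Each component has conditional mean zero at its own layer, so this is a valid observed-data submodel. By Propositions~\ref{prop:ext-eif} and~\ref{prop:ext-var}, the pathwise derivative of $\theta$ along it equals $\E[\phi_{\mathrm{ext}}^2]/V=1$.

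Next I would write the trajectory score. The design kernels $\rho_t$ do not depend on $\varepsilon$, by Assumption~\ref{ass:serializable}(3). Hence the derivative of the log trajectory likelihood is $Z_n=\sum_t\bigl[s_{L_1}(L_{1t})+T_t s_{L_2}+T_tD_t s_1+T_t(1-D_t)s_0\bigr]$. Each summand depends on unit $t$'s revealed data together with the design-independent functions $\bar q,\bar p$. Local unbiasedness gives $\E[(\hat\theta_n-\theta)Z_n]=1$, and Cauchy--Schwarz then gives $\Var(\hat\theta_n)\ge 1/\E[Z_n^2]$. It therefore suffices to show $\E[Z_n^2]=n/V$.

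For $\E[Z_n^2]$, I would first show the summands form a martingale difference sequence. Conditional on $(\cF_{t-1},\X_t)$-type information, the covariate score is centered because $L_{1t}$ is a fresh i.i.d.\ draw, and unit $t$'s covariate is not in $\cF_{t-1}$ beyond what is serialized. The $L_2$ score is centered because, on $\{T_t=1\}$, $L_{2t}$ given $L_{1t}$ keeps its population law: sampling depends only on $\psi_1$ and past history. The outcome scores are centered by outcome visibility. Cross terms therefore vanish and $\E[Z_n^2]=\sum_t\E[\text{summand}_t^2]$.

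The within-unit squares are handled with APS identities in the spirit of Lemma~\ref{lem:aps-identity}. For the outcome pieces, conditioning gives $\E[T_tD_t\,h(L_{2t},Y_t)\mid\cdot]=\rho_t(1\mid\cdot)\,\E[h\mid\cdot]$. Averaging over $t$ replaces $\rho_t$ by $\bar\pi_1=\bar q\,\bar p$, and this is where the factorization \eqref{eq:factorization} is used. Dividing by $(\bar q\bar p)^2$ yields $\E[\Var(\Y(1)\mid L_2)/(\bar q\bar p)]$, and similarly for arm $0$. For the $L_2$ piece, \eqref{eq:aps-id-q} gives $\E[B^2/\bar q]$. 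Summing reproduces $\E[\phi_{\mathrm{ext}}^2]/V^2=1/V$ per unit, hence $n/V$ in total.

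\textbf{Main obstacle.} The main obstacle is justifying the averaging when the conditioning information for the $L_2$ and outcome layers is $L_{2t}$ rather than $\X_t$. The identity needs $\frac1n\sum_t\E[\rho_t(a\mid\X_t,\cF_{t-1})\,k(L_{2t})]=\E[\bar\pi_a(\X)\,k(L_2)]$ with $\bar\pi_a$ depending on $\X$ only through $(\psi_1,\psi_2)$. This holds by definition \eqref{eq:avg-effective} together with the factorization. The step I would check most carefully is that stage-2 randomization conditional on revealed $L_2$ does not break the conditional centering of $s_{L_2}$. My first attempt would use the serialization to resolve unit $t$'s sampling before $L_{2t}$ is revealed, so that $L_{2t}\perp$ past given $(L_{1t},T_t=1)$.
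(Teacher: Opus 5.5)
Your proposal matches the paper's proof step for step. It uses the same four-layer least-favorable submodel with tangent components scaled by $1/V$, the same trajectory score $Z_n$ with the design kernels unperturbed, the unit derivative of $\theta$ from Propositions~\ref{prop:ext-eif}--\ref{prop:ext-var}, the identities \eqref{eq:aps-id-q} and \eqref{eq:aps-id-1} together with the factorization \eqref{eq:factorization} to get $\E[Z_n^2]=n/V$, and Cauchy--Schwarz.

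One small fix: in the offline two-stage serialization, $\cF_0$ already carries the stage-1 profile $\{L_{1t}\}$, so the unit-level summands are not literally a martingale difference sequence. Argue instead, as the paper does, that the cross terms vanish directly by conditioning in layers. First condition on all stage-1 information: under it $L_{2t}\mid L_{1t}$ keeps its population law, so $s_{L_2}$ is centered. Then condition on the revealed $L_2$'s and the stage-2 randomness, where outcome visibility centers the residual scores. The $s_{L_1}$ terms are mutually orthogonal by latent independence.
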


\noindent\textbf{Proof of Theorem~\ref{thm:ext-lb}.}
Let $V:=V_{\mathrm{ext}}(\bar q,\bar p)=V_1(\bar q,\bar p)+V_2(\bar q,\bar p)$.
Consider a one-dimensional observed-data submodel indexed by $\varepsilon$ that
perturbs the marginal law of $L_1$, the conditional law of $L_2\mid L_1$, and
the two arm-specific outcome laws $P_{\Y(a)\mid L_2}$ (with both known kernels
$P_{T\mid L_1}$ and $P_{D\mid L_2,T=1}$ held fixed), with tangent components
\begin{equation}
s_1(\ell_1)=\frac{c_1(\ell_1)-\theta}{V},\quad
s_2(\ell_1,\ell_2)=\frac{c_2(\ell_2)-c_1(\ell_1)}{V\bar q(\ell_1)},
\label{eq:ext-scores-1}
\end{equation}
\begin{equation}
r_1(\ell_1,\ell_2,y)=\frac{y-\nu_1(\ell_2)}{V\bar q(\ell_1)\bar p(\ell_2)},\quad
r_0(\ell_1,\ell_2,y)=-\frac{y-\nu_0(\ell_2)}{V\bar q(\ell_1)\{1-\bar p(\ell_2)\}}.
\label{eq:ext-scores-2}
\end{equation}
Because $s_2$, $r_1$, $r_0$ have conditional mean zero given $L_1$, $L_2$, and
$L_2$ respectively under $P_0$, this defines a valid observed-data submodel
that references only marginal conditionals; no assumption is made on the joint
law of $(\Y(0),\Y(1))\mid L_2$. The derivative of $\theta$ along this submodel
equals $1$ by \eqref{eq:V1}--\eqref{eq:V2}.

By non-anticipation of the design and factorization of the trajectory likelihood
across units, the trajectory score is
\begin{equation}
Z_n=\sum_{t=1}^n\!\left[
    s_1(L_{1t})
    +T_t\, s_2(L_{1t},L_{2t})
    +T_t D_t\, r_1(L_{1t},L_{2t},Y_t^{\obs})
    +T_t(1-D_t)\, r_0(L_{1t},L_{2t},Y_t^{\obs})
\right],
\label{eq:ext-Zn}
\end{equation}
each summand depending only on observed-data quantities. Cross terms between
distinct units vanish by latent independence and non-anticipation. Within a
unit, cross terms vanish because $s_1$ is centered, $\E[s_2\mid L_{1t}]=0$
by construction, and $r_a$ have conditional mean zero given $L_{2t}$; the
treated and control residual scores are supported on disjoint events. Applying
Lemma~\ref{lem:aps-identity} to the sampling-augmentation term via
identity~\eqref{eq:aps-id-q} and to the residual scores via
identities~\eqref{eq:aps-id-1} and its $a=0$ analogue gives
$\E[Z_n^2]=n\{V_1(\bar q,\bar p)+V_2(\bar q,\bar p)\}/V^2=n/V$.
Local unbiasedness gives $1=\E_{P_0,\ALG}[(\hat\theta_n-\theta)Z_n]$, and
the Cauchy--Schwarz inequality yields
$\Var_{P_0,\ALG}(\hat\theta_n)\ge V/n$, which is \eqref{eq:ext-lb}. \QEDA

\begin{remark}[Scope of the factorization theorem]
Theorem~\ref{thm:ext-lb} is stated only for the factorization benchmark \eqref{eq:factorization}. In complete generality, the primitive lower-bound object remains the pair of average effective masses $(\bar\pi_1,\bar\pi_0)$ and the fixed benchmark is the corresponding observed-data semiparametric model. The factorization hypothesis is used only to identify that general benchmark with the closed-form expression $V_1(\bar q,\bar p)+V_2(\bar q,\bar p)$.
\end{remark}

\begin{remark}[Interpretation of $V_2$]
The two terms in $V_2$ are projection losses induced by the two-stage information structure. The first is the loss from seeing only $L_1$ at the sampling stage; the second is the loss from seeing only $L_2$ at the assignment stage rather than the full covariate vector. When the chosen stratification variables are sufficient to make both losses vanish, the benchmark collapses to $V_1$, recovering the single-stage bound of Appendix~\ref{app:proof-lb-avg-prop}.
\end{remark}

\section{Proofs for Section 4 (Regression Adjustment)}
\label{app:proofs-sec3}

This appendix collects the proofs supporting the regression-adjustment route of Section~\ref{sec:regression-adjustment}: the excess-MSE decomposition (E.1), the general-case analysis (E.2), the linear-functional analysis (E.3), and the matching convergence-rate lower bound (E.4).

\subsection{Excess MSE Decomposition (Section~4.2)}
\label{app:proofs-sec33}

\subsubsection{Proof of the Excess MSE Decomposition (Proposition~\ref{prop:excess-mse-decomp})}
\label{app:proof-decomp}

Write $\hat\theta-\theta=(\tilde\theta-\theta)+(\hat\theta-\tilde\theta)$. Squaring and taking expectations:
\[
\E[(\hat\theta-\theta)^2]
=
\E[(\tilde\theta-\theta)^2]
+
\E[(\hat\theta-\tilde\theta)^2]
+
2\,\E[(\hat\theta-\tilde\theta)(\tilde\theta-\theta)].
\]
Subtracting $\E[(\theta^\star_n-\theta)^2]$ from both sides gives \eqref{eq:excess-mse-decomp}.
The interaction term admits the Cauchy--Schwarz bound
\begin{equation}
\label{eq:cross-cauchy}
\bigl|\E[(\hat\theta-\tilde\theta)(\tilde\theta-\theta)]\bigr|
\;\le\;
\sqrt{\E[(\hat\theta-\tilde\theta)^2]}\cdot\sqrt{\E[(\tilde\theta-\theta)^2]},
\end{equation}
which follows from $|\E[UV]|\le \sqrt{\E[U^2]}\sqrt{\E[V^2]}$ with $U=\hat\theta-\tilde\theta$ and $V=\tilde\theta-\theta$. \QEDA

\subsection{General Case (Section~4.3)}
\label{app:proofs-sec34}

\begin{theorem}[Design optimization identity]
\label{thm:design-optimization}
Under Assumption~\ref{assump:general-reg}, the design-optimization component of the excess MSE decomposition \eqref{eq:excess-mse-decomp} satisfies
\begin{equation}
\label{eq:design-term-bound}
\E[(\tilde\theta-\theta)^2]-\E[(\theta^\star_n-\theta)^2]
=\frac{1}{n^2}\sum_{b=1}^B \gamma_b\,\E\!\left[L(\e^{(b)})-L(\e^\star)\right].
\end{equation}
\end{theorem}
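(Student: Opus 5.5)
We will prove the identity by writing both oracle estimators as normalized sums of martingale differences and computing their second moments exactly. By definition, $\tilde\theta$ uses the true nuisances together with the realized batch propensities, and $\theta_n^\star$ uses the true nuisances together with $\e^\star$. Hence
\[
\tilde\theta-\theta=\frac1n\sum_{b=1}^B\sum_{t\in\mathcal I_b}\IF_{\e^{(b)}}(\Oo_t),
\qquad
\theta_n^\star-\theta=\frac1n\sum_{t=1}^n\IF_{\e^\star}(\Oo_t),
\]
where $\IF_\e$ has the structure of Lemma~\ref{lem:eif-structure}. The true nuisances $g,v_a$ do not depend on $\e$; only the inverse weights $1/\e_a(\X)$ do. The first step is to check that $\xi_t:=\IF_{\e^{(b(t))}}(\Oo_t)$ is a martingale difference with respect to $\{\cF_t\}$. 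Conditional on $\cF_{t-1}$, the covariate $\X_t$ is a fresh draw from $P_\X$, so $\E[g(\X_t)-\theta\mid\cF_{t-1}]=0$. Given $(\X_t,\cF_{t-1})$, the arm $\A_t$ is drawn from the $\cH_b$-measurable $\e^{(b)}$, and outcome visibility gives $\E[v_a(\X_t,\Y_t)\mid \X_t,\A_t=a,\cF_{t-1}]=0$. Cross terms across $t\neq s$ therefore vanish, and
\[
\E[(\tilde\theta-\theta)^2]=\frac1{n^2}\sum_t\E\big[\E[\xi_t^2\mid\cF_{t-1}]\big].
\]

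The second step computes the conditional second moment. Within a unit, the cross term between $g(\X)-\theta$ and the inverse-weighted residual vanishes by $\E[v_a\mid\X,\A=a]=0$, and distinct arms are orthogonal because their indicators are disjoint. Since $\e^{(b)}$ is $\cH_b\subseteq\cF_{t-1}$-measurable, we get
\[
\E[\xi_t^2\mid\cF_{t-1}]=\Var(g(\X))+\E_\X\Big[\sum_a \tfrac{\e^{(b)}_a(\X)}{\e^{(b)}_a(\X)^2}m_a(\X)\Big]=\Var(g(\X))+L(\e^{(b)}).
\]
Here the expectation integrates over an independent copy $\X\sim P_\X$ with $\e^{(b)}$ held fixed, and $m_a$ is as in \eqref{eq:second-moment-def}. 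Summing over $t\in\mathcal I_b$ gives $\gamma_b\,\E[L(\e^{(b)})]$. The same computation applied to $\theta_n^\star$ gives $\frac1{n^2}\sum_b\gamma_b\{\Var(g)+L(\e^\star)\}$. Subtracting these, and using $\sum_b\gamma_b=n$ so that the $\Var(g)$ terms cancel, yields \eqref{eq:design-term-bound}.

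The main obstacle is bookkeeping with respect to cross-fitting and the fold structure. In Algorithm~\ref{alg:batched-cf}, the logging propensity of unit $t$ may be the fold-specific $\e^{(b,-s)}$. We handle this by reading $\e^{(b)}$ in \eqref{eq:design-term-bound} as the realized logging propensity of each unit, which is still $\cF_{t-1}$-measurable, so the martingale argument is unchanged. If the two folds use different propensities within a batch, then $\gamma_b L(\e^{(b)})$ becomes the corresponding fold-size-weighted sum. We also need square integrability, which follows from positivity and the uniform boundedness of $m_a$ in Assumption~\ref{assump:general-reg}. No orthogonality or remainder condition is needed, because both oracle estimators use the true nuisances.
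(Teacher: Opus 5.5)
Your proof is correct and follows essentially the same route as the paper's: both compute the second moment of the oracle score sum exactly, then use $\Var(\varphi_{e})=\Var(g(X))+L(e)$ so that the $e$-free $\Var(g(X))$ term cancels on subtraction. The only difference is bookkeeping. The paper conditions on the pre-batch history $\mathcal H_b$, uses within-batch conditional i.i.d.\ structure, and passes quickly over the cross-batch covariances; your per-unit martingale-difference argument with respect to $\{\mathcal F_t\}$ shows explicitly that those cross terms vanish.
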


\begin{proposition}[Per-batch design regret reduction]
\label{prop:design-regret-to-estimation}
In the setting of Theorem~\ref{thm:design-optimization}, with the plug-in Neyman allocation $\e^{(b+1)}(x)=\Phi(\hat m^{(b)}(x))$ based on second-moment proxies $\hat m^{(b)}$,
\begin{equation}
\label{eq:per-batch-design-regret}
\E\!\left[L(\e^{(b+1)})-L(\e^\star)\right]\;\le\; C\,\E\!\left[\sum_{a\in\cA}\bigl(\hat m_a^{(b)}(\X)-m_a(\X)\bigr)^2\right],
\end{equation}
for a constant $C$ depending only on $(\K,m_{\min},m_{\max})$. Summing over batches yields the aggregated bound
\begin{equation}
\label{eq:design-term-with-batch}
\E[(\tilde\theta-\theta)^2]-\E[(\theta^\star_n-\theta)^2]
\;\le\; \frac{\gamma_1}{n^2}\bigl(L(\e^{(1)})-L(\e^\star)\bigr)
+\frac{C}{n^2}\sum_{b=2}^B\gamma_b\,\E\!\left[\sum_a(\hat m_a^{(b-1)}(\X)-m_a(\X))^2\right].
\end{equation}
\end{proposition}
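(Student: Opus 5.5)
We prove Proposition~\ref{prop:design-regret-to-estimation} in two steps: a pointwise stability bound for the allocation map $\Phi$, followed by a quadratic expansion of $L$ around its minimizer $\e^\star$, and then we aggregate over batches via Theorem~\ref{thm:design-optimization}.

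\textbf{Setup.} Write the objective as $L(\e)=\E[\ell(\e(\X);m(\X))]$ with the pointwise loss
\[
\ell(e;m)=\sum_{a\in\cA} \frac{m_a}{e_a}
\]
on the floored simplex $\Delta_\varepsilon:=\{e\in\Delta(\cA):e_a\ge\varepsilon\}$. We work on the set $m\in[m_{\min},m_{\max}]^{\K}$ with $m_{\min}>0$, replacing $\hat m^{(b)}$ by its truncation to this box if needed. Truncation only brings $\hat m^{(b)}$ closer to $m$, so the right-hand side of \eqref{eq:per-batch-design-regret} is unaffected. Because the design class places no cross-$x$ constraints beyond the floor, $\e^\star(x)=\Phi(m(x))$ and $\e^{(b+1)}(x)=\Phi(\hat m^{(b)}(x))$ pointwise, where $\Phi(m)$ minimizes $\ell(\cdot;m)$ over $\Delta_\varepsilon$. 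The regret therefore reduces to controlling, for each $x$,
\[
\ell(\Phi(\hat m);m)-\ell(\Phi(m);m).
\]

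\textbf{Quadratic regret and Lipschitz allocation.} On $\Delta_\varepsilon$ the Hessian of $\ell(\cdot;m)$ is $\diag(2m_a/e_a^3)$, so $\ell(\cdot;m)$ is smooth with gradient Lipschitz constant $\lambda_{\max}\lesssim m_{\max}\varepsilon^{-3}$. It is also strongly convex with modulus $\lambda_{\min}\gtrsim m_{\min}$, since $e_a\le1$.

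By first-order optimality of $\Phi(m)$ over the convex set $\Delta_\varepsilon$, for any $e\in\Delta_\varepsilon$,
\[
\ell(e;m)-\ell(\Phi(m);m)\le \frac{\lambda_{\max}}{2}\norm{e-\Phi(m)}^2 .
\]
The linear term is dropped here because it is nonnegative and we only need an upper bound, and the Taylor remainder is bounded by the $\lambda_{\max}$ smoothness.

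Next, $\Phi$ is Lipschitz in $m$. We compare the variational inequalities for $\Phi(m)$ and $\Phi(\hat m)$ and use strong monotonicity of $\nabla_e\ell(\cdot;m)$. The gradient $\nabla_e\ell(e;m)=(-m_a/e_a^2)_a$ is Lipschitz in $m$ with constant $\varepsilon^{-2}$, which gives
\[
\norm{\Phi(\hat m)-\Phi(m)}\le \frac{\varepsilon^{-2}}{\lambda_{\min}}\norm{\hat m-m}.
\]
Alternatively, one can use the closed form $\Phi(m)_a\propto\max\{\varepsilon,\sqrt{m_a}\}$. This is a composition of Lipschitz maps on the box, since $\sqrt{\cdot}$ is Lipschitz on $[m_{\min},\infty)$ and the normalizing sum is bounded below.

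Combining the two bounds gives the pointwise estimate
\[
\ell(\Phi(\hat m(x));m(x))-\ell(\Phi(m(x));m(x))\le C\sum_a(\hat m_a(x)-m_a(x))^2 .
\]
Here $C=C(\K,m_{\min},m_{\max},\varepsilon)$. Integrating over $\X$, independent of the data used to build $\hat m^{(b)}$ (by sample splitting or cross-fitting), and then taking the outer expectation yields \eqref{eq:per-batch-design-regret}.

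\textbf{Aggregation and main obstacle.} Inserting \eqref{eq:per-batch-design-regret} into the identity \eqref{eq:design-term-bound} of Theorem~\ref{thm:design-optimization} proves \eqref{eq:design-term-with-batch}. We keep batch $1$ separately, since $\e^{(1)}$ is the fixed initialization, and shift the index $b\mapsto b-1$ for the remaining batches.

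The main obstacle is the step $\E[L(\e^{(b+1)})]=\E[\int\ell(\e^{(b+1)}(x);m(x))\,dP_\X]$. Because $\e^{(b+1)}$ is $\cH_{b+1}$-measurable, evaluating $L$ at a random design requires that the fresh covariate $\X$ be independent of $\hat m^{(b)}$. This holds by the batch structure, but it must be stated carefully.

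A second subtlety arises when the feasible class $\cE$ imposes cross-$x$ constraints. Then $\Phi$ is no longer pointwise, and we would instead apply the same strong-convexity and smoothness argument to $L$ on the Hilbert space $L_2(P_\X)^{\K}$, obtaining $\norm{\e^{(b+1)}-\e^\star}_{L_2}\lesssim\norm{\hat m-m}_{L_2}$.
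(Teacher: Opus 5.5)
\textbf{The quadratic-regret step fails as justified.} Your skeleton matches the paper's: a quadratic regret bound at the minimizer, a Lipschitz allocation map, integration conditional on $\cH_b$, and aggregation with batch~$1$ kept separate. The problem is the step that turns regret into a quadratic quantity. First-order optimality of $\Phi(m)$ over $\Delta_\varepsilon$ gives $\langle\nabla_e\ell(\Phi(m);m),e-\Phi(m)\rangle\ge 0$. In the Taylor identity
\[
\ell(e;m)-\ell(\Phi(m);m)=\langle\nabla_e\ell(\Phi(m);m),e-\Phi(m)\rangle+\tfrac12\,(e-\Phi(m))^\top\nabla_e^2\ell(\bar e;m)\,(e-\Phi(m)),
\]
discarding a nonnegative summand yields a \emph{lower} bound, not an upper bound. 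Your displayed inequality is in fact false on $\Delta_\varepsilon$ whenever the floor binds. Take $\K=2$ with $\Phi(m)=(\varepsilon,1-\varepsilon)$ and the floor active, and move to $e=(\varepsilon+t,1-\varepsilon-t)$. Then $\ell$ rises by $c\,t+O(t^2)$, where $c=m_2/(1-\varepsilon)^2-m_1/\varepsilon^2>0$ is the floor multiplier, and this is not $O(t^2)$. Second-order regret in $\hat m-m$ is the whole content of the proposition, so this step cannot be waved through.

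\textbf{How to repair it.} There are two ways to close the gap.
\begin{enumerate}
\item \emph{The paper's route (floor slack).} For $m\in[m_{\min},m_{\max}]^\K$, the unfloored Neyman allocation has every entry at least $\varepsilon_0:=\sqrt{m_{\min}}/(\K\sqrt{m_{\max}})$. So if the floor is at most $\varepsilon_0$, both $\Phi(m)$ and $\Phi(\hat m)$ are interior. KKT then gives $\nabla_e\ell(\Phi(m);m)=\lambda(x)\mathbf 1$, and $\langle\lambda(x)\mathbf 1,e-\Phi(m)\rangle=0$ for every $e$ in the simplex, since both points sum to one. The linear term vanishes identically; it is not merely nonnegative. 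The expansion is then purely quadratic, with Hessian at most $2m_{\max}/\varepsilon_0^3$ along the segment. This also gives $C$ depending only on $(\K,m_{\min},m_{\max})$, as stated; your $C$ carries an extra dependence on $\varepsilon$.
\item \emph{Keeping a binding floor.} Expand at $\hat e:=\Phi(\hat m)$ instead of at $\Phi(m)$. Convexity gives $\ell(\hat e;m)-\ell(\Phi(m);m)\le\langle\nabla_e\ell(\hat e;m),\hat e-\Phi(m)\rangle$. The variational inequality for $\hat e$ under $\hat m$ says $\langle\nabla_e\ell(\hat e;\hat m),\hat e-\Phi(m)\rangle\le 0$, so subtracting it only increases the right-hand side. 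The $\varepsilon^{-2}$-Lipschitz dependence of the gradient on $m$ then bounds the regret by $\varepsilon^{-2}\norm{\hat m-m}\,\norm{\hat e-\Phi(m)}$. This is quadratic by your variational-inequality Lipschitz bound.
\end{enumerate}

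\textbf{A smaller point.} Your alternative closed form $\propto\max\{\varepsilon,\sqrt{m_a}\}$ is not the minimizer over $\Delta_\varepsilon$ when the floor binds. It can be used interchangeably with $\Phi$ only in the slack regime, where both reduce to the unfloored Neyman allocation.
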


\subsubsection{Proof of the Design Optimization Identity (Theorem~\ref{thm:design-optimization})}
\label{app:proof-design-opt}

\emph{Strategy.} Within each batch the score contributions are conditionally i.i.d., so the conditional second moment of the per-batch contribution is exactly $\gamma_b/n^2$ times the EIF variance under that batch's design. Summing over batches expresses the cumulative MSE of the IPW oracle as a weighted average of $\Var(\IF_{\e^{(b)}})$, and the same identity for the deterministic oracle design $\e^\star$ produces the asserted regret formula by subtraction.

Within batch $b$, conditional on $\cH_b$, $\{\IF_{\e^{(b)}}(\Oo_i;\eta_0):i\in\mathcal I_b\}$ are i.i.d.\ with mean $0$ and variance $\Var(\IF_{\e^{(b)}}(\Oo;\eta_0))$.
Thus,
\[
\E\!\left[\left(\frac{1}{n}\sum_{i\in\mathcal I_b}\IF_{\e^{(b)}}(\Oo_i;\eta_0)\right)^2\middle|\cH_b\right]
=
\frac{\gamma_b}{n^2}\Var(\IF_{\e^{(b)}}(\Oo;\eta_0)).
\]
Summing over batches and applying the tower property:
\[
\E[(\tilde\theta-\theta)^2]
=\frac{1}{n^2}\sum_{b=1}^B \gamma_b\,\E\!\left[\Var(\IF_{\e^{(b)}}(\Oo;\eta_0))\right].
\]
The same calculation with $\e^\star$ (a deterministic design) gives
$\E[(\theta_n^\star-\theta)^2]=\frac{1}{n^2}\sum_{b=1}^B \gamma_b\,\Var(\IF_{\e^\star}(\Oo;\eta_0))$.
Subtracting, and using $\Var(\IF_\e(\Oo;\eta_0))=\Var(g(\X)-\theta)+L(\e)$ where
$\Var(g(\X)-\theta)$ does not depend on $\e$, gives \eqref{eq:design-term-bound}. \QEDA

\subsubsection{Proof of the Model Estimation Bound---General Case (Theorem~\ref{thm:model-estimation-general})}
\label{app:proof-model-general}

\emph{Strategy.} Cross-fitting makes the per-batch score-difference summands $\{\Delta_i:i\in\mathcal I_b\}$ conditionally i.i.d.\ given the batch history. A Cauchy--Schwarz reduction over batches and a standard mean--variance expansion at the within-batch level yield two contributions: a variance term linear in $\gamma_b\,d^2$ and a squared-bias term quadratic in $\gamma_b^2 d^4$. The first is controlled by the $L_2$-Lipschitz condition on $\psi$ (Assumption~G3b), the second by the second-order remainder bound \eqref{eq:remainder-mean}. Aggregating across batches gives \eqref{eq:model-estimation-bound-general}.

Define $\Delta_i:=\psi(\Oo_i;\hat\eta_{b(i)},\e^{(b(i))})-\psi(\Oo_i;\eta_0,\e^{(b(i))})$ so that
$\hat\theta-\tilde\theta=\frac{1}{n}\sum_{i=1}^n\Delta_i$.
Group by batch: $\sum_{i=1}^n\Delta_i=\sum_{b=1}^B S_b$ where $S_b:=\sum_{i\in\mathcal I_b}\Delta_i$.

\noindent\textbf{Step 1: Cauchy--Schwarz over batches.}
\[
\left(\sum_{b=1}^B S_b\right)^2
\le
B\sum_{b=1}^B S_b^2,
\]
hence $\E[(\hat\theta-\tilde\theta)^2]\le \frac{B}{n^2}\sum_{b=1}^B \E[S_b^2]$.

\noindent\textbf{Step 2: Batch-wise second-moment expansion.}
Fix batch $b$. Conditional on $\cH_b$, $\hat\eta_b$ and $\e^{(b)}$ are deterministic,
and $\{\Delta_i:i\in\mathcal I_b\}$ are i.i.d. Let $\Delta_{b,1}$ denote a representative copy. Then
\begin{equation}
\label{eq:app-batch-second-moment}
\E[S_b^2\mid \cH_b]
=
\gamma_b\,\Var(\Delta_{b,1}\mid \cH_b)
+
\gamma_b^2\left(\E[\Delta_{b,1}\mid \cH_b]\right)^2.
\end{equation}
\emph{Proof of \eqref{eq:app-batch-second-moment}.}
Expand $S_b^2=\sum_j\Delta_{b,j}^2+2\sum_{j<k}\Delta_{b,j}\Delta_{b,k}$.
Take $\E[\cdot\mid\cH_b]$. For $j\neq k$, conditional independence gives
$\E[\Delta_{b,j}\Delta_{b,k}\mid\cH_b]=\mu_b^2$
with $\mu_b:=\E[\Delta_{b,1}\mid\cH_b]$. Also $\E[\Delta_{b,j}^2\mid\cH_b]=\sigma_b^2+\mu_b^2$
with $\sigma_b^2:=\Var(\Delta_{b,1}\mid\cH_b)$. Therefore
$\E[S_b^2\mid\cH_b]=\gamma_b(\sigma_b^2+\mu_b^2)+\gamma_b(\gamma_b-1)\mu_b^2=\gamma_b\sigma_b^2+\gamma_b^2\mu_b^2$.

\noindent\textbf{Step 3: Bounding conditional moments via Assumption~G3b.}
Conditional on $\cH_b$, $\Oo_{b,1}$ is distributed as $P_0$ with assignment $\e^{(b)}$. Hence, by \eqref{eq:l2-lip},
\[
\E[\Delta_{b,1}^2\mid\cH_b]
\le C_2\,d(\hat\eta_b,\eta_0)^2,
\]
and by \eqref{eq:remainder-mean},
$|\E[\Delta_{b,1}\mid\cH_b]|\le C_{\mathrm{mean}}\,d(\hat\eta_b,\eta_0)^2$.

\noindent\textbf{Step 4: Combining.}
Using $\Var(\Delta_{b,1}\mid\cH_b)\le \E[\Delta_{b,1}^2\mid\cH_b]$ in \eqref{eq:app-batch-second-moment}:
\[
\E[S_b^2\mid\cH_b]
\le
\gamma_b C_2 d(\hat\eta_b,\eta_0)^2
+
\gamma_b^2 C_{\mathrm{mean}}^2 d(\hat\eta_b,\eta_0)^4.
\]
Taking expectations and summing over $b$:
\[
\E[(\hat\theta-\tilde\theta)^2]
\le
\frac{B}{n^2}\sum_{b=1}^B\E[S_b^2]
\le
\frac{B\,C_2}{n^2}\sum_{b=1}^B \gamma_b\,\E[d(\hat\eta_b,\eta_0)^2]
+
\frac{B\,C_{\mathrm{mean}}^2}{n^2}\sum_{b=1}^B \gamma_b^2\,\E[d(\hat\eta_b,\eta_0)^4],
\]
which is \eqref{eq:model-estimation-bound-general}. \QEDA

\subsubsection{Proof of the Design Regret Reduction (Proposition~\ref{prop:design-regret-to-estimation})}
\label{app:proof-design-general}

\emph{Strategy.} The pointwise design cost $F_x(e):=\sum_a m_a(x)/e_a$ is strongly smooth at its interior minimizer $\e^\star(x)$, so a quadratic Taylor bound translates pointwise design errors into pointwise regret. The Neyman allocation map $m\mapsto\Phi(m)$ is Lipschitz on the bounded second-moment set, so estimation errors in $\hat m$ propagate linearly into design errors. Combining the two yields the per-batch quadratic bound \eqref{eq:per-batch-design-regret}; aggregating over batches via Theorem~\ref{thm:design-optimization} gives \eqref{eq:design-term-with-batch}.

We prove the per-batch bound \eqref{eq:per-batch-design-regret}.
Recall that under Assumption~G4 the oracle design is
$\e^\star_a(x)=m_a(x)^{1/2}/\sum_{j\in\cA}m_j(x)^{1/2}$,
the plug-in design is
$\e^{(b+1)}_a(x)=\hat m_a^{(b)}(x)^{1/2}/\sum_{j\in\cA}\hat m_j^{(b)}(x)^{1/2}$,
and the design objective is $L(\e)=\E[\sum_{a\in\cA}m_a(\X)/\e_a(\X)]$.
Both $m_a(x)$ and $\hat m_a^{(b)}(x)$ lie in $[m_{\min},m_{\max}]$.

\noindent\textbf{Step 1: Smoothness of the design cost at the minimizer.}
For a fixed covariate value $x$, define the pointwise design cost
\[
F_x(e):=\sum_{a\in\cA}\frac{m_a(x)}{e_a},\qquad e\in\Delta^\K,\; e_a\ge\varepsilon_0,
\]
where $\varepsilon_0:=m_{\min}^{1/2}/(\K\,m_{\max}^{1/2})>0$ is the natural lower bound on $\e^\star_a(x)$ induced by $m_a\in[m_{\min},m_{\max}]$.
The minimizer of $F_x$ over the simplex $\Delta^\K$ is the Neyman allocation
$\e^\star(x)$, which is an interior point satisfying $\e^\star_a(x)\ge \varepsilon_0$ for all $a$.

By the KKT conditions at the interior minimizer,
\[
\frac{\partial F_x}{\partial e_a}\bigg|_{e=\e^\star(x)}=-\frac{m_a(x)}{\e^\star_a(x)^2}=\lambda(x),\qquad \forall\, a\in\cA,
\]
for a common Lagrange multiplier $\lambda(x)$.
Therefore, for any $e$ on the simplex,
\[
\langle \nabla F_x(\e^\star),\, e-\e^\star\rangle
=\lambda(x)\sum_{a\in\cA}(e_a-\e^\star_a)=\lambda(x)\cdot 0=0,
\]
since both $e$ and $\e^\star$ sum to $1$.
By Taylor expansion with integral remainder, for any $e$ with $e_a\ge \varepsilon_0$,
\[
F_x(e)-F_x(\e^\star)
=
\underbrace{\langle\nabla F_x(\e^\star),\,e-\e^\star\rangle}_{=\,0}
+
\frac{1}{2}(e-\e^\star)^\top \nabla^2 F_x(\bar e)\,(e-\e^\star)
\]
for some $\bar e$ on the segment $[\e^\star,e]$.
The Hessian of $F_x$ is diagonal: $\frac{\partial^2 F_x}{\partial e_a^2}=2m_a(x)/e_a^3$, with all cross-derivatives zero.
On the feasible set $e_a\ge\varepsilon_0$ with $m_a\in[m_{\min},m_{\max}]$,
\[
\frac{\partial^2 F_x}{\partial e_a^2}\le \frac{2m_{\max}}{\varepsilon_0^3}=:H_{\max}.
\]
Hence
\begin{equation}
\label{eq:smoothness-Fx}
F_x(e)-F_x(\e^\star)\le \frac{H_{\max}}{2}\|e-\e^\star\|_2^2.
\end{equation}

\noindent\textbf{Step 2: Lipschitz property of the Neyman allocation map.}
Define the Neyman allocation map $\Phi:[m_{\min},m_{\max}]^\K\to\Delta^\K$ by
\[
\Phi_a(m):=\frac{m_a^{1/2}}{\sum_{j\in\cA}m_j^{1/2}},\qquad a\in\cA.
\]
On the compact set $[m_{\min},m_{\max}]^\K$, $\Phi$ is continuously differentiable.
The partial derivative is
\[
\frac{\partial\Phi_a}{\partial m_a}
=\frac{1}{2m_a^{1/2}\sum_j m_j^{1/2}}-\frac{m_a^{1/2}}{2m_a^{1/2}\bigl(\sum_j m_j^{1/2}\bigr)^2}
=\frac{1}{2m_a^{1/2}\sum_j m_j^{1/2}}\left(1-\Phi_a(m)\right),
\]
and for $a'\neq a$,
\[
\frac{\partial\Phi_a}{\partial m_{a'}}
=-\frac{m_a^{1/2}}{2m_{a'}^{1/2}\bigl(\sum_j m_j^{1/2}\bigr)^2}
=-\frac{\Phi_a(m)}{2m_{a'}^{1/2}\sum_j m_j^{1/2}}.
\]
All partial derivatives are bounded on $[m_{\min},m_{\max}]^\K$.
Therefore $\Phi$ is Lipschitz: there exists $C_{\mathrm{Lip}}>0$, depending only on $(\K,m_{\min},m_{\max})$, such that
\begin{equation}
\label{eq:lip-neyman-map}
\|\Phi(\hat m)-\Phi(m)\|_2\le C_{\mathrm{Lip}}\,\|\hat m-m\|_2
\qquad \forall\, m,\hat m\in[m_{\min},m_{\max}]^\K.
\end{equation}

\noindent\textbf{Step 3: Combining.}
Since $\e^{(b+1)}(x)=\Phi(\hat m^{(b)}(x))$ and $\e^\star(x)=\Phi(m(x))$,
applying \eqref{eq:lip-neyman-map} pointwise gives
\[
\|\e^{(b+1)}(x)-\e^\star(x)\|_2
\le C_{\mathrm{Lip}}\,\|\hat m^{(b)}(x)-m(x)\|_2.
\]
Substituting into \eqref{eq:smoothness-Fx}:
\[
F_x(\e^{(b+1)})-F_x(\e^\star)
\le
\frac{H_{\max}}{2}\,\|\e^{(b+1)}(x)-\e^\star(x)\|_2^2
\le
\frac{H_{\max}\,C_{\mathrm{Lip}}^2}{2}\,\|\hat m^{(b)}(x)-m(x)\|_2^2.
\]
Taking the inner expectation over $\X$ (conditional on $\cH_b$, so that $\hat m^{(b)}$ is fixed):
\[
L(\e^{(b+1)})-L(\e^\star)
=
\E_\X\!\left[F_\X(\e^{(b+1)})-F_\X(\e^\star)\right]
\le
C\,\E_\X\!\left[\sum_{a\in\cA}\left(\hat m_a^{(b)}(\X)-m_a(\X)\right)^2\right],
\]
where $C:=H_{\max}\,C_{\mathrm{Lip}}^2/2$ depends only on $(\K,m_{\min},m_{\max})$.
Taking the outer expectation over the training data (which determines $\hat m^{(b)}$) yields
\eqref{eq:per-batch-design-regret}.

\noindent\textbf{Step 4: Aggregation over batches.}
By Theorem~\ref{thm:design-optimization}, the design term satisfies
\[
\E[(\tilde\theta-\theta)^2]-\E[(\theta^\star_n-\theta)^2]
=
\frac{1}{n^2}\sum_{b=1}^B \gamma_b\,\E\!\left[L(\e^{(b)})-L(\e^\star)\right].
\]
Since $\e^{(1)}$ is the deterministic initial design, its contribution is
$\frac{\gamma_1}{n^2}\bigl(L(\e^{(1)})-L(\e^\star)\bigr)$, which is a bounded constant
times $\gamma_1/n^2$.
For each $b\ge 2$, the design $\e^{(b)}$ is the plug-in Neyman allocation based on $\hat m^{(b-1)}$,
so \eqref{eq:per-batch-design-regret} gives
$\E[L(\e^{(b)})-L(\e^\star)]\le C\,\E[\sum_{a\in\cA}(\hat m_a^{(b-1)}(\X)-m_a(\X))^2]$.
Substituting:
\[
\frac{1}{n^2}\sum_{b=1}^B \gamma_b\,\E\!\left[L(\e^{(b)})-L(\e^\star)\right]
\le
\frac{\gamma_1}{n^2}\bigl(L(\e^{(1)})-L(\e^\star)\bigr)
+
\frac{C}{n^2}\sum_{b=2}^B \gamma_b\,
\E\!\left[\sum_{a\in\cA}\bigl(\hat m_a^{(b-1)}(\X)-m_a(\X)\bigr)^2\right],
\]
which is \eqref{eq:design-term-with-batch}.
Since $\gamma_1/n^2=o(1/n)$ for any schedule with $\gamma_1=o(n)$,
and $\sum_{b=2}^B \gamma_b/n^2\le 1/n$, both terms are $o(1/n)$ whenever
$\E[\sum_a(\hat m_a^{(b)}-m_a)^2]\to 0$ (consistency). \QEDA

\subsubsection{Proof of the Design Learning Bound (Theorem~\ref{thm:design-opt-general})}
\label{app:proof-design-opt-general}

The body's Theorem~\ref{thm:design-opt-general} is the aggregated form of the previous two results: the design-optimization identity translates the cumulative design regret into the average per-batch loss gap, and the per-batch regret reduction translates each loss gap into the second-moment estimation error of $\hat m^{(b-1)}$.

\noindent\textbf{Proof of Theorem~\ref{thm:design-opt-general}.}
By Theorem~\ref{thm:design-optimization} and Proposition~\ref{prop:design-regret-to-estimation}, the cumulative design regret satisfies the explicit-constant bound \eqref{eq:design-term-with-batch}, namely
\[
\E\!\left[(\tilde\theta-\theta)^2\right]-\E\!\left[(\theta^\star_n-\theta)^2\right]
\;\le\;
\frac{\gamma_1}{n^2}\bigl(L(\e^{(1)})-L(\e^\star)\bigr)
+\frac{C}{n^2}\sum_{b=2}^B \gamma_b\,
\E\!\left[\sum_{a\in\cA}\bigl(\hat m_a^{(b-1)}(\X)-m_a(\X)\bigr)^2\right].
\]
Under the standing batch schedule $\gamma_1=o(n)$ (so $\gamma_1/n^2=o(1/n)$) and a bounded initial design objective $L(\e^{(1)})\le L_{\max}$, the first summand is absorbed into the $\lesssim$ symbol. The constant $C$ in the second summand depends only on $(\K,m_{\min},m_{\max})$ and is likewise absorbed. This gives
\[
\E\!\left[(\tilde\theta-\theta)^2\right]-\E\!\left[(\theta^\star_n-\theta)^2\right]
\;\lesssim\;
\frac{1}{n^2}\sum_{b=2}^B \gamma_b\,
\E\!\left[\sum_{a\in\cA}\bigl(\hat m_a^{(b-1)}(\X)-m_a(\X)\bigr)^2\right],
\]
which is exactly the statement \eqref{eq:design-learning-general}. \QEDA

\subsubsection{Proof of the Main Achievability Result (Corollary~\ref{cor:main-general})}
\label{app:proof-main-general}

\emph{Strategy.} Combine the excess-MSE decomposition (Proposition~\ref{prop:excess-mse-decomp}) with the two preceding bounds: model estimation (Theorem~\ref{thm:model-estimation-general}) and design learning (Theorem~\ref{thm:design-opt-general}). The interaction term is controlled by Cauchy--Schwarz \eqref{eq:cross-cauchy}. The hypotheses---consistent second-moment proxies and a nuisance rate strictly faster than $n^{-1/4}$---are exactly what makes each term $o(1/n)$.

\noindent\textbf{Proof of Corollary~\ref{cor:main-general}.}
By Proposition~\ref{prop:excess-mse-decomp},
\begin{equation}
\label{eq:cor-main-decomp}
\E[(\hat\theta-\theta)^2]
=\E[(\theta_n^\star-\theta)^2]
+\underbrace{\E[(\hat\theta-\tilde\theta)^2]}_{\text{model}}
+\underbrace{\E[(\tilde\theta-\theta)^2]-\E[(\theta_n^\star-\theta)^2]}_{\text{design}}
+\underbrace{2\,\E[(\hat\theta-\tilde\theta)(\tilde\theta-\theta)]}_{\text{cross}}.
\end{equation}
We control the four terms in turn.

\noindent\textbf{(i) Oracle.}
The oracle estimator $\theta_n^\star=\theta+\frac{1}{n}\sum_{i=1}^n\IF_{\e^\star}(\Oo_i;\eta_0)$ uses the deterministic design $\e^\star$ and the true nuisance, so the summands are i.i.d.\ with mean zero and variance $V^\star=\E[\IF_{\e^\star}(\Oo;\eta_0)^2]$. Hence
\[
\E[(\theta_n^\star-\theta)^2]=\frac{V^\star}{n}.
\]

\noindent\textbf{(ii) Model term.}
By hypothesis, $d(\hat\eta_b,\eta_0)=O_P(n^{-\alpha})$ with $\alpha>1/4$, so $\E[d(\hat\eta_b,\eta_0)^2]\lesssim n^{-2\alpha}$ and $\E[d(\hat\eta_b,\eta_0)^4]\lesssim n^{-4\alpha}$. Substituting into \eqref{eq:model-estimation-bound-general},
\[
\E[(\hat\theta-\tilde\theta)^2]
\;\lesssim\;
\frac{B}{n^2}\sum_{b=1}^B \gamma_b\,n^{-2\alpha}
+\frac{B}{n^2}\sum_{b=1}^B \gamma_b^2\,n^{-4\alpha}
\;\le\;
\frac{B}{n}\,n^{-2\alpha}
+\frac{B^2 \gamma_B^2}{n^2}\,n^{-4\alpha}.
\]
With the geometric schedule $\gamma_B\asymp n$ and $B\asymp\log n$, the first term is $O(\log n\cdot n^{-1-2\alpha})=o(1/n)$ (any $\alpha>0$), and the second is $O(\log^2\!n\cdot n^{-4\alpha})$, which is $o(1/n)$ precisely when $4\alpha>1$, i.e., $\alpha>1/4$. Hence $\E[(\hat\theta-\tilde\theta)^2]=o(1/n)$.

\noindent\textbf{(iii) Design term.}
By hypothesis, the second-moment proxies are consistent, so $\epsilon_b:=\E[\sum_a(\hat m_a^{(b)}(\X)-m_a(\X))^2]\to 0$. Substituting into Theorem~\ref{thm:design-opt-general},
\[
\E[(\tilde\theta-\theta)^2]-\E[(\theta_n^\star-\theta)^2]
\;\lesssim\;
\frac{1}{n^2}\sum_{b=2}^B \gamma_b\,\epsilon_{b-1}
\;\le\;
\frac{\max_{b\ge 1}\epsilon_b}{n^2}\sum_{b=2}^B \gamma_b
\;\le\;
\frac{\max_b \epsilon_b}{n}
\;=\;o(1/n),
\]
using $\sum_{b=2}^B\gamma_b\le n$.

\noindent\textbf{(iv) Cross term.}
The realized i.i.d.-by-batch IPW score $\tilde\theta-\theta=\frac{1}{n}\sum_b\sum_{i\in\mathcal I_b}\IF_{\e^{(b)}}(\Oo_i;\eta_0)$ has, conditional on $\cH_b$, mean zero and per-batch variance $V_{\e^{(b)}}/\gamma_b\le V_{\max}/\gamma_b$, with $V_{\max}:=\sup_{\e\in\cE,\ \e_a\ge\varepsilon}V_\e<\infty$ by the propensity-floor and bounded-second-moment parts of Assumption~\ref{assump:general-reg}. Summing the conditional variances by the tower property gives
\[
\E[(\tilde\theta-\theta)^2]
=\frac{1}{n^2}\sum_{b=1}^B \gamma_b\,\E[V_{\e^{(b)}}]
\;\le\;\frac{V_{\max}}{n}.
\]
Combining with (ii) and the Cauchy--Schwarz bound \eqref{eq:cross-cauchy},
\[
\bigl|2\,\E[(\hat\theta-\tilde\theta)(\tilde\theta-\theta)]\bigr|
\;\le\;
2\sqrt{\E[(\hat\theta-\tilde\theta)^2]}\cdot\sqrt{\E[(\tilde\theta-\theta)^2]}
\;\le\;
2\sqrt{o(1/n)\cdot O(1/n)}
\;=\;o(1/n).
\]

\noindent\textbf{Conclusion.}
Substituting (i)--(iv) into \eqref{eq:cor-main-decomp},
\[
\E[(\hat\theta-\theta)^2]
=\frac{V^\star}{n}+o(1/n)+o(1/n)+o(1/n)
=\frac{V^\star}{n}+o(1/n).
\]
\QEDA

\subsection{Linear-Functional Case (Section~4.4)}
\label{app:proofs-sec35}

\subsubsection{Proof of Global Unbiasedness (Proposition~\ref{prop:linear-unbiased})}
\label{app:proof-global-unbiased}

Fix $x$. Expand the conditional expectation:
\[
\E[\psi(\Oo;\tilde\mu,\e)\mid \X=x]
=
\sum_{a\in\cA}\omega_a(x)\tilde\mu_a(x)
+
\sum_{a\in\cA}\omega_a(x)\,\E\!\left[\frac{\1\{\A=a\}}{\e_a(x)}\bigl(\Y-\tilde\mu_a(x)\bigr)\middle|\X=x\right].
\]
For each $a$, apply iterated expectation conditioning on $\A$:
\[
\E\!\left[\frac{\1\{\A=a\}}{\e_a(x)}\bigl(\Y-\tilde\mu_a(x)\bigr)\middle|\X=x\right]
=
\frac{1}{\e_a(x)}\PP(\A=a\mid \X=x)\,\E[\Y-\tilde\mu_a(x)\mid \X=x,\A=a].
\]
Since $\PP(\A=a\mid \X=x)=\e_a(x)$ and $\E[\Y\mid \X=x,\A=a]=\mu_a(x)$, the above equals
$\mu_a(x)-\tilde\mu_a(x)$.
Substituting back:
\[
\E[\psi(\Oo;\tilde\mu,\e)\mid \X=x]
=
\sum_{a\in\cA}\omega_a(x)\tilde\mu_a(x)
+
\sum_{a\in\cA}\omega_a(x)\bigl(\mu_a(x)-\tilde\mu_a(x)\bigr)
=
\sum_{a\in\cA}\omega_a(x)\mu_a(x).
\]
Taking expectation over $\X$ yields $\E[\psi(\Oo;\tilde\mu,\e)]=\theta$. \QEDA

\subsubsection{Proof of Cross-Term Cancellation (Proposition~\ref{prop:linear-unbiased})}
\label{app:proof-cross-term-zero-linear}

Write
\[
\Delta_1=\hat\theta-\tilde\theta=\frac{1}{n}\sum_{b=1}^B\sum_{i\in\mathcal I_b}\Delta_{b,i},
\qquad
\Delta_{b,i}:=\psi(\Oo_i;\hat\mu_b,\e^{(b)})-\psi(\Oo_i;\mu,\e^{(b)}),
\]
and
\[
\Delta_2=\tilde\theta-\theta=\frac{1}{n}\sum_{b=1}^B\sum_{i\in\mathcal I_b}\Gamma_{b,i},
\qquad
\Gamma_{b,i}:=\psi(\Oo_i;\mu,\e^{(b)})-\theta.
\]
Under two-fold cross-fitting, $\hat\mu_b$ is constructed from the opposite fold and is independent of
the evaluation observations $\{\Oo_i:i\in\mathcal I_b\}$ conditional on $\cH_b$.
By Proposition~\ref{prop:linear-unbiased}, for each $i\in\mathcal I_b$,
\[
\E[\Delta_{b,i}\mid \cH_b]=0.
\]
Moreover, $\Gamma_{b,i}$ is $\sigma(\cH_b,\Oo_i)$-measurable. Therefore
\[
\E[\Delta_{b,i}\Gamma_{b,j}\mid \cH_b]
=\E\!\left[\E[\Delta_{b,i}\mid \cH_b,\Oo_j]\Gamma_{b,j}\middle|\cH_b\right]=0
\]
for all $i,j$ in batch $b$, and similarly across different batches by tower property.
Hence
\[
\E[\Delta_1\Delta_2]
=\frac{1}{n^2}\sum_{b,b'}\sum_{i\in\mathcal I_b}\sum_{j\in\mathcal I_{b'}}
\E[\Delta_{b,i}\Gamma_{b',j}]
=0.
\]
So the cross term is exactly zero: $\E[(\hat\theta-\tilde\theta)(\tilde\theta-\theta)]=0$. \QEDA

\subsubsection{Proof of the Model Estimation Bound---Linear Case (Theorem~\ref{thm:linear-bounds})}
\label{app:proof-model-linear}

\emph{Strategy.} The exact global unbiasedness of the linear-functional EIF (Proposition~\ref{prop:linear-unbiased}) makes the conditional mean of each per-batch summand identically zero, so the squared-bias contribution disappears entirely. What remains is the variance term, which the IPW representation bounds by a constant multiple of $\E[\sum_a \omega_a^2(\hat\mu_{b,a}-\mu_a)^2]$. The propensity floor $\e^{(b)}_a\ge\varepsilon$ controls the IPW factor, and assumption \eqref{eq:linear-mu-rate-assump} converts the nuisance error into the rate $n^{-2\alpha}$. This gives the sharp rate \eqref{eq:linear-model-estimation} without the $n^{-1/4}$ bottleneck of the general case.

Define $\Delta_i:=\psi(\Oo_i;\hat\mu_{b(i)},\e^{(b(i))})-\psi(\Oo_i;\mu,\e^{(b(i))})$ so that
$\hat\theta-\tilde\theta=\frac{1}{n}\sum_{i=1}^n\Delta_i$.
Group by batch: $\hat\theta-\tilde\theta=\frac{1}{n}\sum_{b=1}^B S_b$ where $S_b:=\sum_{i\in\mathcal I_b}\Delta_i$.

\noindent\textbf{Step 1: Cauchy--Schwarz over batches.}
By Jensen's inequality (or Cauchy--Schwarz),
$\bigl(\sum_{b=1}^B S_b\bigr)^2\le B\sum_{b=1}^B S_b^2$,
hence
\[
\E[(\hat\theta-\tilde\theta)^2]
\le \frac{B}{n^2}\sum_{b=1}^B \E[S_b^2].
\]

\noindent\textbf{Step 2: Batch-wise second-moment expansion.}
Fix batch $b$ and condition on $\cH_b$.
Under cross-fitting, $\hat\mu_b$ is $\cH_b$-measurable, and
$\{\Delta_i:i\in\mathcal I_b\}$ are conditionally i.i.d.
Let $\Delta_{b,1}$ denote a representative copy with
$\mu_b:=\E[\Delta_{b,1}\mid\cH_b]$ and $\sigma_b^2:=\Var(\Delta_{b,1}\mid\cH_b)$.
Expanding $S_b^2=\sum_j\Delta_{b,j}^2+2\sum_{j<k}\Delta_{b,j}\Delta_{b,k}$ and
taking $\E[\cdot\mid\cH_b]$ gives
\[
\E[S_b^2\mid\cH_b]=\gamma_b\,\sigma_b^2+\gamma_b^2\,\mu_b^2.
\]
By Proposition~\ref{prop:linear-unbiased} with $\tilde\mu=\hat\mu_b$, we have
$\E[\psi(\Oo;\hat\mu_b,\e^{(b)})\mid\cH_b]=\theta$ and
$\E[\psi(\Oo;\mu,\e^{(b)})\mid\cH_b]=\theta$, hence
$\mu_b=\E[\Delta_{b,1}\mid\cH_b]=0$.
Therefore
$\E[S_b^2\mid\cH_b]=\gamma_b\,\Var(\Delta_{b,1}\mid\cH_b)\le \gamma_b\,\E[\Delta_{b,1}^2\mid\cH_b]$.

\noindent\textbf{Step 3: Bounding the conditional second moment.}

Now expand $\Delta_{b,1}$ using \eqref{eq:linear-mu-eif}:
\[
\Delta_{b,1}
=
\sum_{a\in\cA}\omega_a(\X)\bigl(\hat\mu_{b,a}(\X)-\mu_a(\X)\bigr)
-
\sum_{a\in\cA}\frac{\1\{\A=a\}}{\e^{(b)}_a(\X)}\,\omega_a(\X)\bigl(\hat\mu_{b,a}(\X)-\mu_a(\X)\bigr).
\]
Apply $(u-v)^2\le 2u^2+2v^2$ with
$u=\sum_a\omega_a(\X)(\hat\mu_{b,a}-\mu_a)$ and
$v=\sum_a \frac{\1\{\A=a\}}{\e^{(b)}_a(\X)}\omega_a(\X)(\hat\mu_{b,a}-\mu_a)$:
\[
\E[\Delta_{b,1}^2\mid\cH_b]
\le
2\,\E[u^2\mid\cH_b]+2\,\E[v^2\mid\cH_b].
\]
For the first term, by Cauchy--Schwarz in $\R^\K$:
\[
u^2 \le \K \sum_{a\in\cA}\omega_a(\X)^2(\hat\mu_{b,a}(\X)-\mu_a(\X))^2.
\]
For the second term, since only one arm is selected:
\[
v^2
=
\left(\frac{\omega_{\A}(\X)}{\e^{(b)}_{\A}(\X)}(\hat\mu_{b,\A}(\X)-\mu_{\A}(\X))\right)^2
\le
\frac{1}{\varepsilon^2}\sum_{a\in\cA}\1\{\A=a\}\,\omega_a(\X)^2(\hat\mu_{b,a}(\X)-\mu_a(\X))^2.
\]
Taking $\E[\cdot\mid\cH_b]$ and using $\E[\1\{\A=a\}\mid \X,\cH_b]=\e^{(b)}_a(\X)\le 1$:
\[
\E[\Delta_{b,1}^2\mid\cH_b]
\le
C\;\E\!\left[\sum_{a\in\cA}\omega_a(\X)^2(\hat\mu_{b,a}(\X)-\mu_a(\X))^2\middle|\cH_b\right],
\]
for $C=2\K+2/\varepsilon^2$.
Taking expectations and applying \eqref{eq:linear-mu-rate-assump} gives $\E[\Delta_{b,1}^2]\le C C_\mu n^{-2\alpha}$.
Therefore $\E[S_b^2]\le \gamma_b C C_\mu n^{-2\alpha}$, hence
\[
\E[(\hat\theta-\tilde\theta)^2]
\le
\frac{B}{n^2}\sum_{b=1}^B \gamma_b\, C C_\mu n^{-2\alpha}
=
\frac{B\,C C_\mu}{n}\,n^{-2\alpha},
\]
which is \eqref{eq:linear-model-estimation}. \QEDA

\subsubsection{Proof of the Design Suboptimality Bound (Theorem~\ref{thm:linear-bounds})}
\label{app:proof-design-linear}

\emph{Strategy.} In the linear-functional case the relevant ``effective allocation weight'' is $r_a(x):=|\omega_a(x)|\sigma_a(x)$, and the oracle design $\e^\star=\Phi(r)$ is its Neyman allocation. The design cost $G(u,r)=\sum_a u_a^2/\Phi_a(r)$ is Lipschitz in its second argument on the bounded simplex, so $L(\e^{(b+1)})-L(\e^\star)\lesssim\|\hat r^{(b)}-r\|_2$. The Lipschitz square-root then converts the $\sigma^2$-estimation rate \eqref{eq:linear-mu-rate-assump} into a design-regret rate, and aggregating across batches yields the $o(1/n)$ design contribution \eqref{eq:design-term-linear-batch}.

We work in the linear-functional case where $v_a(\mu)(x,y)=\omega_a(x)(y-\mu_a(x))$ and the
design objective is $L(\e)=\E[\sum_a \omega_a(\X)^2\sigma_a^2(\X)/\e_a(\X)]$.
Define the effective allocation weight
\[
r_a(x):=|\omega_a(x)|\,\sigma_a(x),\qquad a\in\cA,
\]
where $\sigma_a(x)=\sqrt{\sigma_a^2(x)}=\sqrt{\Var(\Y\mid \X=x,\A=a)}$.
Then the oracle design \eqref{eq:opt-prop} is
$\e^\star_a(x)=r_a(x)/\sum_j r_j(x)$,
and the plug-in design is $\e^{(b+1)}_a(x)=\hat r_a^{(b)}(x)/\sum_j \hat r_j^{(b)}(x)$
where $\hat r_a^{(b)}(x):=|\omega_a(x)|\hat\sigma_a^{(b)}(x)$.

For any candidate vector $r\in\R_+^\K$ with $r_a\ge r_{\min}:=\|\omega\|_\infty^{-1}\cdot\sigma_{\min}$, define the
Neyman allocation map $\Phi_a(r):=r_a/\sum_j r_j$ and the design cost
\[
G(u,r):=\sum_{a=1}^\K \frac{u_a^2}{\Phi_a(r)},
\]
so that $L(\e^\star)=\E[G(r(\X),r(\X))]$ where $r(\X)=(r_1(\X),\dots,r_\K(\X))$,
and $L(\e^{(b+1)})=\E[G(r(\X),\hat r^{(b)}(\X))]$.

Because $\sigma_{\min}\le \sigma_a,\hat\sigma_a^{(b)}\le \sigma_{\max}$ on the clipped event,
and $\omega_a$ is bounded, the map $G$ is smooth on the compact set
$[r_{\min},r_{\max}]^\K\times[r_{\min},r_{\max}]^\K$ where $r_{\max}:=\|\omega\|_\infty\sigma_{\max}$.
Hence $G$ is Lipschitz in the second argument:
\[
|G(u,r)-G(u,r')|\le C_1\|r-r'\|_2,
\]
for a constant $C_1$ depending only on $(\K,r_{\min},r_{\max})$.
Applying this with $u=r(\X)$, $r=\hat r^{(b)}(\X)$, $r'=r(\X)$ yields
\[
L(\e^{(b+1)})-L(\e^\star)
\le
C_1\,\E\!\left[\|\hat r^{(b)}(\X)-r(\X)\|_2\right]
\le
C_1\,\sqrt{\E\!\left[\|\hat r^{(b)}(\X)-r(\X)\|_2^2\right]}.
\]
Since $\omega_a$ is known, we have
$|\hat r_a^{(b)}-r_a|=|\omega_a|\cdot|\hat\sigma_a^{(b)}-\sigma_a|$.
By the Lipschitz property of the square root on $[\sigma_{\min}^2,\sigma_{\max}^2]$:
$|\hat\sigma_a^{(b)}-\sigma_a|\le (2\sigma_{\min})^{-1}|\hat\sigma_a^{2,(b)}-\sigma_a^2|$.
Therefore
\[
\E\!\left[\|\hat r^{(b)}(\X)-r(\X)\|_2^2\right]
\le
\frac{\|\omega\|_\infty^2}{4\sigma_{\min}^2}\,
\E\!\left[\sum_{a=1}^\K\left(\hat\sigma_a^{2,(b)}(\X)-\sigma_a^2(\X)\right)^2\right]
\le
C\,C_\sigma\,n^{-2\beta}
\]
by Assumption~L3 \eqref{eq:linear-mu-rate-assump}.
Combining the displays gives
\begin{equation}
\label{eq:design-regret-quadratic}
\E\!\left[L(\e^{(b)})-L(\e^\star)\right]
\le
C\,C_\sigma\,\gamma_{b-1}^{-2\beta},
\end{equation}
where $C$ depends only on $(\K,\sigma_{\min},\sigma_{\max},\|\omega\|_\infty)$.

\noindent\textbf{Aggregation over batches.}
By Theorem~\ref{thm:design-optimization}, the design term satisfies
$\E[(\tilde\theta-\theta)^2]-\E[(\theta^\star_n-\theta)^2]
=\frac{1}{n^2}\sum_{b=1}^B \gamma_b\,\E[L(\e^{(b)})-L(\e^\star)]$.
The first batch contributes $\frac{\gamma_1}{n^2}(L(\e^{(1)})-L(\e^\star))$, which is $o(1/n)$
for $\gamma_1=o(n)$.
For each $b\ge 2$, applying \eqref{eq:design-regret-quadratic} gives
$\E[L(\e^{(b)})-L(\e^\star)]\le C\,C_\sigma\,\gamma_{b-1}^{-2\beta}$.
Hence
\[
\frac{1}{n^2}\sum_{b=2}^B \gamma_b\,\E[L(\e^{(b)})-L(\e^\star)]
\le
\frac{C\,C_\sigma}{n^2}\sum_{b=2}^B \gamma_b\,\gamma_{b-1}^{-2\beta}.
\]
For the geometric batch schedule $\gamma_b\asymp n^{b/B}$, the dominant term in the sum
is at $b=B$, giving $\gamma_B\,\gamma_{B-1}^{-2\beta}\asymp n\cdot n^{-2\beta(B-1)/B}=n^{1-2\beta(1-1/B)}$.
Thus the sum is $O\bigl(n^{1-2\beta(1-1/B)}\bigr)$, and the full design term satisfies
\begin{equation}
\label{eq:design-term-linear-batch}
\E[(\tilde\theta-\theta)^2]-\E[(\theta^\star_n-\theta)^2]
\;=\; O\bigl(n^{-1-2\beta(1-1/B)}\bigr)\;=\; o(1/n)\qquad\text{for any }\beta>0.
\end{equation}
\QEDA

\subsection{Convergence-Rate Lower Bound (Section~4.4)}
\label{app:proofs-sec36}

\subsubsection{Proof of the Linear-Case Convergence Lower Bound (Theorem~\ref{thm:linear-rate-lb})}
\label{app:proof-linear-rate-lb}

We prove the theorem through two explicit Assouad constructions:
\begin{itemize}[leftmargin=*]
\item \textbf{Example I (fixed variance, varying mean):} all arm variances are identical and fixed, while one arm's conditional mean varies over a hypercube.
\item \textbf{Example II (fixed mean, varying variance):} all conditional means are identical and fixed, while one arm's variance varies over a hypercube, which forces $e^\star$ to vary.
\end{itemize}
All constructions respect non-anticipation in Assumption~G1.

\begin{lemma}[Adaptive KL chain rule]
\label{lem:adaptive-kl-chain}
Fix any non-anticipating experiment $\ALG$. Let $P$ and $Q$ be two environments with the same covariate marginal $P_X$ and the same assignment kernels induced by $\ALG$. Denote by $P^{\ALG}$ and $Q^{\ALG}$ the induced trajectory laws of $F_n=(\X_{1:n},\A_{1:n},\Y_{1:n})$. Then
\[
\mathrm{KL}(P^{\ALG}\,\|\,Q^{\ALG})
=
\sum_{t=1}^n
\E_{P^{\ALG}}\!\left[
\mathrm{KL}\!\left(P(\Y_t\mid \X_t,\A_t)\,\|\,Q(\Y_t\mid \X_t,\A_t)\right)
\right].
\]
\end{lemma}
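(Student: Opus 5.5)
The plan is to factor the trajectory density of $F_n=(\X_{1:n},\A_{1:n},\Y_{1:n})$ sequentially along the assignment order and then take the expected log-likelihood ratio. Under either environment $R\in\{P,Q\}$ and the fixed design $\ALG$, the joint density with respect to a common dominating measure is
\[
\prod_{t=1}^n p_{\X}(\X_t)\,\e_t(\A_t\mid \X_t,\cF_{t-1})\,r(\Y_t\mid \X_t,\A_t).
\]
Three facts give this form:
\begin{enumerate}[label=(\roman*),leftmargin=*]
\item the latent units are i.i.d., so $\X_t$ is independent of $\cF_{t-1}$ and has law $P_X$;
\item non-anticipation (Definition~\ref{def:non-anticipating}) gives the assignment kernel $\e_t(\cdot\mid \X_t,\cF_{t-1})$;
\item outcome visibility gives the law of $\Y_t$ given $(\X_t,\A_t,\cF_{t-1})$ as $r(\cdot\mid \X_t,\A_t)$.
\end{enumerate}
If $\cF_0$ contains extra exogenous randomization, that randomization enters as a common factor, and the same argument applies.

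Next, form the log-likelihood ratio $\log(dP^{\ALG}/dQ^{\ALG})$. The covariate factors cancel because the covariate marginal $P_X$ is shared. The assignment-kernel factors also cancel. This is the point to be careful about. The kernel $\e_t$ is the same measurable map under both environments, and it is evaluated at the realized history $\cF_{t-1}$ in both densities. The history-dependence therefore cancels pointwise, even though the history's law differs between $P$ and $Q$. What remains is
\[
\log\frac{dP^{\ALG}}{dQ^{\ALG}}=\sum_{t=1}^n \log\frac{p(\Y_t\mid \X_t,\A_t)}{q(\Y_t\mid \X_t,\A_t)}.
\]

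Now take the expectation under $P^{\ALG}$ and condition each summand on $\sigma(\cF_{t-1},\X_t,\A_t)$. By outcome visibility under $P$, the conditional law of $\Y_t$ is $P(\cdot\mid \X_t,\A_t)$. The conditional expectation of the $t$-th log-ratio is therefore exactly $\mathrm{KL}(P(\Y_t\mid \X_t,\A_t)\,\|\,Q(\Y_t\mid \X_t,\A_t))$. Applying the tower property and summing over $t$ gives the claimed identity.

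Two technical points need handling.
\begin{itemize}[leftmargin=*]
\item \emph{Infinite KL.} If some conditional KL is infinite on a set of positive $P^{\ALG}$-probability, then absolute continuity fails or the sum diverges, and both sides are $+\infty$. Otherwise $P^{\ALG}\ll Q^{\ALG}$ holds by the factorization.
\item \emph{Integrability.} The exchange of expectation and sum needs care. I would use the standard device of splitting each log-ratio into positive and negative parts; each conditional KL is nonnegative, which justifies the tower property. Alternatively, I would invoke the chain rule for KL divergence applied successively to the conditionals of $\X_t$, $\A_t$, $\Y_t$ given the past. The $\X_t$ and $\A_t$ conditionals coincide under $P$ and $Q$, so they contribute zero divergence.
\end{itemize}

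I expect the main obstacle to be the cancellation of the assignment terms. The step must be stated so that it is clear that identical kernels cancel despite adaptive dependence on outcomes, whose laws differ across the two environments. The KL chain rule formulation makes this transparent: the conditional divergence of $\A_t$ given $(\cF_{t-1},\X_t)$ is zero, because both environments use the same kernel evaluated at the same conditioning argument.
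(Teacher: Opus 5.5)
Your proposal is correct and is essentially the paper's argument. You factor the trajectory density, cancel the shared covariate factor and the identical assignment kernels pointwise, and take the $P^{\ALG}$-expectation of the remaining outcome log-ratios; your conditioning on $\sigma(\cF_{t-1},\X_t,\A_t)$ and your integrability and infinite-KL remarks simply spell out the paper's one-line final step. One small repair: your justification (i), that $\X_t$ is independent of $\cF_{t-1}$, fails for offline designs with $\cF_0\supseteq\sigma(\X_{1:n})$. The factorization, and hence your proof, goes through unchanged if you write the covariate factor jointly as $p_X(x_{1:n})$ ahead of the sequential assignment and outcome factors, as the paper does.
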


\noindent\textbf{Proof.}
Write the trajectory density by chain rule:
\[
p(f_n)=p_X(x_{1:n})\prod_{t=1}^n
\pi_t(a_t\mid x_t,f_{t-1})\,p(y_t\mid x_t,a_t),
\]
and similarly for $q(f_n)$. Since $\ALG$ is fixed, the assignment kernels cancel in $\log\frac{p(f_n)}{q(f_n)}$. Taking expectation under $P^{\ALG}$ yields the stated identity. \QEDA

\begin{lemma}[Joint convexity of KL for mixtures]
\label{lem:kl-joint-convex}
Let $\{P_i\}_{i\in I}$ and $\{Q_i\}_{i\in I}$ be distributions on a common measurable space, and $\{\alpha_i\}_{i\in I}$ satisfy $\alpha_i\ge 0$, $\sum_i\alpha_i=1$. Define $\bar P=\sum_i\alpha_i P_i$ and $\bar Q=\sum_i\alpha_i Q_i$. Then
\[
\mathrm{KL}(\bar P\,\|\,\bar Q)\le \sum_i\alpha_i\,\mathrm{KL}(P_i\,\|\,Q_i).
\]
\end{lemma}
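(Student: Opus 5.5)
The plan is to derive joint convexity from the log-sum inequality, applied pointwise to densities. First fix a common dominating measure. Take $\nu:=\sum_i\alpha_i(P_i+Q_i)$ when $I$ is countable; in general use any $\sigma$-finite measure dominating all the $P_i,Q_i$. Write $p_i=dP_i/d\nu$ and $q_i=dQ_i/d\nu$. Then $\bar P$ and $\bar Q$ have densities $\bar p=\sum_i\alpha_i p_i$ and $\bar q=\sum_i\alpha_i q_i$ with respect to $\nu$. We may assume the right-hand side is finite, since otherwise there is nothing to prove. In that case $P_i\ll Q_i$ for every $i$ with $\alpha_i>0$, and hence $\bar P\ll\bar Q$.

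The key step is the pointwise log-sum inequality. For nonnegative numbers $a_i=\alpha_i p_i(z)$ and $b_i=\alpha_i q_i(z)$ it states
\[
\Bigl(\sum_i a_i\Bigr)\log\frac{\sum_i a_i}{\sum_i b_i}\le\sum_i a_i\log\frac{a_i}{b_i},
\]
with the conventions $0\log(0/b)=0$ and $a\log(a/0)=+\infty$ for $a>0$. I would prove it from convexity of $f(t)=t\log t$ via Jensen's inequality. Set $\lambda_i=b_i/\sum_j b_j$ and $t_i=a_i/b_i$. Then
\[
\sum_i\lambda_i f(t_i)\ge f\Bigl(\sum_i\lambda_i t_i\Bigr),
\]
and multiplying through by $\sum_j b_j$ gives the inequality. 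Terms with $b_i=0$ and $a_i=0$ are dropped. At this point the factors $\alpha_i$ cancel inside the logarithms on the right-hand side, leaving $\sum_i\alpha_i p_i\log(p_i/q_i)$.

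Next, integrate the pointwise inequality against $\nu$. This gives
\[
\mathrm{KL}(\bar P\,\|\,\bar Q)=\int\bar p\log\frac{\bar p}{\bar q}\,d\nu\le\int\sum_i\alpha_i p_i\log\frac{p_i}{q_i}\,d\nu .
\]
To swap the sum and the integral, split each integrand $p_i\log(p_i/q_i)$ into its positive and negative parts. The negative part is bounded in absolute value by $q_i/e$, which has integral at most $1/e$, so it is integrable. Tonelli's theorem then applies to the positive parts, and dominated convergence, with the bound $\sum_i\alpha_i q_i/e$, applies to the negative parts. The same splitting justifies that the left-hand integral is well defined. The result is $\sum_i\alpha_i\,\mathrm{KL}(P_i\,\|\,Q_i)$.

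The main obstacle is measure-theoretic rather than analytic: if $I$ is uncountable, the mixtures have to be understood as integrals against a mixing measure $\alpha$. The same argument goes through if Jensen's inequality is applied with respect to the probability measure proportional to $\alpha(di)\,q_i(z)$, provided $i\mapsto p_i(z),q_i(z)$ is jointly measurable. In the Assouad application only finitely many hypotheses occur, so the finite case suffices. An alternative route is the Donsker--Varadhan (Legendre) representation of KL as a supremum of linear functionals in $(P,Q)$, which gives joint convexity immediately. I would mention it as a backup but use the log-sum proof.
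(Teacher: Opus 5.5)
Your proposal is correct and follows the paper's route: the pointwise log-sum inequality for densities with respect to a common dominating measure, followed by integration. You supply the measure-theoretic details the paper's one-line proof leaves implicit, namely the reduction to a finite right-hand side, absolute continuity $\bar P\ll\bar Q$, and the lower bound $p_i\log(p_i/q_i)\ge -q_i/e$ that justifies exchanging sum and integral. One small correction: the uncountable-$I$ case is not an obstacle, and your fallback of a $\sigma$-finite measure dominating all $P_i,Q_i$ may not exist for uncountable families. You don't need it, since $\alpha_i\ge 0$ with $\sum_i\alpha_i=1$ forces only countably many nonzero $\alpha_i$, so you can restrict to that subfamily, where $\nu=\sum_i\alpha_i(P_i+Q_i)$ dominates everything needed.
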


\noindent\textbf{Proof.}
This is the log-sum inequality, equivalently joint convexity of $(u,v)\mapsto u\log(u/v)$ after integrating with respect to a common dominating measure. \QEDA

\begin{lemma}[Assouad reduction via one-bit testing]
\label{lem:assouad-bit}
Let $\{P_\omega:\omega\in\{-1,+1\}^m\}$ be a hypercube family, and let $\omega^{(k)}$ denote the vector with bit $k$ flipped. Define
\[
\bar P_{k,+}:=2^{-(m-1)}\!\!\sum_{\omega:\omega_k=+1} P_\omega,\qquad
\bar P_{k,-}:=2^{-(m-1)}\!\!\sum_{\omega:\omega_k=-1} P_\omega.
\]
If $\sup_{\omega,k}\mathrm{KL}(P_\omega\|P_{\omega^{(k)}})\le \kappa$, then for every decoder $\hat\omega$ based on the trajectory,
\[
\sup_{\omega}\E_\omega[d_H(\hat\omega,\omega)]\ge c_0 m,
\]
where $c_0>0$ depends only on $\kappa$.
\end{lemma}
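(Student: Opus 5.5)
The plan is to prove Lemma~\ref{lem:assouad-bit} by the standard Assouad argument. We reduce the Hamming risk to a sum of $m$ one-bit testing problems, lower-bound each test error by a total-variation affinity, and control total variation through the assumed KL bound between neighbouring hypercube vertices. Since the decoder $\hat\omega$ is an arbitrary measurable function of the trajectory, the adaptivity of the experiment does not enter this step: each $P_\omega$ is simply a trajectory law, and the KL bound $\kappa$ already accounts for the design (it will be supplied later via Lemma~\ref{lem:adaptive-kl-chain}).

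\textbf{Step 1: averaging.} Bound the supremum by the uniform average over the hypercube and write the Hamming distance as a sum over coordinates:
\[
\sup_\omega \E_\omega[d_H(\hat\omega,\omega)]\ \ge\ 2^{-m}\sum_{\omega}\sum_{k=1}^m P_\omega(\hat\omega_k\neq\omega_k)
=\sum_{k=1}^m \tfrac12\Bigl[\bar P_{k,+}(\hat\omega_k\neq +1)+\bar P_{k,-}(\hat\omega_k\neq -1)\Bigr].
\]
Each bracket is the sum of the two error probabilities of the test $\hat\omega_k$ between the mixtures $\bar P_{k,+}$ and $\bar P_{k,-}$. By Le Cam's two-point bound, each bracket is at least $1-\mathrm{TV}(\bar P_{k,+},\bar P_{k,-})$.

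\textbf{Step 2: KL control of the mixtures.} Pair each $\omega$ with $\omega_k=+1$ with its neighbour $\omega^{(k)}$, which has $\omega^{(k)}_k=-1$. This pairing is a bijection between the two halves of the cube, so
\[
\bar P_{k,-}=2^{-(m-1)}\sum_{\omega:\omega_k=+1}P_{\omega^{(k)}}.
\]
Lemma~\ref{lem:kl-joint-convex} then gives $\mathrm{KL}(\bar P_{k,+}\|\bar P_{k,-})\le \kappa$.

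\textbf{Step 3: from KL to total variation.} The only delicate point is that Pinsker's inequality is useless once $\kappa\ge 2$, and the lemma allows arbitrary fixed $\kappa$. We therefore use the Bretagnolle--Huber inequality instead:
\[
1-\mathrm{TV}(P,Q)\ \ge\ \tfrac12 e^{-\mathrm{KL}(P\|Q)}.
\]
With Step 2 this gives $1-\mathrm{TV}(\bar P_{k,+},\bar P_{k,-})\ge \tfrac12 e^{-\kappa}$ for every $k$.

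\textbf{Conclusion.} Summing over the $m$ coordinates, each bracket in Step 1 contributes at least $\tfrac12\cdot\tfrac12 e^{-\kappa}$, so
\[
\sup_\omega \E_\omega[d_H(\hat\omega,\omega)]\ \ge\ \frac{m}{4}e^{-\kappa}.
\]
This proves the lemma with $c_0=e^{-\kappa}/4$, which depends only on $\kappa$.
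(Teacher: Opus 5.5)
Your proof is correct and follows the same Assouad route as the paper. You bound $\mathrm{KL}(\bar P_{k,+}\|\bar P_{k,-})\le\kappa$ by joint convexity (Lemma~\ref{lem:kl-joint-convex}), apply Le Cam's two-point bound bit by bit, and sum over the $m$ coordinates; your Step~1 also writes out the averaging identity that the paper leaves implicit.

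The one real difference is that you use Bretagnolle--Huber where the paper invokes Pinsker, and this is an improvement. Pinsker only gives the nontrivial bound $1-\sqrt{\kappa/2}$ when $\kappa<2$. That is enough for the paper's applications, where $\kappa$ is made small via $\delta^2 n/m\le c$, but it does not cover the lemma as stated. Your constant $c_0=e^{-\kappa}/4$ works for every $\kappa$, as the statement requires.
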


\noindent\textbf{Proof.}
By Lemma~\ref{lem:kl-joint-convex},
\[
\mathrm{KL}(\bar P_{k,+}\|\bar P_{k,-})
\le
2^{-(m-1)}\!\!\sum_{\omega:\omega_k=+1}\mathrm{KL}(P_\omega\|P_{\omega^{(k)}})\le\kappa.
\]
Le Cam + Pinsker imply a constant lower bound on bit-$k$ testing error under $\bar P_{k,+}$ versus $\bar P_{k,-}$. Summing over $k=1,\dots,m$ gives the Hamming lower bound. \QEDA

\noindent\textbf{Shared hypercube geometry.}
Let $\cX=[0,1]$ with $P_X=\mathrm{Unif}[0,1]$ and partition $\cX=\cup_{k=1}^m B_k$ into equal bins $P_X(B_k)=1/m$. Let $\psi_k(x):=\1\{x\in B_k\}$. For $\omega\in\{-1,+1\}^m$ and amplitude $\delta>0$, define
\[
f_\omega(x):=\delta\sum_{k=1}^m \omega_k\psi_k(x).
\]
Here $\psi_k$ denotes bump basis functions only; it is unrelated to the score notation $\psi(\Oo;\eta,\e)$ used in Section~4.
If $\hat\omega$ is the decoded sign pattern, then
\[
\|f_{\hat\omega}-f_\omega\|_{L_2(P_X)}^2
=\frac{4\delta^2}{m}\,d_H(\hat\omega,\omega).
\]
Hence once neighbor KL is uniformly bounded, minimax $L_2$ risk is $\gtrsim \delta^2$.

\begin{lemma}[Efficiency-gap identity via Cauchy projection]
\label{lem:eff-gap-identity}
Fix one environment $P$ in the linear model and let $e^\star$ be its oracle design.
Write
\[
V^\star(P):=\E_P\!\left[\IF_{e^\star}(\Oo;\eta_P)^2\right],\qquad
\hat\theta_{\mathrm{oracle}}:=\theta(P)+\frac1n\sum_{t=1}^n \IF_{e^\star}(\Oo_t;\eta_P).
\]
For any estimator $T=T(\cF_n)$ that is locally unbiased along the least-favorable one-dimensional submodel through $P$,
\begin{equation}
\label{eq:eff-gap-identity}
\Var_P(T)-\frac{V^\star(P)}{n}
=
\E_P\!\left[\left(T-\hat\theta_{\mathrm{oracle}}\right)^2\right].
\end{equation}
\end{lemma}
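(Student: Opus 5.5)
The plan is to expand the square around the oracle estimator and show that the two remaining moments, $\E_P[(T-\theta)(\hat\theta_{\mathrm{oracle}}-\theta)]$ and $\E_P[(\hat\theta_{\mathrm{oracle}}-\theta)^2]$, both equal $V^\star(P)/n$ \emph{for every} non-anticipating design, not only when $\ebar=e^\star$. Write $\theta=\theta(P)$ and use the local unbiasedness $\E_P[T]=\theta$. Then
\[
\E_P[(T-\hat\theta_{\mathrm{oracle}})^2]
=\Var_P(T)-2\,\E_P\bigl[(T-\theta)(\hat\theta_{\mathrm{oracle}}-\theta)\bigr]+\E_P\bigl[(\hat\theta_{\mathrm{oracle}}-\theta)^2\bigr].
\]
So \eqref{eq:eff-gap-identity} is equivalent to two claims:
\[
\text{(a)}\quad \E_P\bigl[(T-\theta)(\hat\theta_{\mathrm{oracle}}-\theta)\bigr]=\frac{V^\star}{n},
\qquad
\text{(b)}\quad \E_P\bigl[(\hat\theta_{\mathrm{oracle}}-\theta)^2\bigr]=\frac{V^\star}{n}.
\]

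\textbf{Claim (a).} I take the least-favorable submodel in the form of Lemma~\ref{lem:tilting-score}, but with the oracle propensity $e^\star$ in place of $\ebar$. That is, tilt $P_\X$ by $(g-\theta)/V^\star$ and tilt each conditional $P_{\Y\mid\X,\A=a}$ by $v_a/(V^\star e^\star_a)$. Along this submodel the derivative of $\theta$ equals $\E_{e^\star}[\IF_{e^\star}^2]/V^\star=1$. This holds because $\theta(P_\varepsilon)$ depends only on $(P_\X,\{p_a\})$, so the value is computed in the fixed-$e^\star$ model and does not depend on $\ALG$.

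Since the design kernels are not perturbed, the argument in the proof of Theorem~\ref{thm:avg-prop-lb} gives the trajectory score
\[
Z=\frac{1}{V^\star}\sum_{t=1}^n\IF_{e^\star}(\Oo_t)=\frac{n}{V^\star}\bigl(\hat\theta_{\mathrm{oracle}}-\theta\bigr).
\]
Local unbiasedness then gives $\E_P[(T-\theta)Z]=1$, which is exactly (a).

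\textbf{Claim (b).} First, $\{\IF_{e^\star}(\Oo_t)\}_{t}$ is a martingale difference sequence under any non-anticipating design. The reason is that $\E[v_a(\X_t,\Y_t)\mid\X_t,\A_t=a,\cF_{t-1}]=0$ by outcome visibility, and $\E[g(\X_t)]=\theta$ with $\X_t$ independent of $\cF_{t-1}$. Hence
\[
n\,\E_P\bigl[(\hat\theta_{\mathrm{oracle}}-\theta)^2\bigr]=\frac1n\sum_t\E\bigl[\IF_{e^\star}(\Oo_t)^2\bigr].
\]
Conditioning on $(\X_t,\cF_{t-1})$ and averaging over $t$, as in \eqref{eq:avg-identity}, this becomes
\[
\Var(g(\X))+\E\Bigl[\sum_{a}\ebar_a(\X)\,\frac{m_a(\X)}{e^\star_a(\X)^2}\Bigr].
\]
In general this depends on $\ebar$, and this is the step where one would expect to need $\ebar=e^\star$.

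The key observation removes that dependence. In the linear model the oracle is the interior Neyman allocation $e^\star_a=\sqrt{m_a}/\sum_j\sqrt{m_j}$. The KKT identity used in the proof of Proposition~\ref{prop:design-regret-to-estimation} says that
\[
\frac{m_a(x)}{e^\star_a(x)^2}=\Bigl(\sum_j\sqrt{m_j(x)}\Bigr)^2=:\lambda(x)
\]
is the same for every arm $a$. Therefore, for any probability vector $\ebar(x)$,
\[
\sum_a\ebar_a(x)\frac{m_a(x)}{e^\star_a(x)^2}=\lambda(x)=\sum_a e^\star_a(x)\frac{m_a(x)}{e^\star_a(x)^2}=\sum_a\frac{m_a(x)}{e^\star_a(x)}.
\]
The right-hand side does not involve $\ebar$. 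So the average above equals $\Var(g)+L(e^\star)=V^\star$, which proves (b). Substituting (a) and (b) into the expansion gives \eqref{eq:eff-gap-identity}.

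\textbf{Main obstacle.} The hardest step is (b), specifically the design-invariance supplied by the equal-marginal-cost property of the unconstrained Neyman optimum. This requires $e^\star$ to lie in the interior of the simplex, which holds because $m_a\ge m_{\min}>0$. If a feasibility constraint were active, the identity would have to be weakened to an inequality. A secondary technical point is justifying differentiation under the trajectory integral for the tilted submodel. I would handle that exactly as in Lemma~\ref{lem:tilting-score}, using the boundedness of $v_a$ and the propensity floor $e^\star_a\ge\varepsilon_0$.
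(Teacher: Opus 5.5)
Your proof is correct and has the same skeleton as the paper's. The paper applies the projection identity $\Var(T)-\E[(T-\E T)Z]^2/\E[Z^2]=\E\bigl[(T-\E T-cZ)^2\bigr]$ with $Z=\sum_t\IF_{e^\star}(\Oo_t)/V^\star$ and $c=V^\star/n$. That is your expansion of the square, and it needs the same two moments, $\E_P[(T-\theta)Z]=1$ and $\E_P[Z^2]=n/V^\star$.

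The genuine difference is in step (b). The paper justifies it by asserting that the $\IF_{e^\star}(\Oo_t;\eta_P)$ are i.i.d.\ with variance $V^\star$. That is literally true only when the experiment is itself i.i.d.\ with propensity $e^\star$. Under a general non-anticipating $\ALG$, the per-unit conditional second moment depends on the design through $\sum_a \e_t(a\mid\X_t,\cF_{t-1})\,m_a(\X_t)/e^\star_a(\X_t)^2$.

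Your observation removes this dependence: at the interior Neyman allocation, $m_a/(e^\star_a)^2=(\sum_j\sqrt{m_j})^2$ is the same for every arm. Your version therefore establishes the identity for arbitrary pairs $(\ALG,T)$. That is what Propositions~\ref{prop:assouad-mu-lb} and~\ref{prop:assouad-design-lb} need, since they take the infimum over all such pairs; the paper's one-line argument covers only the $e^\star$-i.i.d.\ design.

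You also make explicit that the relevant submodel is the $e^\star$-tilt, not the $\ebar$-tilt of Lemma~\ref{lem:tilting-score}; the paper leaves this implicit. Your remark on active constraints lands on the useful side. First-order optimality over a convex feasible class containing $\ebar$ gives $\E_P[(\hat\theta_{\mathrm{oracle}}-\theta)^2]\le V^\star/n$. Hence $\Var_P(T)-V^\star/n\ge\E_P[(T-\hat\theta_{\mathrm{oracle}})^2]$, which is the direction the lower bounds use.

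One minor fix. Your martingale-difference claim for the $g(\X_t)-\theta$ piece uses $\X_t\perp\cF_{t-1}$. This fails for offline designs with $\cF_0\supseteq\sigma(\X_1,\dots,\X_n)$. Claim (b) still holds there, for three reasons:
\begin{itemize}
\item the $g$-pieces sum over i.i.d.\ covariates;
\item cross terms with residual pieces vanish by outcome visibility after conditioning on $\cF_{t-1}\vee\sigma(\X_t,\A_t)$;
\item your pointwise identity $\sum_a \e_t(a\mid\X_t,\cF_{t-1})m_a(\X_t)/e^\star_a(\X_t)^2=\lambda(\X_t)$ needs no independence and no averaging via \eqref{eq:avg-identity}.
\end{itemize}
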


\noindent\textbf{Proof.}
Let
\[
Z(\cF_n):=\sum_{t=1}^n \frac{\IF_{e^\star}(\Oo_t;\eta_P)}{V^\star(P)}.
\]
By local unbiasedness along the least-favorable submodel and the score identity,
$\E_P[(T-\E_P T)Z]=1$.
Since $\{\IF_{e^\star}(\Oo_t;\eta_P)\}$ are i.i.d.\ mean-zero with variance $V^\star(P)$,
$\E_P[Z^2]=n/V^\star(P)$.
Apply the projection identity
\[
\Var(T)-\frac{\E[(T-\E T)Z]^2}{\E[Z^2]}
=
\E\!\left[(T-\E T)-\frac{\E[(T-\E T)Z]}{\E[Z^2]}Z\right]^2.
\]
Substitute the two moments above and $\E_P[T]=\theta(P)$ to obtain
\[
\Var_P(T)-\frac{V^\star(P)}{n}
=
\E_P\!\left[\left(T-\theta(P)-\frac1n\sum_{t=1}^n\IF_{e^\star}(\Oo_t;\eta_P)\right)^2\right],
\]
which is exactly \eqref{eq:eff-gap-identity}. \QEDA

\begin{proposition}[Assouad lower bound for estimating $\mu$ under fixed $\sigma^2$]
\label{prop:assouad-mu-lb}
There exists a subfamily $\mathcal P_\mu$ with fixed $\{\sigma_a^2\}_{a\in\cA}$ such that, for any non-anticipating experiment-estimator pair,
\begin{equation}
\label{eq:linear-lb-mu}
\inf_{\hat\mu}\sup_{P\in\mathcal P_\mu}
\E_P\!\left[\sum_{a\in\cA}\norm{\hat\mu_a-\mu_a}_{L_2(P_X)}^2\right]
\ge c_\mu\,n^{-2\alpha}.
\end{equation}
\end{proposition}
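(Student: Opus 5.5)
The proof uses the shared hypercube geometry, the adaptive KL chain rule (Lemma~\ref{lem:adaptive-kl-chain}) and the Assouad reduction (Lemma~\ref{lem:assouad-bit}). The bin count $m$ is tuned to the smoothness class that makes $n^{-\alpha}$ the tight rate.

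\textbf{Family.} Fix all arm variances at $\sigma_a^2\equiv 1$. Fix $\mu_a\equiv 0$ for every arm $a\neq 1$. For arm $1$, set $\mu_{1,\omega}(x)=f_\omega(x)=\delta\sum_k\omega_k\psi_k(x)$, with Gaussian outcomes $Y\mid X,A=1\sim\mathcal N(f_\omega(X),1)$. In the actual construction the indicators $\psi_k$ are replaced by smooth bumps of height $\delta$ supported in $B_k$, so that each $f_\omega$ lies in the regularity class. The relation $\delta\asymp m^{-s}$, with $s$ the smoothness index, is fixed by the class, and this is precisely what makes the $n^{-\alpha}$ i.i.d.\ rate tight by hypothesis. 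The $L_2$ identity then holds up to constants: $\|f_{\hat\omega}-f_\omega\|^2\asymp(\delta^2/m)\,d_H(\hat\omega,\omega)$.

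\textbf{KL between neighbours.} By Lemma~\ref{lem:adaptive-kl-chain}, for any non-anticipating $\ALG$,
\[
\mathrm{KL}\bigl(P_\omega^{\ALG}\,\|\,P_{\omega^{(k)}}^{\ALG}\bigr)=\sum_{t}\E\bigl[\1\{\A_t=1,\X_t\in B_k\}\bigr]\,2\delta^2\le 2\delta^2\,n\,P_X(B_k)=2n\delta^2/m.
\]
The point is that the assignment probability is at most $1$ whatever the design does. Adaptivity therefore cannot concentrate more than $n/m$ expected visits in a bin, because $\X_t$ is drawn from $P_X$ independently of the past.

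\textbf{Choice of $m$ and conclusion.} Choose $m$ so that $n\delta^2/m\asymp 1$, that is $n\,m^{-2s-1}\asymp 1$, which gives $\delta^2\asymp n^{-2s/(2s+1)}=n^{-2\alpha}$. Neighbour KL is then bounded by a constant $\kappa$, and Lemma~\ref{lem:assouad-bit} gives $\sup_\omega\E\, d_H\ge c_0 m$. Any estimator $\hat\mu$ yields a decoder, for instance the bin-wise sign of the projection of $\hat\mu_1$ onto $\psi_k$. The standard argument $\|f_{\hat\omega}-f_\omega\|\le 2\|\hat\mu_1-f_\omega\|$ then transfers the Hamming bound to $\sup\E\|\hat\mu_1-\mu_1\|^2\gtrsim(\delta^2/m)\cdot m=\delta^2\asymp n^{-2\alpha}$, which is \eqref{eq:linear-lb-mu}.

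\textbf{Main obstacle.} The delicate step is translating the abstract hypothesis ``cannot be estimated faster than $n^{-\alpha}$'' into a concrete class. I would try to state that hypothesis as a Hölder-$s$ class with $\alpha=s/(2s+1)$, or more generally as any class containing such bump hypercubes with $n\delta^2/m=O(1)$ and $\delta^2\asymp n^{-2\alpha}$. I would then check that the adaptive design does not break the per-bin KL bound. The chain-rule computation above handles that second point. Positivity and bounded-variance requirements are unaffected, since the $\sigma_a^2$ are fixed constants.
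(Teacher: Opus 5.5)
Your proposal is correct and is essentially the paper's argument: fix all arm variances, put a Gaussian-location hypercube $f_\omega$ on one arm's conditional mean, bound neighbour KL by $2n\delta^2/(\sigma^2 m)$ uniformly over non-anticipating designs via the adaptive KL chain rule, and apply Assouad to obtain risk $\gtrsim\delta^2$. One small correction to your justification: $\X_t$ need not be independent of $\cF_{t-1}$ in offline designs, but the bound survives because $\sum_t\1\{\X_t\in B_k\}$ still has mean $n/m$. Your smooth-bump H\"older-$s$ calibration ($\delta\asymp m^{-s}$, $m\asymp n^{1/(2s+1)}$, $\alpha=s/(2s+1)$) and the explicit nearest-vertex decoder simply make precise what the paper compresses into ``matching the hypercube size with the complexity exponent underlying $\alpha$.''
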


\noindent\textbf{Proof.}
Fix an arm $a_0$. In this subfamily, all arms have identical known variance $\sigma^2>0$ and all means are zero except arm $a_0$:
\[
\mu_{a_0,\omega}(x)=f_\omega(x),\qquad \mu_{a,\omega}(x)\equiv 0\ (a\neq a_0).
\]
Outcomes are Gaussian:
\[
Y_t\mid(X_t=x,A_t=a)\sim
\begin{cases}
\mathcal N(\mu_{a_0,\omega}(x),\sigma^2), & a=a_0,\\
\mathcal N(0,\sigma^2), & a\neq a_0.
\end{cases}
\]
Therefore $\sigma_a^2(x)\equiv \sigma^2$ for every arm and every environment $\omega$, so $e^\star$ is fixed across this family.

Let $\omega^{(k)}$ flip bit $k$. For Gaussian location models with common variance,
\[
\mathrm{KL}\!\left(\mathcal N(u,\sigma^2)\,\|\,\mathcal N(v,\sigma^2)\right)=\frac{(u-v)^2}{2\sigma^2}.
\]
Using Lemma~\ref{lem:adaptive-kl-chain},
\[
\mathrm{KL}(P_\omega^{\ALG}\,\|\,P_{\omega^{(k)}}^{\ALG})
\le
\sum_{t=1}^n \E_\omega\!\left[
\frac{(\mu_{a_0,\omega}(X_t)-\mu_{a_0,\omega^{(k)}}(X_t))^2}{2\sigma^2}\1\{A_t=a_0\}
\right].
\]
Since neighbors differ only on $B_k$ and by magnitude $2\delta$ there,
\[
(\mu_{a_0,\omega}(x)-\mu_{a_0,\omega^{(k)}}(x))^2
=4\delta^2\1\{x\in B_k\},
\]
hence
\[
\mathrm{KL}(P_\omega^{\ALG}\,\|\,P_{\omega^{(k)}}^{\ALG})
\le
\frac{2\delta^2}{\sigma^2}\sum_{t=1}^n \E_\omega[\1\{X_t\in B_k\}\1\{A_t=a_0\}]
\le
\frac{2\delta^2 n}{\sigma^2 m}.
\]
Choose $(m,\delta)$ so that $\delta^2 n/m\le c$. Then Lemma~\ref{lem:assouad-bit} implies Hamming risk $\gtrsim m$, and therefore
\[
\inf_{\hat\mu}\sup_{\omega}
\E_\omega\!\left[\|\hat\mu_{a_0}-\mu_{a_0,\omega}\|_{L_2(P_X)}^2\right]
\gtrsim \delta^2.
\]
All other arms are fixed, so the same lower bound holds for $\sum_a\|\hat\mu_a-\mu_a\|_{L_2}^2$.
In this example, $\sigma^2$ is fixed, hence $e^\star$ is fixed, and the oracle efficient estimator
$\hat\theta_{\mathrm{oracle}}$ depends on the unknown environment only through $\mu$ in
$\IF_{e^\star}(\cdot;\mu)$.
Therefore the same Assouad testing difficulty implies that no data-based $T$ can uniformly satisfy
$\E_\omega[(T-\hat\theta_{\mathrm{oracle},\omega})^2]=o(\delta^2)$.
By Lemma~\ref{lem:eff-gap-identity},
\[
\inf_T\sup_{\omega}\left\{\Var_\omega(T)-\frac{V^\star_\omega}{n}\right\}
\ge
\inf_T\sup_{\omega}\E_\omega[(T-\hat\theta_{\mathrm{oracle},\omega})^2]
\gtrsim \delta^2.
\]
Matching the hypercube size with the complexity exponent underlying $\alpha$ gives $\delta^2\asymp n^{-2\alpha}$, proving \eqref{eq:linear-lb-mu}. \QEDA

\begin{proposition}[Assouad lower bound for design learning through $\sigma^2$]
\label{prop:assouad-design-lb}
There exists a subfamily $\mathcal P_{\sigma^2}$ with fixed $\mu=\{\mu_a\}_{a\in\cA}$ such that, for any non-anticipating design learner,
\begin{equation}
\label{eq:linear-lb-m}
\inf_{\hat e}\sup_{P\in\mathcal P_{\sigma^2}}
\E_P\!\left[L(\hat e)-L(e^\star)\right]
\ge c_{\sigma}\,n^{-2\beta}.
\end{equation}
\end{proposition}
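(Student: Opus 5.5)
We will mirror the construction of Proposition~\ref{prop:assouad-mu-lb}, with the hypercube placed on the variance of a single arm rather than on its mean.

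\textbf{Construction.} Fix two arms $a_0,a_1$ and set all conditional means to zero. Let $\omega_a\equiv 1$ for $a\in\{a_0,a_1\}$ and $\omega_a\equiv 0$ otherwise, so that $m_a=\sigma_a^2$ on the active arms. Take $\sigma_{a_1}^2\equiv 1$ and
\[
\sigma_{a_0,\omega}^2(x)=1+f_\omega(x),
\]
with $f_\omega$ the same bin-indicator hypercube as before and $\delta\le 1/2$. Outcomes are Gaussian, $\mathcal N(0,\sigma_{a,\omega}^2(x))$.

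\textbf{KL bound.} By Lemma~\ref{lem:adaptive-kl-chain}, only units with $A_t=a_0$ and $X_t\in B_k$ contribute to $\mathrm{KL}(P_\omega^{\ALG}\|P_{\omega^{(k)}}^{\ALG})$. The Gaussian scale KL between variances $1\pm\delta$ is $O(\delta^2)$, so the total divergence is at most $C\delta^2 n/m$. Choosing $\delta^2 n/m\le c$ makes Lemma~\ref{lem:assouad-bit} applicable: any decoder of $\omega$ has Hamming risk $\gtrsim m$.

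\textbf{Reduction from regret to sign recovery.} The step that needs care is converting design regret into sign-recovery risk. Pointwise, the oracle is
\[
e^\star_{a_0}(x)=\frac{\sqrt{1+f_\omega(x)}}{\sqrt{1+f_\omega(x)}+1},
\]
so on bin $B_k$ it takes one of two values separated by a gap of order $\delta$. The pointwise objective $F_x(e)=\sum_a m_a(x)/e_a$ restricted to the simplex is strongly convex around $e^\star(x)$ with curvature bounded below. This is the reverse direction of \eqref{eq:smoothness-Fx}, and it uses the lower bound $2m_a/e_a^3\ge 2m_{\min}$. Hence
\[
F_x(\hat e)-F_x(e^\star)\gtrsim \|\hat e(x)-e^\star(x)\|^2 .
\]
For any learned design $\hat e$, define the decoder $\hat\omega_k$ as the sign whose oracle value is closer to the average of $\hat e_{a_0}$ over $B_k$. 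If $\hat\omega_k\ne\omega_k$, the squared distance on that bin is at least $(\text{gap}/2)^2\asymp\delta^2$. Jensen's inequality handles averaging within the bin. Summing gives
\[
L(\hat e)-L(e^\star)\gtrsim \frac{\delta^2}{m}\,d_H(\hat\omega,\omega),
\]
and Assouad then yields a worst-case regret $\gtrsim\delta^2$.

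\textbf{Rate calibration (main obstacle).} It remains to tie $\delta^2$ to $n^{-2\beta}$. A piecewise-constant $f_\omega$ with $\delta^2\asymp m/n$ does not by itself encode $\beta$. I would instead replace the indicators by smooth bumps from the Hölder class that defines $\beta$, so that $\delta\asymp m^{-s}$ and balancing $\delta^2 n/m\asymp 1$ gives $\delta^2\asymp n^{-2\beta}$. This is exactly the tightness hypothesis in Theorem~\ref{thm:linear-rate-lb}. Two checks are needed here: the bumps must keep $\sigma^2$ within $[\sigma_{\min}^2,\sigma_{\max}^2]$, and the pointwise-gap argument must go through when the separation is of order $\delta$ only on a constant fraction of each bin.
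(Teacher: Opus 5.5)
Your proposal is correct and is essentially the paper's argument: a hypercube on one arm's conditional variance with all means zero, the adaptive KL chain rule giving a neighbor divergence of order $\delta^2 n/m$, Assouad's lemma, strong convexity of the pointwise cost $\sum_a m_a(x)/e_a$ to turn propensity error into regret, and finally matching $\delta^2\asymp n^{-2\beta}$ (a step you handle more explicitly than the paper's one-line calibration). Your decoder built directly from $\hat e$ is a more concrete version of the paper's contradiction argument through the bi-Lipschitz map $\sigma^2\mapsto e^\star$; one thing to tidy is that setting $\omega_a\equiv 0$ on the other arms makes $m_{\min}=0$ and puts the oracle on the propensity floor, so either give all arms unit weight as the paper does, or state the strong-convexity bound only in the $a_0$ coordinate (where it still holds).
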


\noindent\textbf{Proof.}
Fix $\mu_a(x)\equiv 0$ for all arms. Fix one arm $a_0$ and vary only its conditional variance:
\[
\sigma_{a_0,\omega}^2(x)=1+f_\omega(x),\qquad \sigma_{a,\omega}^2(x)\equiv 1\ (a\neq a_0).
\]
Take Gaussian outcomes
\[
Y_t\mid(X_t=x,A_t=a)\sim\mathcal N(0,\sigma_{a,\omega}^2(x)).
\]
For sufficiently small $\delta$, all variances lie in $[1/2,3/2]$. For zero-mean Gaussians with $u,v\in[1/2,3/2]$,
\[
\mathrm{KL}\!\left(\mathcal N(0,u)\,\|\,\mathcal N(0,v)\right)
=\frac12\!\left(\frac{u}{v}-1-\log\frac{u}{v}\right)
\le C(u-v)^2.
\]
Hence, by Lemma~\ref{lem:adaptive-kl-chain},
\[
\mathrm{KL}(P_\omega^{\ALG}\|P_{\omega^{(k)}}^{\ALG})
\le
C\sum_{t=1}^n \E_\omega\!\left[(\sigma_{a_0,\omega}^2(X_t)-\sigma_{a_0,\omega^{(k)}}^2(X_t))^2\1\{A_t=a_0\}\right]
\le
\frac{C\delta^2 n}{m}.
\]
With $\delta^2 n/m\le c$, Lemma~\ref{lem:assouad-bit} yields
\[
\inf_{\widehat{\sigma^2}}\sup_{\omega}\E_\omega\!\left[\|\widehat{\sigma^2}_{a_0}-\sigma_{a_0,\omega}^2\|_{L_2(P_X)}^2\right]\gtrsim \delta^2.
\]

Now transfer this to design regret. We take unit weights $\omega_a\equiv 1$
(which corresponds to the ATE estimand), so that $m_a(x)=\omega_a^2\sigma_a^2(x)=\sigma_a^2(x)$
and $L(\e)=\E[\sum_a \sigma_a^2(\X)/\e_a(\X)]$.
In this family,
\[
e^\star_{a_0,\omega}(x)=\frac{\sigma_\omega(x)}{\sigma_\omega(x)+(K-1)},
\qquad \sigma_\omega(x):=\sqrt{\sigma_{a_0,\omega}^2(x)},
\]
so changing $\sigma_\omega$ changes $e^\star_\omega$ pointwise.
On $\sigma\in[\sqrt{1/2},\sqrt{3/2}]$, the map
$\sigma\mapsto \sigma/(\sigma+K-1)$ has derivative bounded away from $0$ and $\infty$;
therefore there exist $c_1,c_2>0$ such that
\[
c_1|\sigma_{a_0,\omega}^2(x)-\sigma_{a_0,\omega'}^2(x)|
\le
|e^\star_{a_0,\omega}(x)-e^\star_{a_0,\omega'}(x)|
\le
c_2|\sigma_{a_0,\omega}^2(x)-\sigma_{a_0,\omega'}^2(x)|.
\]
Also, for fixed $x$, $e\mapsto L_x(e):=\sum_a \sigma_a^2(x)/e_a$ is strongly convex on
$\{e\in\Delta^K:e_a\ge\varepsilon\}$, so
\[
L_x(e)-L_x(e^\star_\omega)\ge c_3\|e-e^\star_\omega\|_2^2.
\]
Integrating gives
\[
\E[L(\hat e)-L(e^\star_\omega)]
\ge c_4\,\E\!\left[\|\hat e-e^\star_\omega\|_{L_2(P_X)}^2\right].
\]
By bi-Lipschitzness of $\sigma^2\mapsto e^\star$ on this family, a regret rate $o(\delta^2)$ would imply a $\sigma^2$-estimator with $o(\delta^2)$ $L_2$ error, contradicting the Assouad lower bound above. Therefore
\[
\inf_{\hat e}\sup_{\omega}\E_\omega[L(\hat e)-L(e^\star_\omega)]\gtrsim \delta^2.
\]
In this example, $\mu$ is fixed and the unknown part is only $\sigma^2$ (equivalently $\sigma$), which
determines $e^\star_\omega$ and therefore the oracle EIF
$\IF_{e^\star_\omega}(\Oo;\eta_\omega)$.
Hence inability to learn $\sigma^2$ implies inability to track the oracle efficient estimator
$\hat\theta_{\mathrm{oracle},\omega}=\theta_\omega+n^{-1}\sum_t \IF_{e^\star_\omega}(\Oo_t;\eta_\omega)$,
and Lemma~\ref{lem:eff-gap-identity} yields
\[
\inf_T\sup_{\omega}\left\{\Var_\omega(T)-\frac{V^\star_\omega}{n}\right\}\gtrsim \delta^2.
\]
Matching the hypercube size with the complexity exponent underlying $\beta$ gives $\delta^2\asymp n^{-2\beta}$, yielding \eqref{eq:linear-lb-m}. \QEDA

\noindent\textbf{Completion of Theorem~\ref{thm:linear-rate-lb}.}
Equation \eqref{eq:linear-lb-mu} is Proposition~\ref{prop:assouad-mu-lb} and
\eqref{eq:linear-lb-m} is Proposition~\ref{prop:assouad-design-lb}.
Example~I keeps $\sigma^2$ fixed, so $e^\star$ and the oracle EIF-variance target are fixed; only the model-estimation component can vary.
Example~II keeps $\mu$ fixed and varies $\sigma=\sqrt{\sigma^2}$, so $e^\star$ must vary and the design component is unavoidable.
Together with Proposition~\ref{prop:excess-mse-decomp} and
\eqref{eq:linear-model-estimation}, \eqref{eq:design-regret-quadratic}, \eqref{eq:linear-upper-bound}, this yields the unavoidable scales
$n^{-1-2\alpha}$ and $n^{-1-2\beta}$ above $V^\star/n$ (up to constants/log factors). \QEDA

\section{Proofs for Section 5 (Covariate Balancing)}
\label{app:proofs-sec4}

This appendix collects the proofs supporting the covariate-balancing route of Section~\ref{sec:covariate-balancing}: the algebraic three-term decomposition that organizes the analysis (F.1), the single-tuple variance formula together with its randomized-rounding contribution (F.2--F.3), the global variance bound (F.4), and the resulting semiparametric efficiency theorem (F.5).

\subsection{Three-Term Decomposition}
\label{app:proof-three-term}

\emph{Strategy.} We derive Proposition~\ref{prop:three-term-decomp} by Taylor-expanding the moment equation around $\theta_0$, then conditioning on the per-batch history to split the resulting score sum into an oracle-EIF piece, a design gap, and a balancing remainder.

\noindent\textbf{Proof of Proposition~\ref{prop:three-term-decomp}.}
A standard mean-value expansion of the moment equation \eqref{eq:naive-estimator} around $\theta_0$ gives
\[
\sqrt{n}(\hat\theta_n-\theta_0) = -M^{-1}\cdot \frac{1}{\sqrt n}\sum_{i=1}^n m(\X_i,\A_i,\Y_i,\theta_0) + o_P(1),
\]
where $M$ is the Jacobian in \eqref{eq:moment-condition}. Define $\phi_a(\X):=\E[m(\X,a,\Y,\theta_0)\mid\X]=-M\omega_a(\X)$. Conditionally on the batch history $\cH_{b(i)}$, adding and subtracting $\sum_a \e_a^{(b(i))}(\X_i)\phi_a(\X_i)$ inside $m(\X_i,\A_i,\Y_i,\theta_0)$ decomposes the summand into (a)~a batch-conditional EIF contribution under $\e^{(b(i))}$ and (b)~an imbalance residual $\sum_a(\1\{\A_i=a\}-\e_a^{(b(i))}(\X_i))\omega_a(\X_i)$ (after multiplying by $-M^{-1}$). Replacing $\e^{(b(i))}$ by $\e^\star$ inside the first piece and collecting the difference as the design gap yields the three-term form in \eqref{eq:three-term-decomp}. The $o_P(1)$ remainder absorbs the Taylor-expansion error under the standard regularity conditions for $M$-estimators. \QEDA

\subsection{Single-Tuple Variance Formula}
\label{app:proof-cb-tuple}

\begin{proposition}[Single-tuple variance formula]
\label{prop:cb-tuple-variance}
Within a complete tuple of length $G$ with count vector $L_r$ and per-position weight matrices $\{W_t\}_{t=1}^G$ (columns $\omega_a(\X_{r,t})$), the imbalance increment $T_r=\sum_t W_t(e_t-\eta)$ satisfies
\begin{equation}
\label{eq:tuple-variance}
\Var(T_r\mid L_r)
=\frac{G}{G-1}\sum_{t=1}^{G} W_t\,\Sigma(p)\,W_t^\top
-\frac{1}{G-1}\Bigl(\sum_{t=1}^{G} W_t\Bigr)\Sigma(p)\Bigl(\sum_{t=1}^{G} W_t\Bigr)^\top,
\end{equation}
where $p=L_r/G$ and $\Sigma(p)=\diag(p)-pp^\top$.
\end{proposition}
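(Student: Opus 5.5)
Throughout, I condition on the covariates in the tuple and on the realized count vector $L_r$. The weight matrices $W_t$ are then deterministic, and $\eta$ is a fixed vector. Hence
\[
\Var(T_r\mid L_r)=\Var\Bigl(\sum_{t=1}^G W_t e_t\,\Big|\,L_r\Bigr),
\]
where $e_t\in\{0,1\}^{\K}$ is the one-hot indicator of the arm assigned to position $t$. The randomness comes only from the uniform permutation of the multiset containing $L_{r,a}$ copies of arm $a$. The proof therefore reduces to computing the first two moments of the $e_t$ under sampling without replacement, and then expanding a quadratic form.

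\textbf{Step 1 (marginal moments).} Under a uniform permutation, each position receives arm $a$ with probability $L_{r,a}/G=p_a$. So $\E[e_t\mid L_r]=p$. Since $e_t$ is one-hot, $e_te_t^\top=\diag(e_t)$, which gives $\Cov(e_t\mid L_r)=\diag(p)-pp^\top=\Sigma(p)$.

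\textbf{Step 2 (pairwise covariance).} For $s\neq t$, every pair $s\neq t$ has the same covariance $C$ by exchangeability of the positions. The shortest way to identify $C$ is to use the constraint $\sum_t e_t=L_r$, which is fixed given $L_r$. This constraint gives
\[
0=\Var\Bigl(\sum_t e_t\,\Big|\,L_r\Bigr)=G\,\Sigma(p)+G(G-1)\,C,
\qquad\text{so}\qquad C=-\frac{\Sigma(p)}{G-1}.
\]
As a check, I would also compute $C$ directly from
\[
\PP(e_s=\mathbf e_a,\,e_t=\mathbf e_b)=\frac{L_{r,a}\,(L_{r,b}-\1\{a=b\})}{G(G-1)},
\]
and confirm that it agrees.

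\textbf{Step 3 (assemble).} Next I expand the variance of the linear combination:
\[
\Var\Bigl(\sum_t W_te_t\Bigr)=\sum_t W_t\Sigma(p)W_t^\top+\sum_{s\neq t}W_sCW_t^\top .
\]
I rewrite the off-diagonal sum as
\[
\sum_{s\neq t}W_s\Sigma(p) W_t^\top=\Bigl(\sum_t W_t\Bigr)\Sigma(p)\Bigl(\sum_t W_t\Bigr)^\top-\sum_t W_t\Sigma(p)W_t^\top .
\]
Substituting $C=-\Sigma(p)/(G-1)$ and collecting the diagonal coefficients, $1+\tfrac{1}{G-1}=\tfrac{G}{G-1}$, gives exactly \eqref{eq:tuple-variance}.

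\textbf{Main obstacle.} The only delicate point is bookkeeping. The conditioning must include the covariates, so that the $W_t$ are fixed. The centering by $\eta$ rather than by $p$ must be seen to affect only the conditional mean of $T_r$, not its variance. The discrepancy $G(p-\eta)$ is exactly the rounding term that the paper routes into term (II) of the tuple decomposition. Once this is recognized, the calculation is routine.
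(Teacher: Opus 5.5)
Your proposal is correct and matches the paper's proof step for step. You reduce to $\Var(\sum_t W_t e_t\mid L_r)$, obtain $\Cov(e_t\mid L_r)=\Sigma(p)$ and the cross-position covariance $-\Sigma(p)/(G-1)$, and collapse the off-diagonal sum via $(\sum_t W_t)\Sigma(p)(\sum_t W_t)^\top-\sum_t W_t\Sigma(p)W_t^\top$. The only difference is how you get the cross-covariance: you use exchangeability and the deterministic constraint $\sum_t e_t=L_r$, which holds precisely because the tuple is complete. The paper instead computes the pairwise joint probabilities directly, which is the computation you already list as your check.
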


\begin{remark}[Rounding variance]
\label{rem:rounding-variance}
When $G\eta$ is not integer-valued, the count vector $L_r$ is drawn from a randomized rounding scheme satisfying $\E[L_r]=G\eta$ and $\Var(L_{r,j})\le 1/4$. By the law of total variance, this adds a rounding-variance term (II) to $\Var(T_r\mid\cF_{r-1})$ beyond the permutation term (I); its trace norm satisfies $\tr(\text{(II)}_r)=O(\K/G)$ uniformly in the tuple.
\end{remark}

\subsubsection{Proof of the Single-Tuple Variance Formula (Proposition~\ref{prop:cb-tuple-variance})}
\label{app:proof-tuple-variance}

\emph{Strategy.} Conditional on the count vector $L_r$, the within-tuple assignment is a uniform permutation of a finite multiset---a sampling-without-replacement structure. Computing the marginal and cross-position covariances of the one-hot encodings from first principles yields the classical finite-population correction factor $-\Sigma(p)/(G-1)$. Summing weighted contributions across the tuple positions and collapsing the off-diagonal sum into a square-of-sum minus diagonal then produces the closed-form variance \eqref{eq:tuple-variance}.

Fix a tuple with realized length $\ell_r\le G$ and count vector $L_r$.
Let $e_t\in\R^{\K}$ be the one-hot encoding of $\A_{r,t}$, and let
$p:=L_r/G$.
All moments below are conditioned on $\cF_{r-1}$ and $L_r$.

\noindent\textbf{Step 1: Reduce to centered one-hot vectors.}
Write $T_r=\sum_{t=1}^{\ell_r}W_t\,(e_t-\eta)$ where
$W_t\in\R^{q\times\K}$ has column $a$ equal to $\omega_a(\X_{r,t})$.
Since $\eta$ is non-random given $\cF_{r-1}$ and
$W_t$ is $\cF_{r-1}$-measurable (the covariates are known before
randomization),
$\Var(T_r\mid\cF_{r-1},L_r)=\Var\!\bigl(\sum_{t=1}^{\ell_r}W_t\,e_t\mid L_r\bigr)$.
Define $V:=\sum_{t=1}^{\ell_r}W_t\,e_t$.

\noindent\textbf{Step 2: Without-replacement covariance structure---derivation from first principles.}
Given $L_r$, the assignment $(\A_{r,1},\dots,\A_{r,G})$ is a
uniform random permutation of a sequence containing $L_{r,a}$
copies of arm $a$ for each $a\in\cA$.
This is equivalent to sampling \emph{without replacement} from a
finite population of size $G$.

\emph{Marginal mean.}
For any position $t\in\{1,\dots,G\}$ and arm $a$, by symmetry of the
uniform permutation, $\PP(\A_{r,t}=a\mid L_r)=L_{r,a}/G=p_a$.
Hence $\E[e_t\mid L_r]=p$.

\emph{Marginal covariance.}
For a single position $t$,
$(e_t)_a=\1\{\A_{r,t}=a\}$ is Bernoulli$(p_a)$ given $L_r$.
Therefore
$\Var((e_t)_a\mid L_r)=p_a(1-p_a)$ and for $a\neq b$,
$\Cov((e_t)_a,(e_t)_b\mid L_r)
=\E[\1\{\A_{r,t}=a\}\1\{\A_{r,t}=b\}\mid L_r]-p_ap_b
=0-p_ap_b=-p_ap_b$,
since $\A_{r,t}$ takes a single value.
In matrix form: $\Cov(e_t\mid L_r)=\diag(p)-pp^\top=:\Sigma(p)$.

\emph{Cross-position covariance.}
For $s\neq t$, we need $\E[(e_s)_a(e_t)_b\mid L_r]=\PP(\A_{r,s}=a,\A_{r,t}=b\mid L_r)$.
By the permutation structure:
\[
\PP(\A_{r,s}=a,\A_{r,t}=b\mid L_r)
=\begin{cases}
\frac{L_{r,a}(L_{r,a}-1)}{G(G-1)}, & a=b,\\[4pt]
\frac{L_{r,a}\,L_{r,b}}{G(G-1)}, & a\neq b.
\end{cases}
\]
Therefore, for $s\neq t$:
\begin{align*}
\Cov\bigl((e_s)_a,(e_t)_a\mid L_r\bigr)
&=\frac{L_{r,a}(L_{r,a}-1)}{G(G-1)}-p_a^2
=\frac{p_a(p_a-1/G)}{1-1/G}-p_a^2
=-\frac{p_a(1-p_a)}{G-1},\\
\Cov\bigl((e_s)_a,(e_t)_b\mid L_r\bigr)
&=\frac{L_{r,a}L_{r,b}}{G(G-1)}-p_ap_b
=\frac{p_ap_b}{1-1/G}-p_ap_b
=\frac{p_ap_b}{G-1},\qquad a\neq b.
\end{align*}
Combining in matrix form:
\begin{equation}
\label{eq:wor-cov-derived}
\Cov(e_s,e_t\mid L_r)=-\frac{1}{G-1}\Sigma(p),\qquad s\neq t.
\end{equation}
This is the classical finite-population correction factor $-1/(G-1)$.

\noindent\textbf{Step 3: Expand the variance of $V$.}
Since $W_t$ is non-random given the conditioning:
\begin{align*}
\Var(V\mid L_r)
&=\sum_{t=1}^{\ell_r}\Var(W_t\,e_t\mid L_r)
+2\sum_{1\le s<t\le\ell_r}\Cov(W_s\,e_s,\,W_t\,e_t\mid L_r) \\
&=\sum_{t=1}^{\ell_r}W_t\,\Sigma(p)\,W_t^\top
+2\sum_{1\le s<t\le\ell_r}W_s\,\Cov(e_s,e_t\mid L_r)\,W_t^\top \\
&=\sum_{t=1}^{\ell_r}W_t\,\Sigma(p)\,W_t^\top
-\frac{2}{G-1}\sum_{1\le s<t\le\ell_r}W_s\,\Sigma(p)\,W_t^\top.
\end{align*}

\noindent\textbf{Step 4: Algebraic simplification.}
Write the double sum over off-diagonal pairs using
$\sum_{s\neq t}=\bigl(\sum_t\bigr)\bigl(\sum_t\bigr)^\top-\sum_t(\cdot)^2$:
\begin{align*}
2\sum_{s<t}W_s\Sigma(p)W_t^\top
&=\sum_{s\neq t}W_s\Sigma(p)W_t^\top \\
&=\Bigl(\sum_t W_t\Bigr)\Sigma(p)\Bigl(\sum_t W_t\Bigr)^\top
-\sum_t W_t\Sigma(p)W_t^\top.
\end{align*}
Substituting:
\begin{align*}
\Var(V\mid L_r)
&=\sum_t W_t\Sigma(p)W_t^\top
-\frac{1}{G-1}\left[
\Bigl(\sum_t W_t\Bigr)\Sigma(p)\Bigl(\sum_t W_t\Bigr)^\top
-\sum_t W_t\Sigma(p)W_t^\top
\right] \\
&=\frac{G}{G-1}\sum_{t=1}^{\ell_r} W_t\Sigma(p)W_t^\top
-\frac{1}{G-1}\Bigl(\sum_{t=1}^{\ell_r} W_t\Bigr)\Sigma(p)
\Bigl(\sum_{t=1}^{\ell_r} W_t\Bigr)^\top,
\end{align*}
which is \eqref{eq:tuple-variance}. \QEDA

\subsection{Randomized Rounding Variance}
\label{app:proof-rounding-variance}

\emph{Strategy.} The law of total variance splits the tuple-level variance into a permutation-variance term (already controlled by Proposition~\ref{prop:cb-tuple-variance}) and a rounding-variance term arising from the randomness in $L_r=\lfloor G\eta\rfloor+\1\{\cdot\in S\}$. Each $L_{r,j}$ is a Bernoulli-shifted floor with variance bounded by $1/4$; this bounds $\Var(p)$ in operator norm by $\K/(4G^2)$, and a Frobenius-norm bound on the summed weight matrix $\bar W$ converts that into the asserted $O(\K/G)$ trace-norm contribution.

We prove the claim in Remark~\ref{rem:rounding-variance} that
the randomized rounding contributes $O(\K/G)$ to the
trace-norm of the per-tuple variance.

By the law of total variance, conditioning on $L_r$:
\[
\Var(T_r\mid \cF_{r-1})
=\underbrace{\E[\Var(T_r\mid \cF_{r-1},L_r)\mid \cF_{r-1}]}_{\text{(I): permutation variance}}
+\underbrace{\Var(\E[T_r\mid \cF_{r-1},L_r]\mid \cF_{r-1})}_{\text{(II): rounding variance}}.
\]
Term (I) is the expected version of the formula in
Proposition~\ref{prop:cb-tuple-variance}, averaged over $L_r$.

For term (II), note that
$\E[T_r\mid \cF_{r-1},L_r]
=\sum_{t=1}^{\ell_r}\sum_a (\E[\1\{\A_{r,t}=a\}\mid L_r]-\eta_a)\,\omega_a(\X_{r,t})
=\sum_{t=1}^{\ell_r}\sum_a (p_a-\eta_a)\,\omega_a(\X_{r,t})$
where $p_a=L_{r,a}/G$.
Define $\bar W:=\sum_{t=1}^{\ell_r}W_t\in\R^{q\times \K}$ (the summed weight matrix).
Then $\E[T_r\mid \cF_{r-1},L_r]=\bar W\,(p-\eta)$.

Therefore
\begin{equation}
\label{eq:rounding-var-explicit}
\text{(II)}=\Var(\bar W\,(p-\eta)\mid \cF_{r-1})
=\bar W\,\Var(p\mid \cF_{r-1})\,\bar W^\top.
\end{equation}
Since $p=L_r/G$ and $L_r$ is drawn from the randomized rounding scheme,
we bound $\Var(p)$ componentwise.
Under randomized rounding, $L_{r,j}=\lfloor G\eta_j\rfloor+\1\{j\in S\}$
where $|S|=r_0:=G-\sum_j\lfloor G\eta_j\rfloor\in\{0,\dots,\K-1\}$ and
$\PP(j\in S)=G\eta_j-\lfloor G\eta_j\rfloor=:\rho_j\in[0,1)$.
Since $\1\{j\in S\}$ is a $\{0,1\}$-valued random variable with mean $\rho_j$:
\[
\Var(L_{r,j})=\Var(\1\{j\in S\})\le \rho_j(1-\rho_j)\le \frac{1}{4}.
\]
Hence $\Var(p_j)=\Var(L_{r,j})/G^2\le 1/(4G^2)$.
Since the entries of $p$ are not independent (they sum to $1$), we use
$\|\Var(p)\|_{\mathrm{op}}\le \tr(\Var(p))=\sum_j\Var(p_j)\le \K/(4G^2)$.

Therefore, the trace norm of (II) satisfies:
\[
\tr(\text{(II)})
=\tr\!\bigl(\bar W\,\Var(p)\,\bar W^\top\bigr)
\le \|\Var(p)\|_{\mathrm{op}}\,\tr(\bar W^\top\bar W)
\le \frac{\K}{4G^2}\,\|\bar W\|_F^2.
\]
Since $\bar W=\sum_{t=1}^{\ell_r}W_t$ and each column of $W_t$ is
$\omega_a(\X_{r,t})$ with $\|\omega_a\|\le C_\omega\cdot\mathrm{diam}(\cX)+\|\omega_a(x_0)\|=:U_{\max}$,
we have $\|\bar W\|_F^2\le \K\,\ell_r^2\,U_{\max}^2\le \K\,G^2\,U_{\max}^2$.

Combining: $\tr(\text{(II)})\le \frac{\K^2\,U_{\max}^2}{4}$.
Per-tuple, summing over $\le n/G$ tuples and normalizing by $1/n$:
\[
\frac{1}{n}\sum_r \tr(\text{(II)}_r)
\le \frac{1}{n}\cdot\frac{n}{G}\cdot\frac{\K^2 U_{\max}^2}{4}
=\frac{\K^2 U_{\max}^2}{4G}
=O\!\left(\frac{1}{G}\right),
\]
where the last step absorbs $\K$ into the constant
(since $\K$ is treated as a fixed constant throughout).
When $\K$ is explicitly tracked, the contribution is $O(\K^2/G)=O(\K/G)$
since $\K\ge 2$ implies $\K^2\le \K\cdot\K_{\max}$ with $\K_{\max}$ bounded. \QEDA

\subsection{Global Variance Bound}
\label{app:proof-global-balance}

\subsubsection{Proof of the Global Variance Bound (Theorem~\ref{thm:global-balance})}

\emph{Strategy.} The aggregate imbalance variance decomposes into three independent sources: (i) a within-bin Lipschitz approximation error, scaling as $h^2$, which would vanish if balance functions were constant within each bin; (ii) a randomized-rounding contribution scaling as $1/G$, controlled by Appendix~\ref{app:proof-rounding-variance}; and (iii) a tail-tuple contribution from at most $M_n$ incomplete tuples, scaling as $M_n G/n$. Step~5 then balances all three sources by tuning $h$, $G$, and $M_n$ as functions of the batch size $m$, yielding the rate $\|S_b\|=O_P(m^{-1/(d+4)})$ asserted in \eqref{eq:within-batch-balance}.

Decomposing $S_n$ across all tuples $r=1,\dots,R_n$ and using the martingale-difference property of tuple increments gives the global variance decomposition
\begin{equation}
\label{eq:global-variance-decomp}
\Var(S_n\mid\X^{(n)})
=\frac{1}{n}\sum_{r=1}^{R_n}\E\!\left[\Var(T_r\mid\cF_{r-1})\mid\X^{(n)}\right],
\end{equation}
and we bound each of the three contributions to \eqref{eq:global-variance-decomp}.
Let $n_c$ denote the number of complete tuples (of length exactly $G$)
and $n_t$ the number of tail tuples (of length $<G$), with
$n_t\le M_n$ (at most one per bin).
Throughout, we use the assumed Lipschitz property of the balance functions,
\begin{equation}
\label{eq:lip-omega}
\|\omega_a(x)-\omega_a(x')\|\le C_\omega\,\|x-x'\|_2\qquad\forall\,a\in\cA,
\end{equation}
and the bin-diameter bound
\begin{equation}
\label{eq:bin-diameter}
\mathrm{diam}(B_\ell)\le h\qquad\forall\,\ell=1,\dots,M_n.
\end{equation}

\noindent\textbf{Step 1: Within-bin approximation (Lipschitz error)---the $h^2$ term.}
For a complete tuple ($\ell_r=G$) in bin $B_\ell$, fix a reference point
$x_\ell\in B_\ell$. Write $\omega_a(\X_{r,t})=\omega_a(x_\ell)+\delta_{r,t,a}$
where $\|\delta_{r,t,a}\|\le C_\omega\,h$ by the Lipschitz condition
\eqref{eq:lip-omega} and the bin diameter bound \eqref{eq:bin-diameter}.

\emph{Cancellation for constant features.}
If $\omega_a(\X_{r,t})\equiv \omega_a(x_\ell)$ for all $t$ (constant within the tuple),
then $W_t\equiv W_0$ (the common weight matrix with columns $\omega_a(x_\ell)$).
For a complete tuple ($\ell_r=G$), the variance formula
\eqref{eq:tuple-variance} gives:
\[
\Var(T_r\mid L_r)
=\frac{G}{G-1}\cdot G\cdot W_0\Sigma(p)W_0^\top
-\frac{1}{G-1}\cdot G^2\cdot W_0\Sigma(p)W_0^\top
=0.
\]
The key insight: within a complete tuple, the without-replacement
structure creates \emph{exact} balance when all features are identical,
because the negative cross-covariance $-\Sigma(p)/(G-1)$ perfectly
offsets the marginal variance.

\emph{Perturbation bound.}
Write $W_t=W_0+\Delta_t$ where $\Delta_t$ has columns $\delta_{r,t,a}$.
Expanding $\Var(T_r\mid L_r)$ using the formula \eqref{eq:tuple-variance}:
\begin{align*}
&\frac{G}{G-1}\sum_{t=1}^G (W_0+\Delta_t)\Sigma(p)(W_0+\Delta_t)^\top
-\frac{1}{G-1}\Bigl(\sum_t (W_0+\Delta_t)\Bigr)\Sigma(p)
\Bigl(\sum_t (W_0+\Delta_t)\Bigr)^\top.
\end{align*}
The $W_0$-only terms cancel (as shown above).
The cross terms $W_0\Sigma(p)\Delta_t^\top$ also cancel after the
same algebraic structure. The remaining terms are quadratic in
$\Delta_t$, each bounded by $C\,C_\omega^2 h^2\,\|\Sigma(p)\|_{\mathrm{op}}$
in trace norm.
Since $\|\Sigma(p)\|_{\mathrm{op}}\le 1$ (as $p$ lies on the simplex),
the trace-norm contribution per complete tuple is at most
$C\,C_\omega^2\,h^2\,G$.

Summing over $n_c\le n/G$ complete tuples and normalizing by $1/n$:
\[
\frac{1}{n}\sum_{r:\text{complete}}\E[\tr(\Var(T_r\mid \cF_{r-1},L_r))\mid \X^{(n)}]
\le \frac{1}{n}\cdot\frac{n}{G}\cdot C\,C_\omega^2\,h^2\,G
=C\,C_\omega^2\,h^2.
\]

\noindent\textbf{Step 2: Randomized rounding---the $1/G$ term.}
By the law of total variance (see Section~\ref{app:proof-rounding-variance}),
each tuple contributes an additional rounding-variance term (II).
For each complete tuple, $\tr(\text{(II)}_r)\le C_{\mathrm{round}}$
where $C_{\mathrm{round}}$ depends on $(\K,U_{\max})$
(see \eqref{eq:rounding-var-explicit} and the bound thereafter).
Summing over $\le n/G$ tuples:
\[
\frac{1}{n}\sum_{r:\text{complete}}\tr(\text{(II)}_r)
\le \frac{1}{n}\cdot\frac{n}{G}\cdot C_{\mathrm{round}}
=\frac{C_{\mathrm{round}}}{G}
=O\!\left(\frac{1}{G}\right).
\]
When $G\eta_a\in\mathbb Z$ for all $a$ (integer counts), the
rounding randomness vanishes and this term is identically zero.

\noindent\textbf{Step 3: Tail tuples---the $M_n G/n$ term.}
Each bin $B_\ell$ contributes at most one tail tuple
(the last, possibly incomplete tuple).
For a tail tuple of length $\ell_r\le G$, we bound its variance
directly from the formula \eqref{eq:tuple-variance}.
Each of the two terms in the formula involves at most $\ell_r\le G$
summands, each bounded by $\|W_t\|_F^2\,\|\Sigma(p)\|_{\mathrm{op}}
\le \K\,U_{\max}^2\cdot 1=\K\,U_{\max}^2$.
Therefore $\tr(\Var(T_r\mid L_r))\le C'\,G$ where
$C'$ depends on $(\K,U_{\max})$.
Taking the full variance (including rounding):
$\tr(\Var(T_r\mid\cF_{r-1}))\le C'\,G$ by the same bound.

There are at most $M_n$ tail tuples.
Summing and normalizing:
\[
\frac{1}{n}\sum_{r:\text{tail}}\E[\tr(\Var(T_r\mid\cF_{r-1}))\mid \X^{(n)}]
\le \frac{M_n\cdot C'\cdot G}{n}
=O\!\left(\frac{M_n\,G}{n}\right).
\]

\noindent\textbf{Step 4: Combine.}
Adding the three contributions (Steps 1--3):
\begin{equation}
\label{eq:global-balance-bound}
\tr\!\left(\Var(S_n\mid \X^{(n)})\right)
\le C\!\left(h^2+\frac{1}{G}+\frac{M_n\,G}{n}\right),
\end{equation}
where $C$ depends only on $(C_\omega,\K,q,U_{\max})$ and the propensity bounds.
This is the asserted global balance bound (with $\K$ absorbed into $C$
as a fixed constant).

\noindent\textbf{Step 5: From bound to rate---geometric tuning.}
The body theorem \eqref{eq:within-batch-balance} asserts $\|S_b\|=O_P(m^{-1/(d+4)})$ for a batch of size $m$. To go from the bound \eqref{eq:global-balance-bound} (with $n$ replaced by $m$) to this rate, we minimize the right-hand side over $(h,G)$ subject to $M_n\asymp h^{-d}$, the standard packing bound for a compact $d$-dimensional $\cX$. With this constraint,
\[
h^2+\frac{1}{G}+\frac{M_n\,G}{m}
\;\asymp\;
h^2+\frac{1}{G}+\frac{G}{m\,h^{d}}.
\]
The minimizer balances all three summands. Setting $h^2=1/G$ gives $G=h^{-2}$; substituting into the third summand and equating with the first yields $h^{-2}/(m\,h^{d})=h^2$, i.e., $h^{d+4}\asymp 1/m$, hence
\[
h\asymp m^{-1/(d+4)},\qquad G\asymp m^{2/(d+4)},\qquad M_n\asymp m^{d/(d+4)}.
\]
At this choice each of the three terms equals $m^{-2/(d+4)}$, so
\[
\tr\!\left(\Var(S_b\mid \X^{(m)})\right)\;\le\; 3C\,m^{-2/(d+4)}.
\]
Since $\E\|S_b\|^2=\tr(\Var(S_b\mid \X^{(m)}))$ in expectation (martingale-difference structure removes the conditional mean) and $\|S_b\|^2$ is a non-negative random variable, Markov's inequality gives $\|S_b\|=O_P(m^{-1/(d+4)})=o_P(1)$, which is \eqref{eq:within-batch-balance}. \QEDA

\subsection{Semiparametric Efficiency}
\label{app:proof-cb-efficiency}

\subsubsection{Proof of Semiparametric Efficiency (Theorem~\ref{thm:cb-efficiency})}

\emph{Strategy.} We control the three terms of the structural decomposition \eqref{eq:three-term-decomp} separately. The balancing remainder (Term III) is handled by the global variance bound (Theorem~\ref{thm:global-balance}) applied within each batch and aggregated. The design gap (Term II) is handled by the design-regret bound \eqref{eq:design-regret-quadratic} applied to the second-moment proxy rate from Assumption~(CB5). The oracle-EIF term (Term I) is an i.i.d.-by-batch martingale sum that converges to $\cN(0,V^\star)$ by the martingale CLT. Combining the three contributions gives the asserted MSE expansion $V^\star/n+o(1/n)$.

We prove \eqref{eq:cb-efficiency-bound} by controlling
the three terms of the decomposition \eqref{eq:three-term-decomp}.
Throughout, $B\lesssim\log n$ and $\gamma_b\asymp n^{b/B}$.

\noindent\textbf{Step 1: Balancing remainder (Term (III)).}
By \eqref{eq:three-term-decomp}, the balancing remainder is
\[
\text{(III)}
=\frac{1}{\sqrt{n}}\sum_{b=1}^B
\underbrace{\sum_{i\in\text{batch }b}\sum_{a\in\cA}
\bigl(\1\{\A_i=a\}-\e_a^{(b)}(\X_i)\bigr)\,\omega_a(\X_i)
}_{=:\;\sqrt{\gamma_b}\,S_{\gamma_b}^{(b)}},
\]
where $S_{\gamma_b}^{(b)}$ is the moment imbalance within batch~$b$
with balance functions $\omega_a$.
Since $\omega_a$ is Lipschitz by~(CB4),
Corollary~\ref{thm:global-balance} with
$h\asymp\gamma_b^{-1/(d+4)}$, $G\asymp\gamma_b^{2/(d+4)}$ gives
\[
\Var\!\left(S_{\gamma_b}^{(b)}\mid \X^{(\gamma_b)}\right)
=O\!\left(\gamma_b^{-2/(d+4)}\right).
\]
Hence
$\E\bigl[\|S_{\gamma_b}^{(b)}\|^2\bigr]
=O(\gamma_b^{-2/(d+4)})$,
and the total squared balancing remainder satisfies
\begin{align*}
\E\!\left[\|\text{(III)}\|^2\right]
&\le \frac{1}{n}\sum_{b=1}^B \gamma_b\,
\E\!\left[\|S_{\gamma_b}^{(b)}\|^2\right]
\;\le\;
\frac{C}{n}\sum_{b=1}^B \gamma_b^{1-2/(d+4)}.
\end{align*}
With geometric batches ($\gamma_b\asymp n^{b/B}$),
the dominant term is $b=B$ ($\gamma_B\asymp n$):
\[
\E\!\left[\|\text{(III)}\|^2\right]
\;\le\;
\frac{C\,B}{n}\cdot n^{1-2/(d+4)}
\;=\; C\,B\,n^{-2/(d+4)}
\;=\; o(1).
\]
Therefore $\text{(III)}=o_P(1)$.

\noindent\textbf{Step 2: Design gap (Term (II)).}
Term~(II) from \eqref{eq:three-term-decomp} equals
\[
\text{(II)}
=\frac{1}{\sqrt{n}}\sum_{b=1}^B\sum_{i\in\text{batch }b}
\bigl[\IF_{\e^{(b)}}(\Oo_i)-\IF_{\e^\star}(\Oo_i)\bigr].
\]
Its MSE contribution arises from the difference in EIF variances:
\[
\E\!\left[\|\text{(II)}\|^2\mid\cH\right]
=\frac{1}{n}\sum_{b=1}^B \gamma_b
\bigl(V_{\e^{(b)}}-V^\star\bigr)
+\text{(cross-batch terms)}.
\]
The cross-batch terms vanish since different batches
use independent randomization conditional on their
respective propensities.
By Assumption~(CB5) and Theorem~\ref{thm:linear-bounds},
$\E[V_{\e^{(b)}}-V^\star]
=\E[L(\e^{(b)})-L(\e^\star)]
\le C\,\gamma_{b-1}^{-2\beta}$ for $b\ge 2$.
Therefore
\[
\frac{1}{n}\sum_{b=1}^B \gamma_b
\bigl(\E[V_{\e^{(b)}}]-V^\star\bigr)
\le
\frac{\gamma_1}{n}\bigl(V_{\e^{(1)}}-V^\star\bigr)
+\frac{C}{n}\sum_{b=2}^B \gamma_b\,\gamma_{b-1}^{-2\beta}
=o(1),
\]
for any $\beta>0$ with geometric batches.

\noindent\textbf{Step 3: Oracle EIF (Term (I)) and CLT.}
Write
$\text{(I)}=\frac{1}{\sqrt{n}}\sum_{i=1}^n\IF_{\e^\star}(\Oo_i)$.
By the EIF structure \eqref{eq:eif-structure},
$\E[\IF_{\e^\star}(\Oo_i)\mid \X_i]=g(\X_i)-\theta_0$,
and conditional on $(\X_i,\A_i)$,
the IPW component $v_a(\X_i,\Y_i)/\e_a^\star(\X_i)$
has mean zero and bounded variance.
The oracle EIF scores form a martingale difference array
across batches.
By the martingale CLT (with $B\lesssim\log n$):
\[
\text{(I)}
=\frac{1}{\sqrt{n}}\sum_{i=1}^n\IF_{\e^\star}(\Oo_i)
\;\xrightarrow{d}\;\cN(0,\;V^\star).
\]

\noindent\textbf{Step 4: Combine.}
From Steps 1--3, $\sqrt{n}(\hat\theta_n-\theta_0)$
equals the oracle EIF sum (converging to $\cN(0,V^\star)$)
plus a design gap of MSE contribution $o(1/n)$
plus a balancing remainder of MSE contribution $o(1)$
(hence $o(1/n)$ after dividing by $n$)
plus $o_P(1)$.
Therefore
\[
\E[(\hat\theta_n-\theta_0)^2]
=\frac{V^\star}{n}+o\!\left(\frac{1}{n}\right).
\]
\QEDA

\end{document}